\documentclass[acmsmall,screen,dvipsnames]{acmart}
 \pdfoutput=1 

\setcopyright{rightsretained}
\acmPrice{}
\acmDOI{10.1145/3586045}
\acmYear{2023}
\copyrightyear{2023}
\acmSubmissionID{oopslaa23main-p89-p}
\acmJournal{PACMPL}
\acmVolume{7}
\acmNumber{OOPSLA1}
\acmArticle{93}
\acmMonth{4}
\received{2022-10-28}
\received[accepted]{2023-02-25}

\makeatletter
\def\@acmplainindent{0pt}
\def\@acmdefinitionindent{0pt}
\def\@proofindent{\noindent}
\makeatother

\usepackage{microtype}

\setlength{\abovecaptionskip}{4pt plus 3pt minus 2pt} 
\setlength{\belowcaptionskip}{-8pt plus 3pt minus 2pt} 

\AtBeginDocument{%
  \providecommand\BibTeX{{%
    \normalfont B\kern-0.5em{\scshape i\kern-0.25em b}\kern-0.8em\TeX}}}

\usepackage{graphicx}				
					
\usepackage{amsfonts}
\usepackage{amsmath}
\usepackage{amsthm}
\usepackage{pifont}
\usepackage{cancel}
\usepackage[inline]{enumitem}
\usepackage{eqparbox}
\usepackage{hyperref}
\usepackage{listings}
\usepackage{mathpartir}
\usepackage{mathtools}
\usepackage{multicol}
\usepackage{multirow}
\usepackage{relsize}
\usepackage{scalerel}
\usepackage{stmaryrd}
\usepackage{thm-restate}
\usepackage{tikz}
\usepackage{wrapfig}
\usepackage{xfrac}
\usepackage{xspace}

\usepackage[all, cmtip]{xy}
\usepackage{xr}

\usetikzlibrary{decorations.pathreplacing,calligraphy}

\theoremstyle{remark}
\newtheorem{remark}{Remark}

\usepackage{cleveref}


\newcommand{\PP}{\mathbb{P}}

\newcommand{\de}[1]{\left\llbracket #1 \right\rrbracket}
\newcommand{\dem}[2]{\left\llbracket #1 \right\rrbracket^\dagger\!\!(#2)}
\newcommand{\sem}[1]{\llparenthesis #1 \rrparenthesis}

\newcommand{\samp}{
  \xleftarrow{\raisebox{-.35em}[0ex][0ex]{\tiny \ \$}}}
\newcommand{\skp}{\textsf{skip}}
\newcommand{\iftf}[3]{\mathsf{if}~#1~\mathsf{then}~#2~\mathsf{else}~#3}
\newcommand{\whl}[2]{\textsf{while}~#1~\textsf{do}~#2}

\newcommand{\bind}{\mathsf{bind}}
\newcommand{\unit}{\mathsf{unit}}

\newcommand{\monoid}{\diamond}

\newcommand{\hb}[1]{{\color{purple}\{#1\}}} 
\newcommand{\ob}[1]{{\color{purple}\langle #1\rangle}} 
\newcommand{\ib}[1]{{\color{purple}[#1]}} 
\newcommand{\lb}[1]{{\color{purple}\{\!| #1 |\!\}}} 
\newcommand{\triple}[3]{\ob{#1} ~#2~ \ob{ #3}}
\newcommand{\hoare}[3]{\hb{#1} ~#2~ \hb{#3}}
\newcommand{\lisbon}[3]{\lb{#1}~#2~\lb{#3}}
\newcommand{\inc}[3]{\ib{#1} ~#2~ \ib{ #3}}

\newcommand{\ok}{\mathsf{ok}}
\newcommand{\er}{\mathsf{er}}


\newcommand{\emp}{\mathsf{emp}}

\newcommand{\sep}{\ast}
\newcommand{\wand}{\;{\mathlarger{{-} \mathrel{\mkern -16mu} {-}\mathrel{\mkern -16mu}\ast}}\;}

\newcommand{\brackwidth}{.66em}

\newcommand{\mktripple}[2]{
\vDash
{\color{purple}\makebox[\brackwidth]{$#1$} P \makebox[\brackwidth]{$#2$}}
~C~
{\color{purple}\makebox[\brackwidth]{$#1$} Q \makebox[\brackwidth]{$#2$}}
}

\newcommand{\wpre}{\textsf{wp}}

\newcommand{\wpe}{\textsf{wpe}}



\DeclareMathAlphabet{\dsser}{U}{DSSerif}{m}{n}
\newcommand{\bb}[1]{\scalebox{1.15}{{$\dsser{#1}$}}}

\newcommand{\ident}{\varnothing}
\newcommand{\inj}{\dsser{i}}
\newcommand{\zero}{\bb{0}}
\newcommand{\com}{\fatslash\kern-.5em\fatslash}
\newcommand{\ie}{\emph{i.e.,} }
\newcommand{\eg}{\emph{e.g.,} }

\newcommand{\cites}[1]{\citeauthor{#1}'s~[\citeyear{#1}]}

\newcommand{\assume}[1]{\mathsf{assume}~#1}
\newcommand{\tru}{\mathsf{true}}
\newcommand{\fls}{\mathsf{false}}
\newcommand{\prop}{\mathsf{Prop}}

\newcommand{\vDashD}{\vDash^\downarrow\!}

\newcommand{\iffcases}[2]{
\begin{itemize}
\item[$(\Rightarrow)$] #1
\item[$(\Leftarrow)$] #2
\end{itemize}
}
\newcommand{\iffpf}[2]{
\begin{proof}\;
\iffcases{#1}{#2}
\end{proof}
}

\def\extended{0}

\ifx\extended\undefined
\newcommand{\Aref}[1]{\citet[\S\ref*{#1}]{appendix}}
\newcommand{\Acite}[1]{\cite[\S\ref*{#1}]{appendix}}
\newcommand{\Acref}[1]{\cite[\Cref*{#1}]{appendix}}
\else
\newcommand{\Aref}[1]{\Cref{#1}}
\newcommand{\Acite}[1]{(see \Cref{#1})}
\newcommand{\Acref}[1]{(\Cref{#1})}
\fi

\ifx\extended\undefined
\externaldocument{appendix}
\fi


\citestyle{acmauthoryear}

\begin{document}

\ifx\extended\undefined
\title{Outcome Logic: A Unifying Foundation for Correctness and Incorrectness Reasoning}
\else
\title{Outcome Logic: A Unifying Foundation for Correctness and Incorrectness Reasoning (Full Version)}
\fi

\author{Noam Zilberstein}
\email{noamz@cs.cornell.edu}
\orcid{0000-0001-6388-063X}
\affiliation{%
  \institution{Cornell University}
  \country{USA}
}

\author{Derek Dreyer}
\email{dreyer@mpi-sws.org}
\orcid{0000-0002-3884-6867}
\affiliation{%
  \institution{MPI-SWS}
  \city{Saarland Informatics Campus}
  \country{Germany}
}

\author{Alexandra Silva}
\email{alexandra.silva@cornell.edu}
\orcid{0000-0001-5014-9784}
\affiliation{%
  \institution{Cornell University}
  \country{USA}
}


\begin{abstract}
Program logics for bug-finding (such as the recently introduced Incorrectness Logic) have framed correctness and incorrectness as dual concepts requiring different logical foundations.  In this paper, we argue that a single unified theory can be used for both correctness and incorrectness reasoning.
We present Outcome Logic (OL), a novel generalization of Hoare Logic that is both \emph{monadic} (to capture computational effects) and \emph{monoidal} (to reason about outcomes and reachability). 
OL expresses true positive bugs, while retaining \emph{correctness} reasoning abilities as well. 
To formalize the applicability of OL to both correctness and incorrectness, we prove that any false OL specification can be disproven in OL itself.
We also use our framework to reason about new types of incorrectness in nondeterministic and probabilistic programs. Given these advances, we advocate for OL as a new foundational theory of correctness and incorrectness.
\end{abstract}

\begin{CCSXML}
<ccs2012>
   <concept>
       <concept_id>10003752.10003790.10011741</concept_id>
       <concept_desc>Theory of computation~Hoare logic</concept_desc>
       <concept_significance>500</concept_significance>
       </concept>
   <concept>
       <concept_id>10003752.10003790.10011742</concept_id>
       <concept_desc>Theory of computation~Separation logic</concept_desc>
       <concept_significance>300</concept_significance>
       </concept>
   <concept>
       <concept_id>10003752.10003790.10002990</concept_id>
       <concept_desc>Theory of computation~Logic and verification</concept_desc>
       <concept_significance>500</concept_significance>
       </concept>
   <concept>
       <concept_id>10003752.10010124.10010138.10010140</concept_id>
       <concept_desc>Theory of computation~Program specifications</concept_desc>
       <concept_significance>500</concept_significance>
       </concept>
 </ccs2012>
\end{CCSXML}

\ccsdesc[500]{Theory of computation~Hoare logic}
\ccsdesc[300]{Theory of computation~Separation logic}
\ccsdesc[500]{Theory of computation~Logic and verification}
\ccsdesc[500]{Theory of computation~Program specifications}

\keywords{Program Logics, Hoare Logic, Incorrectness Reasoning}

\maketitle

\begin{flushright}
\emph{``Program correctness and incorrectness are two sides of the same coin.''} -- \citet{il}
\end{flushright}

%

\section{Introduction}
\label{sec:intro}


Developing formal methods to prove program \emph{correctness}---the \emph{absence} of bugs---has been a holy grail in program logic and static analysis research for many decades.
However, seeing as many static analyses deployed in practice are \emph{bug-finding} tools, \citet{il} recently advocated for the development of formal methods for proving program {\em incorrectness}; we need expressive, efficient, and compositional ways to reliably identify the \emph{presence} of bugs as well.

The aforementioned paper of \citet{il} proposed Incorrectness Logic (IL) as a logical foundation for reasoning about program incorrectness.
IL is inspired by---and in a precise technical sense \emph{dual to}---Hoare Logic. Like Hoare Logic, IL specifications are \emph{compositional}, given in terms of preconditions $P$ and postconditions $Q$. Hoare Triples $\hoare PCQ$ stipulate that the result of running the program $C$ on any state satisfying $P$ will be a state that satisfies $Q$. Incorrectness Triples $\inc PCQ$ \emph{go in reverse}---all states satisfying $Q$ must be reachable from some state satisfying $P$.
\[
\begin{array}{llllllcl}
\textsf{\textbf{Hoare Logic:}} & \mktripple{\{}{\}} & \text{iff} & \forall {\sigma\vDash P}. & \forall \tau. & \tau\in\de C(\sigma) & \Rightarrow & {\tau\vDash Q} \\
\textsf{\textbf{Incorrectness Logic:}} & \mktripple{[}{]} & \text{iff} & \forall {\tau\vDash Q}. & \exists\sigma. & \tau \in\de C(\sigma) & \text{and} & {\sigma\vDash P}
\end{array}
\]
Practically speaking, IL differs from Hoare Logic in two key ways. First, whereas Hoare Logic has no false negatives (\ie \emph{all} executions of a verified program behave correctly), IL has \textit{no false positives}: any bug found using IL is in fact reachable by \emph{some} execution of the program.
Second, whereas Hoare Logic is over-approximate, IL is \textit{under-approximate}: to prove that a program is incorrect, one only needs to specify (in the postcondition) a \emph{subset} of the possible outcomes, which helps to ensure the efficiency of large-scale analyses.
Subsequent work has focused on extending IL to account for a variety of program errors (\eg memory errors, memory leaks, data races, and deadlocks) and on using the resulting Incorrectness Separation Logics (ISLs) to explain and inform the development of bug-catching static analyses~\cite{isl,cisl,realbugs}.

Despite these exciting advances, we argue that the foundations of incorrectness reasoning are still far from settled---and worthy of reconsideration. IL achieves \textit{true positives} (reachability of end-states) and \emph{under-approximation} through the same mechanism: quantification over all states that satisfy the postcondition. However, this conflation of concepts leads to several problems: 

\medskip
\noindent\textbf{\em Expressivity.}
The semantics of IL only encompasses under-approximate types of incorrectness, which does not fully account for all bugs that may be encountered in real programs. For example, as we will see in \Cref{sec:ofls}, IL can be used to show the \emph{reachability} of bad states, but it cannot prove \emph{unreachability} of good states. 

\medskip
\noindent\textbf{\em Generality.}
IL is not amenable to probabilistic execution models and therefore is not a good fit for reasoning about incorrectness in randomized programs (\Cref{sec:prob-incorrectness}).

\medskip
\noindent\textbf{\em Error Reporting.}
IL cannot easily describe what conditions are \emph{sufficient} to trigger a bug (\Cref{sec:manifest}), meaning that analyses based on IL must implement extra algorithmic checks to determine whether a bug is worth reporting~\cite{realbugs}.

\medskip
\noindent Our key insight is that reachability and under-approximation are separate concepts that can (and should) be handled independently. But once reachability is separated from under-approximation, the resulting program logic no longer applies only to bug-finding.
 In this paper, we show how the full spectrum of correctness and incorrectness reasoning can be achieved with a unified foundation: a generalization of ``good old'' Hoare Logic that we call \textbf{\em Outcome Logic (OL)}.
In addition to consolidating the foundations of incorrectness with traditional correctness reasoning, OL overcomes all the aforementioned drawbacks of IL.

In OL, assertions are no longer predicates over program states, but rather predicates on an \emph{outcome monoid}, whose elements can be, for instance, \emph{sets of program states} or \emph{probability distributions on program states}.
The monoidal structure enables us to model a new \emph{outcome conjunction}, $P \oplus Q$, asserting that the predicates $P$ and $Q$ each hold in \emph{reachable} executions (or hold in subdistributions on program executions). We can also under-approximate by joining a predicate with $\top$, the trivial outcome: $P\oplus\top$ states that $P$ only partially covers the program outcomes.
OL offers several advantages as a unifying foundation for correctness and incorrectness:

\smallskip
\noindent\textbf{\emph{Generality.}} OL unifies program analysis across two dimensions. First, since any untrue OL spec can be disproven in OL (\Cref{thm:falsification}),  correctness and incorrectness reasoning are possible in a single program logic. Second, OL uses a monadic semantics which allows it to be instantiated for different evaluation models such as nondeterminism, erroneous termination, and probabilistic choice, thereby unifying correctness and incorrectness reasoning across execution models. 

\smallskip
\noindent\textbf{\emph{Beyond Reachability.}}
Until now, the study of incorrectness has revolved primarily around \textit{reachability} of crash states.
We prove that OL handles a broader characterization of incorrectness than IL in nondeterministic programs (\Cref{thm:ndfalse}), as well as probabilistic incorrectness (\Cref{thm:probfalse}).

\smallskip
\noindent\textbf{\emph{Manifest Errors.}}
In order to improve fix rates in automated bug finding tools, \citet{realbugs} only report bugs that occur regardless of context.
These bugs---called \emph{manifest errors}---are not straightforward to characterize using Incorrectness Logic: an auxiliary algorithm is needed to check whether some bug is truly a manifest error. In contrast, manifest errors are trivial to characterize in OL---\cites{realbugs} original definition can be expressed as an OL triple (\Cref{lem:manifest}).

\smallskip
\noindent
The contributions of the paper are as follows: 
\begin{itemize}[leftmargin=0.5cm]
\item We provide an overview of the semantics of IL and explain what is needed in order to characterize broader classes of errors (\Cref{sec:overview}). We show how reasoning about outcomes can account for reachability of end-states and enable under-approximation (when desired).
\item We define Outcome Logic formally (\Cref{sec:prelim} and \Cref{sec:logic}), parametric on a monad and an assertion logic. We define syntax and semantics of the logic, using Bunched Implications (BI) formulae for pre- and postconditions, and provide inference rules to reason about validity. 
\item We show that OL is suitable for both correctness and incorrectness reasoning by proving that false OL triples can be disproven within OL (\Cref{sec:falsification}). As a corollary, OL can disprove Hoare triples, which was one motivation for IL  (\Cref{thm:mhlil}). We go further and show three kinds of incorrectness that can be captured in OL, only one of which is expressible in IL (\Cref{sec:ndfalse}). 
\item We exemplify how OL can be instantiated to find memory errors (\Cref{sec:incorrectness}) and probabilistic bugs (\Cref{sec:prob}). We argue that the latter use case is not feasible in IL (\Cref{sec:prob-incorrectness}).
\end{itemize}
Finally, we conclude in \Cref{sec:related} and \Cref{sec:conclusion} by discussing related work and next steps.

\section{Overview: A Landscape of Triples}
\label{sec:overview}

The study of incorrectness has made apparent the need for new program logics that guarantee true positives and support under-approximate reasoning, since standard Hoare Logic---which does not enjoy those properties---is incapable of proving the presence of bugs. Concretely, in a valid Hoare Triple, denoted $\vDash\hoare PCQ$, running the program $C$ in any state satisfying the precondition $P$ will result in a state satisfying the postcondition $Q$ (the formal definition is given in Figure~\ref{fig:triples}). Suppose we wanted to use such a triple to prove that the program $x:=\textsf{malloc}()\fatsemi [x]\leftarrow1$ has a bug (\textsf{malloc} may nondeterministically return null, causing the program to crash with a segmentation fault when the subsequent command attempts to store the value 1 at the location pointed to by $x$). We might be tempted to specify the triple as follows:
\begin{equation}\label{eq:example}
\hoare{\tru}
{x:=\textsf{malloc}()\fatsemi [x]\leftarrow1}
{(\ok: x\mapsto 1) \lor (\er: x=\textsf{null})}
\end{equation}
Here, the assertion $(\ok:p)$ means that the program terminated successfully in a state satisfying $p$ and $(\er:q)$ means that it crashed in a state satisfying $q$. However, this is not quite right. According to the semantics of Hoare Logic, every possible end state must be covered by the postcondition, hence the need to use a disjunction to indicate that two outcomes are possible. But since we do not know that every state described by the postcondition is \emph{reachable}, it is possible that every program trace ends up satisfying the first disjunct ($\ok:x\mapsto 1$) and the error state is never reached. 

Incorrectness Logic offers a solution to this problem. In a valid Incorrectness Triple, $\vDash\inc PCQ$, every state satisfying $Q$ is reachable by running $C$ in some state satisfying $P$. So, simply switching the triple type in the above example \emph{does} give us a witness that the error is possible.
\begin{equation}\label{ex:il1}
\inc{\tru}
{x:=\textsf{malloc}()\fatsemi [x]\leftarrow1}
{(\ok: x\mapsto 1) \lor (\er: x=\textsf{null})}
\end{equation}
Though the conclusion remains a disjunction, the semantics of the incorrectness triple (\Cref{fig:triples}) ensures that \emph{every} state in the disjunction is reachable. Moreover, we can under-approximate by \emph{dropping disjuncts} from the postcondition and use the simpler specification: 
\begin{equation}\label{eq:inc-example}
\inc{\tru}
{x:=\textsf{malloc}()\fatsemi [x]\leftarrow1}
{\er: x=\textsf{null}}
\end{equation}
This more parsimonious specification still witnesses the error while also helping to ensure efficiency of large-scale automated analyses, which must keep descriptions at each program point small.

The duality between Hoare Logic and Incorrectness Logic appears sensible. Hoare Logic has \emph{no false negatives}---a program is only correct if we account for all the possible outcomes. Incorrectness Logic has \emph{no false positives}---an error is only worth reporting if it is truly reachable. However, we argue in this paper that incorrectness reasoning and Hoare Logic are \emph{not} in fact at odds: an approach to incorrectness that is more similar to Hoare Logic is not only possible but, in fact, advantageous for several reasons, including the ability to express when an error will be \emph{manifest} and the ability to reason about additional varieties of incorrectness. 

\subsection{Unifying Correctness and Incorrectness}
\label{sec:mhl}

\begin{figure}
{\small
\begin{tabular}{m{12em}|lllllcr}
Triple Name & Syntax && \multicolumn{3}{l}{Semantics} \\
\hline\hline
{\sf Hoare Logic} &
$\mktripple{\{}{\}}$
&iff&
$\forall \sigma\vDash P.$ & $\forall \tau.$ & $\tau\in\de C(\sigma)$ & $\Rightarrow$ & $\tau\vDash Q$
\\\medskip
{\sf Incorrectness Logic (IL) / \newline
Reverse Hoare Logic (RHL)} &
$\mktripple{[}{]}$
&iff&
$\forall \tau\vDash Q.$ & $\exists\sigma.$ & $\tau \in\de C(\sigma)$ & and & $\sigma\vDash P$
\\\medskip
{\sf Outcome Logic (OL) }&
$\mktripple{\langle}{\rangle}$
&iff&
$\forall m. $ & \multicolumn{4}{r}{$m\vDash P \qquad \implies \qquad \dem Cm\vDash Q$}
\end{tabular}}

\caption{Semantics of triples where $P$ and $Q$ are logical formulae, $C$ is a program,  $\Sigma$ is the set of all program states, $\sigma,\tau \in \Sigma$, and $\de{C}\colon\Sigma\to\bb{2}^\Sigma$ is the reachable states function. In the last line of the table, $M$ is a monad, $m\in M\Sigma$ and $\de{C}^\dagger\colon M\Sigma\to M\Sigma$ is the monadic lifting of $\de\cdot \colon \Sigma \to M\Sigma$.}
\label{fig:triples}
\end{figure}

Our first insight is that the inability to prove the existence of bugs is not inherent in the semantics of Hoare Logic. Rather, it is the result of an assertion logic that is not expressive enough to reason about reachability. Triple (\ref{eq:example}) shows how the usual logical disjunction is inadequate in reaching this goal.  To remedy this, we use a logic with extra algebraic structure on outcomes, reminiscent of the use of a resource logic in separation logic~\cite{localreasoning,sl}. In this case, resources are program outcomes rather than heap locations. Program outcomes do not necessarily need to be the usual traces in a (non-)deterministic execution model, but can also arise from programs with alternative execution models such as probabilistic computation. To model different types of computations in a uniform way, we use an execution model parametric on a monad. We call this new logic Outcome Logic (OL), with triples denoted by $\triple PCQ$ (defined formally in \Cref{fig:triples}). 
Let us schematically point out the generalizations in these new triples:
\[
\xymatrix@R=.05cm@C=0cm{
&\vDash \ob  P\  &C&\ \ob Q &\\
{\footnotesize
\begin{array}{c}
 \text{monadic semantics }\\
  \de{C}^\dagger   \colon M\Sigma\to M\Sigma
  \end{array}}
 \ar@{~>}@(r,d)[rru]&&&& 
 \ar@{-->}@(l,d)[lllu] \ar@{-->}@(l,d)[lu]
\left\{ {\footnotesize\begin{array}{@{-\ }l}
\text{monadic satisfiability of $P,Q$: $m\vDash P$, $m\vDash Q$, with $m\in M\Sigma$} \\  \text{$P,Q$ might contain outcome conjunction $\oplus$}\\
 \text{semantics of $\oplus$ uses monoid composition $\monoid$}   \end{array}}\right.\\
 }
\]
OL triples follow the spirit of Hoare Logic---first quantifying over elements satisfying the precondition and then stipulating that the result of running the program on such an element must satisfy the postcondition. The difference is that in OL triples, the pre- and postconditions are satisfied by a monoidal collection of outcomes $m\in M\Sigma$ rather than individual program states $\sigma\in\Sigma$.
This allows us to introduce a new connective in the logic---the \emph{outcome conjunction} $\oplus$---which models program outcomes as resources. Consider the postcondition in triple (\ref{ex:il1}) if we replace $\vee$ by $\oplus$:
\[
(\ok: x\mapsto 1) \vee (\er: x=\textsf{null}) \quad \text{vs.} \quad
(\ok: x\mapsto 1) \oplus (\er: x=\textsf{null})
\]
A program state satisfies the first formula just by satisfying one of the disjuncts, whereas the second one requires a collection of states that can be split to witness satisfaction of both. 
This ability to split outcomes emerges as a requirement that $M\Sigma$ is a (partial commutative) monoid. 
Given two outcomes $m_1,m_2 \in M\Sigma$, there is an operation $\monoid$ that enables us to combine them $m_1\monoid m_2 \in M\Sigma$. The satisfiability of $\oplus$ is then defined using $\monoid$ to split the monoidal state:
\[
m\vDash P\oplus Q
\qquad \text{ iff }\qquad
\exists{m_1,m_2 \in M\Sigma}.\quad m=m_1\monoid m_2 \quad\text{and}\quad m_1 \vDash P \quad\text{and}\quad m_2\vDash Q
\]
Consider instantiating the above to the powerset monad that associates a set $A$ with the set of its subsets $\bb{2}^A$. Given a semantic function $\de{C}\colon \Sigma \to \bb{2}^\Sigma$ that maps individual start states $\sigma$ to the set of final states reachable by executing $C$, we can give a monadic semantics $\dem{C}{S} = \bigcup_{\sigma \in S} \de{C}(\sigma)$ where $S$ is a \emph{set} of start states.\footnote{The $\de{-}^\dagger$ function is formally the monadic (or Kleisli) extension of $\de{-}$; we will define this formally in \Cref{sec:prelim}.} The monoid composition $\monoid$ on $\bb{2}^A$ is given by set union, which is used compositionally to define satisfiability of $\oplus$ as follows: $S\vDash P\oplus Q$ iff $S_1\vDash P$ and $S_2\vDash Q$ such that $S=S_1\cup S_2$. Given some satisfaction relation for individual program states ${\vDash_\Sigma}\subseteq \Sigma\times\prop$, we then define satisfaction of atomic assertions as follows:
\[
S\vDash P
\qquad\text{iff}\qquad
S\neq\emptyset \quad\text{and}\quad \forall \sigma\in S.~\sigma\vDash_\Sigma P 
\]
The extra restriction $S\neq\emptyset$ witnesses that $P$ is reachable (and not vacuously satisfied). Putting this all together, we instantiate the generic OL triples (\Cref{fig:triples}) to the powerset monad:
\[
\vDash\triple PCQ
\qquad\text{iff}\qquad
 \forall S\in\bb{2}^\Sigma. \quad
 S\vDash P 
 \quad\Rightarrow\quad
  \dem CS\vDash Q 
\]
Now, we can revisit the example in triple (\ref{ex:il1}) in OL using $\oplus$ instead of $\vee$:
\begin{equation}
\triple{\ok:\tru}
{x:=\textsf{malloc}()\fatsemi [x]\leftarrow1}
{(\ok: x\mapsto 1) \oplus (\er: x=\textsf{null})}
\end{equation}
This specification \emph{does} witness the bug---for any start state there is at least one end state that satisfies each of the outcomes. However, we are still recording extra, non-erroneous outcomes, which is problematic for a large scale analysis algorithm. Following the example in triple (\ref{eq:inc-example}),  we would like to specify the bug above in a way that mentions only the relevant outcome in the postcondition. We can achieve this by simply weakening the postcondition. According to the semantics above, the following implications hold:
\[
S \vDash P\oplus Q \quad \Rightarrow \quad S \vDash P\oplus\top \qquad \text{ and } \qquad S \vDash P\oplus Q \quad\Rightarrow \quad S \vDash \top\oplus Q 
\]
So in a sense, we can \emph{drop outcomes} by converting them to $\top$. For notational convenience, we define the following under-approximate triple:
\[
\vDash^\downarrow\triple PCQ
\qquad\text{iff}\qquad
\vDash\triple PC{Q\oplus\top}
\]
Using this shorthand, the following simpler specification is also valid:
\begin{equation}\label{eq:monadic-example}
\vDash^\downarrow\triple{\ok:\tru}
{x:=\textsf{malloc}()\fatsemi [x]\leftarrow1}
{\er: x=\textsf{null}}
\end{equation}
This example demonstrates that OL is suitable for reasoning about crash errors, just like IL. However our goal is not simply to cover the same use cases as IL, but rather to go further. Next, we will show in \Cref{sec:ofls} that there are bugs expressible in OL that cannot be expressed in IL. In \Cref{sec:omanifest} we will also explain why the semantics of OL are a better fit for characterizing an important class of bugs known as manifest errors.

\subsection{A Broader Characterization of Correctness and Incorrectness}\label{sec:ofls}

In the semantics of Incorrectness Logic, the notions of reachability and under-approximation are conflated: both are a consequence of the fact that IL quantifies over the states that satisfy the postcondition. However, reachability and under-approximation are separate concepts and OL allows us to reason about each independently. Reachability is expressed with the outcome conjunction $\oplus$ and under-approximation is achieved by dropping outcomes. Separating reachability and under-approximation is useful for \emph{both} correctness \emph{and} incorrectness reasoning.

To see this, we will first investigate \emph{correctness} properties that rely on reachability. Before the introduction of Incorrectness Logic by \citet{il}, \citet{reversehoare} devised a semantically equivalent logic, which they called Reverse Hoare Logic. The goal of this work was to prove \emph{correctness} specifications that involved multiple possible end states, all of which must be reachable. As we saw in Example~\ref{eq:example}, Hoare Logic cannot express such specifications. So, \citet{reversehoare} proposed the Reverse Hoare Triple, which---like Incorrectness Triples---guarantees that every state described by the postcondition is reachable.

The motivating example for Reverse Hoare Logic was a nondeterministic shuffle function. Consider the following specification, where $\Pi(a)$ is the set of permutations of $a$:
\[
\inc{\tru}{b := \textsf{shuffle}(a)}{b \in \Pi(a)}
\]
This specification states that \emph{every} permutation of the list is a possible output of \textsf{shuffle}; however, it is not a complete correctness specification. It does not rule out the possibility that the output is not a permutation of the input ($b\notin\Pi(a)$). The semantics of Reverse Hoare Logic is motivated by \emph{reachability}, but---like Incorrectness Logic---it achieves reachability in a manner that is inextricably linked to under-approximation, which is undesirable for correctness reasoning.

 \citet{reversehoare} note this, stating that a complete specification for $\textsf{shuffle}$ would require both Hoare Logic \emph{and} Reverse Hoare Logic, but also that it would be worthwhile to study logics that can ``express both the reachability of good states and the non-reachability of bad states''~\cite[\S8]{reversehoare}.
OL does just that---the full correctness of the shuffle program can be captured using a single OL triple that guarantees reachability \emph{without} under-approximating:
\begin{equation}\label{ex:shuffle}
\triple{\tru}{b:=\textsf{shuffle}(a)}{\smashoperator{\bigoplus_{\pi\in\Pi(a)}} (b=\pi)}
\end{equation}
The OL specification above states \emph{not only} that all the permutations are reachable, \emph{but also} that they are the only possible outcomes. So, OL allows us to express a correctness property in a single triple that otherwise would have required \emph{both} a Hoare Triple \emph{and} a Reverse Hoare Triple.

We now turn to consider incorrectness reasoning.
Given that the above OL triple is a complete correctness specification, we are interested to know what it would mean for $\mathsf{shuffle}$ to be incorrect. In other words, what would it take to \emph{disprove} the specification of $\mathsf{shuffle}$? There are two ways that the triple could be false: either one particular permutation $\pi\in\Pi(a)$ is not reachable or the output $b$ is (sometimes) not a permutation of $a$. Both bugs can be expressed as OL triples:
\[
\exists\pi\in\Pi(a).\ \triple{\tru}{b := \textsf{shuffle}(a)}{b\neq\pi}
\qquad\qquad
\triple{\tru}{b := \textsf{shuffle}(a)}{(b\notin\Pi(a))\oplus\top}
\]
These triples both denote \emph{true bugs} since the validity of either triple implies that specification (\ref{ex:shuffle}) is false. In fact, these are the \emph{only} ways that specification (\ref{ex:shuffle}) can be false. This follows from a more general result called Falsification, which we prove in \Cref{thm:ndfalse}:
\[
\not\vDash\triple PC{\bigoplus_{i=1}^n Q_i}
\qquad\text{iff}\qquad
\exists P'\Rightarrow P.\quad
\exists i. ~\vDash\triple {P'}C{\lnot Q_i}
\quad\text{or}\quad
\vDash\triple {P'}C{(\bigwedge_{i=1}^n\lnot Q_i)\oplus\top}
\]
Intuitively, a nondeterministic program is incorrect iff either one of the desired outcomes never occurs or some undesirable outcome sometimes occurs.\footnote{In general, there is also a third option: the program diverges (has no outcomes). See \Cref{thm:ndfalse}.}
Incorrectness Logic can only characterize the latter type of incorrectness, whereas OL accounts for both and is thus strictly more expressive in the nondeterministic setting. An analogous result holds for probabilistic programs (\Cref{sec:probfalse}), whereas IL is not suitable for reasoning about probabilistic incorrectness at all (\Cref{sec:prob-incorrectness}).

\subsection{Semantic Characterizations of Bugs}\label{sec:omanifest}

In addition to enabling us to witness a larger class of incorrectness than IL (unreachable states and probabilistic incorrectness), OL also provides a more \emph{intuitive} way to reason about the type of bugs that IL was designed for: reachability of unsafe states.

Recalling the crash error in \Cref{sec:mhl}, both IL triples and OL triples soundly characterize the bug, as they both witness a trace that reaches the crash. The Incorrectness Triple (\ref{eq:inc-example}) states that any failing execution where $x$ is null is reachable from some starting state. In other words, true is a \textit{necessary} condition to reach a segmentation fault. However, true is trivially a necessary condition, so this triple does not tell us much about what will trigger the bug in practice. By contrast, the OL triple (\ref{eq:monadic-example}) states that true is a \textit{sufficient} condition, which gives us more information---the bug can \emph{always} occur no matter what the starting state is.

The latter semantics has a close correspondence to a class of bugs, known as \emph{manifest errors}~\cite{realbugs}, which occur regardless of how the enclosing procedure is used and are of particular interest in automated bug-finding tools.
\citet[Def.\ 3.2]{realbugs} give a formal characterization of manifest errors, but it is not a natural fit for Incorrectness Logic: determining whether an IL triple is a manifest error requires an auxiliary algorithmic check.
Though \citet{realbugs} note that there are connections between manifest errors and under-approximate variants of Hoare Logic, we go further in proving that their original definition of a manifest error is semantically equivalent to an OL triple of the form $\vDash^\downarrow\triple{\ok:\tru}{C}{\er: q\sep\tru}$ (\Cref{lem:manifest}). Manifest errors are therefore trivial to characterize in OL by a simple syntactic inspection.
This suggests that OL is semantically closer to the way in which programmers naturally characterize bugs.

In addition to being an intuitive foundation for incorrectness, OL unifies program analysis across two dimensions. First, it unifies correctness and incorrectness reasoning within a single program logic, and second, it does so across execution models (\eg nondeterministic and probabilistic). In the remainder of the paper, we will formalize the ideas that have been exemplified thus far. We formalize the OL model in \Cref{sec:prelim} and \Cref{sec:logic}, prove the applicability of OL to nondeterministic and probabilistic correctness and incorrectness in \Cref{sec:falsification}, and show how OL can be used in nondeterministic and probabilistic domains in \Cref{sec:incorrectness} and \Cref{sec:prob},  respectively. Given these advantages, we argue that OL offers a promising alternative foundation for incorrectness reasoning.

\section{A Modular Programming Language}
\label{sec:prelim}

We start by defining a programming language, inspired by Dijkstra's guarded command language~\cite{gcl}, see Figure~\ref{fig:comlang}. The syntax includes  $\zero$, which represents divergence, $\bb{1}$, acting as skip, sequential composition $C_1\fatsemi C_2$, choice $C_1+C_2$, iteration $C^\star$, and parametrizable atomic commands $c$. At first sight this looks like a standard imperative language (with nondeterministic choice). However, we will interpret the syntax in a semantic model that is parametric on a monad and a partial commutative monoid. The former enables a generic semantics of sequential composition, whereas the latter provides a generic interpretation of choice.

Before we define the semantic model we need to recall the definition of a monad and partial commutative monoid. We assume familiarity with basic category theory (categories, functors, natural transformations), see \citet{pierce} for an introduction.
\begin{definition}[Monad]
A monad is a triple  {\normalfont$\langle M, \bind, \unit\rangle$} in which $M$ is a functor on a category $\mathcal C$, $\normalfont\unit \colon \textsf{Id}\Rightarrow M$ is a natural transformation, and $\normalfont\bind \colon  MA \times (A\to MB)\to MB$ satisfies:
\begin{enumerate}
\normalfont
\item $\bind(m, \unit) = m$
\item $\bind(\unit(x), f) = f(x)$
\item $\bind(\bind(m, f), g) = \bind(m, \lambda x. \bind(f(x),g))$
\end{enumerate} 
\end{definition}
\noindent Typical examples of monads include powerset, error, and distribution monads (defined in \Cref{sec:falsification} and \Cref{sec:incorrectness}). 
Given a function $f\colon A \to MB$, its monadic extension $f^\dagger \colon MA \to MB$ is defined as $f^\dagger(m) = \textsf{bind}(m, f)$. 

\begin{definition}[PCM]
A partial commutative monoid (PCM) is a triple  $\langle X, \monoid, \ident \rangle$ consisting of a set $X$ and a partial binary operation $\monoid\colon X\to X\rightharpoonup X$ that is associative, commutative, and has unit $\ident$. 
\end{definition}
A typical example of a PCM, used in probabilistic reasoning, is $\langle [0,1], +, 0\rangle$ ($+$ is partial, it is undefined when the addition is out-of-bounds).
We are now ready to define the execution model we need to provide semantics to our language. 

\begin{definition}[Execution Model]
An Execution Model is a structure $\normalfont\langle M, \textsf{bind}, \textsf{unit}, \monoid, \ident\rangle$ such that $\normalfont\langle M, \textsf{bind}, \textsf{unit}\rangle$ is a monad in the category of sets, and for any set $A$, $\langle MA, \monoid, \ident\rangle$ is a PCM that preserves the monad \textsf{bind}: $\normalfont\textsf{bind}(m_1\monoid m_2, k) = \textsf{bind}(m_1, k) \monoid \textsf{bind}(m_2,k)$  and $\normalfont\bind(\ident, k) = \ident$.
\end{definition}

\begin{figure}
{\small
\[\de{C}\colon\Sigma\rightharpoonup M\Sigma\]
\begin{align*}
C\ ::=&~ \zero & \de{\zero}(\sigma) &= \ident\\
\mid&~ \bb{1} & \de{\bb{1}}(\sigma) &= \textsf{unit}(\sigma)\\
\mid&~ C_1 \fatsemi C_2 & \de{C_1\fatsemi C_2}(\sigma) &= \textsf{bind}(\de{C_1}(\sigma), \de{C_2})\\
\mid&~ C_1 + C_2 & \de{C_1+ C_2}(\sigma) &= \de{C_1}(\sigma) \monoid \de{C_2}(\sigma)\\
\mid&~ C^\star & \llbracket{C^\star}\rrbracket(\sigma) &= \textsf{lfp}(\lambda f.\lambda \sigma. f^\dagger(\de{C}(\sigma))) \monoid \unit(\sigma))(\sigma) \\
\mid&~ c & \de{c}(\sigma) &= \de{c}_\textsf{atom}(\sigma)
\end{align*}}
\caption{Syntax and Semantics of the Command Language parameterized by an execution model $\langle M, \bind, \unit, \monoid, \ident\rangle$ and a language of atomic commands with semantics $\de{c}_\textsf{atom} \colon \Sigma\to M\Sigma$}
\label{fig:comlang}
\end{figure}

In \Cref{fig:comlang} we present the semantics of the language. The monad operations are used to provide semantics to $\bb{1}$ and sequential composition $\fatsemi$ whereas the monoid operation is used in the semantics of choice and iteration. Note that in general the semantics of the language is partial since $\monoid$ is partial, which is necessary in order to express a probabilistic semantics, since two probability distributions can only be combined if their cumulative probability mass is at most 1. For the languages we will work with in this paper, there are simple syntactic checks to ensure totality of the semantics. In the probabilistic case, this involves ensuring that all uses of $+$ and $\star$ are guarded. We show that the semantics is total for the execution models of interest in \Aref{app:totality}.

\begin{example}[State and Guarded Commands]
\label{ex:gcl}
The base language introduced in the previous section is parametric over a set of program states $\Sigma$. In this example, we describe a specific type of program state, the semantics of commands over those states, and a mechanism to define the typical control flow operators (\textsf{if} and \textsf{while}). First, we assume some syntax of program expressions $e\in\textsf{Exp}$ which includes variables $x\in\textsf{Var}$ as well as the typical Boolean and arithmetic operators. Atomic commands come from the following syntax.
\[c ::= \textsf{assume }e \mid x:=e \qquad (x\in\textsf{Var}, e\in \textsf{Exp}) \]
The command $\textsf{assume}~e$ does nothing if $e$ is true and eliminates the current outcome if not; $x:=e$ is variable assignment. 
A program stack is a mapping from variables to values $\mathcal{S} = \{ s:\textsf{Var}\to\textsf{Val}\}$ where program values $\textsf{Val} = \mathbb{Z} \cup \mathbb{B}$ are integers ($\mathbb Z$) or Booleans ($\mathbb{B} = \{\textsf{true}, \textsf{false}\}$).
Expressions are evaluated to values given a stack using $\de{e}_\textsf{Exp}\colon\mathcal{S}\to\textsf{Val}$. 
The semantics of atomic commands $\de{c}\colon \mathcal{S}\to M\mathcal{S}$, parametric on an execution model, is defined below.
\[
{\de{\textsf{assume}~e}}(s) = \left\{\begin{array}{ll}
\textsf{unit}(s) & \text{if}~\de{e}_\textsf{Exp}\!(s) = \textsf{true} \\
\ident & \text{if}~\de{e}_\textsf{Exp}\!(s) = \textsf{false}
\end{array}\right.
\hspace{2em}
{\de{x:=e}}(s) = \textsf{unit}(s[x\mapsto\de{e}_\textsf{Exp}\!(s)])
\]
While a language instantiated with the atomic commands described above is still nondeterministic, we can use \textsf{assume} to define the usual (deterministic) control flow operators as syntactic sugar.
\begin{align*}
\iftf{e}{C_1}{C_2} ~&= (\textsf{assume}~e\fatsemi C_1)+(\textsf{assume}~\lnot e\fatsemi C_2) & \skp ~&= \bb{1} & C^0 ~&= \bb{1}\\
\whl{e}{C} ~&= (\textsf{assume}~e\fatsemi C)^\star\fatsemi\textsf{assume}~\lnot e &
\textsf{for}~N~\textsf{do}~C ~&= C^N
& C^{k+1} ~&= C\fatsemi C^k
\end{align*}
In fact, when paired with a nondeterministic evaluation model, this language is equivalent to \cites{gcl} Guarded Command Language (\textsf{GCL}) by a straightforward syntactic translation.
\end{example}

\section{Outcome Logic}
\label{sec:logic}

In this section, we formally define Outcome Logic (OL). We first define the logic of outcome assertions which will act as the basis for writing pre- and postconditions in OL. Next, we give the semantics of OL triples, which is parametric on an execution model, atomic command semantics, and an assertion logic. Finally, we give proof rules that are sound for all OL instances.

\subsection{A Logic for Monoidal Assertions: Modeling the Outcome Conjunction}
\label{sec:outcomes}

We now give a formal account of the outcome assertion logic that was briefly described in \Cref{sec:mhl}. The outcome assertion logic is an instance of the Logic of Bunched Implications (BI)~\cite{bi}, a substructural logic that is used to reason about resource usage. Separation logic~\cite{sl} and its extensions~\cite{csl} are the most well-known applications of BI. In our case, the relevant resources are \emph{program outcomes} rather than heap locations. 

We use the formulation of BI due to~\citet{simon}. While \citet{simon} gives a thorough account of the BI proof theory, we are mainly interested in the semantics for the purposes of this paper. The syntax and semantics are given in Figure~\ref{fig:outcomes} with logical negation $\lnot\varphi$ being defined as $\varphi\Rightarrow\bot$. The semantics is parametric on a BI frame $\langle X, \monoid, \preccurlyeq, \varnothing\rangle$---where $\langle X, \monoid, \varnothing\rangle$ is a PCM and $\mathord\preccurlyeq\subseteq X\times X$ is a preorder---and
a satisfaction relation for basic assertions  ${\vDash_\mathsf{atom}}\subseteq X\times\prop$.

The two non-standard additions are the \emph{outcome conjunction} $\oplus$, a connective to join outcomes, and $\top^\oplus$, an assertion to specify that there are no outcomes. These intended meanings are reflected in the semantics: $\top^\oplus$ is only satisfied by the monoid unit $\ident$, whereas $\varphi\oplus\psi$ is satisfied by $m$ iff $m$ can be partitioned into $m_1 \monoid m_2$ to satisfy each outcome formula separately.
We will focus on \emph{classical} interpretations of BI where the preorder $\preccurlyeq$ is equality.\footnote{Intuitionistic interpretations of BI with non-trivial preorders can be used as an alternative way to encode under-approximate program logics. This idea is explored in \Aref{app:intuitionistic}.}


\begin{definition}[Outcome Assertion Logic]\label{def:outcome}
Given an execution model $\langle M, \bind, \unit, \monoid, \ident\rangle$ and a satisfaction relation for atomic assertions ${\vDash_\textsf{atom}}\subseteq M\Sigma\times\prop$, an Outcome Assertion Logic is an instance of BI based on the BI frame $\langle M\Sigma, \monoid, =, \ident\rangle$. Informally, we refer to BI assertions $\varphi,\psi$ as outcome assertions and the atomic assertions $P,Q\in\prop$ as individual outcomes.
\end{definition}

\begin{remark}[Notation for Assertions]
For the remainder of the paper, lowercase Greek metavariables $\varphi,\psi$ refer to (syntactic) outcome assertions (\Cref{def:outcome}), uppercase Latin metavariables $P$, $Q$ refer to atomic assertions (individual outcomes), and lowercase Latin metavariables $p$, $q$ refer to assertions on individual program states.
\end{remark}

\begin{example}[Outcomes]
We mentioned one example of a PCM in \Cref{sec:overview}: $X$ can be sets of program states and the monoid operation $\monoid$ is set union. Another example is probability (sub)distributions over a set and $\monoid$ is $+$. This monoid operation is partial; adding two subdistributions is only possible if the mass associated with a point (and the entire distribution) remains in $[0,1]$.
\end{example}

\begin{figure}
\[
\small
\begin{array}{rl@{\qquad\qquad}lll}
\varphi ::=& \top \hspace{6em}\;& m\vDash\top &\multicolumn{2}{l}{\text{always}}\\
\mid&\bot & m\vDash\bot&\multicolumn{2}{l}{\text{never}}\\
\mid&\top^\oplus & m\vDash\top^\oplus & \text{iff}& m=\varnothing \\
\mid& \varphi\land\psi & m\vDash\varphi\land \psi &\text{iff} & m\vDash\varphi ~\text{and}~m\vDash\psi\\
\mid& \varphi\oplus\psi & m\vDash\varphi\oplus\psi & \text{iff} & \exists m_1,m_2. ~ m_1\monoid m_2 \preccurlyeq m ~\text{and}~ m_1\vDash\varphi ~\text{and}~ m_2\vDash\psi\\
\mid& \varphi\Rightarrow\psi & m\vDash \varphi\Rightarrow\psi & \text{iff} & \forall m'. ~\text{if}~m \preccurlyeq m' ~\text{and}~m'\vDash\varphi ~\text{then}~ m'\vDash\psi \\
\mid& P & m\vDash P & \text{iff} & P\in\prop ~\text{and}~ m \vDash_\textsf{atom} P
\end{array}
\]
\caption{Syntax and semantics of BI given a BI frame $\langle X, \monoid,\preccurlyeq,\varnothing\rangle$ and satisfaction relation ${\vDash_\textsf{atom}}\subseteq X\times\prop$}
\label{fig:outcomes}
\end{figure}

As discussed in \Cref{sec:overview}, under-approximation and the ability to drop outcomes is an important part of incorrectness reasoning as it allows large scale analyses to only track pertinent information. We use the following shorthand to express under-approximate outcome assertions.

\begin{definition}[Under-Approximate Outcome Assertions]\label{def:uxoutcome}
Given an outcome assertion logic with satisfaction relation ${\vDash}\subseteq M\Sigma\times\prop$, we define an under-approximate version ${\vDash^\downarrow}\subseteq M\Sigma\times\prop$ as $m \vDashD \varphi$ iff $~m \vDash \varphi\oplus\top$.
\end{definition}

Intuitively, $\varphi\oplus\top$ corresponds to under-approximation since it states that $\varphi$ only covers a subset of the outcomes (with the rest being unconstrained, since they are covered by $\top$). Defining under-approximation in this way allows us to reason about correctness and incorrectness within a single program logic. It also enables us to drop outcomes simply by weakening; it is always possible to weaken an outcome to $\top$, so $m\vDash P\oplus Q$ implies that $m\vDash P\oplus\top$. Equivalently, $m\vDash^\downarrow P\oplus Q$ implies that $m\vDash^\downarrow P$.
These facts are proven in \Aref{app:ux}. A similar formulation would be possible using an intuitionistic interpretation of BI (where, roughly speaking, we take the preorder to be $m_1\preccurlyeq m_2$ iff $\exists m. ~m_1\monoid m = m_2$). We prove this correspondence in \Aref{app:intuitionistic}.

\subsection{Outcome Triples}
\label{sec:mhldef}

We now have all the ingredients needed to define the validity of the program logic.

\begin{definition}[Outcome Triples]\label{def:mht} The parameters needed to instantiate OL are:
\begin{enumerate}
\item An execution model: $\langle M, \bind, \unit, \monoid, \ident\rangle$
\item A set of program states $\Sigma$ and semantics of atomic commands: $\de{c}_\mathsf{atom}\colon\Sigma\to M\Sigma$
\item A syntax of atomic assertions $\prop$ and satisfaction relation: ${\vDash_\mathsf{atom}}\subseteq M\Sigma\times\prop$
\end{enumerate}
Now, let $\de{-} \colon \Sigma\to M\Sigma$ be the semantics of the language in \Cref{fig:comlang} with parameters (1) and (2) and $\vDash$ be the outcome assertion satisfaction relation (\Cref{def:outcome}) with parameters (1) and (3). For any program $C$ (\Cref{fig:comlang}), and outcome assertions $\varphi$ and $\psi$:
\[
\vDash\triple{\varphi}C\psi
\qquad\text{iff}\qquad
 \forall m\in M\Sigma. \quad m\vDash \varphi \quad\implies\quad \dem{C}{m}\vDash \psi
 \]
\end{definition}
OL is a generalization of Hoare Logic---the triples first quantify over elements satisfying the precondition and then stipulate that the result of running the program on those elements satisfies the postcondition. The difference is that now the pre- and postconditions are expressed as outcome assertions and thus satisfied by a monoidal collection $m\in M\Sigma$, which can account for execution models based on nondeterminism and probability distributions.

Using outcome assertions for pre- and postconditions adds significant expressive power. We already saw in \Cref{sec:overview} how Outcome Logic allows us to reason about reachability and under-approximation. We can also encode other useful concepts such as partial correctness---the postcondition holds \emph{if} the program terminates---by taking a disjunction with $\top^\oplus$ to express that the program may diverge\footnote{
Disjunctions are defined $\varphi\vee\psi$ iff $\lnot(\lnot\varphi\land\lnot\psi)$, a standard encoding in classical logic.
}.
For convenience, we define the following notation where the left triple encodes under-approximation and the right triple encodes partial correctness.\
 \[
 \vDashD\triple\varphi{C}\psi
 \quad\text{iff}\quad
 \vDash\triple{\varphi}C{\psi\oplus\top}
 \qquad\qquad
 \vDash_\mathsf{pc}\!\!\triple\varphi{C}\psi
 \quad\text{iff}\quad
 \vDash\triple\varphi{C}{\psi\lor\top^\oplus}
 \]
 In fact, the right triple corresponds exactly to standard Hoare Logic (\Cref{fig:triples}) if we instantiate OL using the powerset semantics (\Cref{def:ndeval}) and limit the pre- and postconditions to be atomic assertions. This result is stated below and proven in \Aref{app:triple-proofs}.
 \begin{restatable}[Subsumption of Hoare Triples]{theorem}{subhoare}
\label{thm:hl}
$\vDash\hoare PCQ$ \quad iff \quad
$\vDash_\mathsf{pc}\!\triple PC{Q}$
\end{restatable}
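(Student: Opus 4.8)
The plan is to instantiate OL with the powerset execution model and the state-level satisfaction relation ${\vDash_\Sigma}$, and then unfold both sides of the biconditional into conditions phrased entirely in terms of individual states. Recall that under the powerset monad $\bind(S,f) = \bigcup_{\sigma\in S}f(\sigma)$, so $\dem{C}{S} = \bigcup_{\sigma\in S}\de{C}(\sigma)$ and in particular $\dem{C}{\{\sigma\}} = \de{C}(\sigma)$; the monoid unit is $\varnothing$, so $S\vDash\top^\oplus$ iff $S=\varnothing$; and, since we use the classical (preorder $=$) interpretation of BI, $S\vDash\varphi\lor\psi$ iff $S\vDash\varphi$ or $S\vDash\psi$. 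Unfolding the powerset semantics of the atomic assertion $Q$, it follows that $S\vDash Q\lor\top^\oplus$ iff ($S=\varnothing$) or ($S\neq\varnothing$ and $\forall\tau\in S.\,\tau\vDash_\Sigma Q$), which collapses to the single condition $\forall\tau\in S.\,\tau\vDash_\Sigma Q$. Hence, since the precondition $P$ is atomic and $S\vDash P$ forces $S\neq\varnothing$, the triple $\vDash_\mathsf{pc}\triple PCQ$ is equivalent to: for every nonempty $S$ with $\forall\sigma\in S.\,\sigma\vDash_\Sigma P$, every $\tau\in\dem CS$ satisfies $\tau\vDash_\Sigma Q$.

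With both triples reduced to state-level statements, the two directions are short. For $(\Rightarrow)$, assume $\vDash\hoare PCQ$, take a nonempty $S$ all of whose states satisfy $P$, and take $\tau\in\dem CS = \bigcup_{\sigma\in S}\de C(\sigma)$; then $\tau\in\de C(\sigma)$ for some $\sigma\in S$, and since $\sigma\vDash_\Sigma P$ the Hoare triple gives $\tau\vDash_\Sigma Q$. For $(\Leftarrow)$, assume the $\vDash_\mathsf{pc}$ triple, take any $\sigma$ with $\sigma\vDash_\Sigma P$, and instantiate the reduced condition with the singleton $S = \{\sigma\}$, which is nonempty and all of whose states satisfy $P$; we obtain that every $\tau\in\dem C{\{\sigma\}} = \de C(\sigma)$ satisfies $\tau\vDash_\Sigma Q$, which is exactly the Hoare obligation for $\sigma$.

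I expect the only delicate point to be the first paragraph: one must carefully expand the classical BI semantics of $\lor$ and $\top^\oplus$ (via the encodings $\varphi\lor\psi \equiv \lnot(\lnot\varphi\land\lnot\psi)$ and $\lnot\varphi\equiv\varphi\Rightarrow\bot$, with the preorder being equality) to confirm that adjoining $\top^\oplus$ as a disjunct precisely removes the nonemptiness (reachability) obligation from the postcondition---thereby recovering the ``may-diverge'' flavor of partial correctness---while leaving the universally quantified per-state obligation intact. One should also note that the powerset instantiation of the command semantics of \Cref{fig:comlang} coincides with the reachable-states function $\de{\cdot}$ used in the Hoare semantics of \Cref{fig:triples}, so that the two occurrences of $\de{\cdot}$ denote the same object; everything else is routine.
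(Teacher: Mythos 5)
Your proposal is correct and follows essentially the same route as the paper's own proof: the forward direction uses $\dem{C}{S}=\bigcup_{\sigma\in S}\de{C}(\sigma)$ together with a case analysis on whether the result is empty (which you simply absorb up front by observing that $S\vDash Q\lor\top^\oplus$ collapses to $\forall\tau\in S.\,\tau\vDash_\Sigma Q$), and the backward direction instantiates the OL triple at the singleton $\{\sigma\}$, exactly as in the paper. The only difference is presentational—you factor the semantic unfolding out as a preliminary reduction rather than repeating the emptiness case split in each direction—and that reduction is carried out correctly.
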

While capturing many logics in one framework is interesting and demonstrates the versatility of Outcome Triples, our primary goal is to investigate the roles that these program logics can play for expressing correctness and incorrectness properties. We justify OL as a theoretical basis for correctness and incorrectness reasoning in \Cref{sec:falsification} and give examples for how OL can be applied to nondeterministic and probabilistic programs in \Cref{sec:incorrectness} and \Cref{sec:prob}.

\subsection{Proof Systems}
\label{sec:rules}

\begin{figure}
\footnotesize
\begin{flushleft}\fbox{Generic Rules}\end{flushleft}
\[
\inferrule{\;}{\triple{\varphi}{\zero}{\top^\oplus}}{\textsc{Zero}}
\hspace{1.5em}
\inferrule{\;}{\triple{\varphi}{\bb{1}}{\varphi}}{\textsc{One}}
\hspace{1.5em}
\inferrule{\triple \varphi{C_1}\psi \\ \triple \psi{C_2}\vartheta}
{\triple \varphi{C_1\fatsemi C_2}\vartheta}
{\textsc{Seq}}
\hspace{1.5em}
\inferrule{\forall i\in\mathbb{N}.~\triple{\varphi_i}C{\varphi_{i+1}}}
{\triple{\varphi_0}{\textsf{for}~N~\textsf{do}~C}{\varphi_N}}
{\textsc{For}}
\]
\vspace{1em}
\[
\inferrule{
\triple{\varphi_1}C{\psi_1}
\hspace{1em}
\triple{\varphi_2}C{\psi_2}
}
{\triple{\varphi_1\oplus\varphi_2}C{\psi_1\oplus\psi_2}}
{\textsc{Split}}
\qquad\qquad
\inferrule{
\varphi'\Rightarrow\varphi \hspace{1em}
\triple\varphi{C}\psi \hspace{1em}
\psi\Rightarrow\psi'}
{\triple{\varphi'}C{\psi'}}
{\textsc{Consequence}}
\]
\[
\inferrule{\;}{\triple{\top^\oplus}C{\top^\oplus}}{\textsc{Empty}}
\qquad\qquad
\inferrule{\;}{\triple{\varphi}C\top}{\textsc{True}}
\qquad\qquad
\inferrule{\;}{\triple{\bot}C\varphi}{\textsc{False}}
\]
\vspace{1em}
\begin{flushleft}
\fbox{Nondeterministic Rules}
\end{flushleft}
\[
\inferrule{\triple \varphi{C_1}{\psi_1} \\ \triple \varphi{C_2}{\psi_2}}
{\triple \varphi{C_1+C_2}{\psi_1\oplus \psi_2}}
{\textsc{Plus}}
\hspace{4em}
\inferrule
{\triple{\varphi}{\bb{1} + C\fatsemi C^\star}{\psi}}
{\triple{\varphi}{C^\star}\psi}
{\textsc{Induction}}
\]
\vspace{1em}
\begin{flushleft}\fbox{Expression-Based Rules}\end{flushleft}
\[
\inferrule{\;}{\triple{P[e/x]}{x:=e}{P}}{\textsc{Assign}}
\hspace{4em}
\inferrule{
P_1\vDash e
\\
P_2\vDash \lnot e
}
{\triple{P_1\oplus P_2}{\assume e}{P_1}}
{\textsc{Assume}}
\]
\vspace{1em}
\[
\inferrule{
P_1\vDash e \\ \triple{P_1}{C_1}{Q_1}
\\
P_2\vDash \lnot e \\ \triple{P_2}{C_2}{Q_2}
}
{\triple{P_1\oplus P_2}{\iftf{e}{C_1}{C_2}}{Q_1\oplus Q_2}}
{\textsc{If (Multi-Outcome)}}
\]

\caption{Inference rules that are valid for a variety of OL instantiations. The metavariables $\varphi$, $\psi$ refer to arbitrary outcome assertions and $P$, $Q$ refer to atomic (single-outcome) assertions.
}
\label{fig:baserules}
\end{figure}

Now that we have formalized the \emph{validity} of Outcome triples (denoted $\vDash\triple \varphi C\psi$), we can construct proof systems for this family of logics.  We write $\vdash\triple \varphi C\psi$ to mean that the triple $\triple \varphi C\psi$ is \emph{derivable} from a set of inference rules. Each set of inference rules that we define throughout the paper will be \emph{sound} with respect to a certain OL instance.

\medskip
\noindent\textbf{\em Global rules.} Some generic rules that are valid for any OL instance are shown at the top of \Cref{fig:baserules}. Most of the rules including \textsc{Zero}, \textsc{One}, and \textsc{Seq} are standard. The Rule of \textsc{Consequence} allows the strengthening and weakening of pre- and postconditions respectively using any semantically valid BI implication. The \textsc{Split} rule allows us to analyze the program $C$ with two different pre/postcondition pairs and join the results using an outcome conjunction.
\medskip

\noindent\textbf{\em Rules for nondeterministic programs.} In the middle of \Cref{fig:baserules} we see two rules that are only valid in nondeterministic languages where the semantics is based on the powerset monad. The \textsc{Plus} rule characterizes nondeterministic choice by joining the outcomes from analyzing each branch using an outcome conjunction. Repeated uses of the \textsc{Induction} rule allow us to unroll an iterated command for a finite number of iterations.
\medskip

\noindent\textbf{\em Rules for guarded programs.}  Finally, at the bottom of \Cref{fig:baserules} is a collection of rules for expression-based languages that have the syntax introduced in \Cref{ex:gcl}. We write $P\vDash e$ to mean that $P$ \emph{entails} $e$. Formally, if $P\vDash e$ and $Q\vDash\lnot e$ and $m\vDash P\oplus Q$, then $\dem{\assume e}{m}\vDash P$. Substitutions $P[e/x]$ must be defined for basic assertions and satisfy $m\vDash P[e/x]$ implies $\dem{x:=e}{m}\vDash P$.

The \textsc{Assign} rule uses \emph{weakest-precondition} style backwards substitution. \textsc{Assume} uses expression entailment to annihilate the outcome where the guard is false.
Similarly, \textsc{If (Multi-Outcome)} uses entailment to map entire outcomes to the true or false branches of an if statement, respectively.
\medskip

\noindent All the rules in \Cref{fig:baserules} are sound (see \Aref{app:soundness} for details of the proof). 
\begin{theorem}[Soundness of Proof System]
If \;$\vdash\triple PCQ$\; then\; $\vDash\triple PCQ$
\end{theorem}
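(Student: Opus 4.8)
The plan is to prove soundness by rule induction on the derivation $\vdash \triple{P}{C}{Q}$: for each inference rule in \Cref{fig:baserules}, we assume that the conclusions of the premises are \emph{valid} triples (the induction hypotheses) and show that the conclusion is valid as well. Validity here means unfolding \Cref{def:mht}: for all $m \in M\Sigma$ with $m \vDash \varphi$, we have $\dem{C}{m} \vDash \psi$. Several rules are immediate from the semantics of the command language in \Cref{fig:comlang} and the semantics of BI in \Cref{fig:outcomes}: \textsc{Zero} follows since $\de{\zero}(\sigma) = \ident$ and $\bind(\ident, k) = \ident$ in any execution model, so $\dem{\zero}{m} = \ident \vDash \top^\oplus$; \textsc{One} follows from $\dem{\bb{1}}{m} = m$ using the monad laws; \textsc{True}, \textsc{False}, and \textsc{Empty} follow directly from the BI satisfaction clauses for $\top$, $\bot$, and $\top^\oplus$ (using $\bind(\ident,k) = \ident$ for the last one).

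For the structural rules, \textsc{Seq} chains the two induction hypotheses using associativity of $\bind$ (monad law 3): $\dem{C_1 \fatsemi C_2}{m} = \dem{C_2}{\dem{C_1}{m}}$, so if $m \vDash \varphi$ then $\dem{C_1}{m} \vDash \psi$ and hence $\dem{C_2}{\dem{C_1}{m}} \vDash \vartheta$. The \textsc{Consequence} rule uses soundness of the BI implications $\varphi' \Rightarrow \varphi$ and $\psi \Rightarrow \psi'$ — in the classical frame where $\preccurlyeq$ is equality, $m \vDash \varphi' \Rightarrow \varphi$ means exactly that $m \vDash \varphi'$ implies $m \vDash \varphi$ — so validity transfers along the implications. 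The \textsc{Split} rule is the key monoidal step: if $m \vDash \varphi_1 \oplus \varphi_2$ then $m = m_1 \monoid m_2$ with $m_i \vDash \varphi_i$; by the execution-model axiom $\bind(m_1 \monoid m_2, k) = \bind(m_1,k) \monoid \bind(m_2,k)$ we get $\dem{C}{m} = \dem{C}{m_1} \monoid \dem{C}{m_2}$, and the induction hypotheses give $\dem{C}{m_i} \vDash \psi_i$, so $\dem{C}{m} \vDash \psi_1 \oplus \psi_2$. The \textsc{For} rule is a finite induction on $N$ using \textsc{Seq} and the unfolding $C^{N} = C \fatsemi C^{N-1}$ (with $C^0 = \bb{1}$). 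The \textsc{Plus} rule is \textsc{Split} specialized to the powerset execution model together with $\de{C_1+C_2}(\sigma) = \de{C_1}(\sigma) \monoid \de{C_2}(\sigma)$ (here one should check the monadic lifting distributes appropriately over the pointwise union). The expression-based rules \textsc{Assign}, \textsc{Assume}, \textsc{If (Multi-Outcome)} follow from the stipulated properties of substitution and expression entailment stated just before the theorem, combined with \textsc{Split} for the two-outcome case.

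The hardest case is the \textsc{Induction} rule for the Kleene star, since $\de{C^\star}$ is defined as a least fixed point: $\de{C^\star}(\sigma) = \mathsf{lfp}(\Phi)(\sigma)$ where $\Phi(f) = \lambda\sigma.\, f^\dagger(\de{C}(\sigma)) \monoid \unit(\sigma)$. The rule says that if $\triple{\varphi}{\bb{1} + C \fatsemi C^\star}{\psi}$ is valid then so is $\triple{\varphi}{C^\star}{\psi}$; semantically this reduces to showing $\dem{C^\star}{m} = \dem{\bb{1} + C\fatsemi C^\star}{m}$, i.e. that the least fixed point satisfies its own unfolding equation $\de{C^\star}(\sigma) = \de{C}^\dagger(\de{C^\star})(\sigma) \monoid \unit(\sigma) = \de{\bb{1}}(\sigma) \monoid \de{C \fatsemi C^\star}(\sigma)$. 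This is just the fixed-point property of $\mathsf{lfp}(\Phi)$, provided $\Phi$ is monotone on a suitable complete lattice of partial functions $\Sigma \rightharpoonup M\Sigma$ so that the least fixed point exists — the main obstacle is setting up this domain-theoretic scaffolding (continuity of $\bind$ and $\monoid$, well-definedness of $\mathsf{lfp}$ in the partial setting) carefully, which is presumably deferred to the appendix. I would handle it by appealing to a fixed-point lemma established with the language semantics in \Cref{sec:prelim} (or \Aref{app:totality}), so that \textsc{Induction} becomes a one-line consequence of the unfolding identity. Finally, since the theorem as stated restricts to atomic $P, Q$, I would remark that rule induction nonetheless must pass through intermediate outcome assertions $\varphi, \psi$ (e.g. in \textsc{Plus} and \textsc{Split}), so the induction is really proved for the general statement $\vdash \triple{\varphi}{C}{\psi} \Rightarrow \vDash \triple{\varphi}{C}{\psi}$, of which the stated theorem is the special case.
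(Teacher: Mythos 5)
Your proposal is correct and follows essentially the same route as the paper's own proof: rule induction over the derivation, with \textsc{Seq} via monad associativity, \textsc{Split}/\textsc{Plus} via the execution-model distributivity of $\bind$ over $\monoid$, the expression-based rules via the stipulated properties of entailment and substitution, and \textsc{Induction} reduced to the unfolding identity $\dem{C^\star}{m} = \dem{\bb{1}+C\fatsemi C^\star}{m}$ using the fixed-point construction established with the language semantics in the totality appendix. The only detail the paper adds that you gloss over is the partiality of $\monoid$ in the \textsc{Split} case (handled there by restricting to syntactically valid programs), which is a minor caveat rather than a gap.
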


\noindent Note that it is not possible to have generic loop-invariant based iteration rules that are valid for \emph{all} instances of Outcome Logic. This is because loop invariants assume a \emph{partial} correctness specification; they do not guarantee termination. Outcome Logic---in some instantiations---guarantees reachability of end states and therefore must witness a \emph{terminating} program execution. 
This is in line with the Backwards Variant rule from Incorrectness Logic~\cite[Fig.2]{il}, the While rule from Reverse Hoare Logic~\cite[Fig.2]{reversehoare}, and Loop Variants from Total Hoare Logic~\cite{variant}. Such a rule for \textsf{GCL} is available in \Aref{sec:extra-rules}.

\section{Modeling Correctness and Incorrectness via Outcomes}
\label{sec:falsification}

Incorrectness Logic was motivated in large part by its ability to {\em disprove} correctness specifications (\ie Hoare Triples) \cite[Thm 4.1]{ilalgebra}. 
In this section, we prove that OL can not only disprove Hoare Triples (\Cref{thm:mhlil}), but it can also express strictly \emph{more} types of incorrectness than IL can. \Cref{thm:ndfalse} shows three classes of bugs in nondeterministic programs that can be characterized in OL, only one of which is expressible in IL. \Cref{sec:probfalse} shows that OL can express probabilistic incorrectness too, whereas IL cannot.

Our first result is stated in terms of \emph{semantic} triples in which the pre- and postconditions are \emph{semantic} assertions (which we denote with uppercase Greek metavariables $\Phi,\Psi\in\bb{2}^{M\Sigma}$) rather than the \emph{syntactic} assertions $\varphi,\psi\in\prop$ we have seen thus far. The advantage of this approach is that we can show the power of the OL model without worrying about the expressiveness of the syntactic assertion language (as a point of reference, the formal development of Incorrectness Logic is purely semantic~\cite{il, ilalgebra, realbugs}, as was the metatheory for separation logic~\cite{semanticsep, yang}).

The following Falsification theorem states that any false OL triple can be disproven within OL. Since we already know that OL subsumes Hoare Logic (\Cref{thm:hl}), it follows that any correctness property that is expressible in Hoare Logic can be disproven using OL. We use $\vDash_S\triple\Phi C\Psi$ to denote a valid semantic OL triple, that is: if $m\in\Phi$, then $\dem Cm\in\Psi$.
The assertion $\mathsf{sat}(\Phi)$ means that $\Phi$ is satisfiable, in other words $\Phi\neq\emptyset$.

\begin{restatable}[Semantic Falsification]{theorem}{falsification}\label{thm:falsification}
For any OL instance and any program $C$ and semantic assertions $\Phi$, $\Psi$:
\[
\not\vDash_S\triple\Phi C\Psi
\qquad\text{iff}\qquad
\exists \Phi'.\quad\text{such that}\quad\Phi'\Rightarrow\Phi,  \quad\mathsf{sat}(\Phi'), \quad\text{and}\quad \vDash_S\triple{\Phi'}C{\lnot\Psi}
\]
\end{restatable}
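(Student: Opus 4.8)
The plan is to prove the biconditional directly, working entirely at the semantic level so that no assumptions about the expressiveness of the syntactic assertion language are needed. Recall that $\vDash_S\triple\Phi C\Psi$ means $\forall m.\ m\in\Phi \implies \dem Cm\in\Psi$, that $\Phi'\Rightarrow\Phi$ is just set inclusion $\Phi'\subseteq\Phi$, and that $\lnot\Psi$ is the complement $M\Sigma\setminus\Psi$. So we must show $\not\vDash_S\triple\Phi C\Psi$ iff there is a nonempty $\Phi'\subseteq\Phi$ with the property that $m\in\Phi'$ implies $\dem Cm\notin\Psi$.

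For the forward direction ($\Rightarrow$), suppose the triple is not valid. Then by definition there exists some $m\in\Phi$ such that $\dem Cm\notin\Psi$. I would simply take $\Phi' := \{m\}$. This is nonempty, it is a subset of $\Phi$ since $m\in\Phi$, and for the unique element $m\in\Phi'$ we have $\dem Cm\notin\Psi$, i.e. $\dem Cm\in\lnot\Psi$; hence $\vDash_S\triple{\Phi'}C{\lnot\Psi}$. (One could equally take $\Phi'$ to be the full set of counterexamples $\{m\in\Phi \mid \dem Cm\notin\Psi\}$; the singleton is the cleanest witness and mirrors the idea, stressed throughout the overview, that under-approximate specs need only a single reachable witness.)

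For the reverse direction ($\Leftarrow$), suppose such a $\Phi'$ exists. By $\mathsf{sat}(\Phi')$ pick some $m\in\Phi'$. Since $\Phi'\subseteq\Phi$ we have $m\in\Phi$, and since $\vDash_S\triple{\Phi'}C{\lnot\Psi}$ we have $\dem Cm\in\lnot\Psi$, that is $\dem Cm\notin\Psi$. Thus $m$ witnesses $\not\vDash_S\triple\Phi C\Psi$. This direction is essentially immediate once the definitions are unfolded.

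Honestly, there is no real obstacle here: the theorem is a semantic restatement of ``a universally quantified implication fails iff it has a witnessing instance,'' and the only content is checking that the witness set can be chosen to be a subset of the original precondition (which is automatic, since counterexamples live in $\Phi$ by hypothesis) and that it is nonempty (which is exactly what it means for a counterexample to exist). The conceptual weight of the result is not in the proof but in its interpretation --- that OL's own triples, via the $\lnot\Psi$ postcondition, suffice to refute any false OL spec --- together with Theorem~\ref{thm:hl}, which upgrades this to the statement that OL refutes false Hoare triples. If anything merits care in the write-up, it is being explicit that $\lnot\Psi$ here denotes the semantic complement in $\bb{2}^{M\Sigma}$, so that $\dem Cm\notin\Psi \iff \dem Cm\in\lnot\Psi$ holds by definition and the equivalence is genuinely tight in both directions.
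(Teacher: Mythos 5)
Your proof is correct and follows essentially the same argument as the paper: in the forward direction you take $\Phi'$ to be the singleton $\{m\}$ of a counterexample (exactly the paper's choice), and the reverse direction is the same unfolding of $\mathsf{sat}(\Phi')$, $\Phi'\subseteq\Phi$, and the validity of $\triple{\Phi'}C{\lnot\Psi}$. Nothing is missing; your remark about $\lnot\Psi$ being the semantic complement matches the paper's setup of semantic assertions.
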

\begin{proof}
We provide a proof sketch here. If $\not\vDash_S\triple\Phi C\Psi$, then there must be an $m\in\Phi$ such that $\dem Cm\notin\Psi$. Choosing $\Phi' = \{m\}$ gives us $\vDash_S\triple{\Phi'}C{\lnot \Psi}$. For the reverse direction, we know from $\mathsf{sat}(\Phi')$ that there is an $m\in\Phi'$ and from $\Phi'\Rightarrow\Phi$, we know that $m\in\Phi$ and from $\vDash_S\triple{\Phi}C{\lnot\Psi}$, we know that $\dem Cm\notin\Psi$, so $\not\vDash_S\triple\Phi C\Psi$.
\end{proof}

The full proof of this theorem and formulation of semantic triples are given in \Aref{sec:semtriples}. While this result shows the power of the OL \emph{model}, we also seek to answer whether the outcome assertion syntax given in \Cref{def:outcome} can express the pre- and postconditions needed to disprove other triples. We answer this question in the affirmative, although the forward direction of the result has to be proven separately for nondeterministic and probabilistic models. While the semantic proof above applies to \emph{any} OL instance, the syntactic versions that we present in \Cref{sec:ndfalse} and \Cref{sec:probfalse} rely on additional properties of the specific OL instance. Despite the added complexity, we deem this worthwhile since syntactic descriptions give us a characterizations that can be used in the design of automated bug-finding tools.
 
The reverse direction of \Cref{thm:falsification} corresponds to \cites{il} Principle of Denial, though the original Principle of Denial used two triple types (IL and Hoare) and now we only need to use one (OL).  We can prove a syntactic version of The Principle of Denial for OL, which can be thought of as a generalization of the \emph{true positives} property, since it tells us when an OL triple (denoting a bug) disproves another OL triple (denoting correctness).

\begin{restatable}[Principle of Denial]{theorem}{truepos}\label{thm:denial}
For any OL instance and any program $C$ and syntactic assertions $\varphi$, $\varphi'$, and $\psi$:
\[
\text{If}\quad \varphi'\Rightarrow\varphi,  \quad\mathsf{sat}(\varphi'), \quad\text{and}\quad \vDash\triple{\varphi'}C{\lnot\psi}
\quad\text{then}\quad
\not\vDash\triple\varphi C\psi
\]
\end{restatable}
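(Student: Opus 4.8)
The plan is to replay the reverse (``$\Leftarrow$'') direction of the Semantic Falsification argument (Theorem~\ref{thm:falsification}), but now at the level of the \emph{syntactic} outcome assertions and the validity relation of Definition~\ref{def:mht}. So the starting move is to unpack $\mathsf{sat}(\varphi')$: by definition this gives a monoidal element $m \in M\Sigma$ with $m \vDash \varphi'$. The goal $\not\vDash\triple\varphi C\psi$ is, by Definition~\ref{def:mht}, exactly the statement that there exists some $m' \in M\Sigma$ with $m' \vDash \varphi$ but $\dem{C}{m'} \not\vDash \psi$; I will show the witness $m$ above works.

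First I would use the hypothesis $\varphi' \Rightarrow \varphi$ (semantic BI validity, in the sense already used by the Rule of \textsc{Consequence}) to promote $m \vDash \varphi'$ to $m \vDash \varphi$. Then, since $m \vDash \varphi'$ and we are given $\vDash\triple{\varphi'}C{\lnot\psi}$, the definition of the OL triple yields $\dem{C}{m} \vDash \lnot\psi$. It remains only to convert $\dem{C}{m}\vDash\lnot\psi$ into $\dem Cm\not\vDash\psi$.

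The one step that deserves care — and which I would isolate as a small lemma — is this last conversion. Recall $\lnot\psi$ is defined as $\psi \Rightarrow \bot$, and by the semantics in Figure~\ref{fig:outcomes}, $n \vDash \psi \Rightarrow \bot$ iff for all $n'$ with $n \preccurlyeq n'$ and $n' \vDash \psi$ we have $n' \vDash \bot$. Since we restrict to \emph{classical} BI frames where $\preccurlyeq$ is equality, this collapses to: if $n \vDash \psi$ then $n \vDash \bot$; and as nothing satisfies $\bot$, this is equivalent to $n \not\vDash \psi$. Applying this with $n = \dem{C}{m}$ gives $\dem{C}{m}\not\vDash\psi$, completing the argument.

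I do not expect a genuine obstacle here: the theorem is essentially the ``easy'' direction of Falsification transported to syntax, and the proof is a three-line chase through the definitions. The only pitfall is forgetting the classicality assumption on the BI frame — without $\preccurlyeq$ being equality, $\lnot\psi$ would not be equivalent to the complement of $\psi$, and the final step would break (this is precisely the asymmetry noted in the paper between the classical and intuitionistic interpretations). So the write-up should state explicitly that it uses the classical outcome assertion logic of Definition~\ref{def:outcome}, where the frame is $\langle M\Sigma, \monoid, =, \ident\rangle$.
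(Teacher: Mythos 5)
Your proof is correct, and it takes a (mildly) different route from the paper's. The paper derives the Principle of Denial as a corollary of the semantic machinery: it passes to semantic assertions $\Phi'=\sem{\varphi'}$, $\Phi=\sem{\varphi}$, $\Psi=\sem{\psi}$, uses the equivalence of syntactic and semantic triples (\Cref{lem:strip}) to obtain $\vDash_S\triple{\Phi'}{C}{\lnot\Psi}$, applies the reverse direction of Semantic Falsification (\Cref{thm:falsification}), and then uses \Cref{lem:strip} again to return to the syntactic statement. You instead inline the same three-step argument directly against \Cref{def:mht}: extract a witness $m\vDash\varphi'$ from $\mathsf{sat}(\varphi')$, promote it to $m\vDash\varphi$, apply the triple to get $\dem{C}{m}\vDash\lnot\psi$, and convert this to $\dem{C}{m}\not\vDash\psi$. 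The mathematical content is identical; what your version buys is self-containedness and an explicit justification of the step the paper leaves implicit, namely that $\sem{\lnot\psi}$ coincides with the set-complement $\lnot\sem{\psi}$. One small correction to your commentary: the direction you actually need, $n\vDash\lnot\psi \implies n\not\vDash\psi$, does \emph{not} depend on classicality --- it holds in any BI frame by reflexivity of $\preccurlyeq$ (take $n'=n$ in the semantics of $\psi\Rightarrow\bot$). Equality of the preorder is only needed for the converse implication, which is what the \emph{forward} direction of Falsification uses; so your proof would survive even in the intuitionistic variant, although stating the classical frame of \Cref{def:outcome} explicitly is of course harmless and matches the paper's setting.
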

This theorem is a consequence of \Cref{thm:falsification}, together with a result stating how to translate syntactic triples to equivalent semantic ones \Acref{lem:strip}. 

Proving a syntactic version of the forward direction of \Cref{thm:falsification} is more complicated---it requires us to witness the existence of a syntactic assertion corresponding to $\Phi'$. The way in which this assertion is constructed depends on several properties of the OL instance. One additional requirement is that the program $C$ must terminate after finitely many steps, otherwise the precondition may not be finitely expressible. This is a common issue when generating preconditions and as a result many developments choose to work with semantic assertions rather than syntactic ones~\cite{kaminski}. The IL falsification results are also only given semantically~\cite{il,ilalgebra}, which avoids infinitary assertions in loop cases.

In the following sections, we will investigate falsification in both nondeterministic and probabilistic OL instances. In doing so, we will provide more specific falsification theorems which both deal with syntactic assertions and more precisely characterize the ways in which particular programs can be incorrect.
While we have just seen that we can obtain a falsification witness for correctness specifications $\triple\varphi C\psi$ by negating the postcondition, proving a triple with postcondition $\lnot\psi$ may not be convenient. For example, if $\psi$ is a sequence of outcomes $Q_1\oplus\cdots\oplus Q_n$, then it is not immediately clear what $\lnot\psi$ expresses. We therefore provide more intuitive assertions for canonical types of incorrectness encountered in programs.

\subsection{Falsification in Nondeterministic Programs}
\label{sec:ndfalse}

In this section, we explore falsification for nondeterministic programs. The first step is to formally define a nondeterministic instance of OL by defining an evaluation model and BI frame.

\begin{definition}[Nondeterministic Evaluation Model]\label{def:ndeval}
A nondeterministic evaluation model based on program states $\sigma\in\Sigma$ is $\langle \bb{2}^\Sigma, \bind, \unit, \cup, \emptyset\rangle$ where $\langle \bb{2}^{(-)}, \bind, \unit\rangle$ is the powerset monad:
\[
\bind(S, k) \triangleq \bigcup_{x\in S} k(x)
\qquad\qquad
\unit(x) \triangleq \{x\}
\]
\end{definition}

\begin{definition}[Nondeterministic Outcome Assertions]\label{def:ndoc}
Given some satisfaction relation on program states ${\vDash_\Sigma}\subseteq \Sigma\times\prop$, we create an instance of the outcome assertion logic (\Cref{def:outcome}) with the BI frame $\langle \bb{2}^\Sigma, \cup, =, \emptyset\rangle$ such that atomic assertions come from $\prop$ and are satisfied as follows:
\[
S\vDash P
\qquad\text{iff}\qquad
S\neq\emptyset \quad\text{and}\quad \forall\sigma\in S.~\sigma\vDash_\Sigma P
\]
\end{definition}

We impose one additional requirement, that the atomic assertions $Q\in\prop$ can be logically negated\footnote{Crucially, $\overline{Q}$ is not the same as $\lnot Q$ (where $\lnot$ is from BI) since $S\vDash \lnot Q$ iff $S=\emptyset$ or $\exists \sigma\in S. \sigma\not\vDash_\Sigma Q$ whereas $S\vDash\overline{Q}$ iff $S\neq\emptyset$ and $\forall \sigma\in S.~\sigma\not\vDash_\Sigma Q$.}, which we will denote $\overline{Q}$. Now, we return to the question of how to falsify a sequence of nondeterministic outcomes $Q_1\oplus \cdots\oplus Q_n$. \Cref{lem:ndfalse} shows that there are exactly three ways that this assertion can be false. 

\begin{restatable}[Falsifying Assertions]{lemma}{ndfalse}
\label{lem:ndfalse}
For any $S \in \bb{2}^\Sigma$ and atomic assertions $Q_1, \ldots, Q_n$,
\[
S\not\vDash Q_1\oplus\cdots\oplus Q_n
\qquad\text{iff}\qquad
\exists i.~S\vDash\overline{Q}_i
\quad\text{or}\quad
S\vDash (\overline{Q}_1\land\cdots\land \overline{Q}_n)\oplus\top
\quad\text{or}\quad
S\vDash\top^\oplus
\]
\end{restatable}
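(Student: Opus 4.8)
The plan is to prove both directions of the iff by unpacking the semantics of $\oplus$, $\wedge$, $\top$, and $\top^\oplus$ for the nondeterministic BI frame $\langle \bb{2}^\Sigma, \cup, =, \emptyset\rangle$ from \Cref{def:ndoc} and \Cref{fig:outcomes}. Recall that since the preorder is equality, $S \vDash Q_1 \oplus \cdots \oplus Q_n$ means there exist $S_1, \ldots, S_n$ with $S_1 \cup \cdots \cup S_n = S$ and $S_i \vDash Q_i$ for each $i$, where $S_i \vDash Q_i$ unpacks to $S_i \neq \emptyset$ and $\forall \sigma \in S_i.\ \sigma \vDash_\Sigma Q_i$. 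The key structural observation is that the ``canonical'' witnessing partition, if one exists, must be $S_i = \{\sigma \in S : \sigma \vDash_\Sigma Q_i\}$ up to redistributing overlap; more precisely, $S \vDash Q_1 \oplus \cdots \oplus Q_n$ holds iff $S \neq \emptyset$, every $\sigma \in S$ satisfies \emph{some} $Q_i$, and every $Q_i$ is witnessed by \emph{some} $\sigma \in S$. This reformulation is the workhorse for both directions.

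For the $(\Leftarrow)$ direction I would handle each of the three disjuncts separately. If $S \vDash \top^\oplus$ then $S = \emptyset$, and $\emptyset \not\vDash Q_1 \oplus \cdots \oplus Q_n$ since any witnessing partition would force some $S_i = \emptyset \not\vDash Q_i$ (atomic assertions require nonemptiness). If $S \vDash \overline{Q}_i$ for some $i$, then $S \neq \emptyset$ and every $\sigma \in S$ fails $Q_i$; but any witnessing partition would need a nonempty $S_i \subseteq S$ with all elements satisfying $Q_i$, a contradiction. If $S \vDash (\overline{Q}_1 \wedge \cdots \wedge \overline{Q}_n) \oplus \top$, then (unfolding $\oplus$ with equality preorder) $S = S_1 \cup S_2$ with $S_1 \vDash \overline{Q}_1 \wedge \cdots \wedge \overline{Q}_n$, so $S_1 \neq \emptyset$ and picking any $\sigma_0 \in S_1 \subseteq S$ we get a state in $S$ that fails every $Q_i$; again this blocks any witnessing partition.

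For the $(\Rightarrow)$ direction, assume $S \not\vDash Q_1 \oplus \cdots \oplus Q_n$ and do a case split. If $S = \emptyset$, then $S \vDash \top^\oplus$ and we are done. If $S \neq \emptyset$, consider whether there is some $\sigma_0 \in S$ failing all of $Q_1, \ldots, Q_n$. If yes, then $\{\sigma_0\} \vDash \overline{Q}_1 \wedge \cdots \wedge \overline{Q}_n$ and $S = \{\sigma_0\} \cup S$, witnessing $S \vDash (\overline{Q}_1 \wedge \cdots \wedge \overline{Q}_n) \oplus \top$ (here $\top$ is satisfied by $S$ trivially). If no such $\sigma_0$ exists, then every $\sigma \in S$ satisfies some $Q_i$. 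The main obstacle is this final subcase: I must show that some $Q_i$ is satisfied by \emph{no} element of $S$. If, on the contrary, every $Q_i$ were witnessed by some element of $S$, I could build a partition proving $S \vDash Q_1 \oplus \cdots \oplus Q_n$ — e.g., set $S_i = \{\sigma \in S : \sigma \vDash_\Sigma Q_i\}$, which is nonempty by assumption, satisfies $S_i \vDash Q_i$, and has $\bigcup_i S_i = S$ because every element of $S$ satisfies some $Q_i$ — contradicting the hypothesis. Hence some $Q_i$ is unwitnessed, meaning every $\sigma \in S$ fails $Q_i$, i.e., $S \vDash \overline{Q}_i$. The only delicate point to get right is the bookkeeping that $\bigcup_i S_i = S$ exactly (no element left out, using the ``every $\sigma$ satisfies some $Q_i$'' assumption; no spurious elements, since each $S_i \subseteq S$) and that each $S_i \neq \emptyset$ — which is why the nonemptiness clause in the definition of atomic satisfaction is doing real work throughout.
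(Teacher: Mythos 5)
Your proof is correct and follows essentially the same route as the paper's: both directions are handled by unpacking the $\oplus$ semantics with the equality preorder, the backward direction by the same three-case analysis, and the forward direction by the same case split (empty set, a state failing every $Q_i$, otherwise the canonical sets $S_i = \{\sigma \in S : \sigma \vDash_\Sigma Q_i\}$ forcing some $Q_i$ to be unwitnessed). The paper phrases the forward case split as whether $\bigcup_i S_i = S$ rather than whether a state failing all $Q_i$ exists, but these are the same dichotomy, so there is no substantive difference.
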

If we take $Q_1\oplus\cdots\oplus Q_n$ to represent a desirable set of program outcomes, then \Cref{lem:ndfalse} tells us that said program can be wrong in exactly three ways. Either there is some desirable outcome ($Q_i$) that the program never reaches, there is some undesirable outcome ($\overline{Q}_1\land\cdots\land\overline{Q}_n$) that the program sometimes reaches, or there is an input that causes it to diverge ($\top^\oplus$). 
 Now, following from this result, we can state what it means to falsify a nondeterministic specification:

\begin{restatable}[Nondeterministic Falsification]{theorem}{ndfalsethm}
\label{thm:ndfalse}
For any OL instance based on the nondeterministic evaluation model (\Cref{def:ndeval}) and outcome assertions (\Cref{def:ndoc}), $\not\vDash\triple{\varphi}C{\bigoplus_{i=1}^n Q_i}$ iff:
\[
\exists \varphi'\Rightarrow\varphi.\; \mathsf{sat}(\varphi') \;\text{ and }\;
\exists i.~\vDash\triple{\varphi'}C{\overline{Q}_i}
\;\text{or}\;
\vDashD\triple{\varphi'}C{\bigwedge_{i=1}^n \overline{Q}_i}
\;\text{or}\;
\vDash\triple{\varphi'}C{\top^\oplus}
\]
\end{restatable}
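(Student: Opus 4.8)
The plan is to reduce Theorem~\ref{thm:ndfalse} to the combination of the semantic Falsification theorem (\Cref{thm:falsification}) and the purely assertion-level characterization of falsifying states (\Cref{lem:ndfalse}). The two directions are handled separately, with the forward direction doing most of the work.

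For the reverse direction, suppose one of the three disjuncts holds for some $\varphi'\Rightarrow\varphi$ with $\mathsf{sat}(\varphi')$. In each case I would show that $\triple{\varphi'}C{\lnot(\bigoplus_{i=1}^n Q_i)}$ is valid, and then conclude $\not\vDash\triple{\varphi}C{\bigoplus_{i=1}^n Q_i}$ by the Principle of Denial (\Cref{thm:denial}). Concretely: if $\vDash\triple{\varphi'}C{\overline{Q}_i}$ for some $i$, then since $S\vDash\overline{Q}_i$ implies $S\not\vDash \bigoplus_j Q_j$ (the first disjunct of \Cref{lem:ndfalse}), we get that the postcondition forces $\lnot(\bigoplus_j Q_j)$; similarly $\vDashD\triple{\varphi'}C{\bigwedge_i\overline{Q}_i}$ unfolds by \Cref{def:uxoutcome} to $\vDash\triple{\varphi'}C{(\bigwedge_i\overline{Q}_i)\oplus\top}$, which is exactly the second disjunct of \Cref{lem:ndfalse}; and $\vDash\triple{\varphi'}C{\top^\oplus}$ is the third. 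In all three cases the output of $C$ lands in the set of states falsifying $\bigoplus_j Q_j$, hence $\not\vDash\triple{\varphi'}C{\bigoplus_j Q_j}$, and a fortiori (since $\varphi'\Rightarrow\varphi$) the triple with precondition $\varphi$ fails.

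For the forward direction, assume $\not\vDash\triple{\varphi}C{\bigoplus_{i=1}^n Q_i}$. By \Cref{thm:falsification} (applied semantically, via the translation of syntactic triples to semantic ones alluded to after \Cref{thm:denial}), there is a semantic $\Phi'\Rightarrow\de\varphi$ with $\mathsf{sat}(\Phi')$ and $\vDash_S\triple{\Phi'}C{\lnot(\bigoplus_i Q_i)}$; unwinding, this means there is a single state-collection $m\in\de\varphi$ with $\dem Cm\not\vDash\bigoplus_i Q_i$. I would then take $\varphi'$ to be a syntactic assertion denoting (exactly or tightly enough) $\{m\}$ — this is where the finiteness hypothesis on $C$ and the expressiveness assumptions on $\prop$ (in particular, closure under $\overline{(\cdot)}$ from \Cref{def:ndoc}) are used to guarantee such a $\varphi'$ exists. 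Now apply \Cref{lem:ndfalse} to the set $\dem Cm$: it falls into (at least) one of the three cases. If $\dem Cm\vDash\overline{Q}_i$, then $\vDash\triple{\varphi'}C{\overline{Q}_i}$; if $\dem Cm\vDash(\bigwedge_i\overline{Q}_i)\oplus\top$, then $\vDash\triple{\varphi'}C{(\bigwedge_i\overline{Q}_i)\oplus\top}$, i.e.\ $\vDashD\triple{\varphi'}C{\bigwedge_i\overline{Q}_i}$; if $\dem Cm\vDash\top^\oplus$, then $\vDash\triple{\varphi'}C{\top^\oplus}$. This yields exactly the disjunction in the statement.

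I expect the main obstacle to be the forward direction's construction of the syntactic precondition $\varphi'$ denoting the singleton witness $\{m\}$: the excerpt flags precisely this point (``it requires us to witness the existence of a syntactic assertion corresponding to $\Phi'$,'' with the finiteness-of-$C$ caveat), so the real content is identifying which closure/expressiveness properties of $\prop$ and which structural restriction on $C$ suffice, and pushing the argument through the loop case (where $m$ may be reached only after unrolling $C^\star$ finitely many times, so $\varphi'$ must encode that bounded unrolling). The assertion-logic bookkeeping — checking that each of the three cases of \Cref{lem:ndfalse} translates cleanly into the claimed triple form, and that the $\vDashD$ shorthand matches the $\oplus\top$ case — is routine given \Cref{lem:ndfalse} and \Cref{def:uxoutcome}, so I would keep it terse and refer to the appendix for the loop details, mirroring how the paper handles \Cref{thm:falsification} itself.
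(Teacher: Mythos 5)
Your reverse direction is fine and is essentially the paper's argument: weaken the postcondition using \Cref{lem:ndfalse} (each of the three forms entails $\lnot\bigoplus_i Q_i$) and invoke the Principle of Denial (\Cref{thm:denial}).

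The forward direction, however, has a genuine gap at exactly the step you flag: you propose to take $\varphi'$ to be "a syntactic assertion denoting (exactly or tightly enough) $\{m\}$" and to justify its existence by "expressiveness assumptions on $\prop$." No such assumption is available in this OL instance. \Cref{def:ndoc} only posits an arbitrary satisfaction relation ${\vDash_\Sigma}$ together with closure of atomic assertions under negation $\overline{(\cdot)}$; closure under negation does not give definability of a particular set $S\in\bb{2}^\Sigma$ (nor even of a single state), so the singleton-denoting $\varphi'$ you need cannot be assumed to exist, and the "tightly enough" fallback is not precise enough to work either — distinct models of an over-approximating $\varphi'$ may fall into \emph{different} cases of \Cref{lem:ndfalse}, so no single one of the three postconditions would be valid for all of them. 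The paper resolves this differently: it never expresses $\{m\}$. Instead it proves a Trace Extrapolation lemma (\Cref{lem:trace}) by induction on the structure of $C$, constructing a satisfiable syntactic precondition backwards from the implication-free falsifying postcondition $\psi$ supplied by \Cref{lem:ndfalse} (one of the three forms): a splitting lemma (\Cref{lem:split}) handles $C_1+C_2$ and the finitely many unrollings of $C^\star$, weakest preconditions handle pure atomic commands, and a bespoke construction handles $\mathsf{assume}$; the instance-specific pieces are packaged as the "falsifiable OL" conditions (\Cref{def:fls}, verified for nondeterminism via \Cref{lem:ndte}), and the theorem then follows from the general syntactic falsification result (\Cref{thm:flssyn}) with $\varphi'=\vartheta\land\varphi$, where $\vartheta$ is the extrapolated precondition satisfied by the witness $m$. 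Note also that your worry about "encoding the bounded unrolling of $C^\star$ in $\varphi'$" is symptomatic of the confusion: if $\varphi'$ really denoted $\{m\}$, validity of $\triple{\varphi'}C{\psi}$ would follow immediately from $\dem{C}{m}\vDash\psi$ with no induction on $C$ and no loop case at all; the loop case (and the finite-termination hypothesis) matter only for the backwards-propagation construction that the paper actually uses and that your sketch does not supply.
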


The type of bugs expressible in Incorrectness Logic are a special case of \Cref{thm:ndfalse}. Since IL is under-approximate, it can only express the second kind of bug (reachability of a bad outcome), not the first (non-reachability of a good outcome), or last (divergence). IL was motivated by its ability to disprove Hoare Triples---since Hoare Triples are a special case of OL (\Cref{thm:hl}), \Cref{thm:ndfalse} suggests that OL can disprove Hoare Triples as well. We make this correspondence explicit in the following Corollary where, compared to \Cref{thm:ndfalse}, the first two cases collapse since there is only a single outcome and the divergence case no longer represents a bug since the Hoare Triple is a partial correctness specification. 


\begin{restatable}[Hoare Logic Falsification]{corollary}{mhlil}\label{thm:mhlil}
\[
\not\vDash\hoare PCQ
\qquad\text{iff}\qquad
\exists\varphi\Rightarrow P.\quad
\mathsf{sat}(\varphi)
\quad\text{and}\quad
\vDashD\triple\varphi{C}{\overline Q}
\]
\end{restatable}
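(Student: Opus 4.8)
The plan is to derive Corollary~\ref{thm:mhlil} from the combination of Theorem~\ref{thm:hl} (subsumption of Hoare triples) and Theorem~\ref{thm:ndfalse} (nondeterministic falsification), specializing the latter to the case of a single outcome assertion and a partial-correctness postcondition. First I would use Theorem~\ref{thm:hl} to rewrite $\vDash\hoare PCQ$ as $\vDash_\mathsf{pc}\!\triple PCQ$, which by definition of the partial-correctness notation means $\vDash\triple{P}C{Q\lor\top^\oplus}$. Negating, $\not\vDash\hoare PCQ$ iff $\not\vDash\triple{P}C{Q\lor\top^\oplus}$. The goal is then to show this is equivalent to $\exists\varphi\Rightarrow P.\ \mathsf{sat}(\varphi)$ and $\vDashD\triple\varphi C{\overline Q}$.

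For the forward direction, I would apply the semantic Falsification theorem (Theorem~\ref{thm:falsification}, via its syntactic counterpart or by passing through semantic assertions as in \Aref{sec:semtriples}) to $\not\vDash\triple P C{Q\lor\top^\oplus}$: there exists $\varphi'\Rightarrow P$ with $\mathsf{sat}(\varphi')$ and $\vDash\triple{\varphi'}C{\lnot(Q\lor\top^\oplus)}$. The key computation is to simplify $\lnot(Q\lor\top^\oplus)$ in the nondeterministic BI frame: by the standard classical encoding $\varphi\lor\psi \equiv \lnot(\lnot\varphi\land\lnot\psi)$, we get $\lnot(Q\lor\top^\oplus) \equiv \lnot Q\land\lnot\top^\oplus$. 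Now $S\vDash\lnot\top^\oplus$ iff $S\neq\emptyset$, and $S\vDash\lnot Q$ iff $S=\emptyset$ or $\exists\sigma\in S.\ \sigma\not\vDash_\Sigma Q$ (using the footnote's characterization); so $S\vDash \lnot Q\land\lnot\top^\oplus$ iff $S\neq\emptyset$ and $\exists\sigma\in S.\ \sigma\not\vDash_\Sigma Q$, which is exactly $S\vDash \overline Q\oplus\top$, i.e. $S\vDashD\overline Q$. (Here I am using that $S\vDash\overline Q\oplus\top$ iff there is a nonempty $S_1\subseteq S$ with every state in $S_1$ satisfying $\overline Q$, which holds precisely when some state of $S$ fails $Q$.) Thus $\vDash\triple{\varphi'}C{\lnot Q\land\lnot\top^\oplus}$ is literally $\vDashD\triple{\varphi'}C{\overline Q}$, and taking $\varphi = \varphi'$ gives the right-hand side.

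For the reverse direction, I would simply run the same equivalence backwards, or invoke the Principle of Denial (Theorem~\ref{thm:denial}): from $\varphi\Rightarrow P$, $\mathsf{sat}(\varphi)$, and $\vDashD\triple\varphi C{\overline Q}$, unfold $\vDashD$ to get $\vDash\triple{\varphi}C{\overline Q\oplus\top}$; then observe $\overline Q\oplus\top \Rightarrow \lnot(Q\lor\top^\oplus)$ by the same BI-frame calculation as above (any $S$ satisfying $\overline Q\oplus\top$ is nonempty and has a state failing $Q$, hence satisfies $\lnot Q\land\lnot\top^\oplus$), so by \textsc{Consequence} $\vDash\triple{\varphi}C{\lnot(Q\lor\top^\oplus)}$, and Theorem~\ref{thm:denial} with $\psi = Q\lor\top^\oplus$ yields $\not\vDash\triple P C{Q\lor\top^\oplus}$, i.e. $\not\vDash\hoare PCQ$ by Theorem~\ref{thm:hl}. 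Alternatively the whole corollary drops out of Theorem~\ref{thm:ndfalse} by taking $n=1$, $Q_1 = Q$, $\varphi$ the Hoare precondition lifted appropriately, and noting that the divergence disjunct $\vDash\triple{\varphi'}C{\top^\oplus}$ is absorbed by the partial-correctness relaxation (a diverging execution satisfies $Q\lor\top^\oplus$, so divergence is not a bug for a Hoare triple), collapsing the three cases to one.

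The main obstacle is the careful bookkeeping around negation in the classical BI frame: making sure that $\lnot$ (BI implication into $\bot$) interacts correctly with $\top^\oplus$ and with the atomic negation $\overline Q$, and in particular verifying the equivalence $\lnot(Q\lor\top^\oplus) \equiv \overline Q\oplus\top$ (as satisfaction-equivalent semantic assertions, for the purpose of applying \textsc{Consequence} and Theorem~\ref{thm:denial}). One subtlety worth flagging: $\overline Q\oplus\top$ and $\lnot(Q\lor\top^\oplus)$ must be checked to be genuinely equivalent and not merely one-directionally implied, since the forward direction needs $\lnot(Q\lor\top^\oplus)\Rightarrow\overline Q\oplus\top$ while the reverse needs the converse; both hold in the powerset frame, but this is the place where the argument would break for a different monoid, which is why the corollary is stated specifically over the nondeterministic model. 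I would also double-check that the syntactic (rather than semantic) version of Theorem~\ref{thm:falsification} being invoked in the forward direction is available here — since $Q\in\prop$ is assumed logically negable, the witness precondition $\varphi'$ can be taken syntactic, so no appeal to infinitary assertions is needed even in loop cases, matching the remarks following Theorem~\ref{thm:falsification}.
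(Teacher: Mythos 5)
Your reverse direction is sound, and it takes a mildly different route from the paper's: you go through \textsc{Consequence} plus the Principle of Denial (\Cref{thm:denial}) with $\psi = Q\vee\top^\oplus$, whereas the paper argues directly at the level of states (pick $S\vDash\varphi$, extract $\tau\in\dem{C}{S}$ with $\tau\not\vDash_\Sigma Q$, pull it back to some $\sigma\in S$ satisfying $P$). Your key semantic computation, that $\lnot(Q\vee\top^\oplus)$ is equivalent to $\overline Q\oplus\top$ in the powerset frame, is also exactly the observation the paper uses in its forward direction.

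However, the forward direction has a genuine gap at precisely the step you defer: producing a \emph{syntactic} witness $\varphi$. The semantic Falsification theorem (\Cref{thm:falsification}) only yields a semantic precondition $\Phi'=\{S\}$, and the syntactic counterpart (\Cref{thm:flssyn}, like \Cref{thm:ndfalse}) is stated only for postconditions of the form $\bigoplus_i Q_i$, so it cannot be invoked on $Q\vee\top^\oplus$, which is an implication in disguise. Your claim that ``the witness precondition can be taken syntactic since $Q$ is logically negable'' is not a justification; the actual bridge in the paper is Trace Extrapolation (\Cref{lem:trace}): having observed that the counterexample $S\vDash P$ satisfies $\dem{C}{S}\vDash\overline Q\oplus\top$ (an implication-free assertion), one extrapolates a syntactic $\vartheta$ with $S\vDash\vartheta$ and $\vDash\triple{\vartheta}{C}{\overline Q\oplus\top}$, and then sets $\varphi=\vartheta\land P$. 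That lemma (with its finite-termination caveat) is the whole engine of the forward direction and never appears in your argument. Your proposed shortcut that ``the corollary drops out of \Cref{thm:ndfalse} with $n=1$'' also fails: that theorem characterizes falsity of $\triple{P}{C}{Q}$, not of the Hoare triple, and its existential right-hand side may be witnessed only by the divergence disjunct even when the Hoare triple is valid (take $Q=\tru$ and a $C$ that sometimes diverges: then $\not\vDash\triple{P}{C}{Q}$ but $\overline Q$ is unsatisfiable, so no triple of the form $\vDashD\triple{\varphi}{C}{\overline Q}$ exists); the divergence case cannot simply be ``absorbed'' after the fact.
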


So, although we do not show that OL \emph{semantically} subsumes Incorrectness Logic, it does have the ability to express the same bugs as IL.
OL can also disprove more complex correctness properties, such as that of the $\mathsf{shuffle}$ function that we saw in \Cref{sec:ofls}. As we will now see, another OL instance is capable of disproving probabilistic properties too.



\subsection{Falsification in Probabilistic Programs}
\label{sec:probfalse}

Before we can define falsification in a probabilistic setting, we must establish some preliminary definitions.
Probabilistic programs use an execution model based on probability (sub)distributions. A (sub)distribution $\mu\in\mathcal{D}X$ over a set $X$ is a function mapping elements $x\in X$ to probabilities in $[0, 1]\subset\mathbb{R}$. The support of a distribution is the set of elements having nonzero probability $\textsf{supp}(\mu) = \{ x \mid \mu(x) > 0\}$ and the mass of a distribution is $|\mu| = \sum_{x\in\textsf{supp}(\mu)}\mu(x)$. A valid distribution must have mass at most 1. The empty distribution $\varnothing$ maps everything to probability 0 and distributions can be summed pointwise $\mu_1 + \mu_2 = \lambda x. \mu_1(x) + \mu_2(x)$ if $|\mu_1| + |\mu_2| \le 1$. For any countable set $X$, $\langle\mathcal DX, +, \varnothing\rangle$ is a PCM.
In addition, distributions can be weighted by scalars $p\cdot \mu = \lambda x. p\cdot\mu(x)$ if $p\cdot |\mu| \le 1$ (this is always defined if $p\le 1)$.
The Dirac distribution $\delta_x$ assigns probability 1 to $x$ and 0 to everything else. We complete the definition of a probabilistic execution model:
\begin{definition}[Probabilistic Evaluation Model]
\label{def:probeval}
A probabilistic evaluation model based on program states $\Sigma$ is defined as $\langle \mathcal{D}\Sigma, \bind, \unit, +, \varnothing\rangle$ where $\langle \mathcal{D}, \bind, \unit\rangle$ is the  \citet{giry} monad:
\[
\bind(\mu, k) = \smashoperator{\sum_{x\in\mathsf{supp}(\mu)}} \mu(x)\cdot k(x)
\qquad\qquad
\unit(x) = \delta_x
\]
\end{definition}

We can make our imperative language probabilistic by adding a command $x\samp\eta$ for sampling from finitely supported probability distributions $\eta\in\mathcal{D}\mathsf{Val}$ over program values. This command is intended to be added to an existing language such as \textsf{GCL} (\Cref{ex:gcl}) or $\textsf{mGCL}$ (\Cref{sec:sep}). The program semantics and atomic assertions are based on distributions over program states $\mu\in\mathcal{D}\Sigma$. The semantics for the sampling command is defined in terms of variable assignment. This allows us to abstract over the type of program states. 
\[
c ::= x\samp\eta
\hspace{6em}
\de{x\samp\eta}(\sigma) = \textsf{bind}(\eta, \lambda v. \de{x := v}(\sigma))
\]

\begin{definition}[Probabilistic Outcome Assertions]\label{def:proboc}
Given some satisfaction relation on program states ${\vDash_\Sigma}\subseteq \Sigma\times\prop$, we instantiate the outcome assertion logic (\Cref{def:outcome}) with the BI frame $\langle \mathcal D\Sigma, +, =, \varnothing\rangle$ such that atomic assertions have the form $\PP[A] = p$ where $p\in [0,1]$, $A\in\prop$, and:
\[
\mu\vDash \PP[A] = p
\qquad\text{iff}\qquad
|\mu| = p \quad\text{and}\quad \forall\sigma\in\mathsf{supp}(\mu).~\sigma\vDash_\Sigma A
\]
\end{definition}

Intuitively, the assertion $\PP[A] = p$ states that the outcome $A$ occurs with probability $p$. As a shorthand for under-approximate assertions, we also define $\PP[A] \ge p$ to be $(\PP[A]= p)\oplus\top$ (see \Cref{lem:problb} for a semantic justification).

We will now investigate falsification of probabilistic assertions of the form $\bigoplus_{i=1}^n(\PP[A_i] = p_i)$. In general, any such sequence can be falsified by specifying the precise probabilities of all combinations of the outcomes $A_i$. In the special case where $n=2$, $\mu\not\vDash (\PP[A] =p \oplus \PP[B] = q)$ iff:
\[
\mu\quad\vDash\quad
 \PP[A\land B] = p_1
 \quad\oplus\quad
 \PP [A\land\lnot{B}] = p_2
 \quad\oplus\quad
 \PP[\lnot{A}\land B] = p_3
 \quad\oplus\quad
 \PP[\lnot A\land\lnot B] = p_4
\]
Such that $p_4>0$ or $p_2 > p$ or $p_3>q$ or $p_1+p_2+p_3 \neq p+q$. The more general version of this result shows that $2^n$ outcomes are needed to disprove an assertion with $n$ outcomes \Acref{lem:probastfls,lem:probfls}, which is infeasible for large $n$. However, there are several special cases that require many fewer outcomes. For example, if all the $A_i$s are pairwise disjoint, then falsification can be achieved with just $n+1$ outcomes. Below, we use $\vec q$ to denote a vector of probabilities $q_1,\ldots,q_n$.

\begin{restatable}[Disjoint Falsification]{theorem}{probfalse}
\label{thm:probfalse}
First, let $A_0 = \bigwedge_{i=1}^n\lnot A_i$. If all the events are disjoint (for all $i\neq j$, $A_i \land A_j$ iff $\fls$), then:
\[
\not\vDash\triple{\varphi}C{\bigoplus_{i=1}^n (\PP[A_i] = p_i)}
\qquad\text{iff}\qquad
\exists \vec{q}, \varphi' \Rightarrow\varphi.\quad
\vDash\triple{\varphi'} C{\bigoplus_{i=0}^n (\PP[A_i] = q_i)}
\]
Such that $\mathsf{sat}(\varphi')$ and $q_0 \neq 0$ or for some $i$ $q_i\neq p_i$.

\end{restatable}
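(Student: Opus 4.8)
The plan is to reduce this theorem to the semantic Falsification theorem (\Cref{thm:falsification}), just as \Cref{thm:ndfalse} does, but with the extra work of showing that the semantic assertion $\Phi'$ guaranteed by \Cref{thm:falsification} can be captured \emph{syntactically} using a disjoint outcome assertion $\bigoplus_{i=0}^n(\PP[A_i] = q_i)$. The key observation is that, because the $A_i$ are pairwise disjoint and $A_0 = \bigwedge_{i=1}^n \lnot A_i$ is their complement, the events $A_0, A_1, \ldots, A_n$ partition the state space $\Sigma$. Hence any distribution $\mu\in\mathcal D\Sigma$ decomposes uniquely as $\mu = \mu_0 + \mu_1 + \cdots + \mu_n$ where $\mu_i$ is the restriction of $\mu$ to the states satisfying $A_i$; setting $q_i = |\mu_i|$ gives $\mu \vDash \bigoplus_{i=0}^n(\PP[A_i] = q_i)$ automatically. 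This is the probabilistic analogue of \Cref{lem:ndfalse}, and I would first state and prove it as a lemma: for any $\mu$, and any disjoint $A_1,\ldots,A_n$ with $A_0$ their complement, there exists a (unique) $\vec q$ with $\mu\vDash\bigoplus_{i=0}^n(\PP[A_i]=q_i)$, and moreover $\mu\not\vDash\bigoplus_{i=1}^n(\PP[A_i]=p_i)$ iff $q_0\neq 0$ or $q_i\neq p_i$ for some $i$.

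For the forward direction of the theorem, I would start from $\not\vDash\triple\varphi C{\bigoplus_{i=1}^n(\PP[A_i]=p_i)}$. By definition there is a $\mu\vDash\varphi$ with $\dem C\mu \not\vDash \bigoplus_{i=1}^n(\PP[A_i]=p_i)$. Take $\varphi'$ to be a syntactic singleton precondition pinning down $\mu$ — here I would invoke whatever machinery the paper uses (as in \Cref{thm:ndfalse}, the assumption that $C$ terminates in finitely many steps, so the strongest postcondition / exact precondition is syntactically expressible); this is exactly the point flagged earlier in \Cref{sec:falsification} as requiring additional properties of the OL instance, and it is the same hypothesis used in the nondeterministic case. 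Then $\varphi'\Rightarrow\varphi$ and $\mathsf{sat}(\varphi')$ hold. Applying the decomposition lemma to $\dem C\mu$ yields the vector $\vec q$ with $\dem C\mu \vDash \bigoplus_{i=0}^n(\PP[A_i]=q_i)$, and since $\varphi'$ is (essentially) $\{\mu\}$, we get $\vDash\triple{\varphi'}C{\bigoplus_{i=0}^n(\PP[A_i]=q_i)}$. The lemma also gives $q_0\neq 0$ or $q_i\neq p_i$ for some $i$, which is the side condition. The reverse direction is the easy one: given $\vec q,\varphi'$ with $\varphi'\Rightarrow\varphi$, $\mathsf{sat}(\varphi')$, $\vDash\triple{\varphi'}C{\bigoplus_{i=0}^n(\PP[A_i]=q_i)}$, and $q_0\neq 0$ or some $q_i\neq p_i$, pick $\mu\vDash\varphi'$ (exists by $\mathsf{sat}$); then $\mu\vDash\varphi$ and $\dem C\mu\vDash\bigoplus_{i=0}^n(\PP[A_i]=q_i)$, so by the lemma $\dem C\mu\not\vDash\bigoplus_{i=1}^n(\PP[A_i]=p_i)$, hence $\not\vDash\triple\varphi C{\bigoplus_{i=1}^n(\PP[A_i]=p_i)}$ — essentially an instance of the Principle of Denial (\Cref{thm:denial}) once one checks that $\bigoplus_{i=0}^n(\PP[A_i]=q_i)$ entails $\lnot\bigoplus_{i=1}^n(\PP[A_i]=p_i)$ under the side condition.

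The main obstacle is the decomposition lemma and, in particular, checking the ``$\not\vDash$ iff side condition'' equivalence carefully: one has to verify that when the events are disjoint, $\mu\vDash\bigoplus_{i=1}^n(\PP[A_i]=p_i)$ forces $\mu$ to have \emph{no} mass outside $A_1\cup\cdots\cup A_n$ (giving $q_0=0$) and exactly mass $p_i$ on $A_i$ (giving $q_i=p_i$), using that $+$ is cancellative on subdistributions and that the supports of the pieces satisfying distinct $\PP[A_i]=p_i$ are forced disjoint by the disjointness of the $A_i$. The subtlety is that a priori a witnessing partition $\mu = \mu_1\monoid\cdots\monoid\mu_n$ for $\bigoplus_{i=1}^n(\PP[A_i]=p_i)$ need not be the ``canonical'' one, but disjointness of the $A_i$ plus the support conditions in \Cref{def:proboc} pin it down uniquely, so this works out. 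Everything else is bookkeeping: the finite-termination hypothesis for syntactic expressibility of $\varphi'$ is inherited verbatim from the setup used for \Cref{thm:ndfalse}, and the Principle of Denial supplies the reverse direction almost for free.
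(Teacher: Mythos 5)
Your proposal takes essentially the same route as the paper: the paper's proof consists precisely of your decomposition claim (for pairwise-disjoint $A_i$ with $A_0$ their complement, $\mu\vDash\bigoplus_{i=0}^n(\PP[A_i]=q_i)$ forces $q_i=\PP_\mu[A_i]$ with $\sum_i q_i = |\mu|$, so $\mu\not\vDash\bigoplus_{i=1}^n(\PP[A_i]=p_i)$ iff $q_0\neq 0$ or some $q_i\neq p_i$), after which the remainder is discharged by the falsification machinery of \Cref{thm:falsification} together with the already-established falsifiability of the probabilistic instance, exactly as you propose. The only caveat is your phrase ``syntactic singleton precondition pinning down $\mu$'': the machinery does not (and generally cannot) express $\{\mu\}$ syntactically, but rather yields a trace-extrapolation precondition $\vartheta$ (conjoined with $\varphi$) valid for \emph{all} its models, and since you explicitly defer to that same machinery this is a difference of wording, not of substance.
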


Many specifications fall into this disjointness case since the primary way in which proofs split into multiple probabilistic outcomes is via sampling, which always splits the postcondition into disjoint outcomes with the sampled variable $x$ taking on a unique value.

The correctness of some probabilistic programs is specified using lower bounds. For example, we may want to specify that some good outcome occurs \emph{with high probability}. These assertions can also be falsified using a lower bound.

\begin{restatable}[Principle of Denial for Lower Bounds]{theorem}{lbfalse}
\label{thm:lbfalse}
\[
\text{If}\quad
\exists \varphi' \Rightarrow\varphi.\quad
\mathsf{sat}(\varphi'),\quad
\vDash\triple{\varphi'}C{\PP[\lnot A] \ge q}
\quad\text{then}\quad
\not\vDash\triple\varphi C{\PP[A] \ge p}
\quad(\text{where}~q > 1-p)
\]
\end{restatable}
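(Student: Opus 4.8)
The statement to prove is: if there exists $\varphi' \Rightarrow \varphi$ with $\mathsf{sat}(\varphi')$ and $\vDash\triple{\varphi'}C{\PP[\lnot A] \ge q}$, then $\not\vDash\triple\varphi C{\PP[A] \ge p}$, where $q > 1 - p$.

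The plan is to unfold the shorthand $\PP[B] \ge r$ as $(\PP[B] = r) \oplus \top$ and reduce to a contradiction with the assumed valid triple. First I would instantiate $\mathsf{sat}(\varphi')$ to obtain a distribution $\mu \in \mathcal D\Sigma$ with $\mu \vDash \varphi'$. From $\varphi' \Rightarrow \varphi$ we get $\mu \vDash \varphi$. Now suppose toward a contradiction that $\vDash\triple\varphi C{\PP[A] \ge p}$, i.e.\ $\vDash\triple\varphi C{(\PP[A] = p)\oplus\top}$. Applying this to $\mu$ gives $\dem{C}{\mu} \vDash (\PP[A] = p)\oplus\top$, so there is a splitting $\dem{C}{\mu} = \nu_1 + \nu_2$ with $\nu_1 \vDash \PP[A] = p$ and $\nu_2 \vDash \top$. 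By the semantics of atomic probabilistic assertions (Definition~\ref{def:proboc}), $|\nu_1| = p$ and every state in $\mathsf{supp}(\nu_1)$ satisfies $A$. Simultaneously, applying the hypothesis $\vDash\triple{\varphi'}C{\PP[\lnot A] \ge q}$ to $\mu \vDash \varphi'$ gives $\dem{C}{\mu} = \kappa_1 + \kappa_2$ with $|\kappa_1| = q$ and $\mathsf{supp}(\kappa_1)$ consisting only of states satisfying $\lnot A$.

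The key step is the mass-counting argument. Since $\nu_1 \le \dem{C}{\mu}$ and $\kappa_1 \le \dem{C}{\mu}$ pointwise, and their supports are disjoint (one contains only $A$-states, the other only $\lnot A$-states), we have $\nu_1 + \kappa_1 \le \dem{C}{\mu}$ pointwise, hence $|\nu_1| + |\kappa_1| \le |\dem{C}{\mu}| \le 1$. But $|\nu_1| + |\kappa_1| = p + q > p + (1 - p) = 1$, a contradiction. (Here I use that $\dem{C}{\mu}$ is a valid subdistribution, so its mass is at most $1$; this follows from the totality of the semantics for the probabilistic execution model, established in the referenced appendix.) Therefore $\not\vDash\triple\varphi C{\PP[A] \ge p}$.

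The main obstacle — really the only subtlety — is justifying that the two witness sub-distributions $\nu_1$ and $\kappa_1$ can be added while staying below $\dem{C}{\mu}$, which hinges on the disjointness of their supports (forced by $A$ versus $\lnot A$). This is where the hypothesis on the \emph{negated} event $\lnot A$ does its work: without it, the masses associated with $A$ from the two triples could overlap on the same states and no contradiction would follow. Everything else is routine unfolding of the $\vDashD$ shorthand and the atomic-assertion semantics; a short remark that $\PP[A]\ge p$ abbreviates $(\PP[A]=p)\oplus\top$, already noted after Definition~\ref{def:proboc}, suffices to close the argument.
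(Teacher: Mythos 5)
Your proof is correct, and the underlying mathematical content is the same as the paper's: both arguments come down to the observation that a single subdistribution of mass at most $1$ cannot simultaneously carry mass $\ge p$ on $A$-states and mass $\ge q > 1-p$ on $\lnot A$-states, since these supports are disjoint. The packaging differs, though. The paper does not argue by contradiction on a particular $\mu$; instead it proves the assertion-level implication $(\PP[\lnot A]\ge q)\Rightarrow\lnot(\PP[A]\ge p)$ (using \Cref{lem:problb}, which characterizes $\vDashD(\PP[A]=p)$ as the mass bound $\PP_\mu[A]\ge p$), weakens the postcondition of the hypothesis triple by the rule of consequence to get $\vDash\triple{\varphi'}C{\lnot(\PP[A]\ge p)}$, and then simply invokes the already-established Principle of Denial (\Cref{thm:denial}). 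Your route inlines both of those ingredients: the two-splittings-of-$\dem{C}{\mu}$ argument is an unfolded, concrete version of the assertion implication, and the ``instantiate $\mathsf{sat}(\varphi')$, apply both triples to the same $\mu$, contradict'' skeleton is exactly the content of \Cref{thm:denial}. What the paper's factoring buys is a two-line proof reusing general machinery; what your version buys is self-containedness and an explicit view of where disjointness of $A$ and $\lnot A$ and the bound $|\dem{C}{\mu}|\le 1$ enter. Both are sound; the only implicit assumption common to both is that the state-level logic is classical, so no state satisfies both $A$ and $\lnot A$.
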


Note that this implication only goes one way, since the original specification $\PP[A]\ge p$ could be satisfied by a sub-distribution $\mu$ where $|\mu|<1$ and therefore $(\PP_\mu[A] \not\ge p) \centernot\Rightarrow (\PP_\mu[\lnot A] > 1-p)$. There are many more special cases for probabilistic falsification, but the relevant cases for the purposes of this paper fall into the categories discussed above.

\section{Outcome Logic for Memory Errors}
\label{sec:incorrectness}

In this section we specialize OL to prove the existence of memory errors in nondeterministic programs. The program logic is constructed in four layers. First, at its core, there is an assertion logic for describing heaps in the style of separation logic (\Cref{sec:heaps}). On top of that, we build an assertion logic with the capability of describing error states and multiple outcomes (\Cref{sec:er}). Then, we define the execution model using a monad combining both errors and nondeterminism (\Cref{sec:em}). Finally, we provide proof rules for this multi-layered logic (\Cref{sec:sep}).

We use this logic in \Cref{sec:pushback} to reason about memory errors in the style of Incorrectness Separation Logic~\cite{isl}. We also discuss why the semantics of Outcome Logic is a good fit for this type of bug finding by examining manifest errors in more depth (\Cref{sec:manifest}).



\subsection{Heap Assertions}
\label{sec:heaps}

First, we create a syntax of logical assertions to describe the heap in the style of Separation Logic~\cite{sl}. In order to describe why a program crashed, we need \emph{negative heap assertions} in addition to the standard points-to predicates. These assertions, denoted $x\not\mapsto$, state that the pointer $x$ is invalidated~\cite{isl}. The syntax for the heap assertion logic is below.
\begin{equation}\label{eq:sl}
p\in\textsf{SL} ::= \textbf{emp}\mid \exists x.p\mid p\land q\mid p\lor q\mid p\Rightarrow q \mid p \sep q\mid p \wand q \mid e\mid e_1\mapsto e_2\mid e\mapsto-\mid e\not\mapsto
\end{equation}
In this syntax, $e\in\textsf{Exp}$ is an expression which includes $\textsf{true}$ and $\textsf{false}$.
We add logical negation $\lnot p$ as shorthand for $p\Rightarrow\textsf{false}$. These assertions are satisfied by a stack and heap pair $(s,h)\in\mathcal{S}\times\mathcal{H}$. Stacks are defined as before (\Cref{ex:gcl}) and heaps are partial functions from positive natural numbers (addresses) to program values or bottom $\mathcal{H} = \{h\mid h\colon\mathbb{N}^+\rightharpoonup\textsf{Val}+\{\bot\}\}$\footnote{Note that $\ell\notin\mathsf{dom}(h)$ indicates that we have no information about the pointer $\ell$ whereas $h(\ell) = \bot$ indicates that $\ell$ is deallocated. This is why $h$ is both partial \emph{and} includes $\bot$ in the co-domain.}.
The constant $\textsf{null}$ is equal to 0, so it is not a valid address and therefore $\textsf{null}\notin\textsf{dom}(h)$ for any heap $h$. The semantics of $\mathsf{SL}$ is defined in \Aref{app:sl-assert} and is similar to that of \citet{isl}. 

\subsection{Reasoning about Errors}\label{sec:er}

While most formulations of Hoare Logic focus only on safe states, descriptions of error states are a fundamental part of Incorrectness Logic~\cite{il}. Reasoning about errors is built into the semantics of incorrectness triples and the underlying programming languages. In the style of Incorrectness Logic, we use $(\ok:p)$ and $(\er:p)$ to indicate whether or not the program terminated successfully. In our formulation, however, these are regular assertions rather than part of the triples themselves. This makes our assertion logic more expressive because we can describe programs that have multiple outcomes---some of which are successful and some erroneous---in a single triple. The semantics of programs that may crash is also encoded as a monadic effect.

\begin{definition}[Assertion logic with errors]\label{def:erassert}
Given a set of error states $E$, a set of program states $\Sigma$, and the relations ${\vDash_E} \subseteq E\times\prop_E$ and  ${\vDash_\Sigma}\subseteq \Sigma\times\prop_\Sigma$, we construct a new assertion logic with semantics ${\vDash}\subseteq (E+\Sigma)\times (\prop_\Sigma \times\prop_E)$ defined below:
\[
\inj_L(e) \vDash (p, q)
\quad\text{iff}\quad
e\vDash_E q
\qquad\qquad
\inj_R(\sigma)\vDash (p,q)
\quad\text{iff}\quad
\sigma\vDash_\Sigma p
\]
\end{definition}
In the above, $\inj_L\colon E\to E+\Sigma$ and $\inj_R\colon \Sigma\to E+\Sigma$ are the left and right injections, respectively. We also add syntactic sugar $(\ok:p) \triangleq (p, \fls)$ and $(\er:q) \triangleq (\fls, q)$, so in general the assertion $(p,q)$ can be thought of as $(\ok:p)\vee(\er:q)$.
Additional logical operations ($\lnot$, $\land$, and $\lor$) are defined in \Aref{sec:erlogic}.
We now combine errors with separation logic as follows:

\begin{definition}[Separation Logic with Errors]\label{heaps}
We define an assertion logic as follows:
\begin{itemize}
\item The syntax of basic assertions $\prop$ is given in \Cref{def:erassert} with $\prop_E= \prop_\Sigma=\textsf{SL}$, the heap assertion logic (\ref{eq:sl}). So, $\prop$ has the syntax $(\ok:p)$ and $(\er:q)$ where $p,q\in\mathsf{SL}$.
\item $\Sigma$, the set of program states, is given by $\mathcal S \times \mathcal H$.
\item The satisfaction relation is also given in \Cref{def:erassert} with $E = \Sigma$, so ${\vDash}\subseteq (\Sigma+\Sigma)\times\prop$.
\end{itemize}
\end{definition}

\subsection{Execution Model}\label{sec:em}

We will now create an execution model supports both nondeterminism and errors by combining the powerset monad (\Cref{def:ndeval})  with an error monad. We begin by defining the error monad, which is based on taking a coproduct with a set $E$ of errors. In order to use errors in conjunction with another effect (\ie nondeterminism), we define a monad transformer~\cite{transformers}. This is valid since the error monad composes with all other monads~\cite{composingmonads}. 

\begin{definition}[Execution model with errors]
Given some execution model $\langle M, \textsf{bind}_M, \textsf{unit}_M, \monoid, \ident\rangle$, we define a new execution model $\langle M(E+ -), \textsf{bind}_\er, \textsf{unit}_\er, \monoid, \ident\rangle$ such that:
\[
\textsf{bind}_\er(m, k) = \textsf{bind}_M\left(m, \lambda x.\left\{\begin{array}{ll}
k(y) & \text{if}~ x = \inj_R(y)\\
\textsf{unit}_M(x) & \text{if}~x = \inj_L(y)
\end{array}\right.\right)
\hspace{3em}
\textsf{unit}_\er(x) = \textsf{unit}_M(\inj_R(x))
\]
\end{definition}
\noindent Note the monoid definition ($\monoid$ and $\ident$) remains the same as the original execution model. For example, if the outer monad is powerset, we still use set union and empty set in the same way---errors only exist within a single outcome.
\begin{example}[Execution model for nondeterminism and errors]\label{ex:em}
We are particularly interested in the above definition when $M$ is the powerset monad, \ie $\bb{2}^{(-)}$. This results in an execution model 
$\langle \bb{2}^{E+ -}, \textsf{bind}, \textsf{unit}, \monoid, \ident\rangle$ where the operations are derived as follows:
\[\small
 \textsf{bind}(S, k) = \{\inj_L(x)\mid  \inj_L(x) \in S\} \cup \smashoperator{\bigcup_{ \inj_R(x)\in S} }k(x)
 \hspace{3em}
 \textsf{unit}(x) = \{\inj_R(x)\}
 \]
\end{example}

\noindent We now turn to defining atomic commands for manipulating the heap in a language called the Guarded Command Language with Memory (\textsf{mGCL}). The syntax for \textsf{mGCL} is given below and the semantics is in \Aref{app:sl-prog}. Note that \textsf{mGCL} commands are deterministic and can therefore be interpreted in both nondeterministic and probabilistic evaluation models.
\[
c\in\textsf{mGCL} ::= \textsf{assume}~e\mid x:=e \mid x := \textsf{alloc}() \mid\textsf{free}(e)\mid x\leftarrow [e] \mid [e_1] \leftarrow e_2 \mid\textsf{error}()
\]
Assume and assignment are the same as in \textsf{GCL} (\Cref{ex:gcl}). The usual heap operations for allocation ($\textsf{alloc}$), deallocation ($\textsf{free}$), loads ($x\leftarrow[e]$), and stores ($[e_1]\leftarrow e_2$) are also included along with an \textsf{error} command that immediately fails.  We also define $x:=\mathsf{malloc}()$ as syntactic sugar for $(x:=\textsf{alloc}()) + (x:=\textsf{null})$, which is valid in nondeterministic evaluation models.

\begin{definition}[Outcome-Based Separation Logic]\label{def:msl}
We instantiate OL (\Cref{def:mht}) with: 
\begin{enumerate}
\item The execution model is from \Cref{ex:em} with $E=\mathcal S \times \mathcal H$.
\item The language of atomic commands is \textsf{mGCL}. 
\item The assertion logic is the one given in \Cref{def:ndoc} using \Cref{heaps} for basic assertions.
\end{enumerate}
\end{definition}
\noindent Note that although the execution model has been augmented with errors, the nondeterministic falsification result (\Cref{thm:ndfalse}) still holds for Outcome-Based Separation Logic.

\subsection{Proof Rules for Memory Errors}
\label{sec:sep}

Now that we have defined the semantics of OL triples that can express properties about memory and errors, let us turn to the proof theory. In this section, we will define proof rules for Outcome-Based Separation Logic (\Cref{def:msl}), which we will use in subsequent sections to prove that programs crash due to memory errors.
We will define these proof rules in a way that is \emph{generic} with respect to the execution model, leveraging the fact that the atomic \textsf{mGCL} commands are deterministic, and thus can be given specifications that hold good under multiple different execution models (\eg nondeterminism or probabilistic computation).

Concretely, let us observe that the semantics of \textsf{mGCL} is based on the composition of two monads: an outer monad $M$ (\eg powerset), and the error monad $E+-$.
Since the atomic commands of \textsf{mGCL} are deterministic, however, their semantics is agnostic to the choice of the outer monad $M$, and can be specified axiomatically without needing to talk explicitly about (multiple) outcomes.
Hence, we define a new type of triple that is capable of making assertions about errors (using $\er$ and $\ok$), but says nothing about outcomes (using $\oplus$):
%
\begin{definition}[Liftable Triples]\label{def:pure}
Consider an OL instance based on the composition of two monads $M = M_1 \circ M_2$, and so $\de{C}\colon \Sigma\to (M_1\circ M_2)\Sigma$ (note that any monad can be decomposed in this way, by taking $M_2 = \mathsf{Id}$).
One such example is the execution model from \Cref{ex:em} where $M_1 = \bb{2}^{(-)}$ and $M_2 = E+-$.
The validity of an OL triple that is \emph{liftable} into the monad $M_1$ is defined as follows:
\[
\vDash_{M_1}\!\!\triple pCq \quad\text{iff}\quad \forall \sigma\in M_2\Sigma.\quad \sigma\vDash p
~
\Rightarrow
~
\exists\tau\in M_2\Sigma.~
 \dem{C}{\unit_{M_1}(\sigma)} = \unit_{M_1}(\tau)
~\text{and}~
\tau\vDash q
\]
Intuitively, this triple says that $C$ is deterministic; if we run it on any individual state that satisfies $p$, then the result will be an individual state satisfying $q$. In the case of \Cref{ex:em}, this means that $p$ and $q$ describe elements of $E+\Sigma$; they can describe error states (using $\er$ and $\ok$), but cannot use $\oplus$.
Similarly, we write $\vdash_{M_1}\!\!\triple pCq$ to denote a liftable derivation, which is sound with respect to the above semantics and can be lifted into the monad $M_1$.

\end{definition}

\begin{figure}
{\footnotesize
\vspace{1em}
\begin{flushleft}
\fbox{Separation Logic Small Axioms}
\end{flushleft}
\[
\inferrule{\;}
{\vdash_M\!\triple{\ok:p}{\textsf{error}()}{\textsf{er}:p}}
{\textsc{Error}}
\hspace{4em}
\inferrule{\;}
{\vdash_M\!\triple{\ok:x=v\land\emp}{x := \textsf{alloc}()}{\ok:x\mapsto -}}
{\textsc{Alloc}}
\]
\[
\inferrule{\;}
{\vdash_M\!\triple{\ok:e\mapsto -}{\textsf{free}(e)}{\ok:e\not\mapsto}}
{\textsc{Free Ok}}
\hspace{4em}
\inferrule{\;}
{\vdash_M\!\triple{\ok:e\not\mapsto}{\textsf{free}(e)}{\er:e\not\mapsto}}
{\textsc{Free Er}}
\]

\[
\inferrule{\;}
{\vdash_M\!\triple{\ok:e_1\mapsto -}{[e_1]\leftarrow e_2}{\ok:e_1\mapsto e_2}}
{\textsc{Store Ok}}
\hspace{4em}
\inferrule{\;}
{\vdash_M\!\triple{\ok:e_1\not\mapsto}{[e_1]\leftarrow e_2}{\textsf{er}:e_1\not\mapsto}}
{\textsc{Store Er}}
\]

\[
\inferrule{\;}
{\vdash_M\!\triple{\ok:x=m\land e\mapsto n}{x\leftarrow [e]}{\ok:x=n\land e[m/x]\mapsto n}}
{\textsc{Load Ok}}
\quad
\inferrule{\;}
{\vdash_M\!\triple{\ok: e\not\mapsto}{x\leftarrow[e]}{\textsf{er}:e\not\mapsto}}
{\textsc{Load Er}}
\]


\[
\inferrule
{\vdash_M\!\triple{\epsilon:p}c{\epsilon':q} \\ \textsf{fv}(r)\cap\textsf{mod}(c) = \emptyset}
{\vdash_M\!\triple{\epsilon:p\sep r}c{\epsilon': q\sep r}}
{\textsc{Frame}}
\]
\begin{flushleft}
\fbox{Monadic Rules}
\end{flushleft}
\[
\inferrule
{\vdash_{\bb{2}^{(-)}}\!\!\triple pCq }
{\triple pCq}
{\textsc{Nondeterministic Lifting}}
\qquad
\inferrule
{\;}
{\vdash_M\!\!\triple{\textsf{er}:p}C{\textsf{er}:p}}
{\textsc{Error Propagation}}
\]
}
\caption{Proof rules for Outcome-Based Separation Logic, the OL instantiation of \Cref{def:msl}. The first group is inspired by \cites{localreasoning} small axioms with additional rules added for unsafe states. The second group deal with the monadic execution model.}
\label{fig:sepproof}
\end{figure}

\Cref{fig:sepproof} contains the proof rules for Outcome-Based Separation Logic (\Cref{def:msl}). The first group of rules is very close to the standard separation logic proof system originally due to \citet{localreasoning}, with the addition of rules to reason about unsafe states inspired by \citet{isl}. These rules are liftable into any monad $M$ (since errors compose with all other monads). The \textsc{Lifting} proof rule states that if some triple is liftable into the powerset monad (where $p$ and $q$ are satisfied by individual states), then we can obtain a new triple where $p$ and $q$ are satisfied by sets of states (as in \Cref{def:ndoc}). All of the small axioms above can be lifted in this way.

In order to use the proof rules for conditionals and assignment from \Cref{fig:baserules}, we also define expression entailment and substitution.
Both operations are only defined for $\ok$ assertions.
\[
(\ok:p)\vDash e \quad\text{iff}\quad p\Rightarrow e
\qquad \qquad
(\ok:p)[e/x] \triangleq \ok:p[e/x]
\]
This means that, for example, the \textsc{Assign} rule only allows us to prove $\triple{\ok:p[e/x]}{x:=e}{\ok:p}$. If an error has occurred, we instead use the \textsc{Error Propagation} rule to propagate the error forward through the proof (\ie $\triple{\er:p}{x:=e}{\er:p}$), since the program will never recover from the crash.

\subsection{Proof of a Bug}
\label{sec:pushback}

We now demonstrate that the OL proof system shown in \Cref{fig:sepproof} is effective for bug-finding. The program in Figure~\ref{fig:pushback} has a possible use-after-free error. This program first appeared as a motivating example for ISL~\cite{isl}. It models a common error in C++ when using the \texttt{std::vector} library. A call to \texttt{push\_back} can reallocate the vector's underlying memory buffer, in which case pointers to that buffer become invalid.

As in \citet{isl}, we model the vector as a single heap location, and the \texttt{push\_back} function nondeterministically chooses to either reallocate the buffer or do nothing. A subsequent memory access may then fail, as seen in the main function. Since our language does not have procedures, we model these as macros and prove the existence of the bug with all the code inlined. The proof mostly makes use of standard separation logic proof rules and is quite similar to the ISL version~\cite{isl} especially in the use of negative heap assertion after the call to free. Under-approximation is achieved using the rule of consequence to drop one of the outcomes.


\begin{figure}[t]
\[
\arraycolsep=2pt
\begin{array}{l|l}
\begin{array}{l}
\textsf{main}():\\
\hspace{1em}x \leftarrow [v]\fatsemi\\
\hspace{1em}\texttt{push\_back}(v)\fatsemi\\
\hspace{1em}{[x]}\leftarrow 1\\\\\\
\textsf{push\_back}(v):\\
\hspace{1em}\begin{array}{lll}
\left(\begin{array}{l}
y \leftarrow [v]\fatsemi\\
\textsf{free}(y)\fatsemi\\
y := \textsf{alloc}()\fatsemi\\
{[v]} \leftarrow y
\end{array}
\right)
&+&
\skp
\end{array}
\end{array}
\hspace{2em}\;
&
\hspace{2em}
\footnotesize
\begin{array}{l}
{\color{ForestGreen}\langle \ok:v \mapsto a \sep a \mapsto - \rangle}\\
x \leftarrow [v]\fatsemi\\
{\color{ForestGreen}\langle \ok:v \mapsto x \sep x \mapsto - \rangle}\\
\begin{array}{lll}
\left(
\begin{array}{l}
{\color{ForestGreen}\langle \ok:v \mapsto x \sep x \mapsto - \rangle}\\
y \leftarrow [v]\fatsemi\\
{\color{ForestGreen}\langle \ok:v \mapsto x \sep x \mapsto - \land y=x\rangle}\\
\textsf{free}(y)\fatsemi\\
{\color{ForestGreen}\langle \ok:v \mapsto x \sep x \not\mapsto \land y=x\rangle}\\
y := \textsf{alloc}()\fatsemi\\
{\color{ForestGreen}\langle \ok:v \mapsto x \sep x \not\mapsto \sep y \mapsto-\rangle}\\
{[v]} \leftarrow y\\
{\color{ForestGreen}\langle \ok:v \mapsto y \sep x \not\mapsto \sep y \mapsto-\rangle}\\
\end{array}
\right)
&+&
\begin{array}{l}
{\color{ForestGreen}\langle \ok:v \mapsto x \sep x \mapsto - \rangle}\\
\skp\\
{\color{ForestGreen}\langle \ok:v \mapsto x \sep x \mapsto - \rangle}\\
\end{array}
\end{array}\fatsemi\\
{\color{ForestGreen}\langle (\ok:v \mapsto y \sep x \not\mapsto \sep y \mapsto-) \oplus (\ok:v \mapsto x \sep x \mapsto-)\rangle }\\
{[x]}\leftarrow 1\\
{\color{red}\langle (\er:v \mapsto y \sep x \not\mapsto \sep y \mapsto-) \oplus (\ok:v \mapsto x \sep x \mapsto 1) \rangle}\\
{\color{red} \implies\langle (\er: {x \not\mapsto} \sep \tru ) \oplus \top \rangle}
\end{array}
\end{array}
\]
\caption{Program with a possible use-after-free error (left) and proof sketch (right)}
\label{fig:pushback}
\end{figure}

Correctness for this program would be given by the postcondition $(\ok: v\mapsto x\sep x\mapsto 1)$. As \Cref{thm:ndfalse} showed, we can disprove it by showing that an undesirable outcome will sometimes occur. In this case, that undesirable outcome is $(\er:{x\not\mapsto} \sep \tru)$. Clearly, $(\er:{x\not\mapsto}\sep\tru)\oplus\top$ implies $\lnot (\ok: v\mapsto x\sep x\mapsto 1)$, so the specification in \Cref{fig:pushback} disproves the correctness specification.

\subsection{Manifest Errors}
\label{sec:manifest}

\citet{realbugs} showed empirically that the fix rates of bug-finding tools can be improved by reporting only those bugs that occur regardless of context. These errors are known as \textit{manifest errors}, as demonstrated in the examples below.
$$
\triple{\ok:x\not\mapsto}{[x]\leftarrow 1}{\textsf{er}: x\not\mapsto}
\qquad
\triple{\ok:\tru}{x:=\textsf{malloc}()\fatsemi [x]\leftarrow 1}{(\textsf{er}: x=\textsf{null})\oplus\top}
$$
The left program has a \emph{latent} error since it is only triggered if the pointer is already deallocated, therefore it would not be reported. The right program has a \emph{manifest} error since it is possible no matter the context in which the program is invoked. \citet[Def. 3.2]{realbugs} give the following definition for manifest errors:
\[
\vDash\inc {\ok:p}C{\er:q} \quad \text{is a manifest error}
\qquad\text{iff}\qquad
\forall \sigma.\quad \exists\tau\in\de{C}(\sigma).\quad \tau\vDash (\er:q\sep\tru)
\]
First note that the precondition $p$ does not appear in the formal definition. This indicates that IL preconditions do not meaningfully describe the conditions \emph{sufficient} to reach an end state. In addition, the universal quantification over the precondition resembles Hoare Logic more closely than Incorrectness Logic (which quantifies over the postcondition).
As stated in the following lemma, the formal definition of a manifest error can be expressed as an OL triple.

\begin{restatable}[Manifest Error Characterization]{lemma}{manifest}\label{lem:manifest}
\[
\vDash\inc{p}C{\er:q} \quad\text{is a manifest error}
\qquad\qquad\text{iff}\qquad\qquad
\vDashD\triple{\ok:\tru}C{\er:q\sep\tru}
\]
\end{restatable}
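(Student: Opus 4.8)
The lemma is an iff between two statements about the same program $C$, precondition notionally $\ok{:}p$ (though $p$ is irrelevant), and error postcondition $\er{:}q$. The left-hand side is \citeauthor{realbugs}'s definition, which unfolds to: $\forall\sigma.~\exists\tau\in\de{C}(\sigma).~\tau\vDash(\er{:}q\sep\tru)$. The right-hand side is $\vDashD\triple{\ok:\tru}C{\er:q\sep\tru}$, which by \Cref{def:uxoutcome} unfolds to $\vDash\triple{\ok:\tru}C{(\er:q\sep\tru)\oplus\top}$, which by \Cref{def:mht} unfolds to: $\forall S\in\bb 2^\Sigma.~S\vDash\ok{:}\tru\implies \dem CS\vDash(\er:q\sep\tru)\oplus\top$. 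So the proof is essentially an unfolding-and-matching exercise: I would show both sides are equivalent to the single statement ``for every state $\sigma$, the set $\de C(\sigma)$ contains some element (in $E+\Sigma$) satisfying $\er{:}q\sep\tru$.''

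\textbf{Forward direction ($\Leftarrow$).} Assume the OL triple holds. Fix an arbitrary $\sigma\in\Sigma$ and take $S=\{\sigma\}$; then $S\neq\emptyset$ and trivially $S\vDash\ok{:}\tru$ (using the atomic satisfaction from \Cref{def:ndoc}: nonempty and every state satisfies $\tru$). Hence $\dem CS\vDash(\er:q\sep\tru)\oplus\top$, and since $\dem CS=\de C(\sigma)$, there exist $S_1,S_2$ with $S_1\cup S_2=\de C(\sigma)$, $S_1\vDash\er{:}q\sep\tru$, and $S_2\vDash\top$. From $S_1\vDash\er{:}q\sep\tru$ and the atomic satisfaction clause we get $S_1\neq\emptyset$, so pick any $\tau\in S_1\subseteq\de C(\sigma)$; every element of $S_1$ satisfies $\er{:}q\sep\tru$, so $\tau\vDash(\er:q\sep\tru)$. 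This is exactly the manifest-error condition at $\sigma$; since $\sigma$ was arbitrary, we are done.

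\textbf{Reverse direction ($\Rightarrow$).} Assume the manifest-error condition. Take any $S$ with $S\vDash\ok{:}\tru$; this forces $S\neq\emptyset$. I need $\dem CS\vDash(\er:q\sep\tru)\oplus\top$. Let $T=\dem CS=\bigcup_{\sigma\in S}\de C(\sigma)$. Let $T_1=\{\tau\in T\mid \tau\vDash(\er:q\sep\tru)\}$ and $T_2=T$. Then $T_1\cup T_2=T$; $T_2\vDash\top$ always; and I must check $T_1\vDash\er{:}q\sep\tru$, which requires $T_1\neq\emptyset$ and every element of $T_1$ satisfying the assertion. The latter holds by construction of $T_1$; for nonemptiness, since $S\neq\emptyset$ pick any $\sigma\in S$, and the hypothesis gives $\tau\in\de C(\sigma)\subseteq T$ with $\tau\vDash(\er:q\sep\tru)$, so $\tau\in T_1$. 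Hence $\dem CS\vDash(\er:q\sep\tru)\oplus\top$, establishing the OL triple.

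\textbf{Main obstacle.} There is no deep obstacle; the delicate points are purely bookkeeping. One is making sure the $S\neq\emptyset$ side condition built into the atomic satisfaction relation of \Cref{def:ndoc} is respected on both sides — it is precisely this reachability stipulation that makes the ``$\exists\tau$'' in the manifest-error definition line up with the nonemptiness requirement of the $\oplus$-witness. A second is being careful that the semantics here is the \emph{error-augmented} nondeterministic model from \Cref{ex:em}/\Cref{def:msl}, so $\de C(\sigma)\in\bb 2^{E+\Sigma}$ and the assertion $\er{:}q\sep\tru$ is interpreted via \Cref{def:erassert} as constraining the $\inj_L$ (error) components; the argument above is agnostic to this, since it only ever uses nonemptiness and the pointwise satisfaction of the chosen witnesses. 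Finally I would note that the precondition $p$ genuinely plays no role on either side — on the left it is dropped by \citeauthor{realbugs}'s definition, and on the right we deliberately use $\ok{:}\tru$ — so the lemma also implicitly records that manifest errors are context-independent, matching the discussion in \Cref{sec:manifest}.
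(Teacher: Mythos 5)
Your proof is correct and follows essentially the same route as the paper's: unfold both sides, use a singleton $\{\inj_R(\sigma)\}$ precondition to extract the manifest-error witness from the $\oplus\top$ postcondition, and conversely split $\dem CS$ to establish the triple (where the paper cites its Lemma~\ref{lem:uxsat} characterization of $\vDashD$, you simply re-derive it inline). The only cosmetic slip is writing $S=\{\sigma\}$ where the error-augmented model calls for $\{\inj_R(\sigma)\}$, which you already flag and which does not affect the argument.
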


Following from this result, determining whether a program has a manifest error is equivalent to proving an OL triple of the form above.
Characterizing a manifest error using IL is much harder. \citet{realbugs} provide an algorithm to do so, which involves several satisfiability checks (which are \textsf{NP}-hard). The difficulty in characterizing manifest errors suggests that under-approximation in IL is too powerful. To see this, we compare the standard \textsc{If} rule from OL to \textsc{One-Sided If}---a hallmark of IL which allows the analysis to only consider one branch of an if statement.
\[\small
\inferrule{
\triple{\ok:p\land e}{C_1}{\epsilon:q}
\\
\triple{\ok:p\land\lnot e}{C_2}{\epsilon:q}
}
{\triple{\ok:p}{\iftf e{C_1}{C_2}}{\epsilon:q}}
{\textsc{If}}
\qquad\quad
\inferrule{\inc{p\land e}{C_1}{\epsilon:q}}
{\inc{p}{\iftf{e}{C_1}{C_2}}{\epsilon:q}}
{\textsc{One-Sided If}}
\]
\textsc{One-Sided If} generates imprecise preconditions since the precondition of the premise ($p\land e$) is stronger than the precondition of the conclusion ($p$). OL, on the other hand, requires the precondition to be precise enough to force the execution down a specific logical path, otherwise both paths must be considered as seen in the \textsc{If} rule. As such, OL enables under-approximation in \emph{just the right ways}; only outcomes that result from nondeterministic choice can be dropped.

\cites{realbugs} discussion of manifest errors suggests that \emph{sufficient} preconditions are important; we need to know what happens when we run the program on \emph{any} state satisfying the precondition. Interestingly, there is no analogous motivation for covering the whole postcondition (as IL does). Reachability is important, but we only have to reach \emph{some} error state, not \emph{all} of them. In fact, as we will see in our exploration of probabilistic programming, covering the entire post is often infeasible.

\section{Probabilistic Incorrectness}
\label{sec:prob}

Randomization is a powerful tool that is seeing increased adoption in mainstream software development as it is essential for machine learning and security applications. The study of probabilistic programming has a rich history~\cite{psem, ppdl}, but there is little prior work on proving that probabilistic programs are \emph{incorrect}. In \Cref{sec:probfalse}, we gave a theoretical result showing that probabilistic specifications in OL can be disproven. In this section, we provide a proof system for probabilistic OL and use it to prove incorrectness in a particular example program.

We work with the probabilistic OL instance using the evaluation model from \Cref{def:probeval} and the outcome assertions in \Cref{def:proboc}. The basic commands are assignment and $\mathsf{assume}$ from \textsf{GCL} (\Cref{ex:gcl}) with probabilistic sampling added ($x\samp\eta$). There are only two proof rules for the probabilistic language, given in \Cref{fig:probproof}. The \textsc{Lifting} rule allows us to lift a derivation (e.g. for variable assignment) into a probabilistic setting. This is sound, since every state in the support must transition from $A$ to $B$, thus $\PP[A]$ before running $C$ is equal to $\PP[B]$ after. The \textsc{Sample} rule splits the postcondition into a separate outcome for each value in the support of $\eta$.

The rules for conditional branching in \Cref{fig:baserules} can be used in probabilistic proofs by defining expression entailment $(\PP[A]=p)\vDash e$ iff $A\vDash e$. \textsc{Assign} can also be used; substitution propagates inside the probabilistic assertion $(\PP[A]=p)[e/x] = (\PP[A[e/x]] = p)$.
Note that the conditional rules require us to know the probability that the guard is true or false upfront. This is standard for probabilistic Hoare Logics~\cite{ellora, hartog}.

\begin{figure}
\[\footnotesize
\inferrule
{\vdash_{\mathcal D}\!\!\triple ACB}
{\triple{\PP[A] = p}C{\PP[B] = p}}
{\textsc{Lifting}}
\hspace{4em}
\inferrule
{\forall v\in\textsf{supp}(\eta).~\triple{A}{x:=v}{B_v}}
{\triple{\PP[A] = p}{x \samp\eta}{\smashoperator{\bigoplus_{v\in\textsf{supp}(\eta)}}(\PP[B_v] = p\cdot\eta(v))}}
{\textsc{Sample}}
\]\vspace{-1em}
\caption{Probabilistic proof rules.
}
\label{fig:probproof}
\end{figure}

Absent are rules for while loops. Looping rules in probabilistic languages are complex since invariants cannot be used when probabilities change across iterations. Such proof rules are certainly expressible in our model, but are out of scope for this paper. For examples of how this is done, see \citet{ellora, hartog}.

\subsection{Error Bounds for Machine Learning}

Randomization is often used in approximation algorithms where computing the exact solution to a problem is difficult. In these applications, some amount of error is acceptable as long as it is likely to be small. One such application is supervised learning algorithms, which produce a \textit{hypothesis} from a set of labelled examples. The examples are members of some set $X$ and are drawn randomly from some probability distribution $\eta\in\mathcal{D}X$. The hypothesis is a function $h : X \rightarrow \mathbb B$ which guesses whether new data points are positive or negative examples.


Consider the simple learning problem in which we want to learn a point $t\in [0,1]\subset\mathbb R$ on the unit interval. Since we require distributions used in programs to be finite, we can approximate $[0,1]$ as $\{ k\cdot\Delta\mid 0\le k \le \frac1\Delta\}$ for some finite \emph{step size} $\Delta$. Anything in the interval $[0, t]$ is considered a positive example, and anything greater than $t$ is a negative example. This concept is illustrated at the top of Figure~\ref{fig:interval} and the program below---expressed in a probabilistic extension of \textsf{GCL}---learns this concept by repeatedly sampling examples and refining the hypothesis $h$ after each round. The resulting hypothesis is always equal to the largest positive example that the algorithm has seen. Therefore it will always classify negative examples correctly and only make mistakes on positive examples between $h$ and $t$.

\begin{figure}
\begin{minipage}{.3\textwidth}
\begin{tikzpicture}[scale=.5, thick]
    \draw (-1,0)-- (7,0); 
        \draw (0,0.1) -- (0,-0.1) node[below] {0};
        \draw (6,0.1) -- (6,-0.1) node[below] {1};

    {\color{NavyBlue}
    \draw (0,.25) -- (4,.25);
    \fill (0,.25) circle (0.25);
    \fill (4,.25) circle (0.65) node {$\color{white}t$};}
    {\color{ForestGreen}
    \draw (0,.5) -- (2,.5);
    \fill (2,.5) circle (0.65) node {$\color{white}h$};
    \fill (0,.5) circle (0.25);}
    
        \draw [ultra thick, decorate,
    decoration = {calligraphic brace}] (2,1.25) --  (4,1.25) ;
    \draw [dotted] (4, .9) -- (4, 1.25);
    \fill (3, 2) node {$\textsf{er}(h)$};
    
        \end{tikzpicture}
        \vspace{2em}
\[
\begin{array}{l}
h := -1\fatsemi\\
\textsf{for}~N~\textsf{do}\\
\hspace{1em}x \samp [0, 1]\fatsemi\\
\hspace{1em}\textsf{if}~\mathcal{L}(x) \land x > h~\textsf{then}\\
\hspace{2em}h := x\\
\hspace{1em}\textsf{else}\\
\hspace{2em}\skp
\end{array}
\]
\end{minipage}\hfill\vline
\begin{minipage}{.68\textwidth}
{\scriptsize\[
\begin{array}{l}
{\color{purple}\langle \PP[\tru] = 1 \rangle \implies\langle \PP[\er(-1) > \varepsilon] \ge (1-\varepsilon)^{0} \rangle}\\
\hspace{.5em}h := -1~\fatsemi\\
{\color{purple}\langle \PP[\er(h) > \varepsilon] \ge (1-\varepsilon)^{0} \rangle}\\
\hspace{.5em}\textsf{for}~N~\textsf{do}\\
\hspace{1em}{\color{purple}\langle \PP[\er(h) > \varepsilon] \ge (1-\varepsilon)^{i} \rangle}\\
\hspace{1.5em}x \samp [0, 1]\fatsemi\\
\hspace{1em}{\color{purple}\langle \bigoplus_{q\in[0,1]}\PP[\er(h) > \varepsilon \land x = q] \ge \Delta\cdot(1-\varepsilon)^{i} \rangle}\\
\hspace{1.5em}\textsf{if}~\mathcal{L}(x) \land x > h~\textsf{then}\\
\hspace{2em}{\color{purple}\langle \bigoplus_{q\in(h,t]}\PP[\er(h) > \varepsilon \land x = q] \ge \Delta\cdot(1-\varepsilon)^{i} \rangle}\\
\hspace{2em}{\color{purple}\implies\langle \bigoplus_{x\in(h, t - \varepsilon]}\PP[\er(x) > \varepsilon ] \ge \Delta\cdot(1-\varepsilon)^{i} \rangle}\\
\hspace{2em}{\color{purple}\implies\langle \PP[\er(x) > \varepsilon ] \ge (t-\varepsilon - h)\cdot(1-\varepsilon)^{i} \rangle }\\
\hspace{2.5em}h := x\\
\hspace{2em}{\color{purple}\langle \PP[\er(h) > \varepsilon ] \ge (t-\varepsilon - h)\cdot(1-\varepsilon)^{i} \rangle }\\
\hspace{1.5em}\textsf{else}\\
\hspace{2em}{\color{purple}\langle \bigoplus_{q\in[0,h]\cup(t,1]}\PP[\er(h) > \varepsilon \land x = q] \ge \Delta\cdot(1-\varepsilon)^{i} \rangle}\\
\hspace{2.5em}\skp\fatsemi\\
\hspace{2em}{\color{purple}\langle \bigoplus_{q\in[0,h]\cup(t,1]}\PP[\er(h) > \varepsilon \land x = q] \ge \Delta\cdot(1-\varepsilon)^{i} \rangle}\\
\hspace{2em}{\color{purple}\implies\langle \PP[\er(h) > \varepsilon ] \ge (h + 1-t) \cdot(1-\varepsilon)^{i} \rangle}\\
\hspace{1em}{\color{purple}\langle \PP[\er(h) > \varepsilon ] \ge (t-\varepsilon-h)\cdot(1-\varepsilon)^{i} \oplus \PP[\er(h) > \varepsilon ] \ge (h + 1-t) \cdot(1-\varepsilon)^{i} \rangle}\\
\hspace{1em}{\color{purple}\implies\langle \PP[\er(h) > \varepsilon ] \ge (1-\varepsilon)^{i+1} \rangle}\\
\hspace{0em}{\color{purple}\langle \PP[\er(h) > \varepsilon ] \ge (1-\varepsilon)^N \rangle}\\
\end{array}
\]}
\end{minipage}
\caption{The interval learning problem: a diagram of the learning problem (top left), a program implementing interval learning (bottom left), and a proof sketch (right).}
\label{fig:interval}
\end{figure}

The \textit{labelling} oracle $\mathcal{L}(x) = x \le t$ gives the true label of any point on the interval. Let $\textsf{er}(h) = t-h$ be the error of the hypothesis (the total probability mass between $h$ and $t$). The goal is to determine the probability that $h$ has error greater than $\varepsilon$ after $N$ iterations. Practically speaking, this simulates training the model on a dataset of size $N$. Intuitively, the error will be less than $\varepsilon$ if the algorithm ever samples an example in the interval $[t-\varepsilon, t]$. The chance of getting a hit in this range increases greatly with the number of examples seen. While this problem may seem contrived, it is a 1-dimensional version of the \textit{Rectangle Learning Problem} which is known to have practical applications and the proof ideas are extensible to other learnable concepts~\cite{colt}.


To prove that this program is \emph{correct}, we want to say that the resulting hypothesis has small error with high probability. Choosing an error bound $\varepsilon$ and a confidence parameter $\delta$, we say that the program is correct if at the end $\PP[\er(h) \le \varepsilon] \ge 1-\delta$. Now, we can look to \Cref{thm:lbfalse} to determine how to \emph{disprove} the correctness specification. We need to show that the probability of the opposite happening ($\er(h) > \varepsilon$) is higher than $\delta$. Based on the derivation in \Cref{fig:interval}, we conclude that the program is incorrect if $(1-\varepsilon)^N > \delta$. Suppose we had a dataset of size $N=100$ and desired at most 1\% error ($\varepsilon = 0.01$) with 90\% likelihood ($\delta = 0.1$). Then the postcondition tells us that the error is higher than 1\% with probability at least 37\%. Clearly $37\% > \delta$, so the program is incorrect; we need a larger dataset in order to get a better result.

\subsection{Probabilistic Incorrectness Logic}
\label{sec:prob-incorrectness}

It is natural to ask whether a similar result could be achieved using a probabilistic variant of Incorrectness Logic. However, such a program logic is cumbersome and produces poor characterizations of errors. To show this, we begin by examining the semantics of a probabilistic IL triple.
\[
\vDash\inc PCQ
\qquad
\text{iff}
\qquad
\forall \mu\vDash Q. \quad \exists\mu'. \quad\mu \sqsubseteq \dem{C}{\mu'} \quad\text{and}\quad \mu'\vDash P
\]
This definition differs from standard Incorrectness Logic in two ways. First, assertions are satisfied by \emph{distributions} over program states $\mu\in\mathcal{D}\Sigma$ rather than individual program states $\sigma\in\Sigma$. This is necessary in order to make the assertion logic quantitative. Second, under-approximation is achieved using the sub-distribution relation $\sqsubseteq$ instead of set inclusion\footnote{
This order is defined pointwise: $\mu_1 \sqsubseteq\mu_2$ iff $\forall x. \mu_1(x)\le\mu_2(x)$.
}. As is typical with Incorrectness Logic, this definition stipulates that \textit{any} subdistribution satisfying the postcondition must be reachable by an execution of the program. While in non-probabilistic cases it can already be hard to fully characterize a valid end-state, even more information is needed in the probabilistic case.

To demonstrate this, consider the interval learning program from Figure~\ref{fig:interval}. The postcondition of this triple is $\PP[\er(h) >\varepsilon] \ge (1-\varepsilon)^N$, which is not a valid postcondition for an incorrectness triple because it does not adequately describe the final distribution. That is, there are many distributions satisfying this assertion that could not result from running the program. In one such distribution, $h=-1$ with probability 1. So, lower bounds are not suitable for use in Incorrectness Logic because a distribution can be invented where the probability is arbitrarily large, rendering it unreachable. But changing the inequality to an equality to obtain $\PP[\er(h) >\varepsilon] = (1-\varepsilon)^N$ does not solve the problem. This assertion can be satisfied by a distribution where $h=-1$ with probability $(1-\varepsilon)^N$, which is also unreachable. In order for an assertion to properly characterize the output distribution, it has to specify all the possible values of $h$. Such an assertion is given below:
\[
\bigoplus_{x=0}^\varepsilon\left(
\PP[\er(h)=x] = (1-x)^N - (1-(x+\Delta))^N
\right)
\]
The original assertion was easy to understand; we immediately knew the probability of having a large error. By contrast, the added information needed for IL actually obscures the result. It is not useful to know the probability of each value of $h$, we only care about bounding the probability that $\er(h)>\varepsilon$.  In general, Probabilistic Incorrectness Logic requires us to specify the \emph{entire joint distribution} over all the program variables which is certainly undesirable and often infeasible.

Many techniques in probabilistic program analysis summarize the output distribution in alternative ways. This includes using expected values~\cite{wpe,kaminski} and probabilistic independence~\cite{psl}. If those techniques are used to express correctness, it makes sense that similar ideas would be desirable for incorrectness. However, techniques that \emph{summarize} a distribution are incompatible with Incorrectness Logic since they do not specify the output distribution in a sufficient level of detail. Based on these findings, we conclude that developing probabilistic variants of Incorrectness Logic is not a promising research direction. In fact, the differences between correctness and incorrectness are often quite blurred in probabilistic examples. Since some amount of error is typically expected, it is not possible to reason about correctness \emph{without} reasoning about incorrectness. It is therefore sensible that a unified theory captures both.

\section{Related Work}
\label{sec:related}

\noindent\textit{\textbf{Incorrectness reasoning and program analysis}}.
In motivating Incorrectness Logic (IL), \citet{il} posed the twin challenges of \emph{sound} and \emph{scalable} incorrectness reasoning: program logics for incorrectness must guarantee \emph{true positive} bugs, while also supporting \emph{under-approximation} in order to scale to large codebases.
Outcome Logic (OL) takes inspiration from those challenges, but offers a solution that is closer to traditional Hoare Logic~\cite{hoarelogic} and, as such, is also compatible with correctness reasoning.

Outcome Logic was also inspired in part by \emph{Lisbon triples}, which were first described in a published article by \citet[\S{5}]{ilalgebra} under the name \emph{backwards under-approximate triples}.\footnote{Though \citet[\S{3.2}]{realbugs} also mention backwards under-approximate triples, their potential has gone largely unexplored.}
The semantics of Lisbon triples is based on \cites{wpp} calculus of \emph{possible correctness}:
for any initial state satisfying the precondition, there exists \emph{some} trace of execution leading to a final state satisfying the postcondition.
As such, Lisbon triples describe true positives (behaviors that are witnessed by an actual trace, assuming the pre is satisfiable).
As recounted by \citet[\S{7}]{il}, Lisbon triples predate Incorrectness Logic; Derek Dreyer and Ralf Jung suggested them as a foundation for incorrectness reasoning during a discussion with Peter O'Hearn and Jules Villard that took place at POPL'19 in Lisbon (hence the name ``Lisbon Triples'').

Shortly thereafter, O'Hearn developed the semantics of IL triples.
His major motivation for developing IL (instead of further exploring Lisbon triples) was the goal of finding a logical foundation for \emph{scalable} bug-catching static analysis tools (such as Pulse-X~\cite{realbugs}), and one key to scalability is the ability to discard program paths (aka ``drop disjuncts'') during analysis.
More concretely, the analysis accumulates a disjunction of assertions which symbolically represents the set of possible states at each program point. If this set gets too large, then it is important to be able to drop some of the disjuncts in order to save memory and computation time.
Thanks to its reverse rule of consequence---which supports \emph{strengthening} of the postcondition---IL provides a sound logical foundation for dropping disjuncts, whereas Lisbon triples do not.



One can see OL as a generalization of Lisbon triples which supports discarding of program paths in a different way than IL does: namely, via the \emph{outcome conjunction} connective, which enables reasoning about multiple executions at the same time.\footnote{In \Aref{app:triple-proofs}, we show that Lisbon triples are in fact a special case of OL.}
Specifically, ``disjuncts'' arise in a program analysis when the program makes a \emph{choice} to branch based on either a logic condition (\eg an if statement or while loop) or a computational effect (\eg nondeterminism or randomization). 
In IL, both types of choice are encoded by standard disjunction.
In OL, on the other hand, we distinguish these two forms of choice by using disjunction ($\vee$) for the former and outcome conjunction ($\oplus$) for the latter.
This leads to a different approach for supporting discarding of program paths, but one which we believe can serve as an alternative logical foundation for practical static analyses.

Let us first consider the case of choices arising from computational effects.
Incorrectness Logic includes a \textsc{Choice} rule that allows analyses to drop one branch of a nondeterministic choice. An analogous derived rule is also sound in OL \Acite{app:uxrules}; both are shown below.
\[
\inferrule{\inc{P}{C_1}{Q}}{\inc{P}{C_1+C_2}{Q}}{\textsc{Choice (IL)}}
\qquad\qquad
\inferrule{\triple{P}{C_1}{Q}}{\triple{P}{C_1+C_2}{Q\oplus\top}}{\textsc{Under-Approx (OL)}}
\]
Given that nondeterministic variants of OL provide reachability guarantees, it may appear surprising that a conclusion about $C_1+C_2$ can be made without showing that $C_2$ terminates. However, the assertion $\top$ encompasses all outcomes (including nontermination), so this inference is valid.
Note that there are also symmetric versions of these rules where the $C_2$ branch is instead taken.

Let us now consider the case of choices arising from logical conditions, where
the differences between OL and IL are more pronounced. Consider the following program, which will only fail in the case that $b$ is true.
\[
\triple{\ok: x\mapsto -}
{\iftf b{\mathsf{free}(x)}\skp \fatsemi [x] \leftarrow 1}
{(\ok:x\mapsto 1) \vee (\er:x\not\mapsto)}
\]
The semantics of OL does not permit us to simply drop one of the disjuncts in the postcondition. If we want to only explore the program path in which the error occurs, then we need to push information about the logical condition $b$ backwards into the precondition.
\[
\triple{\ok: {x\mapsto-} \land b}
{ \iftf b{\mathsf{free}(x)}\skp \fatsemi [x] \leftarrow 1}
{\er:x\not\mapsto}
\]
This is in contrast to Incorrectness Logic, in which we \emph{can} drop disjuncts, \emph{but in return} we need to ensure that every state described by the postcondition is reachable. More precisely, $(\er:x\not\mapsto)$ is not a strong enough IL postcondition for the aforementioned program because it includes the unreachable state in which $x\not\mapsto$, but $b$ is false. In IL, one must therefore specify the bug as follows:
\[
\inc{x\mapsto-}
{\iftf b{\mathsf{free}(x)}\skp \fatsemi [x] \leftarrow 1}
{\er:{x\not\mapsto} \land b}
\]
So, in either case we must record the same amount of information about the logical condition $b$. The difference is whether this information appears in the pre- or postcondition. As we discussed in \Cref{sec:manifest}, there are advantages to having a more precise precondition (as OL does): it enables us to easily determine how to trigger a bug and characterize manifest errors. Conversely, the precise postconditions required by IL make it difficult to design abstract domains, suggesting that IL is not compatible with popular analysis techniques like abstract interpretation~\cite{ascari}.

Furthermore, in order to generate more useful bug reports and error traces for the user, practical static analysis tools like Pulse-X~\cite{realbugs} \emph{do} in any case push logical conditions backwards to the pre-condition using a technique called bi-abduction~\cite{biab09,biabduction}. This suggests that while the theories of OL and IL differ substantially, it may be possible to build practical static analysis tools atop OL in a similar manner to IL-based tools like Pulse-X. We plan to investigate this further in future work.



\smallskip
\noindent\textit{\textbf{Unifying correctness and incorrectness}}.
Parallel efforts have been made to unify correctness and incorrectness reasoning within a single program logic. \citet{Bruni2021ALF,brunijacm} introduced Local Completeness Logic (LCL), which is based on Incorrectness Logic, but with limits on the rule of consequence such that an over-approximation of the reachable states can always be recovered from the postcondition. Similarly, Exact Separation Logic (ESL)~\cite{exactsl} combines the semantics of IL and Hoare Logic in triples that exactly describe the reachable states.

Both of these logics are capable of proving correctness properties as well as finding true bugs. But they achieve this by compromising the ability to use the rule of consequence, which is crucial to scalable analysis algorithms. Analyses based on Hoare Logic use consequences to abstract the postcondition, reducing the information overhead and aiding in finding loop invariants. Analyses based on IL use consequences to drop disjuncts and consider fewer program paths. Since neither type of consequence is valid in LCL and ESL, it remains unclear whether those theories can feasibly serve as the foundation of practical tools. By contrast, Outcome Logic enjoys the full power of the (forward) rule of consequence and can also drop nondeterministic paths.

There has also been work to connect the theories of correctness and incorrectness algebraically using Kleene Algebra with Tests (KAT) \cite{kat}, an equational theory for reasoning about program equivalence. \citet{ilalgebra,inckleene} showed that both Hoare Logic and IL can be embedded in variants of KAT and used this insight to formalize connections between the two types of specifications. While this provides an algebraic theory powerful enough to capture Hoare Logic and IL, this connection does not go as deep as the unification offered by OL and does not provide a clear path to shared analyses for both program verification and bug finding.

Since our paper was conditionally accepted to OOPSLA, a closely related paper has appeared on arXiv, which presents a program logic, called Hyper Hoare Logic, for proving and disproving program hyper-properties (properties relating multiple program traces)~\cite{hyperhoare}. It achieves this using the same underlying semantics as Outcome Logic instantiated to the powerset monad. Their work shows the applicability of the OL model beyond the usage scenarios that we envisioned in this paper.

\smallskip
\noindent\textit{\textbf{Separation logic and Iris}}.
While both separation logic~\cite{localreasoning,sl} and Outcome Logic employ Bunched Implications~\cite{bi} as a fundamental part of their metatheories, the way in which BI is used in each case is substantially different.

In separation logic and its extensions such as Iris~\cite{iris1,iris}, the value of the BI resource monoid is neatly demonstrated by the \textsc{Frame} Rule, which enables local reasoning by adding assertions about unused resources to the pre- and postconditions of some smaller proof derivation. In this way, framing allows us to talk about the same program execution with additional (unused) resources. By contrast, the outcome conjunction deals with assertions about \emph{different} program executions.

The \textsc{Frame} Rule is in general unsound with respect to the outcome conjunction.
To demonstrate this, we use the same counterexample that \citet{sl} used
to demonstrate that the
Rule of Constancy is unsound in separation logic:
\[
\inferrule{
\triple{x\mapsto -}{[x]\leftarrow 4}{x\mapsto 4}
}{
\triple{x\mapsto - \oplus y\mapsto 3}{[x] \leftarrow 4}{x\mapsto 4 \oplus y\mapsto 3}
}{\textsc{Frame}}
\]
It is easy to see that this is an invalid inference. The outcome conjunction
does not preclude that $x$ and $y$ are aliased, in which case it must be that
$y\mapsto 4$ in the postcondition. Instead, we have the \textsc{Split} rule (\Cref{fig:baserules}), which allows us to analyze a program separately for each outcome in the precondition and then compose the resulting outcomes in the postcondition.

This example shows that, although both
separation logic and OL use BI, the two logics are modeling two very different aspects of
the program (resource usage vs.\ program outcomes, respectively), and the resulting program logics are therefore different.

OL and separation logic are not mutually exclusive. In \Cref{sec:incorrectness}, we saw how separation logic can be embedded in OL. In addition, we believe that combining OL with Iris is a very interesting direction for future research: Iris offers advanced mechanisms to reason modularly about concurrency, and OL offers a way to extend Hoare Logic to be amenable to both correctness and incorrectness reasoning. Combining the two would result in a program logic capable of proving the existence of bugs in concurrent programs (while a concurrent version of Incorrectness Logic already exists \cite{cisl}, it is not built atop Iris and does not support the full capabilities offered by Iris).

In a concurrent version of Outcome Logic, outcomes would model possible interleavings of concurrent branches. In an assertion of the form $P\oplus\top$, the predicate $P$ could describe an undesirable outcome that occurs in \emph{some} of those interleavings (\ie a bug), which is not currently possible to express in Iris.

\smallskip
\noindent\textit{\textbf{Probabilistic and quantitative program analysis}}.
Probabilistic variants of Hoare Logic \cite{ellora,hartog,rand2015vphl,polaris} were a major source of inspiration for the design of Outcome Logic. Whereas pre- and postconditions of standard Hoare Logic describe individual program states, probabilistic variants of Hoare Logic use assertions that describe \emph{distributions} over program states. These logics also include connectives similar to the outcome conjunction, but specialized to probability distributions. In Outcome Logic, we generalize from probability distributions to support a wider variety of PCMs. 


Starting with the seminal work of \citet{psem,ppdl}, expected values have been a favorite choice for probabilistic program analysis. \citet{wpe}'s weakest-pre-expectation ($\wpe$) calculus computes expected values of program expressions with an approach similar to \cites{Dijkstra76} Weakest Precondition calculus. Many extensions to $\wpe$ have arisen, including to handle nondeterminism, runtimes~\cite{kaminski}, and Separation Logic~\cite{qsl}. This line of work has not intersected with Incorrectness Logic since the semantics of weakest-pre is incompatible with IL, although \citet{qsl} hinted at the nuanced interaction between correctness and incorrectness in quantitative settings with their ``faulty garbage collector'' example. We hope that our new perspective---using Hoare Logic for incorrectness---will encourage the use of $\wpe$ calculi for bug-finding.

\citet{qsp} developed a Quantitative Strongest Post (QSP) calculus and noted its connections to IL, which was originally characterized by \citet{il} in terms of \cites{Dijkstra76} strongest-post. QSP is an interesting foundation for studying the Galois Connections between types of quantitative program specifications, although the goals are somewhat orthogonal to our own in that we sought to \emph{unify} correctness and incorrectness rather than explore dualities.

\section{Conclusion}
\label{sec:conclusion}

Formal methods for incorrectness remain a young field. The foundational work of \citet{il} has already led to several program logics for proving the existence of bugs such as memory errors, memory leaks, data races, and deadlocks~\cite{isl,cisl,realbugs}. However, as with any new field there are growing pains---manifest errors and probabilistic programs are an awkward fit in the original formulation of IL.
This has inspired us to pursue a new theory incorporating \cites{il} core tenets of incorrectness---true positives and under-approximation---while also accounting for more evaluation models and different types of incorrectness. Outcome Logic achieves just that, with the added benefit of unifying the theories of correctness and incorrectness in a single program logic. Our Falsification Theorem (\Cref{thm:falsification}) shows that any OL triple can be disproven within the logic. So, any bug invalidating a correctness specification can be expressed.  OL also offers a cleaner characterization of manifest errors, suggesting it may be semantically closer to the way that programmers reason about bugs. 

In this paper, we introduced OL as a theoretical basis for incorrectness reasoning, but in the future we plan to further explore its practical potential as well. Incorrectness Logic has been shown to scale well as an underlying theory for bug-finding in large part due to its ability to \emph{drop disjuncts}~\cite{isl,realbugs}; analysis algorithms accumulate a disjunction of possible outcomes as they move forward through a program, and due to the semantics of IL, these disjuncts can be soundly pruned to keep the search space small. Hoare Logics (including OL) cannot drop disjuncts. However, as we saw in \Cref{sec:overview} and \Cref{sec:logic}, OL \emph{can} drop \emph{outcomes}, which we believe is sufficient to make the algorithm scale to large codebases (although this remains to be demonstrated). Furthermore, since OL triples can be used both for correctness and incorrectness reasoning, we plan to develop a bi-abductive~\cite{biab09,biabduction} algorithm to infer procedure summaries that can be used by \emph{both} correctness verification and bug-finding analyses.


When \citet{il} remarked that ``program correctness and incorrectness are two sides of the same coin,'' he was expressing that just as programmers spend significant mental energy debugging (reasoning about \emph{incorrectness}), we in the formal methods community must invent sound reasoning principles for incorrectness. We take this idea one step further, suggesting that program correctness and incorrectness are two \emph{usages} of the same \emph{program logic}. We hope that this unifying perspective will continue to invigorate the field of incorrectness reasoning and invite the reuse of tools and techniques that have already been successfully deployed for correctness reasoning.

\section*{Acknowledgments}

We thank Peter O'Hearn, Josh Berdine, Azalea Raad, Jules Villard, Quang Loc Le, and Julien Vanegue for their helpful feedback.
This work has been supported in part by the Defense Advanced Research Projects Agency under Contract HR001120C0107.

\bibliographystyle{ACM-Reference-Format}
\bibliography{mhl}

\ifx\extended\undefined\else
\newpage

\appendix

\section{Totality of Language Semantics}
\label{app:totality}

As mentioned in \Cref{sec:prelim}, the semantics of the language in \Cref{fig:comlang} can be made total in all the execution models that we use (nondeterministic and probabilistic), despite depending on the partial monoid operator ($\monoid$). In this section we discuss restrictions that must be placed on probabilistic languages in order to make the semantics total and also establish the existence of the least fixed point used in the semantics of $C^\star$.

Regardless of the execution model, proving the fixed point existence requires us to prove that the semantic map $\de{-}^\dagger$ is continuous with respect to some partial order. We remark that a preorder can be generically defined in terms of the monoid operation $m_1 \sqsubseteq m_2$ iff there exists $m$ such that $m_1 \monoid m = m_2$. In both the nondeterministic and probabilistic case, this relation is also anti-symmetric, therefore it is a partial order. In fact, in the case of the powerset monad, $\sqsubseteq$ is equivalent to $\subseteq$.

We also introduce the notion of syntactic validity for a program $C$. For example, the use of expressions must be well-typed. That is, if $\assume{e}$ appears in the program, then $e$ must be boolean valued, \ie $\forall \sigma. \de{e}(\sigma)\in\mathbb{B}$.

\subsection{Nondeterministic Languages}

Since the monoid operation for nondeterminsitic languages is set union (a total function), we can allow unrestricted access to $C_1+C_2$ and $C^\star$. Therefore, to ensure totality, we must only prove that the least fixed point exists.

\begin{lemma}[Fixed point existence]\label{lem:lfp}
For any semantics of atomic commands, the function $F(f)(\sigma) = f^\dagger(\de{C}(\sigma))\monoid \unit(\sigma)$ has a least fixed point when specialized to the powerset monad.
\end{lemma}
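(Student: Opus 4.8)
The plan is to realise $F$ as a Scott-continuous endofunction on a pointed cpo and then invoke Kleene's fixed-point theorem. By structural induction on the program we may assume that $\de{C}\colon\Sigma\to\bb{2}^\Sigma$ is already a well-defined total function (this is the induction hypothesis that licenses the clause for $\de{C^\star}$), so $\de{C}$ enters the definition of $F$ merely as a fixed parameter. First I would set up the domain: let $\mathcal F=\Sigma\to\bb{2}^\Sigma$ be ordered pointwise, $f\sqsubseteq g$ iff $f(\sigma)\subseteq g(\sigma)$ for all $\sigma$. Since $\langle\bb{2}^\Sigma,\subseteq\rangle$ is a complete lattice, so is $\langle\mathcal F,\sqsubseteq\rangle$, with suprema computed pointwise, $(\bigsqcup_i f_i)(\sigma)=\bigcup_i f_i(\sigma)$, and least element $\bot=\lambda\sigma.\varnothing$. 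I would also remark that here $\sqsubseteq$ coincides with the pointwise lifting of the generic monoid-induced order ($m_1\sqsubseteq m_2$ iff $\exists m.\,m_1\monoid m=m_2$), since for sets $S_1\subseteq S_2$ iff $S_1\cup(S_2\setminus S_1)=S_2$.

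The crux is that Kleisli extension is continuous. For any family $\{f_i\}$ and any $S\in\bb{2}^\Sigma$,
\[
\Bigl(\bigsqcup_i f_i\Bigr)^{\!\dagger}(S)=\bigcup_{\sigma\in S}\,\bigcup_i f_i(\sigma)=\bigcup_i\,\bigcup_{\sigma\in S} f_i(\sigma)=\bigcup_i f_i^{\dagger}(S),
\]
the middle equality being just reassociation of unions. From this, $F$ is continuous: for a nonempty directed $\{f_i\}$,
\[
F\Bigl(\bigsqcup_i f_i\Bigr)(\sigma)=\Bigl(\bigsqcup_i f_i\Bigr)^{\!\dagger}\!(\de{C}(\sigma))\cup\{\sigma\}=\Bigl(\bigcup_i f_i^{\dagger}(\de{C}(\sigma))\Bigr)\cup\{\sigma\}=\bigcup_i\bigl(f_i^{\dagger}(\de{C}(\sigma))\cup\{\sigma\}\bigr)=\Bigl(\bigsqcup_i F(f_i)\Bigr)(\sigma),
\]
where the penultimate step uses that $\cup\{\sigma\}$ distributes over the nonempty union. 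In particular $F$ is monotone. Kleene's theorem then gives that $F$ has a least fixed point, computed as $\textsf{lfp}(F)=\bigsqcup_{n\in\mathbb N}F^n(\bot)$; substituting this back yields exactly the $\textsf{lfp}(\cdots)$ appearing in the clause for $\de{C^\star}$ in \Cref{fig:comlang}, so the semantics is total in the powerset execution model. (Monotonicity plus completeness of $\mathcal F$ already suffices via Knaster--Tarski, but I would present the Kleene version because the explicit chain $F^n(\bot)$ matches the intended reading of $C^\star$ as the join of its finite unrollings $C^0,C^1,\dots$, aligning with the \textsc{Induction} rule.)

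I do not expect a genuine obstacle here; the only points that require care are (i) confirming that the additive term $\{\sigma\}=\unit(\sigma)$ in the definition of $F$ does not destroy continuity, which works precisely because set union distributes over arbitrary unions, and (ii) being explicit about which order is used and that it agrees with the monoid-induced preorder mentioned in the surrounding text (and is in fact a partial order, i.e. antisymmetric, on $\bb{2}^\Sigma$). No property of the atomic-command semantics $\de{c}_\textsf{atom}$ is needed beyond totality, since $\de{C}$ appears only as the argument to the continuous operator $(-)^{\dagger}$.
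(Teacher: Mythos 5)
Your proposal is correct and follows essentially the same route as the paper's proof: the pointwise order on $\Sigma\to\bb{2}^\Sigma$ with bottom $\lambda\sigma.\emptyset$, Scott continuity of $F$ via the same reassociation of unions for the Kleisli extension together with distributing $\cup\,\{\sigma\}$ over the (nonempty) union, and Kleene's theorem giving $\mathsf{lfp}(F)=\bigsqcup_{n\in\mathbb N}F^n(\lambda\sigma.\emptyset)$. The extra remarks (agreement with the monoid-induced order, the Knaster--Tarski alternative) are harmless additions and do not change the argument.
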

\begin{proof}
We first note that in the lemma statement $f\colon \Sigma\to \bb{2}^\Sigma$ and $\de{C}\colon \Sigma\to \bb{2}^\Sigma$. We also define the point-wise partial order $f_1\sqsubseteq f_2$ iff $\forall x. f_1(x)\subseteq f_2(x)$. Clearly, the function $\lambda x.\emptyset$ is the bottom of this order. This also means that for any non-empty chain $f_1\sqsubseteq f_2 \sqsubseteq \ldots$ it must be that $\bigsqcup_i f_i = \lambda x.\bigcup_i f_i(x)$.
 We now show that $F$ is Scott continuous:
{\small
\begin{align*}
F\left(\bigsqcup_{i}f_i\right) ~&= \lambda \sigma.\left(\bigsqcup_{i}f_i\right)^\dagger(\de{C}(\sigma))\cup \{\sigma\} \\
&= \lambda \sigma. \left(\bigcup_{\tau\in\de{C}(\sigma)} \left(\bigsqcup_{i}f_i\right)(\tau)\right) \cup \{\sigma\} \\
&= \lambda \sigma. \bigcup_{\tau\in\de{C}(\sigma)} \left(\bigcup_{i}f_i(\tau)\cup \{\sigma\} \right)\\
&= \lambda \sigma.\bigcup_{i}\left(f^\dagger_i(\de{C}(\sigma))\cup \{\sigma\}\right) \\
&= \lambda \sigma.\bigcup_{i} F(f_i)(\sigma)\\
&= \bigsqcup_{i} F(f_i)
\end{align*}}
Therefore, by the Kleene Fixed Point Theorem, $\textsf{lfp}(F) = \bigsqcup_{n\in\mathbb{N}} F^n(\lambda x.\emptyset)$.
\end{proof}

\subsection{Probabilistic Languages}

In probabilistic languages, we can ensure totality using simple syntactic checks. That is, we syntactically limit programs to not use $C_1+C_2$ and $C^\star$, but rather the guarded versions as shown in \Cref{ex:gcl}. In addition, we establish that $\bind$ is total. Since $\bind$ is implemented as a sum, we must ensure that the cumulative probability mass of the summands does not exceed 1. This is easy to see:
\[\small
\left|\bind_\mathcal{D}(\mu, f)\right| = \left|\sum_{\sigma\in\textsf{supp}(\mu)} \mu(\sigma) \cdot f(\sigma)\right| = \smashoperator{\sum_{\sigma\in\textsf{supp}(\sigma)}} \mu(\sigma)\cdot |f(\sigma)|
\le \smashoperator{\sum_{\sigma\in\textsf{supp}(\sigma)}} \mu(\sigma) = |\mu|
\]
Since $f : \Sigma \to \mathcal{D}\Sigma$, then for any $\sigma\in\Sigma$, $|f(\sigma)| \le 1$.
Therefore, we have shown that $|\bind(\mu, f)| \le |\mu|$ ($\bind$ is contractive) and since it cannot add probability mass it must be total.

\begin{lemma}[Totality of Probabilistic Language Semantics]
The function $\de{C}\colon \Sigma\to\mathcal{D}(\Sigma)$ is total subject to the syntactic restrictions on $C$ described above.
\end{lemma}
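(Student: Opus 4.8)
The plan is to prove totality by structural induction on the syntactically valid program $C$, strengthening the induction hypothesis to record that $\de{C}(\sigma)$ is a genuine subdistribution --- \ie $\de{C}(\sigma)\in\mathcal{D}\Sigma$ with $|\de{C}(\sigma)|\le 1$ --- for every $\sigma$. The base cases are immediate: $\de{\zero}(\sigma)=\varnothing$ has mass $0$; $\de{\bb{1}}(\sigma)=\delta_\sigma$ and $\de{x:=e}(\sigma)$ have mass $1$; $\de{\assume e}(\sigma)$ is $\delta_\sigma$ or $\varnothing$ by well-typedness of $e$; and $\de{x\samp\eta}(\sigma)=\bind(\eta,\lambda v.\,\de{x:=v}(\sigma))$ has mass $|\eta|\le 1$ since $\eta$ is a distribution. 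For $C_1\fatsemi C_2$ we combine the induction hypothesis with the contractivity of $\bind_{\mathcal{D}}$ established just above: since each $\de{C_2}(\tau)$ is a subdistribution, $\bind(\de{C_1}(\sigma),\de{C_2})$ is defined and $|\de{C_1\fatsemi C_2}(\sigma)|\le|\de{C_1}(\sigma)|\le 1$.

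For the guarded conditional $\iftf{e}{C_1}{C_2}=(\assume e\fatsemi C_1)+(\assume{\lnot e}\fatsemi C_2)$, the only thing to check is that the partial monoid operation $\monoid$ is defined on the two branches. Well-typedness of $e$ gives that, at each $\sigma$, exactly one of $e$, $\lnot e$ holds; the guard that fails yields $\varnothing$, and $\bind$-contractivity propagates mass $0$ to that whole branch. Hence the two summands have masses adding to at most $1+0=1$, so $\monoid$ is defined and the result is again a subdistribution.

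The loop $\whl{e}{C}=(\assume e\fatsemi C)^\star\fatsemi\assume{\lnot e}$ is the crux and the expected main obstacle. We cannot simply reuse \Cref{lem:lfp}, because in $\mathcal{D}\Sigma$ the functional $F(f)(\sigma)=f^\dagger(\de{\assume e\fatsemi C}(\sigma))\monoid\unit(\sigma)$ underlying the generic $\star$-semantics can push cumulative mass above $1$ (the ``skip'' summand $\unit(\sigma)$ is always present). Instead the plan is to fold the trailing guard into the fixpoint, reasoning with the loop functional $G(g)(\sigma)=\de{\assume{\lnot e}}^\dagger\!\bigl(g^\dagger(\de{\assume e\fatsemi C}(\sigma))\monoid\unit(\sigma)\bigr)$, which reduces to $g^\dagger(\de{C}(\sigma))$ when $e$ holds at $\sigma$ and to $\delta_\sigma$ otherwise. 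One then shows: (i) $G$ maps the pointwise poset of subdistribution-valued functions into itself, by $\bind$-contractivity in the first branch and $|\delta_\sigma|=1$ in the second; (ii) exactly as in the proof of \Cref{lem:lfp}, $G$ is monotone and Scott continuous for the monoid-induced order $f_1\sqsubseteq f_2$ iff $\exists f.\,f_1\monoid f=f_2$ (which is antisymmetric on $\mathcal{D}$, hence a partial order with bottom $\lambda\sigma.\varnothing$); (iii) by the Kleene Fixed Point Theorem $\textsf{lfp}(G)=\bigsqcup_n G^n(\lambda\sigma.\varnothing)$ exists and is subdistribution-valued; and (iv) an unfolding argument identifies $\textsf{lfp}(G)$ with $\de{(\assume e\fatsemi C)^\star\fatsemi\assume{\lnot e}}$, so the latter is total. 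I expect step (iv), together with the continuity bookkeeping of step (ii), to be the delicate part, as it requires relating the inner $\star$-fixpoint to the outer $\assume{\lnot e}$ at the level of the monoid order $\sqsubseteq$ rather than treating them in isolation.
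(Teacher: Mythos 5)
Your base cases, the sequencing case via contractivity of $\bind$, and the conditional case (one of the two guarded summands is $\varnothing$, so the partial sum is defined) all match the paper's argument. The gap is in the loop case, which you rightly identify as the crux: the functional you propose does not actually avoid the mass-overflow problem you diagnosed. You define
\[
G(g)(\sigma)\;=\;\de{\assume{\lnot e}}^\dagger\bigl(g^\dagger(\de{\assume e\fatsemi C}(\sigma))\monoid\unit(\sigma)\bigr),
\]
but the offending partial sum $g^\dagger(\de{\assume e\fatsemi C}(\sigma))\monoid\unit(\sigma)$ must be formed \emph{before} the outer $\de{\assume{\lnot e}}^\dagger$ can filter anything, and at a state where $e$ holds both summands can be nonempty: the first can carry mass up to $1$ and $\unit(\sigma)=\delta_\sigma$ carries mass $1$. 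Concretely, already the second Kleene iterate $G(G(\lambda\sigma.\varnothing))$ is undefined for a body that falsifies the guard with probability $1$, since then $g^\dagger(\de{C}(\sigma))$ has mass $1$ and adding $\delta_\sigma$ exceeds total mass $1$. Relatedly, your claimed simplification that $G(g)(\sigma)$ ``reduces to $g^\dagger(\de{C}(\sigma))$ when $e$ holds'' is only valid when $g^\dagger(\de{C}(\sigma))$ is supported on $\lnot e$-states (true of the intended fixpoint, false of intermediate iterates), so you cannot use it to establish that $G$ maps subdistribution-valued functions to subdistribution-valued functions.

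The paper's fix is slightly but crucially different: it pushes the exit guard \emph{into the summand itself}, working with
\[
F(f)(\sigma)\;=\;f^\dagger(\de{\assume e\fatsemi C}(\sigma))\;\monoid\;\de{\assume{\lnot e}}(\sigma),
\]
so that at every $\sigma$ exactly one of the two summands is $\varnothing$ (depending on whether $e$ holds at $\sigma$), and the partial $\monoid$ is always defined without any side condition on $f$. With that functional, your steps (i)--(iii) go through essentially as in the nondeterministic fixed-point lemma (pointwise order, suprema of chains computed pointwise, $+$ distributing over the supremum), and the identification with the semantics of $(\assume e\fatsemi C)^\star\fatsemi\assume{\lnot e}$ is taken as the (equivalent, in the total case) definition of the guarded loop rather than derived by a separate unfolding argument. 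If you want to keep the shape of your $G$, you would have to first distribute $\de{\assume{\lnot e}}^\dagger$ over the two summands and take the distributed expression as the definition; as written, $G$ is not a well-defined map on subdistribution-valued functions.
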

\begin{proof}
The proof is by induction on $C$. All of the cases except if statements and while loops trivially follow from the definition of $\de{-}$.
\begin{itemize}
\item\textsc{If}. First note that $\de{\iftf e{C_1}{C_2}}(\sigma) = \de{(\assume{e}\fatsemi C_1) + (\assume{\lnot e}\fatsemi C_2)}(\sigma) = \de{\assume{e}\fatsemi C_1}(\sigma) + \de{\assume{\lnot e}\fatsemi C_2}(\sigma)$. Now, we do case analysis on the value of $\de{e}_\textsf{Exp}(\sigma)$. If $\de{e}_\textsf{Exp}(\sigma)=\tru$, then $\de{\assume e}(\sigma) = \delta_\sigma$ and $ \de{\assume{\lnot e}}(\sigma) = \varnothing$. Therefore, we know that $\de{\assume{e}\fatsemi C_1}(\sigma) = \de{C_1}(\sigma)$ and $\de{\assume{\lnot e}\fatsemi C_2}(\sigma)=\varnothing$. By the induction hypothesis $\de{C_1}(\sigma)$ is defined and so $\de{C_1}(\sigma)+\varnothing$ must also be defined. The case where $\de{e}_\textsf{Exp}(\sigma) = \fls$ is symmetrical.
\item\textsc{While}. We begin by proposing an alternate semantics for (guarded) while loops:
{\small
\[
\de{\whl eC}(\sigma) = \textsf{lfp}(F)(\sigma)
\quad\text{where}\quad
F(f)(\sigma) = f^\dagger(\de{\assume{e}\fatsemi C}(\sigma))\monoid \de{\assume{\lnot e}}(\sigma)
\]}
In this semantics, we push the $\assume{\lnot e}$ in to the fixed point computation which allows $\monoid$ to be defined. In the nondeterminism case where $\monoid$ is total, this semantics is equivalent to the one defined in \Cref{fig:comlang}. Now, note that when using the partial order described at the beginning of this section, the supremum of two distributions (if it exists) is $\mu_1\sqcup\mu_2 = \lambda\sigma.\max(\mu_1(\sigma), \mu_2(\sigma))$. We can therefore see that addition distributes over the supremum:
{\small
\begin{align*}
(\mu_1 \sqcup \mu_2) + \mu ~&= (\lambda\sigma.\max(\mu_1(\sigma), \mu_2(\sigma))) + \mu\\
&= \lambda\sigma.\max(\mu_1(\sigma), \mu_2(\sigma))) + \mu(\sigma) \\
&= \lambda\sigma.\max((\mu_1+\mu)(\sigma), (\mu_2+\mu)(\sigma)))\\
&= (\mu_1+\mu) \sqcup (\mu_2+\mu)
\end{align*}}
We now proceed to prove that $F$ is Scott continuous.
We use the same point-wise order that we saw in \Cref{lem:lfp}, $f_1 \sqsubseteq f_2$ iff $\forall x.f_1(x) \sqsubseteq f_2(x)$.
{\small
\begin{align*}
F\left(\bigsqcup_{i}f_i\right) ~&= \lambda \sigma. \left(\bigsqcup_{i}f_i\right)^\dagger (\de{\assume e\fatsemi C}(\sigma)) + \de{\assume{\lnot e}}(\sigma) \\
&= \lambda\sigma.\smashoperator{\sum_{\tau\in\de{\assume e\fatsemi C}(\sigma)}} (\textstyle{\bigsqcup_i} f_i(\tau)) + \de{\assume{\lnot e}}(\sigma) \\
&= \lambda\sigma.\bigsqcup_i (f_i^\dagger(\de{\assume e\fatsemi C}(\sigma)) + \de{\assume{\lnot e}}(\sigma))\\
\intertext{Note that this sum is always defined since one of $\de{\assume{e}}(\sigma)$ or $\de{\assume{\lnot e}}(\sigma)$ must be $\varnothing$.}
&= \lambda \sigma.\bigsqcup_{i} F(f_i)(\sigma)\\
&= \bigsqcup_{i} F(f_i)
\end{align*}}
Therefore, by the Kleene Fixed Point Theorem, $\textsf{lfp}(F) = \bigsqcup_{n\in\mathbb{N}} F^n(\lambda x.\varnothing)$.
\end{itemize}
\end{proof}

\section{Under-Approximation}\label{app:ux}

In \Cref{def:uxoutcome}, we defined under-approximate outcome assertions $m\vDashD\varphi$ to be syntactic sugar for $m\vDash\varphi\oplus\top$. In order to motivate this choice, we prove the following results, which show that this definition of under-approximation corresponds to \emph{dropping outcomes}.  

\begin{lemma}[Dropping Outcomes]
In any BI frame, the following implications hold: $\varphi\oplus\psi \Rightarrow \varphi\oplus\top$ and $\varphi\oplus\psi \Rightarrow \top\oplus\psi$.
\end{lemma}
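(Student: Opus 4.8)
The plan is to unfold the BI semantics of $\Rightarrow$ and $\oplus$ from Figure~\ref{fig:outcomes} and observe that each implication reduces to reusing the very witnesses supplied by $\varphi\oplus\psi$, weakening one side to $\top$ for free. First I would note that, since $\preccurlyeq$ is a preorder (in particular reflexive), the clause $m\vDash\chi_1\Rightarrow\chi_2$ iff $\forall m'.\,(m\preccurlyeq m'$ and $m'\vDash\chi_1)$ implies $m'\vDash\chi_2$ makes the validity of the implication $\chi_1\Rightarrow\chi_2$ (i.e.\ $m\vDash\chi_1\Rightarrow\chi_2$ for all $m$) equivalent to the pointwise semantic entailment: for all $m$, $m\vDash\chi_1$ implies $m\vDash\chi_2$. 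So it suffices to prove, for every $m\in X$, that $m\vDash\varphi\oplus\psi$ implies $m\vDash\varphi\oplus\top$ (and symmetrically $m\vDash\top\oplus\psi$).

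Next I would carry out the core argument. Suppose $m\vDash\varphi\oplus\psi$. By the semantics of $\oplus$ there exist $m_1,m_2\in X$ with $m_1\monoid m_2\preccurlyeq m$, $m_1\vDash\varphi$, and $m_2\vDash\psi$. To conclude $m\vDash\varphi\oplus\top$ I exhibit the same decomposition: take the witnesses $m_1$ and $m_2$. Then $m_1\monoid m_2\preccurlyeq m$ still holds, $m_1\vDash\varphi$ still holds, and $m_2\vDash\top$ holds because $\top$ is satisfied by every element. Hence $m\vDash\varphi\oplus\top$. For the second implication the argument is identical up to swapping the roles of the two components: from the same witnesses $m_1,m_2$ we get $m_1\vDash\top$ (always) and $m_2\vDash\psi$, with $m_1\monoid m_2\preccurlyeq m$, so $m\vDash\top\oplus\psi$. (One could alternatively derive the second implication from the first using commutativity of $\monoid$ and the provable equivalence $\varphi\oplus\psi\Leftrightarrow\psi\oplus\varphi$, but the direct route is shorter.)

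I do not expect any real obstacle here: the proof is a pure unfolding of definitions. The only point requiring a moment's care is the bookkeeping around $\preccurlyeq$ in the $\oplus$ clause and in the $\Rightarrow$ clause — specifically, that reducing validity of the implication to pointwise entailment is legitimate — but this is immediate from reflexivity of $\preccurlyeq$, and indeed in the classical setting of interest $\preccurlyeq$ is just equality, so the issue disappears entirely.
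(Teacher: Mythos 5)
Your proposal is correct and matches the paper's own proof: both reuse the witnesses $m_1, m_2$ from $\varphi\oplus\psi$, observe that $m_2\vDash\top$ (resp.\ $m_1\vDash\top$) trivially, and conclude. Your extra remark about reducing validity of $\Rightarrow$ to pointwise entailment via reflexivity of $\preccurlyeq$ is a harmless bit of bookkeeping the paper leaves implicit.
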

\begin{proof}
Suppose that $m\vDash\varphi\oplus\psi$. Then there exists $m_1$ and $m_2$ such that $m \succcurlyeq m_1\monoid m_2$ and $m_1\vDash\varphi$ and $m_2\vDash\psi$. Clearly, also $m_2\vDash\top$, so $m\vDash\varphi\oplus\top$. The second implication is symmetric.
\end{proof}

\begin{lemma}[Dropping Outcomes (Under-Approximate)]
In any BI frame, if $m\vDashD\varphi\oplus\psi$, then $m\vDash\varphi$.
\end{lemma}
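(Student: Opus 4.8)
The plan is to unfold the shorthand from \Cref{def:uxoutcome} and reduce the claim to the preceding ``Dropping Outcomes'' lemma together with associativity of $\oplus$. By definition, the hypothesis $m\vDashD\varphi\oplus\psi$ is literally $m\vDash(\varphi\oplus\psi)\oplus\top$, and the desired conclusion $m\vDashD\varphi$ is $m\vDash\varphi\oplus\top$. So everything happens inside the outcome assertion logic of \Cref{fig:outcomes}, and no reasoning about programs is needed.

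First I would record that $\oplus$ is associative as a connective, since its semantics is built from the PCM operation $\monoid$, which is associative (and commutative, with unit $\ident$): unfolding, both $m\vDash(\varphi\oplus\psi)\oplus\vartheta$ and $m\vDash\varphi\oplus(\psi\oplus\vartheta)$ amount to the existence of a ternary split $m_1\monoid m_2\monoid m_3\preccurlyeq m$ with $m_i$ satisfying the $i$-th conjunct, the regrouping being justified by the PCM axiom that if $m_1\monoid m_2$ and $(m_1\monoid m_2)\monoid m_3$ are defined then so is $m_2\monoid m_3$ and the two bracketings agree (in the classical frame, $\preccurlyeq$ is $=$, so this is immediate; in an intuitionistic reading one additionally invokes the downward closure of the witnesses). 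Instantiating with $\vartheta=\top$ rewrites the hypothesis as $m\vDash\varphi\oplus(\psi\oplus\top)$. Then I would weaken: $\psi\oplus\top\Rightarrow\top$ holds trivially since every $m'$ satisfies $\top$, and monotonicity of $\oplus$ in its right argument (read off directly from the semantics, or obtained from the Rule of \textsc{Consequence} applied under $\oplus$) gives $\varphi\oplus(\psi\oplus\top)\Rightarrow\varphi\oplus\top$. Chaining these implications yields $m\vDash\varphi\oplus\top$, i.e.\ $m\vDashD\varphi$. Alternatively one can bypass the named lemmas and argue straight from \Cref{fig:outcomes}: from $m\vDash(\varphi\oplus\psi)\oplus\top$ extract $m_1\monoid m_2\preccurlyeq m$ with $m_1\vDash\varphi\oplus\psi$, then $n_1\monoid n_2\preccurlyeq m_1$ with $n_1\vDash\varphi$, $n_2\vDash\psi$, and exhibit $n_1$ and $n_2\monoid m_2$ as witnesses for $\varphi\oplus\top$.

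I do not expect a genuine obstacle here. The only point requiring a little care is the bookkeeping around the \emph{partiality} of $\monoid$ when regrouping a ternary composition (and, in the intuitionistic reading, the interaction of $\preccurlyeq$ with $\monoid$) — but this is exactly the content of the PCM/BI-frame axioms, and it was already relied upon implicitly in the immediately preceding ``Dropping Outcomes'' lemma, so it does not introduce anything new.
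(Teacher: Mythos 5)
Your proposal is correct and follows essentially the same route as the paper's proof: unfold $\vDashD$ to $m\vDash(\varphi\oplus\psi)\oplus\top$, regroup the witnesses using associativity of $\monoid$, weaken the $\psi$-and-$\top$ portion to a single $\top$, and recombine to get $m\vDash\varphi\oplus\top$, i.e.\ $m\vDashD\varphi$. Your reading of the conclusion as $m\vDashD\varphi$ (rather than the literal $m\vDash\varphi$ in the statement) is the intended one, matching both the paper's proof and its surrounding discussion.
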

\begin{proof}
Since $m\vDashD\varphi\oplus\psi$, then $m\vDash\varphi\oplus\psi\oplus\top$. This means that $m_1\vDash\varphi$, $m_2\vDash\psi$ and $m_3\vDash\top$ such that $m_1\monoid m_2\monoid m_3 \preccurlyeq m$. Clearly, $m_2\vDash\top$ as well. Recombining these, we get $m\vDash\varphi\oplus\top\oplus\top$ which is equivalent to $m\vDash \varphi\oplus\top$, or just $m\vDashD\varphi$.
\end{proof}

%
%

\subsection{Alternative Formulation using Intuitionistic BI}\label{app:intuitionistic}

In \Cref{sec:outcomes} we defined a single variant of the outcome logic using classical BI with under-approximate assertions as syntactic sugar. A different development is possible using an intuitionistic interpretation of BI with a preorder defined in terms of the monoid composition:
\[
m_1 \preccurlyeq m_2
\qquad\text{iff}\qquad
\left\{\begin{array}{ll}
m_2 = \varnothing & \text{if}\quad m_1 = \varnothing \\
\exists m.\; m_1 \monoid m = m_2 & \text{if} \quad m_1 \neq\varnothing
\end{array}\right.
\]
Note that the first case ensures that $\varnothing$ is only related to itself, which is necessary to ensure that $m\vDash\top^\oplus$ iff $m=\varnothing$. Atomic assertions in intuititionistic BI interpretations must respect the persistence property: if $m\vDash P$ and $m'\succcurlyeq m$, then $m'\vDash P$ (this is also referred to as monotonicity in Kripke semantics). We will now show that the under-approximate satisfaction relation $\vDashD$ is valid as an intuitionistic satisfaction relation for atomic propositions.

\begin{lemma}[Under-Approximate Satisfaction is Persistent]
For any $m, m'\in M\Sigma$ and atomic assertion $P$, if $m\vDashD P$ and $m'\succcurlyeq m$, then $m'\vDashD P$.
\end{lemma}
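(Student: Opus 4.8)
The plan is to peel off the two layers of definition and then do a short two-case analysis. First I would unfold $\vDashD$: by \Cref{def:uxoutcome}, the statement $m \vDashD P$ abbreviates $m \vDash P \oplus \top$ interpreted in the classical BI frame $\langle M\Sigma, \monoid, =, \varnothing\rangle$ of \Cref{def:outcome}. Since $n \vDash \top$ holds for every $n \in M\Sigma$, this unfolds (using the clause for $\oplus$ in \Cref{fig:outcomes} with $\preccurlyeq$ instantiated to equality) to: there exist $m_1, m_2 \in M\Sigma$ with $m_1 \monoid m_2 = m$ and $m_1 \vDash P$. So the goal reduces to exhibiting an analogous decomposition of $m'$.

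Next I would case on whether $m = \varnothing$. If $m = \varnothing$, then the first clause of the preorder $\preccurlyeq$ used in this appendix forces $m' = \varnothing$, i.e.\ $m' = m$, so $m' \vDashD P$ is literally the hypothesis $m \vDashD P$. If $m \neq \varnothing$, then $m \preccurlyeq m'$ yields (by the second clause) some $m_0$ with $m \monoid m_0 = m'$. Now $(m_1 \monoid m_2) \monoid m_0 = m \monoid m_0 = m'$ is a defined element of $M\Sigma$, so by partial associativity of $\monoid$ the product $m_2 \monoid m_0$ is defined --- call it $m_2'$ --- and $m_1 \monoid m_2' = m'$. Since $m_1 \vDash P$ still holds, the pair $(m_1, m_2')$ witnesses $m' \vDash P \oplus \top$, i.e.\ $m' \vDashD P$, as required.

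The one step I would be careful with is the partiality of $\monoid$: one must invoke the PCM's partial-associativity law in the correct direction --- from the fact that the left-nested product $(m_1 \monoid m_2) \monoid m_0$ is defined, deduce that $m_1 \monoid (m_2 \monoid m_0)$ is defined and equal to it --- rather than treating $\monoid$ as total. Everything else is routine; note in particular that we never use any persistence property of the \emph{inner} atomic relation $\vDash_\textsf{atom}$, which is exactly the point of wrapping it with $\oplus\top$: the $\top$ summand absorbs the extra mass $m_0$.
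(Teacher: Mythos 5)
Your proof is correct and follows essentially the same route as the paper's: case on whether $m=\varnothing$ (where the first clause of the preorder forces $m'=\varnothing$), and otherwise absorb the extra factor $m_0$ from $m\monoid m_0 = m'$ into the $\top$ summand. The only cosmetic difference is that the paper argues at the formula level ($m'\vDash(P\oplus\top)\oplus\top$ hence $m'\vDash P\oplus\top$) while you reassociate the PCM elements explicitly; the underlying use of (partial) associativity is the same.
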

\begin{proof}
If $m = \varnothing$, then $m'$ is also $\varnothing$ and so clearly $m'\vDashD P$. Now suppose that $m\neq\varnothing$. Since $m\vDashD P$, then $m\vDash P\oplus \top$. Since $m'\succcurlyeq m$ and $m\neq\varnothing$, there is some $m''$ such that $m\monoid m'' = m'$. Clearly, $m''\vDash\top$, so $m'\vDash (P\oplus\top)\oplus\top$. This means that $m'\vDash P\oplus\top$, or in other words $m'\vDashD P$.
\end{proof}

If we combine the under-approximate satisfaction relation with the basic assertions for nondeterministic and probabilistic evaluation models (\Cref{def:ndoc,def:proboc}), we get a sensible semantics. As the following two lemmas show, under-approximation in the nondeterministic case corresponds to existential quantification and in the probabilistic case it corresponds to lower bounds.

\begin{lemma}
\label{lem:uxsat}
In the powerset interpretation of BI, $S\vDashD P$ iff $\exists \sigma\in S.~\sigma\vDash_\Sigma P$.
\end{lemma}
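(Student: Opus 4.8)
The plan is to unfold the two definitions in play and reduce the claim to an elementary fact about the subset order on $\bb{2}^\Sigma$. By \Cref{def:uxoutcome}, $S\vDashD P$ is shorthand for $S\vDash P\oplus\top$, and by the semantics of $\oplus$ this holds iff there exist $S_1,S_2$ with $S_1\monoid S_2\preccurlyeq S$, $S_1\vDash P$, and $S_2\vDash\top$. The conjunct $S_2\vDash\top$ always holds, so the statement collapses to: $S\vDashD P$ iff there is some $S_1$ with $S_1\vDash P$ and $S_1\monoid S_2\preccurlyeq S$ for some $S_2$.

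Next I would instantiate this at the concrete powerset structure. Here $\monoid$ is $\cup$, and by the atomic satisfaction relation for the nondeterministic model (\Cref{def:ndoc}), $S_1\vDash P$ means $S_1\neq\emptyset$ and $\sigma\vDash_\Sigma P$ for every $\sigma\in S_1$. Since $S_1$ is nonempty, $S_1\cup S_2$ is nonempty, and the preorder of \Cref{app:intuitionistic} restricted to nonempty sets is exactly containment; hence $S_1\cup S_2\preccurlyeq S$ forces $S_1\subseteq S$. So $S\vDashD P$ is equivalent to the existence of a nonempty $S_1\subseteq S$ all of whose elements satisfy $P$.

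From this normal form both directions are immediate. For $(\Rightarrow)$, any $\sigma\in S_1$ lies in $S$ and satisfies $\sigma\vDash_\Sigma P$, giving $\exists\sigma\in S.\ \sigma\vDash_\Sigma P$. For $(\Leftarrow)$, given $\sigma\in S$ with $\sigma\vDash_\Sigma P$, take $S_1=\{\sigma\}$ and $S_2=S$: then $S_1\vDash P$, $S_2\vDash\top$, and $S_1\cup S_2=S\preccurlyeq S$, so $S\vDash P\oplus\top$. I do not expect a real obstacle; the only point requiring a moment's care is the behaviour of $\preccurlyeq$ — that $T\preccurlyeq S$ coincides with $T\subseteq S$ as soon as $T\neq\emptyset$ — which is exactly how the preorder was defined at the start of \Cref{app:intuitionistic}, and the nonemptiness clause built into atomic satisfaction is what makes it apply. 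The same argument also goes through verbatim for the classical interpretation ($\preccurlyeq$ equal to $=$), again taking $S_2=S$.
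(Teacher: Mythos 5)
Your proposal is correct and follows essentially the same route as the paper's proof: unfold $S\vDash P\oplus\top$, use the nonemptiness built into atomic satisfaction to extract a witness $\sigma\in S_1\subseteq S$, and conversely witness the split with $S_1=\{\sigma\}$ and $S_2=S$. The only cosmetic difference is your extra discussion of the intuitionistic preorder; the paper simply works in the classical frame where $\monoid$-decomposition is $S=S_1\cup S_2$, and your argument covers that case verbatim.
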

\iffpf{
Suppose that $S\vDashD P$, so $S\vDash P\oplus\top$, or in other words $S_1\vDash P$ and $S_2\vDash \top$ such that $S=S_1\cup S_2$. Further, this means that $S_1 \neq\emptyset$ and $\forall \sigma\in S_1.~\sigma\vDash_\Sigma P$. Since we know $S_1$ is nonempty, then there exists $\sigma\in S_1.~\sigma\vDash_\Sigma P$ and since $S_1\subseteq S$, then $\sigma\in S$ as well, so $\exists \sigma\in S.~\sigma\vDash_\Sigma P$.
}{
Suppose that $\exists\sigma\in S.~\sigma\vDash_\Sigma P$. Now, let $T = \{\sigma\}$, so clearly $T\vDash P$ and $S\vDash\top$ and $T\cup S = S$. Therefore, $S\vDash P\oplus\top$ and so $S\vDashD P$.
}

\begin{lemma}
\label{lem:problb}
In the distribution interpretation of BI, $\mu\vDashD (\PP[A] = p)$ iff $\PP_\mu[A] \ge p$, where:
\[
\PP_\mu[A] \triangleq \sum \{\mu(\sigma) \mid \sigma\in\mathsf{supp}(\mu), \sigma\vDash_\Sigma A \}
\]
\end{lemma}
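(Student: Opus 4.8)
The plan is to unfold the definition of $\vDashD$ and reduce the lemma to an elementary fact about carving a sub-distribution of prescribed mass out of $\mu$. By \Cref{def:uxoutcome}, $\mu\vDashD(\PP[A]=p)$ abbreviates $\mu\vDash(\PP[A]=p)\oplus\top$. Since we work in the classical (equality) BI interpretation over the PCM $\langle\mathcal D\Sigma,+,\varnothing\rangle$ from \Cref{def:proboc}, and since $\mu_2\vDash\top$ holds for every $\mu_2$, this unfolds (using the semantics of $\oplus$ in \Cref{fig:outcomes}) to the statement: there exist $\mu_1,\mu_2\in\mathcal D\Sigma$ with $\mu_1+\mu_2=\mu$, $|\mu_1|=p$, and $\sigma\vDash_\Sigma A$ for every $\sigma\in\mathsf{supp}(\mu_1)$. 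So it suffices to show that this is equivalent to $\PP_\mu[A]\ge p$.

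For the ($\Rightarrow$) direction, I would observe that $\mu_1+\mu_2=\mu$ forces $\mu_1(\sigma)\le\mu(\sigma)$ pointwise; hence $\mathsf{supp}(\mu_1)\subseteq\mathsf{supp}(\mu)$, and since every point of $\mathsf{supp}(\mu_1)$ satisfies $A$, we get $\mathsf{supp}(\mu_1)\subseteq\{\sigma\in\mathsf{supp}(\mu)\mid\sigma\vDash_\Sigma A\}$. Summing the nonnegative values $\mu(\sigma)$ over this larger index set and then using $\mu_1\le\mu$ pointwise yields $\PP_\mu[A]\ge\sum_{\sigma\in\mathsf{supp}(\mu_1)}\mu(\sigma)\ge\sum_{\sigma\in\mathsf{supp}(\mu_1)}\mu_1(\sigma)=|\mu_1|=p$.

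For the ($\Leftarrow$) direction, given $\PP_\mu[A]\ge p$, I would build the witnesses explicitly. Let $S=\{\sigma\in\mathsf{supp}(\mu)\mid\sigma\vDash_\Sigma A\}$ and let $\mu_S$ agree with $\mu$ on $S$ and be $0$ elsewhere, so $|\mu_S|=\PP_\mu[A]$. If $\PP_\mu[A]=0$ then $p=0$, and I take $\mu_1=\varnothing$, $\mu_2=\mu$; otherwise set $\mu_1=\tfrac{p}{\PP_\mu[A]}\cdot\mu_S$. Since $p\le\PP_\mu[A]$ the scalar lies in $[0,1]$, so $\mu_1$ is a valid sub-distribution with $|\mu_1|=p$ and $\mathsf{supp}(\mu_1)\subseteq S$, hence $\mu_1\vDash\PP[A]=p$. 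Pointwise $\mu_1(\sigma)\le\mu(\sigma)$, so $\mu_2\triangleq\mu-\mu_1$ (pointwise difference) is again a sub-distribution, $\mu_1+\mu_2=\mu$, and $\mu_2\vDash\top$; this gives $\mu\vDash(\PP[A]=p)\oplus\top$, as required.

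The only slightly delicate points are the degenerate case $\PP_\mu[A]=0$ (which forces $p=0$ and is handled separately) and the verification that the rescaled restriction $\mu_1$ is a genuine sub-distribution and that subtracting it leaves one — both immediate from $p\le\PP_\mu[A]$ and $\mu_1\le\mu$. I do not expect any substantive obstacle here; unlike the nondeterministic case (\Cref{lem:uxsat}), the content is precisely the rescaling argument, which is needed because the carved-off mass must equal $p$ exactly rather than merely being nonzero.
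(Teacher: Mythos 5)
Your proof is correct and follows essentially the same route as the paper's: unfold $\vDashD$ into a decomposition $\mu=\mu_1+\mu_2$ with $\mu_1\vDash\PP[A]=p$, argue in one direction that adding $\mu_2$ can only increase the probability of $A$, and in the other direction carve out of $\mu$ a sub-distribution of mass exactly $p$ supported on $A$-states. The only difference is that where the paper simply asserts the existence of that witness, you construct it explicitly by restricting $\mu$ to the $A$-states and rescaling by $p/\PP_\mu[A]$ (with the $\PP_\mu[A]=0$ case handled separately), which is a fine way to fill in that detail.
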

\iffpf{
Assume that $\mu\vDashD (\PP[A] =p)$, so $\mu\vDash (\PP[A]=p)\oplus\top$. Therefore $\mu_1\vDash (\PP[A] = p)$ and $\mu_2\vDash\top$ such that $\mu_1+\mu_2 = \mu$. This tells us that $\PP_{\mu_1}[A] = p$. When we add $\mu_2$ to $\mu_1$ to get $\mu$, the probability of $A$ can only increase, so $\PP_\mu[A] \ge p$.
}{
Assume that $\PP_\mu[A] \ge p$. That means there must be a sub-distribution $\mu_1$ of $\mu$ such that $|\mu_1| = p$ and $\forall\sigma\in\mathsf{supp}(\mu_1).~\sigma\vDash_\Sigma A$. Let the other part of the distribution be $\mu_2$ (so $\mu=\mu_1+\mu_2$). Now, by construction, $\mu_1\vDash (\PP[A] = p)$ and $\mu_2\vDash\top$, so $\mu\vDash(\PP[A] = p)\oplus\top$, or equivalently $\mu\vDashD(\PP[A] = p)$.
}

\subsection{Derived Under-approximate Proof Rules}
\label{app:uxrules}

In this section, we provide derived inference rules that aid in reasoning about programs in an under-approximate manner. The first set of rules under-approximate nondeterministic choice by only exploring one of the paths and using the trivial post-condition $\top$ for the other path. Note that the unexplored path may diverge, $\top$ is a valid postcondition no matter what behavior it has.

\begin{lemma}
The following proof rules for under-approximating program paths can be derived for any nondeterministic Outcome Logic instance.
\[
\inferrule{\triple\varphi{C_1}\psi}{\triple{\varphi}{C_1+C_2}{\psi\oplus\top}}{\textsc{Under-Approx Left}}
\qquad
\inferrule{\triple\varphi{C_2}\psi}{\triple{\varphi}{C_1+C_2}{\psi\oplus\top}}{\textsc{Under-Approx Right}}
\]
\end{lemma}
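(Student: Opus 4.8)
The plan is to derive both rules directly from the \textsc{Plus} rule (sound for the powerset evaluation model) together with the generic \textsc{True} and \textsc{Consequence} rules from \Cref{fig:baserules}; no new semantic reasoning is needed. For \textsc{Under-Approx Left}, I would start from the premise $\triple\varphi{C_1}\psi$, observe that \textsc{True} yields $\triple\varphi{C_2}\top$, and then apply \textsc{Plus} to these two derivations to obtain $\triple\varphi{C_1+C_2}{\psi\oplus\top}$, which is exactly the desired conclusion. Here not even \textsc{Consequence} is required.

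For \textsc{Under-Approx Right}, I would argue symmetrically: from $\triple\varphi{C_2}\psi$, use \textsc{True} to get $\triple\varphi{C_1}\top$, and \textsc{Plus} to get $\triple\varphi{C_1+C_2}{\top\oplus\psi}$. It then remains to weaken the postcondition from $\top\oplus\psi$ to $\psi\oplus\top$ via \textsc{Consequence}; this step is justified because the outcome conjunction $\oplus$ is commutative (its semantics is given in terms of the PCM operation $\monoid$, which is commutative in any BI frame, \Cref{def:outcome}), so $\top\oplus\psi\Rightarrow\psi\oplus\top$ is a semantically valid BI implication. Alternatively, one could unfold \Cref{def:mht} directly, using $\dem{C_1+C_2}{m}=\dem{C_1}{m}\monoid\dem{C_2}{m}$ and the fact that everything satisfies $\top$, but deriving the rules from the existing proof system is cleaner and makes the dependency on \textsc{Plus} explicit.

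There is no genuine obstacle here; the only point worth flagging is that \textsc{Plus} is sound only in the nondeterministic (powerset) instance, which is why these derived rules are restricted to nondeterministic OL. In particular, the conclusion is sound even though the unexplored branch $C_2$ (resp.\ $C_1$) is summarized only by $\top$: as discussed in the surrounding text, $\top$ is satisfied regardless of whether that branch diverges or reaches unsafe states, so dropping it loses no soundness.
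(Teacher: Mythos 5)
Your derivation is correct and matches the paper's own proof, which likewise obtains \textsc{Under-Approx Left} by applying \textsc{True} to the unexplored branch and then \textsc{Plus}, and treats \textsc{Under-Approx Right} as symmetric. Your explicit handling of the symmetric case via commutativity of $\oplus$ and \textsc{Consequence} is a fine way to spell out what the paper leaves implicit.
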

\begin{proof}
We show the derivation for \textsc{Under-Approx Left} below. The derivation of \textsc{Under-Approx Right} is symmetric.
\[
\inferrule*[right=\textsc{Plus}]{
  \triple{\varphi}{C_1}\psi
  \\
  \inferrule*[right=\textsc{True}]{\;}{\triple{\varphi}{C_1}\top}
}
{\triple{\varphi}{C_1+C_2}{\psi\oplus\top}}
\]
\end{proof}

Once an under-approximate $-\oplus\top$ predicate has been introduced, it is also convenient to have inference rules that propagate it forward. The following two derived rules can be used to sequence under-approximate derivations together.

\begin{lemma} The following inference rules are derivable for any Outcome Logic instance.
\[
\inferrule{\triple\varphi{C}\psi}{\triple{\varphi\oplus\top}C{\psi\oplus\top}}{\textsc{Under-Approx Prop}}
\qquad
\inferrule{
  \triple{\varphi}{C_1}{\psi\oplus\top}
  \\
  \triple{\psi}{C_2}{\vartheta}
}
{\triple{\varphi}{C_1\fatsemi C_2}{\vartheta\oplus\top}}
{\textsc{Under-Approx Seq}}
\]
\end{lemma}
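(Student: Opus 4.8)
The plan is to derive both rules purely from the generic proof rules in \Cref{fig:baserules}, so that the derived rules automatically inherit soundness for every OL instance with no extra side conditions. The key facts I will use are: \textsc{Split} glues together two analyses of the \emph{same} program along an outcome conjunction; \textsc{True} gives $\triple{\top}{C}{\top}$ for any $C$; and \textsc{Seq} composes sequentially. Both derivations are short, so the work is mostly identifying the right instances of these three rules.

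For \textsc{Under-Approx Prop}, first I would instantiate \textsc{True} to get $\triple{\top}{C}{\top}$. Then, taking the premise $\triple{\varphi}{C}{\psi}$ together with this, I would apply \textsc{Split} (with $\varphi_1 := \varphi$, $\psi_1 := \psi$, $\varphi_2 := \top$, $\psi_2 := \top$) to conclude $\triple{\varphi\oplus\top}{C}{\psi\oplus\top}$, which is exactly the desired conclusion. Concretely:
\[
\inferrule*[right=\textsc{Split}]{
  \triple{\varphi}{C}{\psi}
  \\
  \inferrule*[right=\textsc{True}]{\;}{\triple{\top}{C}{\top}}
}
{\triple{\varphi\oplus\top}{C}{\psi\oplus\top}}
\]

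For \textsc{Under-Approx Seq}, I would reuse the rule just derived. Applying \textsc{Under-Approx Prop} to the second premise $\triple{\psi}{C_2}{\vartheta}$ yields $\triple{\psi\oplus\top}{C_2}{\vartheta\oplus\top}$. Then \textsc{Seq}, with the first premise $\triple{\varphi}{C_1}{\psi\oplus\top}$ as the left component and $\triple{\psi\oplus\top}{C_2}{\vartheta\oplus\top}$ as the right component, gives $\triple{\varphi}{C_1\fatsemi C_2}{\vartheta\oplus\top}$, as required.

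I do not expect a real obstacle here: everything follows from rules already shown sound. The one subtlety worth flagging is that one might worry whether discarding the behavior of the ``other'' path (the $\top$ summand) requires that path to terminate or be reachable. It does not — $\top$ is satisfied by \emph{any} monoidal collection, including $\ident$ (divergence), so the weakening to $\top$ is always available and no termination or reachability side condition is needed. This is the same reasoning that justifies \textsc{Under-Approx Left} and \textsc{Under-Approx Right}.
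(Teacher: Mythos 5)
Your derivations are correct and match the paper's own proof exactly: \textsc{Under-Approx Prop} via \textsc{Split} applied to the premise together with \textsc{True}, and \textsc{Under-Approx Seq} via \textsc{Seq} after lifting the second premise with the just-derived \textsc{Under-Approx Prop}. The remark that $\top$ absorbs divergence, so no termination side condition is needed, is also consistent with the paper's discussion of the analogous under-approximation rules.
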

\begin{proof} These rules are derived as follows:
\[\footnotesize
\inferrule*[right=\textsc{Split}]{
  \triple\varphi{C}\psi
  \\
  \inferrule*[right=\textsc{True}]{\;}{\triple\top{C}\top}
}
{\triple{\varphi\oplus\top}C{\psi\oplus\top}}
\qquad
\inferrule*[right=\textsc{Seq}]{
  \triple{\varphi}{C_1}{\psi\oplus\top}
  \\
  \inferrule*[right=\textsc{Under-Approx Prop}]{
    \triple{\psi}{C_2}{\vartheta}
  }
  {\triple{\psi\oplus\top}{C_2}{\vartheta\oplus\top}}
}
{\triple{\varphi}{C_1\fatsemi C_2}{\vartheta\oplus\top}}
\]
\end{proof}


\section{Equivalence of Triples}
\label{app:triple-proofs}

In this section, we show that the nondeterministic instance of OL subsumes Hoare Logic~\cite{hoarelogic} and the Backward Under-Approximate Triples of \citet{ilalgebra} that were mentioned briefly in \Cref{sec:related}. We assume we have a nondeterministic program semantics over program states $\de{C}\colon\Sigma\to\bb{2}^\Sigma$ and an assertion logic where propositions $P,Q\in\prop$ are satisfied by program states, so $\mathord{\vDash_\Sigma}\subseteq \Sigma\times\prop$. Both of the aforementioned triple semantics are defined below where under-approximate triples use the notation $\lisbon PCQ$ due to \citet{realbugs}:
\[
\begin{array}{lllllllll}
\mktripple{\{}{\}} \quad\;& \text{iff}\quad\;& \forall \sigma\in\Sigma. & \sigma\vDash_\Sigma P &\Rightarrow  & {\color{purple}\forall}\tau\in \de{C}(\sigma). & \tau\vDash_\Sigma Q \\
\mktripple{\{\!|}{|\!\}} \quad\;& \text{iff}\quad\;& \forall \sigma\in\Sigma. & \sigma\vDash_\Sigma P &\Rightarrow & {\color{purple}\exists}\tau\in \de{C}(\sigma). & \tau\vDash_\Sigma Q
\end{array}
\]
Now, we will work with nondeterministic instances of OL using the evaluation model from \Cref{def:ndeval} and the logic of atomic assertions from \Cref{def:ndoc}. We let BI disjunctions $\varphi\vee\psi$ be syntactic sugar for $\lnot(\lnot \varphi \land \lnot\psi)$ (this encoding is typical in classical logics). We now prove our first result, that Hoare Triples are subsumed by OL. As we mentioned in \Cref{sec:mhldef}, since Hoare Triples are partial correctness specification, we have to use the postcondition $Q\vee\top^\oplus$ to express that $Q$ holds \emph{if} the program terminates\footnote{
Equivalent ways of expressing this include $\lnot Q\Rightarrow\top^\oplus$ (if $Q$ is false, then the program must diverge) or $\lnot\top^\oplus\Rightarrow Q$ (if the program terminates, then $Q$ holds). Alternatively, if we modified the semantics of atomic assertions (\Cref{def:ndoc}) to be $S\vDash P$ iff $\forall\sigma\in S.~\sigma\vDash_\Sigma P$ (without requiring that $S\neq\emptyset$), then we would have a more direct correspondence: $\vDash\hoare PCQ$ \,iff\, $\vDash\triple PCQ$, but then $P\oplus Q$ would behave more like $P\vee Q$, not guaranteeing reachability.
}.

\subhoare*
\iffpf{
Suppose that $\vDash\hoare PCQ$ or in other words, for any $\sigma\vDash_\Sigma P$ and $\tau\in\de{C}(\sigma)$, $\tau\vDash_\Sigma Q$. Now suppose that $S\vDash P$, or in other words, $S\neq\emptyset$ and $\forall\sigma\in S.~\sigma\vDash_\Sigma P$. Since $\vDash\hoare PCQ$, then for any $\tau\in\de{C}(\sigma)$, $\tau\vDash_\Sigma Q$. Now, we know that $\dem{C}{S} = \bind(S, \de{C}) = \bigcup_{\sigma\in S}\de{C}(\sigma)$. This means that for every $\tau\in\dem{C}{S}$, $\tau\vDash_\Sigma Q$. So, if $\dem CS \neq \emptyset$, then $\dem{C}{S}\vDash Q$. If $\dem CS = \emptyset$, then $\dem CS\vDash\top^\oplus$. Therefore $\dem CS\vDash Q\vee\top^\oplus$ and $\vDash\triple PC{Q\vee\top^\oplus}$.
}{
Suppose that $\vDash\triple PC{Q\vee\top^\oplus}$ and so if $S\vDash P$, then $\dem{C}{S}\vDash Q\vee\top^\oplus$. Now suppose that $\sigma\vDash_\Sigma P$. Then trivially $\{\sigma\}\vDash P$, so we can use our assumption to conclude that $\dem{C}{\{\sigma\}}\vDash Q\vee\top^\oplus$. This implies that $\forall\tau\in\de C(\sigma).~\tau\vDash_\Sigma Q$ (in the case where $\dem C{\{\sigma\}}\vDash\top^\oplus$, then $\de C(\sigma) = \emptyset$, so it holds vacuously).
}

Now, we will prove that OL triples subsume Backwards Under-Approximate Triples as well. This time, we use the under-approximate variant of OL which transforms the postcondition $Q$ into $Q\oplus\top$. This corresponds to existential quantification as we proved in \Cref{lem:uxsat}.

\begin{theorem}[Subsumption of Under-Approximate Triples]
\label{thm:ul}
$\mktripple{\{\!|}{|\!\}}$ \; iff \; $\vDashD\triple PC{Q}$
\end{theorem}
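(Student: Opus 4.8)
The plan is to reduce both sides to statements about individual program states and then reconcile them via \Cref{lem:uxsat}. Unfolding the right-hand side: $\vDashD\triple PCQ$ abbreviates $\vDash\triple PC{Q\oplus\top}$ (\Cref{def:uxoutcome}), which by \Cref{def:mht} says that for every $S\in\bb{2}^\Sigma$ with $S\vDash P$ we have $\dem CS\vDash Q\oplus\top$. Using the nondeterministic evaluation model (\Cref{def:ndeval}) we have $\dem CS=\bigcup_{\sigma\in S}\de C(\sigma)$; by \Cref{def:ndoc} the hypothesis $S\vDash P$ amounts to ``$S\neq\emptyset$ and $\sigma\vDash_\Sigma P$ for all $\sigma\in S$''; and by \Cref{lem:uxsat} the conclusion $\dem CS\vDash Q\oplus\top$ (that is, $\dem CS\vDashD Q$) amounts to ``$\exists\tau\in\dem CS.~\tau\vDash_\Sigma Q$''. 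So the OL side reads: for every nonempty $S$ all of whose states satisfy $P$, some state in $S$ has a trace ending in a $Q$-state. The Lisbon side, by definition of $\lisbon PCQ$, reads: every state satisfying $P$ has a trace ending in a $Q$-state.

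I would then prove the two implications separately. For $(\Leftarrow)$, take an arbitrary $\sigma$ with $\sigma\vDash_\Sigma P$ and instantiate the OL hypothesis at the singleton $S=\{\sigma\}$; this set is nonempty and all its elements satisfy $P$, so we obtain $\tau\in\dem C{\{\sigma\}}=\de C(\sigma)$ with $\tau\vDash_\Sigma Q$, which is exactly the witness demanded by $\lisbon PCQ$. For $(\Rightarrow)$, given a nonempty $S$ all of whose elements satisfy $P$, choose any $\sigma\in S$ (possible since $S\neq\emptyset$); the Lisbon hypothesis supplies $\tau\in\de C(\sigma)$ with $\tau\vDash_\Sigma Q$, and since $\de C(\sigma)\subseteq\bigcup_{\sigma'\in S}\de C(\sigma')=\dem CS$ we get $\tau\in\dem CS$ with $\tau\vDash_\Sigma Q$, i.e. $\dem CS\vDashD Q$, as required.

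There is no genuine obstacle: the argument is a short chain of definitional unfoldings plus one appeal to \Cref{lem:uxsat}. The only point deserving care is that the OL triple quantifies over \emph{all} sets $S$, not just singletons; the $(\Rightarrow)$ direction handles an arbitrary $S$ uniformly by extracting a single witness element, which works precisely because the under-approximate postcondition $Q\oplus\top$ asks only for the \emph{existence} of a $Q$-trace among the reachable states rather than for coverage of all of them.
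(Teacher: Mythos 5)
Your proof is correct and follows essentially the same route as the paper's: both directions proceed by unfolding the definitions, using \Cref{lem:uxsat}, instantiating the OL triple at the singleton $\{\sigma\}$ for one direction, and for the other extracting a witness $\tau\in\de C(\sigma)\subseteq\dem CS$ from the Lisbon hypothesis. No gaps.
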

\iffpf{
Suppose that $\vDash\lisbon PCQ$ or in other words, for any $\sigma\vDash_\Sigma P$, there exists a $\tau\in\de{C}(\sigma)$ such that $\tau\vDash_\Sigma Q$. Now suppose that $S\vDash P$, or in other words $S\neq\emptyset$ and $\forall\sigma\in S.~\sigma\vDash_\Sigma P$. Pick one such $\sigma \in S$ (there must be at least one since $S\neq\emptyset$).
Since $\vDash\lisbon PCQ$, then $\exists\tau\in\de{C}(\sigma)$ such that $\tau\vDash_\Sigma Q$.
Since $\sigma\in S$, then $S = \{\sigma\}\cup S$ and therefore by linearity, $\dem{C}{S} = \dem{C}{\{\sigma\}\cup S} = \dem{C}{\{\sigma\}}\cup\dem{C}{S} = \de{C}(\sigma)\cup\dem{C}{S}$ and since $\tau\in\de{C}(\sigma)$ then $\tau\in\dem{C}{S}$ and so $\exists\tau\in \dem CS.~\tau\vDash_\Sigma Q$. By \Cref{lem:uxsat}, we can therefore conclude that $\dem CS\vDash Q\oplus\top$.
}{
Suppose that $\vDash\triple PC{Q\oplus\top}$ and so if $S\vDash P$, then $\dem{C}{S}\vDash Q\oplus\top$. Now suppose that $\sigma\vDash_\Sigma P$. Then trivially $\{\sigma\}\vDash P$, so we can use our assumption to conclude that $\dem{C}{\{\sigma\}}\vDash Q\oplus\top$. Now, by \Cref{lem:uxsat}, there is some $\tau\in\de{C}(\sigma)$ such that $\tau\vDash_\Sigma Q$.
}

The combination of \Cref{thm:ul} and \Cref{thm:mhlil} suggest that Backwards Under-Approximate triples can disprove any Hoare Triple as well (if the precondition $\varphi$ from \Cref{thm:ul} can be expressed as a basic assertion). \citet{ilalgebra} also stated this fact, although the proof was omitted.

\section{Falsification}

In this section we, prove the falsification results from \Cref{sec:falsification} of the main text. These theorems are inspired by that of \citet[Theorem 4.1]{ilalgebra}, who proved that if some Hoare triple is false $\not\vDash\hoare PCQ$, then there is some other Incorrectness triple $\vDash\inc{P'}C{Q'}$ that disproves it.
\[
\not\vDash\hoare PCQ
\qquad\text{iff}\qquad
\exists P', Q'.\quad
P'\Rightarrow P
\quad\text{and}\quad
Q' \centernot\Rightarrow Q
\quad\text{and}\quad
\vDash\inc{P'}C{Q'}
\]
The proof given by \citet{ilalgebra} is \emph{semantic}; it does not witness the construction of $P'$ and $Q'$ as \emph{syntactic} assertions. We give any analogous result in \Cref{sec:semtriples}. \Cref{thm:falsification} proves that any false OL triple (with semantic assertions) can be disproven by another OL triple.

While this result shows the strength of the OL model, we are also interested to know if our \emph{syntactic} assertion logic is powerful enough to express the pre- and postconditions needed to disprove other triples. We answer this question in the affirmative, although the result is less general. While the semantic proof in \Cref{sec:semtriples} applies to \emph{any} OL instance, the syntactic proofs rely on some additional properties of the particular evaluation model. We lay out the requirements for a falsifiable instance of OL in \Cref{app:synfalse} and prove that the nondeterministic and probabilistic instances are falsifiable in \Cref{app:ndfalse,app:probfalse} respectively.

\subsection{Falsification Proof with Semantic Assertions}\label{sec:semtriples}

We first introduce the notion of a semantic OL triple. A semantic assertion is simply a set of satisfying models. We will use the uppercase greek metavariables to denote semantic assertions $\Phi,\Psi\in\bb{2}^{M\Sigma}$. The semantic interpretation of a syntactic assertion is the set of models that satisfies it $\sem\varphi \triangleq \{ m \mid m\in M\Sigma, m\vDash\varphi \}$. Logical implication $\Phi\Rightarrow\Psi$ is given by set inclusion $\Phi\subseteq\Psi$. Note that $\Phi\Rightarrow\Psi$ is a proposition, \emph{not} a semantic assertion (\ie it is not a set). Negation is given by $\lnot\Phi = \bb{2}^{M\Sigma}\setminus \Phi$ and we say that an assertion is satisfiable $\mathsf{sat}(\Phi)$ iff $\Phi\neq\emptyset$. This gives us the following expected properties:
\[ 
\sem\varphi \Rightarrow \sem\psi \quad\text{iff}\quad \varphi\Rightarrow\psi
\qquad\qquad
m\in \lnot\Phi \quad\text{iff}\quad m\notin \Phi
\qquad\qquad
\mathsf{sat}(\sem\varphi)\quad\text{iff}\quad \exists m\in M\Sigma.~ m\vDash\varphi
\]
We also define the notion of a semantic OL triple as follows:
\[
\vDash_S\triple\Phi{C}\Psi
\qquad\text{iff}\qquad
\forall m\in M\Sigma.\quad m\in\Phi \quad\implies\quad \dem Cm\in\Psi
\]
The correspondence between semantic and syntactic triples is given by the following lemma.

\begin{lemma}[Equivalence of Semantic and Syntactic triples]\label{lem:strip}
If \;$\Phi = \sem\varphi$ and $\Psi=\sem\psi$, then:
\[
\vDash_S\triple{\Phi}C{\Psi}
\qquad\text{iff}\qquad
\vDash\triple \varphi{C}\psi
\]
\end{lemma}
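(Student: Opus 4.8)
The statement is a bookkeeping lemma whose only content is that the semantic interpretation $\sem{\cdot}$ faithfully records satisfaction, so the approach is a direct unfolding of definitions. The single observation that drives everything is that, by the definition $\sem\varphi \triangleq \{ m \mid m\in M\Sigma,\ m\vDash\varphi\}$, membership $m\in\sem\varphi$ holds exactly when $m\vDash\varphi$ (and likewise $\dem Cm \in \sem\psi$ iff $\dem Cm\vDash\psi$, since $\dem Cm \in M\Sigma$).

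Concretely, I would first expand the definition of the semantic triple: $\vDash_S\triple\Phi C\Psi$ unfolds to ``$\forall m\in M\Sigma.\ m\in\Phi \implies \dem Cm\in\Psi$''. Substituting the hypotheses $\Phi = \sem\varphi$ and $\Psi = \sem\psi$ turns this into ``$\forall m\in M\Sigma.\ m\in\sem\varphi \implies \dem Cm\in\sem\psi$'', and rewriting $m\in\sem\varphi$ as $m\vDash\varphi$ and $\dem Cm\in\sem\psi$ as $\dem Cm\vDash\psi$ yields precisely ``$\forall m\in M\Sigma.\ m\vDash\varphi \implies \dem Cm\vDash\psi$'', which is the definition of $\vDash\triple\varphi C\psi$ from Definition~\ref{def:mht}. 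Since every rewriting step above is an equivalence, both directions of the ``iff'' are obtained at once, with no case analysis. The only point worth checking explicitly is that the two triples quantify over the same domain: the syntactic triple ranges over $m\in M\Sigma$, and so does the semantic one, and $\sem\varphi\subseteq M\Sigma$, so the substitution $\Phi = \sem\varphi$ does not silently shrink or enlarge the range of $m$.

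I do not expect any genuine obstacle here; the lemma is intentionally trivial, and its purpose is infrastructural — it is exactly what allows the semantic Falsification theorem (\Cref{thm:falsification}), stated for semantic assertions $\Phi,\Psi$, to be specialized to syntactic triples $\triple\varphi C\psi$ whenever the required witness assertion $\Phi'$ can be expressed syntactically, which is the substantive matter taken up separately in the nondeterministic and probabilistic cases.
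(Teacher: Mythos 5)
Your proposal is correct and matches the paper's own proof, which likewise just unfolds the definitions and uses that $m\in\sem\varphi$ iff $m\vDash\varphi$ (the paper merely writes out the two directions separately rather than chaining equivalences). No further comment is needed.
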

\iffpf{
Suppose that $m\vDash\varphi$, then $m\in\sem{\varphi}=\Phi$, so using $\vDash_S\triple{\Phi}C{\Psi}$, we know that $\dem{C}m\in\Psi=\sem{\psi}$. This means that $\dem Cm\vDash\psi$, so $\vDash\triple{\varphi}C\psi$.
}{
Suppose that $m\in\Phi=\sem\varphi$, then it must be that $m\vDash\varphi$, so using $\vDash\triple{\varphi}C\psi$, we can conclude that $\dem Cm\vDash\psi$. This means that $\dem Cm\in\sem\psi = \Psi$, so therefore $\vDash_S\triple{\Phi}C{\Psi}$.
}
We can now prove the Semantic Falsification theorem and the Principle of Denial, which were introduced in \Cref{sec:falsification}.

\falsification*
\iffpf{
Assume that $\not\vDash_S\triple\Phi C\Psi$, so that means that there is some $m\in M\Sigma$ such that $m\in\Phi$ and $\dem Cm\notin\Psi$. By definition, this also means that $\dem Cm\in \lnot\Psi$. Now, let $\Phi' = \{m\}$, so clearly $\Phi' \Rightarrow\Phi$ (since $\Phi'\subseteq\Phi$) and $\mathsf{sat}(\Phi')$. To see that $\vDash_S\triple{\Phi'}C{\lnot\Psi}$, suppose that $m'\in\Phi'$. By construction, it must be that $m'=m$, so therefore $\dem C{m'}\in\lnot\Psi$ (since we already know that $\dem Cm\in\lnot\Psi$.
}{
Assume that there is some $\Phi'$ such that $\Phi'\Rightarrow\Phi$, $\mathsf{sat}(\Phi')$, and $\vDash_S\triple{\Phi'}C{\lnot\Psi}$. Then, there must be some $m\in\Phi'$ and so $m\in\Phi$ as well. Since $\vDash_S\triple{\Phi'}C{\lnot\Psi}$, then $\dem Cm\in\lnot\Psi$ and so $\dem Cm\notin\Psi$. We therefore know that $m\in\Phi$ and $\dem Cm\notin\Psi$, so $\not\vDash_S\triple\Phi C\Psi$.
}

\truepos*
\begin{proof}
Let $\Phi' = \sem{\varphi'}$, $\Phi = \sem\varphi$, and $\Psi = \sem\psi$. From our assumptions, we can conclude that $\Phi'\Rightarrow\Phi$ and $\mathsf{sat}(\Phi')$ and by \Cref{lem:strip} we can conclude that $\vDash_S\triple{\Phi'}C{\lnot\Psi}$. Therefore by \Cref{thm:falsification}, this implies that $\not\vDash_S\triple\Phi C\Psi$. Using \Cref{lem:strip} again, we conclude that $\not\vDash\triple\varphi C\psi$.

\end{proof}

\subsection{Falsification Proof with Syntactic Assertions}
\label{app:synfalse}

The syntactic version of the forward direction of the Falsification Theorem imposes more specific constraints on the assertions and execution model. We first lay out the general strategy for the proof, and then provide the formal details.

If we start with $\not\vDash\triple\varphi{C}\psi$, then we know that there exists some $m$ such that $m\vDash\varphi$ and $\dem Cm\not\vDash\psi$ and this implies that $\dem Cm\vDash\lnot\psi$ since we are working with classical interpretations of BI. Now, we have a single program execution starting at $\varphi$ and ending at $\lnot\psi$, and we would like to extrapolate a valid OL triple from this (possibly with a precondition stronger than $\varphi$ since the bad outcome $\lnot\psi$ may only occur under some more specific constraints).

We are going to do this by induction on the program $C$. However, in cases that involve choice (\eg $C=C_1+C_2$), we need to be able to split the postcondition into the components corresponding to the two choices ($C_1$ or $C_2$). This is possible, but only if the postcondition contains no implications. Logical negation is an implication ($\lnot\psi$ is shorthand for $\psi\Rightarrow\bot$), therefore we need a different postcondition $\psi'$ that implies $\lnot\psi$, but is syntactically valid. The precise form of $\psi'$ will depend on the BI instance.

In addition, the program $C$ must terminate after finitely many steps, otherwise the precondition that we generate may be infinitely large. Possible ways around this include using a fixed point logic, however we are not aware of any versions of BI that have a fixed point operator. Instead, we will assume going forward that every program terminates after finitely many steps.

In order to make the argument formal, we first introduce the notion of a falsifiable OL instance which adds the constraints needed to split assertions and extrapolate triples. Next, we prove a couple of intermediate lemmas before giving the main result. In the next sections, we will instantiate this result to the nondeterministic and probabilistic evaluation models.

\begin{definition}[Falsifiable OL]\label{def:fls}
 An instance of OL is falsifiable if it has the following properties:
 \begin{enumerate}
 \item The PCM operation has the properties:
 \begin{enumerate}
 \item If $m_1\monoid m_2 = \varnothing$, then $m_1=m_2=\varnothing$
 \item If $m_1\monoid m_2 = n_1\monoid n_1$, then there exist $s_1,s_2,t_1,t_2$ such that $s_1\monoid s_2 = n_1$, $t_1\monoid t_2 = n_2$, $s_1\monoid t_1 = m_1$ and $s_2\monoid t_2 = m_2$.
\end{enumerate}
 \item Atomic assertions $P$ are splittable, that is if $m_1\monoid m_2\vDash P$, then there exist $\varphi_1$ and $\varphi_2$ such that $m_1\vDash\varphi_1$ and $m_2\vDash\varphi_2$ and $\varphi_1\oplus\varphi_2 \Rightarrow P$.
 \item Sequences of outcomes are falsifiable, $m\not\vDash Q_1\oplus\cdots\oplus Q_n$, iff $\exists \psi$ containing no implications such that $m\vDash\psi$ and $\psi \Rightarrow \lnot\bigoplus_{i=1}^n Q_j$.
 \item Atomic commands have trace extrapolation, if $\dem cm\vDash\psi$, then there exists $\varphi$ such that $m\vDash\varphi$ and $\vDash\triple{\varphi}c\psi$ (where $\varphi$ and $\psi$ have no implications).
 \end{enumerate}
 \end{definition}

\begin{lemma}[Splitting]\label{lem:split}
For any BI assertion $\varphi$ that contains no implications and where the BI frame comes from a falsifiable OL instance, if $m_1\monoid m_2\vDash \varphi$, then there exist $\varphi_1$ and $\varphi_2$ such that $m_1\vDash\varphi_1$ and $m_2\vDash\varphi_2$ and $\varphi_1\oplus\varphi_2 \Rightarrow \varphi$.
\end{lemma}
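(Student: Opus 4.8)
The plan is to prove the Splitting Lemma by structural induction on $\varphi$. Since $\varphi$ contains no implications (and disjunction $\vee$ is itself sugar for $\lnot(\lnot\varphi\land\lnot\psi)$, hence also excluded), and since every subformula of an implication-free formula is again implication-free, the only cases are $\top$, $\bot$, $\top^\oplus$, a conjunction, an outcome conjunction, and an atomic $P\in\prop$. The atomic case is discharged immediately by clause~(2) of \Cref{def:fls} (atomic assertions are splittable), which is precisely the desired conclusion for $\varphi = P$. The cases $\top$ and $\bot$ are trivial: for $\top$ take $\varphi_1 = \varphi_2 = \top$ and observe $\top\oplus\top\Rightarrow\top$; for $\bot$ the hypothesis $m_1\monoid m_2\vDash\bot$ can never hold, so the statement is vacuously true.

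For $\varphi = \top^\oplus$ the hypothesis gives $m_1\monoid m_2 = \varnothing$, so clause~(1a) of \Cref{def:fls} forces $m_1 = m_2 = \varnothing$; take $\varphi_1 = \varphi_2 = \top^\oplus$ and note $\top^\oplus\oplus\top^\oplus\Rightarrow\top^\oplus$ because $\varnothing\monoid\varnothing = \varnothing$. For a conjunction $\varphi = \varphi_1'\land\varphi_2'$, the hypothesis gives $m_1\monoid m_2\vDash\varphi_i'$ for $i = 1,2$, so the induction hypothesis yields $\alpha_i,\beta_i$ with $m_1\vDash\alpha_i$, $m_2\vDash\beta_i$, and $\alpha_i\oplus\beta_i\Rightarrow\varphi_i'$; setting $\varphi_1 = \alpha_1\land\alpha_2$ and $\varphi_2 = \beta_1\land\beta_2$ gives $m_1\vDash\varphi_1$ and $m_2\vDash\varphi_2$, and one checks directly from the semantics of $\oplus$ that $(\alpha_1\land\alpha_2)\oplus(\beta_1\land\beta_2)\Rightarrow(\alpha_1\oplus\beta_1)\land(\alpha_2\oplus\beta_2)\Rightarrow\varphi_1'\land\varphi_2'$.

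The interesting case, and the main obstacle, is the outcome conjunction $\varphi = \varphi_1'\oplus\varphi_2'$. Working in the classical frame where $\preccurlyeq$ is equality, the hypothesis $m_1\monoid m_2\vDash\varphi_1'\oplus\varphi_2'$ provides a decomposition $m_1\monoid m_2 = n_1\monoid n_2$ with $n_1\vDash\varphi_1'$ and $n_2\vDash\varphi_2'$, but this decomposition need not respect the split into $m_1$ and $m_2$. To realign it I invoke the cross-split property, clause~(1b) of \Cref{def:fls}: there exist $s_1,s_2,t_1,t_2$ with $s_1\monoid s_2 = n_1$, $t_1\monoid t_2 = n_2$, $s_1\monoid t_1 = m_1$, and $s_2\monoid t_2 = m_2$. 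Applying the induction hypothesis to $n_1 = s_1\monoid s_2\vDash\varphi_1'$ gives $a_1,a_2$ with $s_1\vDash a_1$, $s_2\vDash a_2$, $a_1\oplus a_2\Rightarrow\varphi_1'$, and to $n_2 = t_1\monoid t_2\vDash\varphi_2'$ gives $b_1,b_2$ with $t_1\vDash b_1$, $t_2\vDash b_2$, $b_1\oplus b_2\Rightarrow\varphi_2'$. Taking $\varphi_1 = a_1\oplus b_1$ and $\varphi_2 = a_2\oplus b_2$ we get $m_1 = s_1\monoid t_1\vDash a_1\oplus b_1$ and $m_2 = s_2\monoid t_2\vDash a_2\oplus b_2$, and associativity and commutativity of $\monoid$ (hence of $\oplus$) give $\varphi_1\oplus\varphi_2 = (a_1\oplus b_1)\oplus(a_2\oplus b_2)\Rightarrow(a_1\oplus a_2)\oplus(b_1\oplus b_2)\Rightarrow\varphi_1'\oplus\varphi_2'$, closing the case. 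The delicate part throughout is the bookkeeping of the cross-split reassembly together with the partiality of $\monoid$: one must check that each monoidal combination appearing above is in fact defined, which follows since all of them are sub-combinations of $m_1\monoid m_2$, already assumed defined. Everything else is a routine unfolding of the BI semantics from \Cref{fig:outcomes}.
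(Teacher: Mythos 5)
Your proof is correct and follows essentially the same route as the paper's: structural induction on $\varphi$, with the $\top^\oplus$ case handled via clause (1a), the atomic case via clause (2), and the $\oplus$ case via the cross-split clause (1b) followed by two applications of the induction hypothesis and a reassembly using commutativity and associativity of $\monoid$. The extra remarks you add (exclusion of disjunction as sugar for implications, and definedness of the intermediate monoidal combinations) are fine but do not change the argument.
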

\begin{proof}
By induction on the structure of $\varphi$ (\Cref{fig:outcomes}).
\begin{itemize}
\item $\varphi = \top$. Clearly $m_1\vDash\top$ and $m_2\vDash\top$ and $\top\oplus\top\Rightarrow\top$.
\item $\varphi = \bot$. Vacuous since $m_1\monoid m_2\vDash\bot$ is impossible.
\item $\varphi = \top^\oplus$. If $m_1 \monoid m_2\vDash\top^\oplus$, then it must be the case that $m_1 = m_2 = \varnothing$ (by property (1a) of \Cref{def:fls}). So, $m_1\vDash\top^\oplus$ and $m_2\vDash\top^\oplus$ and $\top^\oplus\oplus\top^\oplus\Rightarrow\top^\oplus$.
\item$\varphi = \psi'\land \psi $. We know that $m_1\monoid m_2\vDash \psi'\land\psi$, so $m_1\monoid m_2\vDash \psi'$ and $m_1\monoid m_2\vDash \psi$. By the induction hypotheses, There are $\varphi_1$, $\varphi_2$, $\psi_1$, and $\psi_2$ such that $m_1\vDash \varphi_1$ and $m_2\vDash\varphi_2$ and $m_1\vDash\psi_1$ and $m_2\vDash\psi_2$ and $\varphi_1\oplus\varphi_2 \Rightarrow \psi'$ and $\psi_1\oplus\psi_2\Rightarrow \psi$. Therefore, $m_1\vDash \varphi_1\land\psi_1$ and $m_2\vDash\varphi_2\land\psi_2$. Now, suppose $m'\vDash (\varphi_1\land\psi_1) \oplus  (\varphi_2\land\psi_2)$. Then $m'_1\vDash\varphi_1$, $m'_1\vDash\psi_1$, $m'_2\vDash\varphi_2$, and $m'_2\vDash\psi_2$ such that $m'_1\monoid m'_2 = m'$. So, $m'\vDash \varphi_1\oplus\varphi_2$ and $m'\vDash \psi_1\oplus\psi_2$ and by the implications from the induction hypotheses, $m'\vDash \psi'$ and $m'\vDash\psi$, so $m'\vDash\psi'\land\psi$.
\item$\varphi = \psi'\oplus \psi $. We know that $m_1\monoid m_2\vDash \psi'\oplus\psi$, so $n_1\vDash\psi'$ and $n_2\vDash \psi$ such that $n_1\monoid n_2 = m_1\monoid m_2$. By property (1b) of \Cref{def:fls}, there must be $s_1$, $s_2$, $t_1$ and $t_2$ such that $s_1\monoid s_2 = n_1$, $t_1\monoid t_2 = n_2$, $s_1\monoid t_1 = m_1$ and $s_2\monoid t_2 = m_2$. So, $s_1\monoid s_2\vDash \psi'$ and $t_1\monoid t_2\vDash\psi$, and by the induction hypothesis, $s_1\vDash\varphi_1$, $s_2\vdash\varphi_2$, $t_1\vDash\psi_1$ and $t_2\vDash\psi_2$ such that $\varphi_1\oplus\varphi_2\Rightarrow\psi'$ and $\psi_1\oplus\psi_2\Rightarrow \psi$. Recombining terms, we get that $m_1\vDash \varphi_1\oplus\psi_1$ and $m_2\vDash\varphi_2\oplus\psi_2$ and it is easy to see that $\varphi_1\oplus\varphi_2\oplus\psi_1\oplus\psi_2 \Rightarrow \psi'\oplus\psi$.
\item $\varphi = \psi'\Rightarrow\psi$. Vacuous since we assumed $\varphi$ has no implications.
\item $\varphi=P$. By property (2) from \Cref{def:fls}.

\end{itemize}
\end{proof}

\begin{lemma}[Trace Extrapolation]\label{lem:trace}
For any falsifiable OL instance, if there exists $m$ such that $\dem Cm\vDash\psi$ (where $\psi$ contains no implications), then there exists $\varphi$ (also with no implications) such that $m\vDash\varphi$ and $\vDash\triple{\varphi}C\psi$.
\end{lemma}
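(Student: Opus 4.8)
The plan is to prove \Cref{lem:trace} by structural induction on the program $C$, following the grammar of \Cref{fig:comlang}, while maintaining as an invariant that every assertion we produce is implication-free (so that the induction hypothesis, which assumes an implication-free postcondition, can always be re-invoked, and so that the postcondition can be split in the choice cases). The tools are the closure properties of a falsifiable OL instance (\Cref{def:fls}), the Splitting Lemma (\Cref{lem:split}), and the generic/nondeterministic proof rules of \Cref{fig:baserules} --- principally \textsc{One}, \textsc{Zero}, \textsc{Seq}, \textsc{Plus}, \textsc{Induction}, and \textsc{Consequence}.

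The base and sequencing cases are routine. For $C = \bb{1}$ we have $\dem{\bb{1}}{m} = m \vDash \psi$, so we take $\varphi = \psi$ and close with \textsc{One}. For $C = \zero$ we have $\dem{\zero}{m} = \ident \vDash \psi$, and since $\ident$ is the only model of $\top^\oplus$ we get $\top^\oplus \Rightarrow \psi$; take $\varphi = \top$ and use \textsc{Zero} followed by \textsc{Consequence}. For an atomic command $c$, the hypothesis $\dem{c}{m} \vDash \psi$ is exactly the premise of the trace-extrapolation property (clause~(4) of \Cref{def:fls}), which hands back the required implication-free $\varphi$. For $C = C_1 \fatsemi C_2$, set $m' = \dem{C_1}{m}$, so $\dem{C_2}{m'} \vDash \psi$; the induction hypothesis on $C_2$ yields implication-free $\vartheta$ with $m' \vDash \vartheta$ and $\vDash\triple{\vartheta}{C_2}{\psi}$, and then the induction hypothesis on $C_1$ (applied at $m$, using $\dem{C_1}{m} = m' \vDash \vartheta$) yields $\varphi$ with $m \vDash \varphi$ and $\vDash\triple{\varphi}{C_1}{\vartheta}$; \textsc{Seq} concludes.

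The choice case $C = C_1 + C_2$ is where splitting enters. Here $\dem{C_1+C_2}{m} = \dem{C_1}{m} \monoid \dem{C_2}{m} \vDash \psi$, and since $\psi$ is implication-free, \Cref{lem:split} produces implication-free $\psi_1, \psi_2$ with $\dem{C_i}{m} \vDash \psi_i$ and $\psi_1 \oplus \psi_2 \Rightarrow \psi$. The induction hypothesis on $C_1$ and $C_2$ then gives implication-free $\varphi_1, \varphi_2$ with $m \vDash \varphi_i$ and $\vDash\triple{\varphi_i}{C_i}{\psi_i}$. Taking $\varphi = \varphi_1 \land \varphi_2$ we have $m \vDash \varphi$, and \textsc{Consequence} keeps $\vDash\triple{\varphi}{C_i}{\psi_i}$ valid; \textsc{Plus} then gives $\vDash\triple{\varphi}{C_1+C_2}{\psi_1 \oplus \psi_2}$, and a final \textsc{Consequence} using $\psi_1 \oplus \psi_2 \Rightarrow \psi$ closes the case. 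The only friction here is verifying that the witnesses returned by \Cref{lem:split} --- hence ultimately by clause~(2) of \Cref{def:fls} --- are genuinely implication-free, which is visible from the construction in the proof of \Cref{lem:split}.

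The loop case $C = C^\star$ is the main obstacle, and it is precisely where the standing assumption that the program halts after finitely many steps is used. By the fixed-point law, $\dem{C^\star}{m}$ unfolds to $\dem{\bb{1} + C\fatsemi C^\star}{m}$; finite termination bounds the number of iterations taken along the trace out of $m$ by some $N$, so that $\dem{C^\star}{m}$ agrees with the denotation of the $\star$-free finite unrolling $\bb{1} + C\fatsemi(\bb{1} + C\fatsemi(\cdots))$ with $N$ copies. Running the $+$, $\fatsemi$, $\bb{1}$, and $\zero$ cases on this unrolling --- invoking the induction hypothesis on the body $C$, a strict subterm of $C^\star$ --- produces an implication-free $\varphi$ with $m \vDash \varphi$ together with a derivation of the triple for the unrolling, and then $N$ uses of \textsc{Induction} (interleaved with \textsc{Plus}, \textsc{Seq}, \textsc{One}, and \textsc{Consequence}) lift it back to $\vDash\triple{\varphi}{C^\star}{\psi}$. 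Making this rigorous requires a well-founded induction --- e.g.\ lexicographic on the number of remaining iterations and then on program structure, rather than bare structural induction --- and keeping the generated precondition finite; the need for a finite such measure is exactly why the finite-termination hypothesis is assumed, as anticipated in the discussion preceding \Cref{def:fls}.
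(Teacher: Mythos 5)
Your overall plan---structural induction on $C$, clause (4) of \Cref{def:fls} for atomic commands, the induction hypothesis chained through $\fatsemi$, and \Cref{lem:split} to decompose the implication-free postcondition at choice points---is the same as the paper's, and your base, sequencing, and choice cases carry essentially the paper's content. One caveat even there: the lemma asserts semantic validity $\vDash$ for an \emph{arbitrary} falsifiable instance (the probabilistic one included), whereas \textsc{Plus} and \textsc{Induction} are stated and proved sound only for powerset-based instances, so the conclusions in your $+$ and $\star$ cases should be drawn by the direct semantic argument (as the paper does for $C_1+C_2$) rather than by citing those rules; for $+$ this is cosmetic, since the semantic content of your step is identical.

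The genuine gap is the final step of your $C^\star$ case. You derive a triple for the star-free unrolling $\bb{1}+C\fatsemi(\bb{1}+C\fatsemi(\cdots))$ and then claim that ``$N$ uses of \textsc{Induction}'' recover $\vDash\triple{\varphi}{C^\star}{\psi}$. But the premise of \textsc{Induction} is a triple for $\bb{1}+C\fatsemi C^\star$, a program that still contains $C^\star$; the rule cannot consume a star-free unrolling, and nothing licenses replacing the innermost $\bb{1}$ of your unrolling by $C^\star$. Indeed, since OL postconditions are exact rather than under-approximate, the extra iterations contributed by $C^\star$ can falsify a postcondition that holds of the truncation, so passing from the unrolling to $C^\star$ needs a dedicated argument, not iterated rule applications. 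The paper closes the case differently: using the termination bound for the given $m$ it writes $\dem{C^\star}m=\dem{\bb 1}m\monoid\dem Cm\monoid\cdots\monoid\dem{C^n}m$, splits $\psi$ into $\psi_0\oplus\cdots\oplus\psi_n$ by repeated use of \Cref{lem:split}, proves $\vDash\triple{\varphi_k}{C^k}{\psi_k}$ for each power $k\le n$ by an inner induction on $k$ (reusing the structural induction hypothesis for the body together with the sequencing argument), takes $\varphi=\varphi_0\land\cdots\land\varphi_n$, and concludes $\vDash\triple{\varphi}{C^\star}{\psi_0\oplus\cdots\oplus\psi_n}$ by the same semantic reasoning as in the $C_1+C_2$ case, finally weakening to $\psi$. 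Replacing your ``iterate \textsc{Induction}'' step with this per-power construction and semantic conclusion is what is needed for the loop case to go through; the rest of your proposal then coincides with the paper's proof.
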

\begin{proof}
By induction on the structure of the program $C$ (see \Cref{fig:comlang}).
\begin{itemize}
\item $C = \zero$. Assume that $\dem{\zero}{m} \vDash \psi$. Since $\dem{\zero}{m} = \varnothing$, then this assumption gives us $\varnothing\vDash\psi$. We can then take $\varphi= \top$ and derive $\vDash\triple\varphi\zero\psi$: for any $m'\vDash \top$ we have $\dem{\zero}{m'} \vDash \psi$ since we know $\varnothing \vDash \psi$ and $\dem{\zero}{m'} =\varnothing$, for any $m'$. We also clearly have $m\vDash\top$.

\item $C = \bb 1$. Assume $\dem{\bb 1}{m} \vDash \psi$. Since $\dem{\bb 1}{m} = m$, then this assumption gives us $m \vDash\psi$. We can take $\varphi= \psi$ and immediately derive $m\vDash\varphi$. We can then also derive $\vDash\triple{\varphi}{\bb{1}}\psi$: for any $m'\vDash \varphi$ we have $\dem{\bb 1}{m'} = m' \vDash \psi$ since we know $\varphi = \psi$.

\item $C=C_1+C_2$. Assume $\dem{C_1+C_2}{m} \vDash \psi$. We know that $\dem{C_1+C_2}m = \dem{C_1}m\monoid \dem{C_2}m$, so $\dem{C_1}m\monoid \dem{C_2}m\vDash\psi$. By \Cref{lem:split}, we know that there exist $\psi_1,\psi_2$ such that  $\dem{C_1}m\vDash\psi_1$ and $\dem{C_2}m\vDash\psi_2$ and $\psi_1\oplus\psi_2\Rightarrow\psi$. By induction, there exist $\varphi_i$ such that $\vDash\triple{\varphi_i}{C_i}{\psi_i}$ for $i\in\{1,2\}$ and $m\vDash\varphi_i$. Now, we pick the precondition $\varphi = \varphi_1\land \varphi_2$ (so $m\vDash \varphi$). It remains to argue that $\vDash \triple{\varphi}{C_1+C_2}{\psi}$: for any $m'\vDash\varphi_1\land\varphi_2$, we know that $m'\vDash \varphi_i$ and using the fact that $\vDash \triple{\varphi_i}{C_i}{\psi_i}$, we conclude that $\dem{C_i}{m'}\vDash\psi_i$ (for $i=1,2$). Hence, $\dem{C_1+C_2}{m'}\vDash \psi_1\oplus\psi_2$ and since $\psi_1\oplus\psi_2\Rightarrow\psi$ we can conclude $\dem{C_1+C_2}{m'}\vDash \psi$.

\item $C=C_1\fatsemi C_2$. Assume $\dem{C_1\fatsemi C_2}m \vDash\psi$. We know that $\dem{C_1\fatsemi C_2}m = \dem{C_2}{\dem{C_1}m}$, so $\dem{C_2}{\dem{C_1}m}\vDash\psi$. By the induction hypothesis, we conclude that there exists $\vartheta$ such that $\vDash\triple\vartheta{C_2}\psi$ and $\dem{C_1}m\vDash\vartheta$. By induction again, we get that $\vDash\triple{\varphi}{C_1}\vartheta$ such that $m\vDash\varphi$. Now, to show that $\vDash\triple{\varphi}{C_1\fatsemi C_2}\psi$, suppose that $m'\vDash\varphi$, then we know that $\dem{C_1}{m'}\vDash\vartheta$ (from $\vDash\triple{\varphi}{C_1}\vartheta$), and we know that $\dem{C_2}{\dem{C_1}{m'}}\vDash\psi$ (from $\vDash\triple\vartheta{C_2}\psi$), so therefore $\vDash\triple{\varphi}{C_1\fatsemi C_2}\psi$.

\item $C=C^\star$. We will first show that for any $n$, there is a $\varphi$ such that $\vDash\triple{\varphi}{C^n}{\psi}$ and $m\vDash\varphi$. The proof is by induction on $n$. The case where $n=0$ follows from the $\bb{1}$ case above. Now, by the induction hypothesis we know that there is some $\vartheta$ such that $\vDash\triple\vartheta{C^n}\psi$ and $\dem Cm\vDash\vartheta$. By the previous induction hypothesis, we know that $\vDash\triple{\varphi}C\vartheta$, such that $m\vDash\varphi$. So, combining these results, we get that $\vDash\triple{\varphi}{C^{n+1}}\psi$.

Now, since we assumed that the program terminates after finitely many steps, there must be some $n$ such that $\dem{C^\star}m = \dem{\bb{1}}m \monoid \dem Cm \monoid \cdots\monoid \dem {C^n}m$. By repeatedly using \Cref{lem:split}, we can split $\psi$ into $\psi_0,\ldots,\psi_n$ such that $\dem{C^k}m\vDash\psi_k$ (for each $k\in\{0,\ldots,n\}$) and $\psi_0\oplus \cdots \oplus \psi_n\Rightarrow\psi$. By the inductive proof above, for each $k$, there is a $\varphi_k$ such that $\vDash\triple{\varphi_k}{C^k}{\psi_k}$  and $m\vDash\varphi_k$. Now, let $\varphi = \varphi_0 \land\cdots\land\varphi_n$, so clearly $m\vDash\varphi$. We can conclude that $\vDash\triple{\varphi}{C^\star}{\psi_0\oplus\cdots\oplus\psi_n}$ by an argument analogous to the $C=C_1+C_2$ case. Finally, since $\psi_0\oplus \cdots \oplus \psi_n\Rightarrow\psi$, we can weaken the postcondition to obtain $\vDash\triple{\varphi}{C^\star}{\psi}$.

\item $C=c$. By property (4) of \Cref{def:fls}.
\end{itemize}
\end{proof}

\begin{theorem}[Falsification]\label{thm:flssyn}
For any falsifiable OL instance,
\[
\not\vDash\triple{\varphi}C{\bigoplus_{i=1}^nQ_i}
\qquad\text{iff}\qquad
\exists \varphi'\Rightarrow\varphi\quad\text{and}\quad \exists\psi\Rightarrow \lnot\bigoplus_{i=1}^nQ_i
\quad\text{such that}\quad
\vDash\triple{\varphi'}C{\psi}
\]
Where $\psi$ has no implications and $\mathsf{sat}(\varphi')$.
\end{theorem}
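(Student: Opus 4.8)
The plan is to prove the two implications separately: the reverse direction from the Principle of Denial (\Cref{thm:denial}), and the forward direction from Trace Extrapolation (\Cref{lem:trace}), with both directions glued together by two applications of the rule of \textsc{Consequence}.

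For the reverse direction, suppose we are given $\varphi'\Rightarrow\varphi$ with $\mathsf{sat}(\varphi')$ and $\psi\Rightarrow\lnot\bigoplus_{i=1}^n Q_i$ with $\vDash\triple{\varphi'}C\psi$. First I would weaken the postcondition via \textsc{Consequence} along $\psi\Rightarrow\lnot\bigoplus_{i=1}^n Q_i$ to get $\vDash\triple{\varphi'}C{\lnot\bigoplus_{i=1}^n Q_i}$; then \Cref{thm:denial}, instantiated with postcondition $\bigoplus_{i=1}^n Q_i$, immediately yields $\not\vDash\triple{\varphi}C{\bigoplus_{i=1}^n Q_i}$. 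This direction uses no properties specific to falsifiable instances.

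For the forward direction, assume $\not\vDash\triple{\varphi}C{\bigoplus_{i=1}^n Q_i}$. Unfolding \Cref{def:mht}, there is some $m$ with $m\vDash\varphi$ but $\dem Cm\not\vDash\bigoplus_{i=1}^n Q_i$. I would then invoke property (3) of \Cref{def:fls} (falsifiability of sequences of outcomes) to obtain an implication-free $\psi$ with $\dem Cm\vDash\psi$ and $\psi\Rightarrow\lnot\bigoplus_{i=1}^n Q_i$. Since $\psi$ has no implications, \Cref{lem:trace} applies to $\dem Cm\vDash\psi$, producing an (implication-free) $\chi$ with $m\vDash\chi$ and $\vDash\triple{\chi}C\psi$. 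The one subtlety is that Trace Extrapolation gives $m\vDash\chi$ but not $\chi\Rightarrow\varphi$, which the theorem demands of the precondition; I would fix this by taking $\varphi'\triangleq\chi\land\varphi$, so that $m\vDash\varphi'$ (hence $\mathsf{sat}(\varphi')$), $\varphi'\Rightarrow\varphi$ trivially, and \textsc{Consequence} (strengthening along $\varphi'\Rightarrow\chi$) gives $\vDash\triple{\varphi'}C\psi$. This is harmless because the theorem constrains only $\psi$, not the precondition, to be implication-free, so conjoining with the arbitrary $\varphi$ is unproblematic. This witnesses the right-hand side.

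I expect essentially all the genuine difficulty to have already been absorbed into \Cref{lem:split} and \Cref{lem:trace} (which in turn lean on each clause of \Cref{def:fls}, the termination-in-finitely-many-steps assumption, and the classical reading of BI). The remaining obstacle at the level of \Cref{thm:flssyn} itself is only bookkeeping: noticing that the extrapolated precondition must be intersected with $\varphi$, and threading the two \textsc{Consequence} steps so that the pre- and postconditions match the hypotheses of \Cref{thm:denial}.
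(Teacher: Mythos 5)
Your proposal is correct and matches the paper's own proof essentially step for step: the reverse direction via weakening the postcondition and applying \Cref{thm:denial}, and the forward direction via property (3) of \Cref{def:fls} followed by \Cref{lem:trace}, with the precondition repaired by taking $\varphi' = \vartheta \land \varphi$ exactly as you do. No gaps.
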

\iffpf{
Since $\not\vDash\triple{\varphi}C{\bigoplus_{i=1}^n Q_i}$, then there is an $m$ such that $m\vDash\varphi$ and $\dem Cm\not\vDash\bigoplus_{i=1}^n Q_i$.
By property (3) of \Cref{def:fls}, we know that there exists a $\psi$ with no implications such that $\psi\Rightarrow\lnot\bigoplus_{i=1}^n Q_i$ and $\dem Cm\vDash\psi$. We can now use \Cref{lem:trace} to conclude that there is a $\vartheta$ such that $\vDash\triple\vartheta C\psi$ and $m\vDash\vartheta$. Now, let $\varphi' = \vartheta\land\varphi$ ($\varphi'$ is satisfiable since $m\vDash\vartheta$ and $m\vDash\varphi$). Clearly also $\varphi'\Rightarrow \varphi$. It just remains to show that $\vDash\triple{\varphi'}C\psi$: for any $m'\vDash\varphi'$, then $m'\vDash\vartheta$ and so $\dem{C}{m'}\vDash\psi$ (since $\vDash\triple\vartheta C\psi$).
}{
Assume that $\varphi'\Rightarrow \varphi$ and $\psi\Rightarrow\lnot\bigoplus_{i-1}^n Q_i$ and $\vDash\triple{\varphi'}C{\psi}$. By weakening, we can also conclude that $\vDash\triple{\varphi'}C{\lnot\bigoplus_{i-1}^n Q_i}$. By \Cref{thm:denial}, we can conclude that $\not\vDash\triple{\varphi}C{\bigoplus_{i-1}^n Q_i}$.
}

\begin{remark}
The restrictions laid out in \Cref{def:fls} are quite specific, but we will see in the following sections that they are naturally satisfied in both the nondeterministic and probabilistic models. While the Trace Extrapolation (\Cref{lem:trace}) property may seem unconventional, it can be thought of as a more specialized version of a weakest precondition transformer~\cite{Dijkstra76}. Indeed, if we had such a predicate transformer, we would know that $\dem Cm\vDash \psi$ implies that $m\vDash\wpre(C, \psi)$ and that $\vDash\triple{\wpre(C,\psi)}C{\psi}$ is a valid triple. Unfortunately, weakest preconditions have complex interactions with choice mechanisms such as $C_1+C_2$ and $x\samp\eta$ (refer to \citet{kaminski} for a more in-depth discussion of $\wpre$ and choice). The question of whether a weakest precondition exists for OL remains open. We plan to explore this more in future work, but for now we use the more specialized Trace Extrapolation property to complete the falsification proof.
\end{remark}

Before moving on to the evaluation-model-specific falsification results, we prove a useful lemma about trace extrapolation for pure commands.

\begin{lemma}[Trace Extrapolation for Pure Commands]\label{lem:tepure}
For any OL instance where trace extrapolation holds for pure commands $c$ and basic assertions $P,Q$, that is:
\[
\dem cm\vDash Q\quad\Rightarrow\quad
\exists P.\quad m\vDash P
\quad\text{and}\quad
\vDash\triple PcQ
\]
Then trace extrapolation holds for pure commands and any assertions that do not have implications:
\[
\dem cm\vDash \psi\quad\Rightarrow\quad
\exists \varphi ~\text{with no implications}.\quad m\vDash \varphi
\quad\text{and}\quad
\vDash\triple{\varphi}c{\psi}
\]
\end{lemma}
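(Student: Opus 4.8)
The plan is to prove the statement by structural induction on the implication-free BI assertion $\psi$, following the grammar in \Cref{fig:outcomes}. The recurring idea: given that $\dem cm \vDash \psi$, decompose $\psi$ according to its top-level connective, pull the corresponding structure back through the pure command $c$ to obtain a decomposition of $m$, recurse, and recombine the resulting preconditions using the generic rules (\textsc{Split}, \textsc{True}, etc.). Since every combinator used to build the precondition is implication-free (conjunction, outcome conjunction, $\top$, $\top^\oplus$, $\bot$, atomic), the produced $\varphi$ is automatically implication-free, so I would not track that side condition separately.

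First I would dispatch the leaves. If $\psi = P$ is atomic, the claim is exactly the hypothesis. If $\psi = \top$, take $\varphi = \top$: then $m \vDash \top$ and $\vDash\triple{\top}{c}{\top}$ is the \textsc{True} rule. If $\psi = \bot$, the premise $\dem cm \vDash \bot$ is impossible, so the statement holds vacuously. If $\psi = \varphi_1 \Rightarrow \varphi_2$, it is excluded by the implication-free assumption. The conjunction case $\psi = \psi_1 \wedge \psi_2$ is routine: from $\dem cm \vDash \psi_i$ the induction hypothesis gives $\varphi_i$ with $m \vDash \varphi_i$ and $\vDash\triple{\varphi_i}{c}{\psi_i}$; taking $\varphi = \varphi_1 \wedge \varphi_2$ works, since any $m' \vDash \varphi$ satisfies both $\varphi_i$, hence $\dem{c}{m'} \vDash \psi_i$, hence $\dem{c}{m'} \vDash \psi_1 \wedge \psi_2$.

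The interesting case is $\psi = \psi_1 \oplus \psi_2$. From $\dem cm \vDash \psi_1 \oplus \psi_2$ (and the classical BI convention that $\preccurlyeq$ is equality) we get $\dem cm = n_1 \monoid n_2$ with $n_i \vDash \psi_i$. Here I would invoke that $c$ is a pure (deterministic) command: its semantics is obtained by applying a partial map $\Sigma \rightharpoonup M\Sigma$ pointwise and collecting the results via $\monoid$, so any decomposition $\dem cm = n_1 \monoid n_2$ of the output refines to a decomposition $m = m_1 \monoid m_2$ of the input with $\dem{c}{m_i} = n_i$ (placing the part of $m$ on which the map is undefined, e.g.\ an annihilated $\mathsf{assume}$, into either summand). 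Given such $m_1, m_2$, the induction hypothesis supplies $\varphi_i$ with $m_i \vDash \varphi_i$ and $\vDash\triple{\varphi_i}{c}{\psi_i}$; then $m \vDash \varphi_1 \oplus \varphi_2$ and, by the \textsc{Split} rule, $\vDash\triple{\varphi_1 \oplus \varphi_2}{c}{\psi_1 \oplus \psi_2}$. Finally, for $\psi = \top^\oplus$ we have $\dem cm = \varnothing$; if $m = \varnothing$ take $\varphi = \top^\oplus$, and $\vDash\triple{\top^\oplus}{c}{\top^\oplus}$ holds since pure commands send $\varnothing$ to $\varnothing$; if $m \neq \varnothing$, then in the evaluation models we use the only pure commands that annihilate a nonempty input are $\mathsf{assume}\ e$ with $e$ false throughout $m$, and then $m$ satisfies the implication-free atomic assertion stating that $\lnot e$ holds everywhere, which $c$-extrapolates to $\top^\oplus$.

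The main obstacle is the decomposition-pullback step in the $\oplus$ case: showing that an output splitting $\dem cm = n_1 \monoid n_2$ lifts to an input splitting. Purity of $c$ is exactly what makes this possible---the analogous claim fails for genuinely nondeterministic commands---but the precise argument is monad-specific: it is immediate for the powerset monad (take the preimages of $n_1$ and $n_2$), and takes slightly more care for distributions, where probability mass landing on a common state may originate from several sources and must be re-apportioned. The $\top^\oplus$ annihilation case is a minor secondary wrinkle, handled uniformly by the observation that annihilation of a nonempty state is witnessed by an atomic assertion in each of the assertion logics of interest.
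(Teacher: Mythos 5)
Your proposal is correct and follows essentially the same route as the paper's proof: structural induction on the implication-free assertion $\psi$, with the crucial $\oplus$ case handled by pulling the splitting $\dem{c}{m}=n_1\monoid n_2$ back through the pure command to a splitting of $m$ (a step the paper justifies by the same purity argument, if less explicitly about its monad-specific nature) and then recombining the preconditions. Your extra treatment of the $\top^\oplus$ case, which the paper's proof silently omits, and your use of \textsc{Split} in place of the paper's direct semantic argument are harmless refinements rather than a different approach.
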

\begin{proof} By induction on the structure of $\psi$:
\begin{itemize}
\item $\psi = \top$. Let $\varphi = \top$. Clearly $m\vDash \top$ and $\vDash\triple{\top}c{\top}$: suppose $m'\vDash\top$, then clearly $\dem c{m'}\vDash\top$.
\item $\psi = \bot$. Vacuous since $\dem cm\vDash\bot$ is impossible.
\item $\psi = \psi_1\land\psi_2$. Assume $\dem cm\vDash \psi_1\land \psi_2$. By induction, we know that there is a $\varphi_i$ such that $m\vDash\varphi_i$ and $\vDash\triple{\varphi_i}c{\psi_i}$ for $i\in\{1,2\}$. Now, let $\varphi = \varphi_1\land\varphi_2$, so clearly $m\vDash\varphi$. We now show that $\vDash\triple{\varphi}c{\psi_1\land\psi_2}$: suppose $m'\vDash\varphi$, then $m'\vDash\varphi_i$ for $i\in\{1,2\}$. Since $\vDash\triple{\varphi_i}c{\psi_i}$, then $\dem c{m'}\vDash\psi_i$, therefore $\dem c{m'}\vDash \psi_1\land\psi_2$.
\item $\psi=\psi_1\oplus\psi_2$. Assume $\dem cm\vDash\psi_1\oplus\psi_2$. Since $c$ is pure, it cannot split into multiple outcomes, therefore the fact that $\dem cm$ has multiple outcomes means that $m$ also must have multiple outcomes, so there must be $m_1$, $m_2$ such that $m=m_1\monoid m_2$ and $\dem c{m_1}\vDash\psi_1$ and $\dem c{m_2}\vDash\psi_2$. By induction, we know that there is a $\varphi_i$ such that $m_i\vDash\varphi_i$ and $\vDash\triple{\varphi_i}c{\psi_i}$ for $i\in\{1,2\}$. Now, let $\varphi=\varphi_1\oplus\varphi_2$, so clearly $m\vDash\varphi$. Now we show that $\vDash\triple\varphi c{\psi_1\oplus\psi_2}$: suppose $m'\vDash\varphi$, so $m'_1\vDash\varphi_1$ and $m'_2\vDash\varphi_2$ such that $m'_1\monoid m'_2 = m'$. For each $i\in\{1,2\}$, we know that $\vDash\triple{\varphi_i}c{\psi_i}$ for $i\in\{1,2\}$ and so $\dem c{m'_i}\vDash\psi_i$. Combining these, we get $\dem c{m'}\vDash\psi_1\oplus\psi_2$.
\item $\psi = \psi_1\Rightarrow\psi_2$. Vacuous since we assumed that $\psi$ has no implications.
\item $\psi = Q$. By assumption.
\end{itemize}

\end{proof}

\subsection{Nondeterministic Falsification}
\label{app:ndfalse}

This section contains proofs for the falsification results in \Cref{sec:ndfalse}. The goal is to show that nondeterministic instances of OL are falsifiable by showing that \Cref{def:fls} holds for instances of OL using the nondeterministic evaluation model and outcome logic. We first prove that assertions can be falsified, then we prove trace extrapolation, and then we prove \Cref{def:fls}.

\ndfalse*
\iffpf{
Suppose that $S\not\vDash Q_1\oplus\cdots\oplus Q_n$, so for all $S_1, \ldots, S_n$, if $S = \bigcup_{i=1}^n S_i$, there exists some $i$ such that $S_i\not\vDash Q_i$.
Now, for each $i$, let $T_i = \{ \sigma \mid \sigma\in S, \sigma\vDash_\Sigma Q_i\}$, so by construction $T_i\subseteq S$ and therefore $\bigcup_{i=1}^n T_i \subseteq S$. If $S\neq \bigcup_{i=1}^n T_i$, then $S \supset \bigcup_{i=1}^n T_i$, so there must be some $\tau\in S$ such that for all $i$, $\tau\notin T_i$ and so for all $i$, $\tau\not\vDash_\Sigma {Q}_i$, or in other words, $\tau\vDash_\Sigma\overline{Q}_i$. So, $S\vDash (\overline{Q}_1 \land\cdots\land\overline{Q}_n)\oplus\top$. Otherwise, it must be the case that $S= \bigcup_{i=1}^n T_i$ and we therefore know that there exists some $i$ such that $T_i\not\vDash Q_i$. By construction, $\sigma\vDash_\Sigma Q_i$ for every $\sigma\in T_i$, so it must be that $T_i = \emptyset$. This means that $\sigma\not\vDash_\Sigma Q_i$ for every $\sigma\in S$, so $S\vDash \overline{Q}_i$. Or, if $S=\emptyset$, then $S\vDash\top^\oplus$.
}{
There are three cases:
\begin{itemize}
\item Suppose there is some $i$ such that $S\vDash\overline{Q}_i$. This means that $S\neq\emptyset$ and $\forall\sigma\in S.~\sigma\not\vDash_\Sigma Q_i$. Since there are no states satisfying $Q_i$, then there is no $S_i\subseteq S$ such that $S_i \vDash Q_i$ and therefore $S\not\vDash Q_1\oplus\cdots\oplus Q_n$.

\item Suppose that $S\vDash(\overline{Q}_1\land\cdots\land\overline{Q}_n)\oplus\top$, so by \Cref{lem:uxsat}, there is some state $\sigma\in S$ such that $\sigma$ does not satisfy any $Q_i$. Therefore $S\not\vDash Q_1\oplus\cdots\oplus Q_n$: there is now way to break $S$ into $n$ parts each of which satisfying a $Q_i$ because none of those sets can contain $\sigma$.

\item Suppose $S\vDash\top^\oplus$ and so $S=\emptyset$. Then clearly $S\not\vDash Q_1\oplus\cdots\oplus Q_n$: we cannot witness each $Q_i$ because there are no states at all.
\end{itemize}
}

\begin{lemma}[Nondeterministic Trace Extrapolation]\label{lem:ndte}
If $\dem{c}S\vDash\psi$ and $\psi$ has no implications, then there is some $\varphi$ such that $S\vDash\varphi$ and $\vDash\triple\varphi{c}\psi$.
\end{lemma}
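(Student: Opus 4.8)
The plan is to establish this lemma --- which discharges the ``trace extrapolation for atomic commands'' requirement (item (4) of \Cref{def:fls}) for the powerset instance --- by the same strategy as the proof of \Cref{lem:tepure}, namely induction on the structure of the implication-free assertion $\psi$. The inductive cases $\psi=\top$, $\psi=\bot$, $\psi=\psi_1\land\psi_2$, and $\psi=\psi_1\oplus\psi_2$ carry over essentially verbatim from \Cref{lem:tepure}: the only facts they use are that $\dem{c}{-}$ is linear (i.e.\ $\dem{c}{S_1\cup S_2}=\dem{c}{S_1}\cup\dem{c}{S_2}$) and that an atomic command maps each single state to \emph{at most one} output state --- indeed $\de{c}_\textsf{atom}(\sigma)$ is always a singleton or $\emptyset$ --- so a decomposition $\dem{c}{S}=T_1\cup T_2$ with $T_i\vDash\psi_i$ can be pulled back along $\de{c}_\textsf{atom}$ to a decomposition $S=S_1\cup S_2$ with $\dem{c}{S_i}\vDash\psi_i$, by taking $S_i:=\{\sigma\in S\mid \de{c}_\textsf{atom}(\sigma)\subseteq T_i\}$; the implication case is vacuous. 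Hence the real content is the base case $\psi=Q$, a single atomic assertion, where --- in contrast to \Cref{lem:tepure} --- the extrapolated precondition may have to contain an outcome conjunction.

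For the base case I would proceed by cases on the atomic command $c$ of \textsf{GCL} (\Cref{ex:gcl}). For $c=x:=e$: from $\dem{x:=e}{S}\vDash Q$ we get $S\neq\emptyset$ and $\sigma[x\mapsto\de{e}_\textsf{Exp}(\sigma)]\vDash_\Sigma Q$ for every $\sigma\in S$; by the substitution lemma for stacks this is the same as $\sigma\vDash_\Sigma Q[e/x]$, so $S\vDash Q[e/x]$, and we take $\varphi=Q[e/x]$, for which $\vDash\triple{Q[e/x]}{x:=e}{Q}$ is exactly the \textsc{Assign} axiom. For $c=\assume e$: writing $S_T=\{\sigma\in S\mid\sigma\vDash_\Sigma e\}$ and $S_F=S\setminus S_T$, the semantics gives $\dem{\assume e}{S}=S_T$, so $S_T\vDash Q$ (in particular $S_T\neq\emptyset$). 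If $S_F=\emptyset$ we take $\varphi=Q\land e$, so that $S=S_T\vDash\varphi$, and $\vDash\triple{Q\land e}{\assume e}{Q}$ holds because $\assume e$ acts as the identity on any collection all of whose states satisfy $e$. If $S_F\neq\emptyset$ we take $\varphi=(Q\land e)\oplus\lnot e$: then $S=S_T\cup S_F$ with $S_T\vDash Q\land e$ and $S_F\vDash\lnot e$, so $S\vDash\varphi$, while \textsc{Assume} (with $Q\land e\vDash e$ and $\lnot e\vDash\lnot e$) followed by the Rule of \textsc{Consequence} (weakening $Q\land e$ to $Q$) yields $\vDash\triple{\varphi}{\assume e}{Q}$. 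In every case $\varphi$ is implication-free, as item (4) of \Cref{def:fls} requires. The remaining deterministic commands one might want to cover (the \textsf{mGCL} heap operations, for the combined error/nondeterminism instance) are handled the same way: split $S$ into the sub-collection on which $c$ succeeds and the one on which it faults, apply the appropriate small axiom of \Cref{fig:sepproof} to each, and recombine via \textsc{Split}.

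The crux, and the reason one cannot simply invoke \Cref{lem:tepure} off the shelf, is the $\assume e$ base case. The hypothesis of \Cref{lem:tepure} demands that the extrapolated precondition be a single atomic assertion, but no atomic precondition can work when $S$ mixes states that disagree on the guard $e$: any atomic $P$ with $S\vDash P$ would also be satisfied by a singleton $e$-false state, on which $\assume e$ produces $\emptyset$, which satisfies no atomic assertion. Threading the guard's truth value into the two sides of an outcome conjunction is the one genuinely new idea here, and it is why the inductive skeleton of \Cref{lem:tepure} must be replayed rather than applied --- its argument in fact only needs the base case to supply an implication-free precondition (not necessarily a basic one) and the $\oplus$ case only needs atomic commands not to split outcomes, both of which still hold. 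One minor point of care: the generic framework only postulates the soundness direction of the assignment substitution law ($m\vDash P[e/x]$ implies $\dem{x:=e}{m}\vDash P$), whereas the $x:=e$ base case uses the converse at the level of individual states, which does hold in the concrete stack model of \Cref{ex:gcl}.
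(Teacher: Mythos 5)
Your proof is correct, but it is organized differently from the paper's. The paper splits on the command first: for $c=\assume e$ it constructs the precondition for an \emph{arbitrary} implication-free $\psi$ in one shot, taking $(\psi\land e)\oplus\overline{e}$ (with $\top^\oplus$ replacing a component when the corresponding part of $S$ is empty), and for the remaining (pure) atomic commands it invokes \Cref{lem:tepure} to reduce to atomic assertions and then discharges that base case with a weakest-precondition transformer (substitution for assignment, backwards separation-logic axioms for the heap commands). You instead replay the induction on $\psi$ uniformly for \emph{all} atomic commands, justifying the $\oplus$-case by the explicit pullback $S_i=\{\sigma\in S\mid \de{c}_{\mathsf{atom}}(\sigma)\subseteq T_i\}$, which uses only the at-most-one-output property of the atomic commands, and you push the command case analysis down into the atomic base case, where your assume-precondition $(Q\land e)\oplus\lnot e$ is essentially the paper's construction specialized to $\psi=Q$. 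Both proofs hinge on the same key idea---threading the guard into an outcome conjunction for $\assume e$---and your explicit pullback is arguably cleaner than the informal ``pure commands cannot split outcomes'' step in \Cref{lem:tepure}, at the price of having to re-verify the inductive skeleton rather than citing it.

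Two points of care. First, in $(Q\land e)\oplus\lnot e$ you must read $\lnot e$ as the expression-level (atomic) negation $\overline{e}$ of \Cref{sec:ndfalse}---a nonempty set all of whose states falsify $e$---and not as the BI negation $e\Rightarrow\bot$: with the BI reading the precondition would contain an implication, and the triple would in fact fail, since a set satisfying the BI negation need only contain \emph{some} non-$e$ state and may also contain $e$-states violating $Q$. Your appeal to the \textsc{Assume} rule suggests you intend the former, but it should be stated. Second, like the paper's own \Cref{lem:tepure}, your induction omits the base case $\psi=\top^\oplus$, which does arise in your setup (e.g.\ $\dem{\assume e}{S}=\emptyset$ with $S\neq\emptyset$) and is handled by taking $\varphi=\overline{e}$ (or $\top^\oplus$ when $S=\emptyset$). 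Your treatment of the \textsf{mGCL} heap commands is only a sketch, but for the \textsf{GCL} atomic commands the argument is complete.
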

\begin{proof} By cases on the structure of $c$.
\begin{itemize}

\item $c=(\assume{e})$. We know that $\dem{\assume e}S\vDash\psi$. Let $S_1 = \dem{\assume e}S = \{\sigma\mid \sigma\in S, \sigma\vDash e\}$ and $S_2 = \dem{\assume{\lnot e}}S = \{\sigma\mid\sigma\in S, \sigma\not\vDash e\}$. Clearly $S_1\cup S_2 = S$, since the two assume statements partition $S$ into two parts. We now define $\varphi$ as follows:
\[
\varphi = \varphi_1\oplus\varphi_2
\qquad\text{where}\qquad
\varphi_1 = \left\{\begin{array}{ll}
\psi \land e & \text{if}~S_1\neq\emptyset \\
\psi \land \top^\oplus &\text{if}~S_1=\emptyset
\end{array}\right.
\qquad
\varphi_2 = \left\{\begin{array}{ll}
\overline e & \text{if}~S_2\neq\emptyset \\
\top^\oplus &\text{if}~S_2=\emptyset
\end{array}\right.
\]
We now show that $S_1\vDash \varphi_1$: we already know that $S_1\vDash\psi$ by assumption. If $S_1\neq\emptyset$, then it must satisfy $e$, since by construction all the states in $S_1$ satisfy $e$. If not, then $S_1=\emptyset\vDash\top^\oplus$. A similar argument shows that $S_2\vDash\varphi_2$. Given this, $S\vDash\varphi$.

It remains to show that $\vDash\triple\varphi{\assume e}\psi$. Suppose $T\vDash\varphi$, so $T_1\vDash\varphi_1$ and $T_2\vDash \varphi_2$ such that $T_1\cup T_2 = T$. Since all the states satisfying $e$ from $T$ are in $T_1$, then $\dem{\assume e}T = T_1$. Since $T_1\vDash\varphi_1$, then $T_1\vDash\psi$.

\item $c$ is a pure command. It suffices to show that the property holds for basic assertions $Q$, we can then use \Cref{lem:tepure} to complete the proof.

Suppose that $\dem cS\vDash Q$. This means that $\dem cS \neq\emptyset$ and $\forall\sigma \dem cS.~\sigma\vDash Q$. For pure commands, there are well known weakest precondition predicate transformations that satisfy $\tau\vDash Q$ iff $\sigma\vDash\wpre(c, Q)$ such that $\{\tau\} = \de c(\sigma)$. This includes the rules for variable assignment ($\wpre(x:=v, Q) = Q[v/a]$) as well as the backwards reasoning rules for Separation Logic given by \citet{sl}. So, it must be the case that $S\vDash \wpre(c, Q)$: since $c$ is pure, it cannot change the magnitude of the set, so $S\neq\emptyset$. In addition, since all the states in the output set satisfy $Q$, then the states in $S$ must all satisfy $\wpre(c,Q)$. Finally, we conclude that $\vDash\triple{\wpre(c,Q)}c{Q}$: suppose that $T\vDash \wpre(c,Q)$, then $T\neq\emptyset$ and $\forall \sigma \in T.~\sigma\vDash \wpre(c,Q)$. By the properties of weakest preconditions, we know that $\tau\vDash Q$ if $\{\tau\} = \de c(\sigma)$, so everything in $\dem cT$ must satisfy $Q$. Additionally, $c$ is pure and cannot change the size of $T$, so $\dem cT\neq\emptyset$. This means that $\dem cT\vDash Q$.

\end{itemize}
\end{proof}

\begin{lemma}
\label{lem:ndfalsifiable}
The nondeterministic instance of OL is falsifiable
\end{lemma}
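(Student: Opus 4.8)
The plan is to verify that the nondeterministic instance of OL (the evaluation model of \Cref{def:ndeval} together with the outcome assertions of \Cref{def:ndoc}) meets each of the four requirements listed in \Cref{def:fls}. Two of them are already available off the shelf: requirement (3), that sequences of outcomes are falsifiable, is essentially \Cref{lem:ndfalse} --- given $S \not\vDash Q_1 \oplus \cdots \oplus Q_n$, one of the three alternatives there supplies an implication-free witness $\psi$ (one of $\overline{Q}_i$, or $(\overline{Q}_1 \land \cdots \land \overline{Q}_n) \oplus \top$, or $\top^\oplus$), none of which contains a BI implication, and the reverse direction of that same lemma gives $\psi \Rightarrow \lnot (Q_1 \oplus \cdots \oplus Q_n)$; the converse implication in requirement (3) is just the classical reading of $\lnot$. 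Requirement (4), trace extrapolation for atomic commands, is exactly \Cref{lem:ndte}. So the remaining work concerns requirements (1) and (2), which are about the PCM $\langle \bb{2}^\Sigma, \cup, \emptyset\rangle$.

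For (1a), if $S_1 \cup S_2 = \emptyset$ then both $S_1$ and $S_2$ are empty --- immediate. For (1b), the cross-split property, I would use the standard intersection decomposition: given $S_1 \cup S_2 = T_1 \cup T_2$, set $s_1 = S_1 \cap T_1$, $s_2 = S_2 \cap T_1$, $t_1 = S_1 \cap T_2$, and $t_2 = S_2 \cap T_2$. Distributivity of $\cap$ over $\cup$, together with the inclusions $S_i \subseteq S_1 \cup S_2 = T_1 \cup T_2$ and $T_j \subseteq T_1 \cup T_2 = S_1 \cup S_2$, then yields $s_1 \cup s_2 = T_1$, $t_1 \cup t_2 = T_2$, $s_1 \cup t_1 = S_1$, and $s_2 \cup t_2 = S_2$, as required (note $\cup$ is total, so there is no definedness side condition to worry about). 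This is the only place where a genuine, if routine, computation occurs.

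For (2), splittability of atomic assertions: suppose $S_1 \cup S_2 \vDash P$, i.e.\ $S_1 \cup S_2 \neq \emptyset$ and every state in it satisfies $P$ under $\vDash_\Sigma$. Then every state of $S_1$ and of $S_2$ satisfies $P$, and since $S_1 \cup S_2 \neq \emptyset$ at most one of $S_1, S_2$ is empty. Define $\varphi_i := P$ if $S_i \neq \emptyset$ and $\varphi_i := \top^\oplus$ otherwise; then $S_i \vDash \varphi_i$ in either case, and because the case $\varphi_1 = \varphi_2 = \top^\oplus$ is excluded, I only ever need the implications $P \oplus P \Rightarrow P$ and $P \oplus \top^\oplus \Rightarrow P$, both of which follow directly from the definition of $\vDash$ for atomic assertions in \Cref{def:ndoc}. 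Assembling these observations, all four clauses of \Cref{def:fls} hold and the instance is falsifiable.

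None of the steps is deep; the proof is mostly bookkeeping. The subtlest point is requirement (2), where one must notice that the otherwise-failing case $\varphi_1 = \varphi_2 = \top^\oplus$ cannot arise precisely because $S_1 \cup S_2 \vDash P$ forces $S_1 \cup S_2 \neq \emptyset$ --- so I expect that to be the one place a careless argument could go wrong. Requirement (1b) is the only part carrying a real (but entirely routine) set-theoretic calculation, and the care there is just in writing out the four equations cleanly.
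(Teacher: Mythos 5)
Your proposal is correct and follows essentially the same route as the paper's proof: it verifies the four clauses of the falsifiability definition by citing the same two lemmas (the falsifying-assertions lemma for clause (3) and nondeterministic trace extrapolation for clause (4)), uses the identical intersection decomposition $S_i\cap T_j$ for the cross-split property (1b), and handles atomic-assertion splitting (2) by the same empty/nonempty case analysis yielding $P\oplus P\Rightarrow P$ and $P\oplus\top^\oplus\Rightarrow P$. No gaps; your extra remark about the witnesses in clause (3) being implication-free is a point the paper leaves implicit.
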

\begin{proof}\;

\begin{enumerate}
\item Properties of the PCM $\langle \bb{2}^\Sigma, \cup, \emptyset\rangle$:
\begin{enumerate}
\item If $S \cup T = \emptyset$, then it must be the case that $S=T=\emptyset$.
\item Suppose that $S_1\cup S_2 = T_1\cup T_2$. Now, let $U_1 = S_1\cap T_1$, $U_2 = S_2 \cap T_1$, $V_1 = S_1\cap T_2$, and $V_2 = S_2\cap T_2$. It is easy to see that $U_1\cup U_2 = T_1$ and $V_1\cup V_2 = T_2$ and $U_1\cup V_1 = S_1$ and $U_2\cup V_2 = S_2$.
\end{enumerate}
\item Basic assertion splitting: If $S_1\cup S_2\vDash P$, then there are three options. If $S_1 = \emptyset$, then $S_1\vDash\top^\oplus$ and $S_2\vDash P$ and clearly $\top^\oplus\oplus P \Rightarrow P$. The case where $S_2=\emptyset$ is symmetrical. Finally, if both $S_1$ and $S_2$ are nonempty, then $S_1\vDash P$ and $S_2\vDash P$ and $P\oplus P\Rightarrow P$.
\item Assertion falsification: Follows from \Cref{lem:ndfalse}.
\item Trace extrapolation: By \Cref{lem:ndte}.
\end{enumerate}

\end{proof}

The following theorem is a more specific version of \Cref{thm:flssyn} where we include a more specific postcondition (following from \Cref{lem:ndfalse}) instead of existentially quantifying the postcondition.

\ndfalsethm*
\begin{proof}
Follows directly from \Cref{lem:ndfalsifiable,lem:ndfalse,thm:flssyn}.
\end{proof}

Now, we show that OL can disprove any Hoare Triple, which means that it fully subsumes the use case of Incorrectness Logic.

\mhlil*
\iffpf{
Assume $\not\vDash\hoare PCQ$. From \Cref{thm:hl}, we get that $\not\vDash\triple PC{Q\vee\top^\oplus}$. This means that there is some $S$ such that $S\vDash P$ and $\dem CS\not\vDash Q\vee\top^\oplus$, which implies that $\dem CS\not\vDash Q$ and $\dem CS \not\vDash\top^\oplus$, which implies that $\dem CS\vDash \overline Q\oplus\top$. Now, we can use \Cref{lem:trace} to conclude that there is an assertion $\vartheta$ such that $S\vDash\vartheta$ and $\vDash\triple\vartheta{C}{\overline Q\oplus\top}$. Now let $\varphi = \vartheta\land P$, so clearly $S\vDash\varphi$ and since $\varphi\Rightarrow\vartheta$, then $\vDash\triple\varphi C{\overline Q\oplus\top}$, or equivalently, $\vDashD\triple\varphi C{\overline Q}$.
}{
Since $\mathsf{sat}(\varphi)$, there is some $S\vDash\varphi$ and since $\varphi\Rightarrow P$, then also $S\vDash P$. From $\vDashD\triple\varphi C{\overline Q}$, we know that there is a $\tau\in \dem CS$ such that $\tau\not\vDash_\Sigma Q$. There must also be some $\sigma\in S$ such that $\tau\in\de C(\sigma)$, and since $\sigma\in S$ and $S\vDash P$, then $\sigma\vDash_\Sigma P$. So, we have now shown that $\sigma\vDash_\Sigma P$ and $\tau\in\de C(\sigma)$, and $\tau\not\vDash_\Sigma Q$, therefore $\not\vDash\hoare PCQ$.

%
}

\subsection{Probabilistic Falsification}\label{app:probfalse}

In this section, we prove the claims from \Cref{sec:probfalse} pertaining to the falsifiability of probabilistic OL triples. The goal is to prove that probabilistic instances of OL uphold \Cref{def:fls}. We begin by showing that sequences of assertions can be falsified (\Cref{lem:probastfls}), then we we show Trace Extrapolation (\Cref{lem:probte}), and finally we show that probabilistic OL is falsifiable (\Cref{lem:probfls}) implying that \Cref{thm:flssyn} holds.

\begin{lemma}[Falsifying Probabilistic Assertions]\label{lem:probastfls}
\[
\mu\not\vDash \bigoplus_{i=1}^n(\PP[A_i] = p_i)
\qquad\text{iff}\qquad
\exists\psi~\text{(with no implications)}.\quad
\mu\vDash\psi
\quad\text{and}\quad
\psi\Rightarrow \lnot\bigoplus_{i=1}^n(\PP[A_i] = p_i)
\]
\end{lemma}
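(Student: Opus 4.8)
The plan is to treat this as the probabilistic analogue of \Cref{lem:ndfalse}, with the $2^n$-way Boolean refinement of the events $A_1,\dots,A_n$ playing the role that $\overline{Q}_i$, $(\bigwedge_i\overline{Q}_i)\oplus\top$, and $\top^\oplus$ play there. The backward direction is immediate: if $\mu\vDash\psi$ and $\psi\Rightarrow\lnot\bigoplus_{i=1}^n(\PP[A_i]=p_i)$, then $\mu\vDash\lnot\bigoplus_{i=1}^n(\PP[A_i]=p_i)$, and since we work with the classical interpretation of BI (the preorder $\preccurlyeq$ is equality), $m\vDash\lnot\varphi$ holds exactly when $m\not\vDash\varphi$; hence $\mu\not\vDash\bigoplus_{i=1}^n(\PP[A_i]=p_i)$.

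For the forward direction I would build $\psi$ explicitly as the canonical refinement of $\mu$ along all Boolean combinations of the $A_i$, generalizing the $n=2$ case displayed in \Cref{sec:probfalse}. For each $J\subseteq\{1,\dots,n\}$ set $B_J\triangleq\bigwedge_{i\in J}A_i\land\bigwedge_{i\notin J}\lnot A_i$ (these lie in $\prop$ under the same mild closure assumptions on state predicates already used in \Cref{thm:probfalse}), let $q_J\triangleq\PP_\mu[B_J]$, and put $\psi\triangleq\bigoplus_{J\subseteq\{1,\dots,n\}}(\PP[B_J]=q_J)$; by inspection $\psi$ contains no implications. First I would verify $\mu\vDash\psi$: restricting $\mu$ to each cell, $\mu_J(\sigma)=\mu(\sigma)$ if $\sigma\vDash_\Sigma B_J$ and $0$ otherwise, gives $|\mu_J|=q_J$, $\mathsf{supp}(\mu_J)\subseteq\{\sigma\mid\sigma\vDash_\Sigma B_J\}$, and $\mu=\sum_J\mu_J$ because the $B_J$ partition $\Sigma$.

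The heart of the argument is $\psi\Rightarrow\lnot\bigoplus_{i=1}^n(\PP[A_i]=p_i)$, i.e.\ that every $\nu\vDash\psi$ has $\nu\not\vDash\bigoplus_{i=1}^n(\PP[A_i]=p_i)$. The key observation is that the $B_J$ are pairwise mutually exclusive, so the subdistributions witnessing $\nu\vDash\psi$ have pairwise disjoint supports; consequently $\PP_\nu[B_J]=q_J$ for every $J$, that is, $\nu$ and $\mu$ share the same ``cell profile'' $(q_J)_J$. It then suffices to show that whether a distribution $\rho$ satisfies $\bigoplus_{i=1}^n(\PP[A_i]=p_i)$ depends only on this profile. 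I would prove this by reformulating satisfaction as a transportation-feasibility condition: $\rho\vDash\bigoplus_{i=1}^n(\PP[A_i]=p_i)$ iff $\PP_\rho[B_\emptyset]=0$ and there exist nonnegative numbers $r_{J,i}$ (one for each pair with $i\in J$) with $\sum_{i\in J}r_{J,i}=\PP_\rho[B_J]$ for all $J$ and $\sum_{J\ni i}r_{J,i}=p_i$ for all $i$. The forward implication takes $r_{J,i}=\sum_{\sigma\vDash_\Sigma B_J}\rho_i(\sigma)$ from a witnessing split $\rho=\sum_i\rho_i$ (each $\rho_i$ is supported on $A_i$, so its mass sits only in cells $J\ni i$); the reverse implication splits the mass of $\rho$ inside each nonempty cell $J$ in proportion $r_{J,i}/\PP_\rho[B_J]$, producing a legal family $(\rho_i)_i$ with $\mathsf{supp}(\rho_i)$ confined to $A_i$-states and marginals $|\rho_i|=\sum_J r_{J,i}=p_i$.

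Since that condition mentions $\rho$ only through $(\PP_\rho[B_J])_J$, and the hypothesis $\mu\not\vDash\bigoplus_{i=1}^n(\PP[A_i]=p_i)$ says precisely that the profile $(q_J)_J$ is infeasible, every $\nu\vDash\psi$ — having the same profile — also fails $\bigoplus_{i=1}^n(\PP[A_i]=p_i)$, which is the claim. I expect the main obstacle to be this last reduction, specifically the reverse direction of the transportation equivalence, where one must reconstruct a legal split $(\rho_i)_i$ from the cell data and check the support and marginal constraints; everything else is routine bookkeeping about disjoint supports and pointwise sums. An equivalent route that avoids naming the feasibility condition is to argue contrapositively: from $\nu\vDash\psi$ together with a hypothetical split witnessing $\nu\vDash\bigoplus_{i=1}^n(\PP[A_i]=p_i)$, extract the numbers $r_{J,i}$ and re-use them to split $\mu$, contradicting $\mu\not\vDash\bigoplus_{i=1}^n(\PP[A_i]=p_i)$.
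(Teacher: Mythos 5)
Your proposal is correct and takes essentially the same route as the paper: the same $2^n$-cell assertion $\psi$ recording the exact probability $\PP_\mu[B_J]$ of every Boolean combination of the $A_i$, the same cell-restriction argument for $\mu\vDash\psi$, and the same modus-ponens backward direction. Your transportation-feasibility lemma is just a more explicit (and arguably more careful) packaging of the paper's contradiction step, which transfers the coefficients $\alpha_{x,i}$ of a hypothetical witnessing split of $\eta$ back onto $\mu$; both arguments rest on the observation that satisfaction of $\bigoplus_{i=1}^n(\PP[A_i]=p_i)$ depends only on the cell masses.
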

\iffpf{
We begin by defining $\psi$ as follows:
\[
\psi \triangleq \smashoperator{\bigoplus_{x\in \{0,1\}^n}} (\PP[B_x] = \PP_\mu [ B_x ])
\qquad\text{where}\qquad
B_x \triangleq \bigwedge_{i = 1}^n \mathsf{xor}(A_i, x_i)
\]
In the above, $x_i$ denotes the $i^\text{th}$ bit of the string $x$, so if $x_i = 0$, then the $i^\text{th}$ conjunct of $B_x$ is $A_i$ and if $x_i=1$, then it is $\lnot A_i$. Now, for any $\sigma\in\mathsf{supp}(\mu)$, there must be exactly one $B_x$ such that $\sigma\vDash_\Sigma B_x$. This is because for each $A_i$, either $\sigma\vDash A_i$ or $\sigma\vDash \lnot A_i$, and so a unique $B_x$ corresponds to these choices. That means that $\mu$ can be partitioned by its support into sub-distributions $\mu_x$ such that $\forall\sigma\in\mathsf{supp}(\mu_x).~\sigma\vDash_\Sigma B_x$ and $\mu = \sum_{x\in\{0,1\}^n} \mu_x$. Additionally, $|\mu_x| = \PP_\mu[B_x]$ since $\mathsf{supp}(\mu_x)$ contains exactly those states that satisfy $B_x$. Therefore, for each $x$, $\mu_x\vDash (\PP[B_x] = \PP_\mu[B_x])$ and so $\mu\vDash\psi$.

Now we must show that $\psi\Rightarrow \lnot\bigoplus_{i=1}^n(\PP[A_i] = p_i)$. Suppose that $\eta\vDash\psi$. This means that $\forall x\in \{0,1\}^n$ there is an $\eta_x$ such that $\eta_x\vDash(\PP[B_x] = \PP_\mu[B_x])$ and $\eta=\sum_{x\in\{0,1\}^n}\eta_x$. This also implies that for all $x$, $|\eta_x| = |\mu_x|$. 

For the sake of contradiction, suppose $\eta\vDash \bigoplus_{i=1}^n(\PP[A_i] = p_i)$. In order for this to be true, then for each $i$, we would need the following:
\[
\smashoperator{\sum_{x\in\{0,1\}^n, x_i = 0}} \alpha_{x,i}\cdot\eta_x \vDash \PP[A_i] = p_i
\]
Where each $\alpha_{x,i}$ is a coefficient between $0$ and $1$ such that for all $x$, $\sum_{i=1}^n \alpha_{x,i}= 1$. Essentially, this distributes the probability mass of each $\eta_x$ among all the $A_i$ assertions that it is compatible with. Now, since for each $x$, $|\eta_x| = |\mu_x|$ and if $x_i=0$, then every $\sigma\in\mathsf{supp}(\mu_x)$ must satisfy $A_i$, we also have the following:
\[
\smashoperator{\sum_{x\in\{0,1\}^n, x_i = 0}} \alpha_{x,i}\cdot\mu_x \vDash \PP[A_i] = p_i
\]
And this implies that $\mu\vDash \bigoplus_{i=1}^n(\PP[A_i] = p_i)$, which is a contradiction, therefore it must be the case that $\eta\vDash\lnot \bigoplus_{i=1}^n(\PP[A_i] = p_i)$.
}{
Suppose that there is some $\psi$ such that $\mu\vDash\psi$ and $\psi \Rightarrow \lnot \bigoplus_{i=1}^n(\PP[A_i] = p_i)$. By modus ponens, $\mu\vDash \lnot \bigoplus_{i=1}^n(\PP[A_i] = p_i)$ and therefore $\mu\not\vDash \bigoplus_{i=1}^n(\PP[A_i] = p_i)$.
}

The following lemma is needed for trace extrapolation.

\begin{lemma}[Assertion Scaling]\label{lem:asstscale} For any scalar $\alpha\neq 0$ and assertion $\varphi$, there exists a $\psi$ such that for any $\mu$ if $\alpha\cdot |\mu| \le 1$, then $\mu\vDash\varphi$ iff $\alpha\cdot\mu\vDash\psi$
\end{lemma}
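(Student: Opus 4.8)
The plan is to prove the claim by structural induction on the outcome assertion $\varphi$, following the BI grammar of \Cref{fig:outcomes}, building $\psi$ compositionally and exploiting two things: first, that the probabilistic instance is \emph{classical} (the preorder on $\mathcal D\Sigma$ is equality, so $m_1\monoid m_2\preccurlyeq m$ means $\mu_1+\mu_2=\mu$ and $m\vDash\varphi_1\Rightarrow\varphi_2$ means the material implication $m\vDash\varphi_1\Rightarrow m\vDash\varphi_2$); and second, three elementary scaling identities for $\alpha>0$: $\mathsf{supp}(\alpha\cdot\mu)=\mathsf{supp}(\mu)$, $|\alpha\cdot\mu|=\alpha\cdot|\mu|$, and $\alpha\cdot(\mu_1+\mu_2)=\alpha\cdot\mu_1+\alpha\cdot\mu_2$. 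I would first observe that we may assume $\alpha>0$: if $\mu\neq\varnothing$ then $\alpha\cdot\mu\in\mathcal D\Sigma$ forces $\alpha\ge 0$, so $\alpha\neq 0$ gives $\alpha>0$, and the case $\mu=\varnothing$ is trivial. Throughout, the induction hypothesis is applied only to subdistributions $\nu$ satisfying $\alpha\cdot|\nu|\le 1$, so the bookkeeping of mass inequalities is what makes every recursive appeal legitimate.

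For the base cases I would take $\psi=\varphi$ when $\varphi\in\{\top,\bot,\top^\oplus\}$; the only check worth mentioning is $\top^\oplus$, where $\alpha\cdot\mu=\varnothing\iff\mu=\varnothing$ because $\alpha\neq 0$. For an atomic $\varphi=(\PP[A]=p)$, I would set $\psi=(\PP[A]=\alpha p)$ when $\alpha p\le 1$ and $\psi=\bot$ otherwise (so that $\psi$ is always well-formed syntax). This works because scaling leaves the support unchanged, so the clause $\forall\sigma\in\mathsf{supp}(\mu).~\sigma\vDash_\Sigma A$ transfers verbatim, while $|\mu|=p\iff|\alpha\cdot\mu|=\alpha p$ since $\alpha\neq 0$; in the degenerate case $\alpha p>1$ both sides are unsatisfiable under the standing hypothesis $\alpha\cdot|\mu|\le 1$ (if $\mu\vDash\PP[A]=p$ then $|\mu|=p$ and $\alpha\cdot|\mu|=\alpha p>1$, a contradiction). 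For $\varphi_1\land\varphi_2$ and $\varphi_1\Rightarrow\varphi_2$ I would simply take $\psi_1\land\psi_2$ resp.\ $\psi_1\Rightarrow\psi_2$ from the induction hypotheses; the implication case goes through precisely because, by classicality, satisfaction of an implication is the material implication and there is no nontrivial $\mu'\succcurlyeq\mu$ to quantify over.

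The main obstacle is the case $\varphi=\varphi_1\oplus\varphi_2$, for which I would set $\psi=\psi_1\oplus\psi_2$. The forward direction is routine: from a witness $\mu=\mu_1+\mu_2$ with $\mu_i\vDash\varphi_i$, the decomposition $\alpha\cdot\mu=\alpha\cdot\mu_1+\alpha\cdot\mu_2$ serves as a witness for $\psi$, using $\alpha\cdot|\mu_i|\le\alpha\cdot|\mu|\le 1$ to invoke the IH and conclude $\alpha\cdot\mu_i\vDash\psi_i$. The backward direction needs more care: from $\alpha\cdot\mu=\nu_1+\nu_2$ with $\nu_i\vDash\psi_i$, I would put $\mu_i=\alpha^{-1}\cdot\nu_i$ (legitimate since $\alpha\neq 0$), verify $\mu_1+\mu_2=\mu$ and that each $\mu_i$ is a genuine subdistribution with $\alpha\cdot|\mu_i|=|\nu_i|\le|\alpha\cdot\mu|\le 1$, and then apply the IH \emph{at $\mu_i$} to get $\mu_i\vDash\varphi_i$ from $\alpha\cdot\mu_i=\nu_i\vDash\psi_i$. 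Ensuring that each invocation of the induction hypothesis has its hypothesis $\alpha\cdot|{\cdot}|\le 1$ in scope—via monotonicity of mass along $+$—is the one genuinely delicate point; everything else reduces to the three scaling identities. I would close by noting that the construction introduces an implication in $\psi$ only when $\varphi$ already contains one, so it preserves the implication-free shape needed by the subsequent trace-extrapolation argument.
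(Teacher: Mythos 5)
Your proposal is correct and follows essentially the same route as the paper's proof: structural induction on $\varphi$ with the identical compositional choice of $\psi$ in every case ($\psi=\varphi$ for the constants, $\psi_1\land\psi_2$, $\psi_1\oplus\psi_2$, $\psi_1\Rightarrow\psi_2$, and $\PP[A]=\alpha\cdot p$ for atoms). Your extra care in the $\oplus$ backward direction (rescaling the witnesses by $\alpha^{-1}$) and the $\bot$ fallback when $\alpha\cdot p>1$ only makes explicit what the paper's terser argument leaves implicit.
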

\begin{proof} By induction on the structure of $\varphi$.
\begin{itemize}
\item $\varphi = \top$. Let $\psi =\top$. Clearly $\mu\vDash\top$ iff $\alpha\cdot\mu\vDash\top$ since both are always true.
\item $\varphi = \bot$. Let $\psi =\bot$. Clearly $\mu\vDash\bot$ iff $\alpha\cdot\mu\vDash\bot$ since both are always false.
\item $\varphi = \top^\oplus$. Let $\psi =\top^\oplus$. Clearly $\mu\vDash\top^\oplus$ iff $\alpha\cdot\mu\vDash\top^\oplus$ since both are true iff $\mu = \varnothing$.
\item $\varphi = \varphi_1\land \varphi_2$. By the induction hypothesis, there exist $\psi_1$ and $\psi_2$ such that for any $\mu$, $\mu\vDash\varphi_i$ iff $\alpha\cdot\mu\vDash\psi_i$ for $i\in\{1,2\}$. Now, let $\psi=\psi_1\land\psi_2$, so clearly $\mu\vDash\varphi_1\land\varphi_2$ iff $\alpha\cdot\mu\vDash\psi_1\land\psi_2$.
\item $\varphi = \varphi_1\oplus \varphi_2$. By the induction hypothesis, there exist $\psi_1$ and $\psi_2$ such that for any $\mu$, $\mu\vDash\varphi_i$ iff $\alpha\cdot\mu\vDash\psi_i$ for $i\in\{1,2\}$. Now, let $\psi=\psi_1\oplus\psi_2$. It must be that $\mu\vDash\varphi_1\oplus\varphi_2$ iff $\mu_1\vDash\varphi_1$ and $\mu_2\vDash\varphi_2$ such that $\mu=\mu_1+\mu_2$ iff $\alpha\cdot\mu_1\vDash \psi_1$ and $\alpha\cdot\mu_2\vDash \psi_2$ such that $\alpha\cdot\mu = \alpha\cdot\mu_1 +\alpha\cdot\mu_2$ iff $\alpha\cdot\mu\vDash \psi_1\oplus\psi_2$.
\item $\varphi = \varphi_1\Rightarrow \varphi_2$. By the induction hypothesis, there exist $\psi_1$ and $\psi_2$ such that for any $\mu$, $\mu\vDash\varphi_i$ iff $\alpha\cdot\mu\vDash\psi_i$ for $i\in\{1,2\}$. Now, let $\psi=\psi_1\Rightarrow\psi_2$, so clearly $\mu\vDash\varphi_1\Rightarrow\varphi_2$ iff $\alpha\cdot\mu\vDash\psi_1\Rightarrow\psi_2$.
\item $\varphi = (\PP[A] = p)$. Let $\psi = (\PP[A] = \alpha\cdot p)$. It is easy to see that $\mu\vDash(\PP[A] = p)$ iff $\alpha\cdot\mu\vDash(\PP[A] = \alpha\cdot p)$ since $|\mu| = p$ iff $\alpha\cdot |\mu| = \alpha\cdot p$.
\end{itemize}
\end{proof}

\begin{lemma}[Probabilistic Trace Extrapolation]\label{lem:probte}
If $\dem{c}\mu\vDash\psi$ and $\psi$ has no implications, then there is some $\varphi$ such that $\mu\vDash\varphi$ and $\vDash\triple\varphi{c}\psi$.
\end{lemma}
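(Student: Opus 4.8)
The plan is to proceed by induction/case analysis on the structure of the atomic command $c$, mirroring the nondeterministic trace extrapolation argument (\Cref{lem:ndte}) but carried out in the distribution monad $\mathcal{D}$. The atomic commands fall into three groups: the guard $\assume e$; the pure (deterministic, mass-preserving, single-outcome) commands, namely $x:=e$ and, if the language is instantiated over $\textsf{mGCL}$, the heap operations; and the genuinely probabilistic sampling command $x\samp\eta$. For the pure commands it suffices to verify the hypothesis of \Cref{lem:tepure} — trace extrapolation restricted to basic assertions $\PP[A]=p$ — and then appeal to that lemma to lift the result to an arbitrary implication-free postcondition $\psi$. For $x:=e$ the witness precondition is the backward substitution $(\PP[A]=p)[e/x] = \PP[A[e/x]]=p$: since assignment is deterministic and mass-preserving, $\dem{x:=e}{\mu}\vDash\PP[A]=p$ forces $\mu\vDash\PP[A[e/x]]=p$, while $\vDash\triple{\PP[A[e/x]]=p}{x:=e}{\PP[A]=p}$ is precisely the \textsc{Assign} axiom; the $\textsf{mGCL}$ heap operations, being equally pure, are handled identically using their small axioms.

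For $\assume e$, the key observation is that $\dem{\assume e}{\mu}$ is $\mu$ restricted to the states of $\mathsf{supp}(\mu)$ satisfying $e$, with the remaining mass discarded. Write $\mu = \mu_1 + \mu_2$ where $\mathsf{supp}(\mu_1)$ consists of the states satisfying $e$ and $\mathsf{supp}(\mu_2)$ of those satisfying $\lnot e$; then $\dem{\assume e}{\mu} = \mu_1 \vDash \psi$. Take $\varphi = \varphi_1 \oplus \varphi_2$ with $\varphi_1 = \psi \land (\PP[\lnot e] = 0)$ and $\varphi_2 = (\PP[e] = 0)$, both implication-free. One checks $\mu_1 \vDash \varphi_1$ and $\mu_2 \vDash \varphi_2$, so $\mu \vDash \varphi$; and for any $\mu' \vDash \varphi$, decomposing $\mu' = \mu'_1 + \mu'_2$ with $\mu'_i \vDash \varphi_i$ forces $\mathsf{supp}(\mu'_1)$ to satisfy $e$ (and $\mu'_1 \vDash \psi$) and $\mathsf{supp}(\mu'_2)$ to satisfy $\lnot e$, so $\dem{\assume e}{\mu'} = \mu'_1 \vDash \psi$. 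Unlike its nondeterministic analogue, this case needs no special handling of empty pieces, since $+$ on distributions records multiplicities rather than collapsing them.

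The sampling command is the crux. From $\de{x\samp\eta}(\sigma) = \bind(\eta, \lambda v. \de{x:=v}(\sigma))$ we get $\dem{x\samp\eta}{\mu} = \sum_{v\in\mathsf{supp}(\eta)} \eta(v)\cdot\dem{x:=v}{\mu}$, a \emph{finite} sum since $\eta$ is finitely supported. The plan is: (i) establish the PCM and atomic-assertion splitting properties of $\langle \mathcal{D}\Sigma, +, \varnothing\rangle$ required by \Cref{lem:split} — routine pointwise facts, including the measure interpolation property that a common refinement of $\mu_1+\mu_2 = \nu_1+\nu_2$ exists, none of which depend on trace extrapolation, so that there is no circularity with \Cref{lem:probfls} — and apply \Cref{lem:split} repeatedly to split the implication-free $\psi$ into implication-free $\psi_v$ with $\eta(v)\cdot\dem{x:=v}{\mu} \vDash \psi_v$ and $\bigoplus_v \psi_v \Rightarrow \psi$; (ii) for each $v$ rescale by the nonzero scalar $1/\eta(v)$ via \Cref{lem:asstscale} to obtain $\psi'_v$ with $\dem{x:=v}{\mu} \vDash \psi'_v$, the side condition $\frac{1}{\eta(v)}\,|\eta(v)\cdot\dem{x:=v}{\mu}| = |\mu| \le 1$ being immediate; (iii) apply the pure-command case just established (trace extrapolation for $x:=v$ via \Cref{lem:tepure}) to obtain implication-free $\varphi_v$ with $\mu \vDash \varphi_v$ and $\vDash\triple{\varphi_v}{x:=v}{\psi'_v}$; (iv) set $\varphi = \bigwedge_{v\in\mathsf{supp}(\eta)}\varphi_v$. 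Then $\mu \vDash \varphi$, and for any $\mu' \vDash \varphi$ we have $\mu' \vDash \varphi_v$ for each $v$, hence $\dem{x:=v}{\mu'} \vDash \psi'_v$, hence — using \Cref{lem:asstscale} in the opposite direction, with side condition $\eta(v)\cdot|\dem{x:=v}{\mu'}| \le |\mu'| \le 1$ — $\eta(v)\cdot\dem{x:=v}{\mu'} \vDash \psi_v$, so $\dem{x\samp\eta}{\mu'} = \sum_v \eta(v)\cdot\dem{x:=v}{\mu'} \vDash \bigoplus_v \psi_v \Rightarrow \psi$.

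I expect the main obstacle to be precisely this sampling case: keeping the rescaling bookkeeping straight (the scalars $1/\eta(v)$ are finite and nonzero exactly because $v\in\mathsf{supp}(\eta)$, and the side conditions of \Cref{lem:asstscale} must be discharged in both directions), and confirming that the iterated splitting via \Cref{lem:split} genuinely stays inside the implication-free fragment so that \Cref{lem:tepure} remains applicable at the leaves. The $\assume e$ and pure cases are then straightforward adaptations of the nondeterministic development in \Cref{lem:ndte}.
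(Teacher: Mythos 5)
Your \textsc{assume} case, as written, fails because of the semantics of atomic probabilistic assertions (\Cref{def:proboc}): $\mu'\vDash(\PP[A]=p)$ requires $|\mu'|=p$ \emph{and} that every state of $\mathsf{supp}(\mu')$ satisfies $A$. Hence $(\PP[\lnot e]=0)$ and $(\PP[e]=0)$ are each satisfied \emph{only} by the empty subdistribution $\varnothing$; they do not express ``no mass on $\lnot e$-states.'' With your choice $\varphi_1=\psi\land(\PP[\lnot e]=0)$ and $\varphi_2=(\PP[e]=0)$, any splitting $\mu=\nu_1+\nu_2$ with $\nu_i\vDash\varphi_i$ forces $\nu_1=\nu_2=\varnothing$, so $\mu\vDash\varphi_1\oplus\varphi_2$ can hold only when $\mu=\varnothing$; for every nontrivial $\mu$ the required witness condition $\mu\vDash\varphi$ is violated (the triple may then hold only for vacuous reasons). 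The repair is exactly what the paper's proof does: record the actual masses, taking $\varphi=(\psi\land(\PP[e]=p))\oplus(\PP[\lnot e]=|\mu|-p)$ with $p=|\dem{\assume{e}}{\mu}|$. Since $\varphi$ only needs to \emph{exist} for the given $\mu$, mentioning $p$ and $|\mu|$ is legitimate, and the empty-piece situations are covered automatically because $\varnothing\vDash(\PP[A]=0)$.

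The other two cases are essentially the paper's own argument. For pure commands you use backward substitution/weakest preconditions on a basic assertion plus mass preservation and then lift with \Cref{lem:tepure}, as the paper does. For sampling, the paper writes $\dem{x\samp\eta}{\mu}=\sum_{v}\dem{x:=v}{\eta(v)\cdot\mu}$, splits $\psi$ via \Cref{lem:split}, extrapolates the pure assignment on the \emph{scaled} distribution $\eta(v)\cdot\mu$, and only then applies \Cref{lem:asstscale} (with $\alpha=1/\eta(v)$) to convert the resulting precondition into one satisfied by $\mu$; you instead rescale the split postconditions $\psi_v$ first and extrapolate directly on $\mu$. The two bookkeepings are equivalent: your side conditions are discharged since $|\mu|\le 1$ and $|\dem{x:=v}{\mu'}|\le 1$, your implicit reliance on \Cref{lem:split} and \Cref{lem:asstscale} staying in the implication-free fragment is justified by their constructions (exactly as the paper implicitly relies on it), and your observation that only properties (1)--(2) of \Cref{def:fls} are needed for \Cref{lem:split}, so that there is no circularity with \Cref{lem:probfls}, is correct. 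So the only genuine defect is the \textsc{assume} witness; once it is replaced by the mass-indexed precondition above, the proof goes through.
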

\begin{proof} By cases on the structure of $c$.
\begin{itemize}
\item $c = (x\samp\eta)$.
First, we know that:
\[
\dem{x\samp\eta}\mu
= \smashoperator{\sum_{\sigma\in\mathsf{supp}(\mu)}} \mu(\sigma)\cdot\smashoperator{\sum_{v\in\mathsf{supp}(\eta)}} \eta(v)\cdot \de{x:=v}(\sigma)
= \smashoperator{\sum_{v\in\mathsf{supp}(\eta)}} \dem{x:=v}{\eta(v)\cdot \mu}
\]
So, we can apply \Cref{lem:split} many times to get a $\psi_v$ for each $v$ such that $\dem{x:=v}{\eta(v)\cdot\mu}\vDash \psi_v$ and $(\bigoplus_{v\in\mathsf{supp}(\eta)}\psi_v)\Rightarrow \psi$. Now, since $x:=v$ is pure, we can use the next case of this proof to conclude that there is a $\varphi_v$ such that $\eta(v)\cdot\mu\vDash\varphi_v$ and $\vDash\triple{\varphi_v}{x:=v}{\psi_v}$. Using \Cref{lem:asstscale} (with $\alpha=\frac1{\eta(v)}$) we can get a $\varphi'_v$ such that $\mu'\vDash\varphi'_v$ iff $\eta(v)\cdot\mu'\vDash\varphi_v$ (and therefore $\mu\vDash\varphi'_v$). Now, let $\varphi = \bigwedge_{v\in\mathsf{supp}(\eta)}\varphi'_v$, so clearly $\mu\vDash\varphi$. We can also show that $\vDash\triple{\varphi}{x\samp\eta}{\psi}$:

Suppose that $\mu'\vDash\varphi$. Then, $\mu'\vDash\varphi'_v$ for each $v$. This also means that $\eta(v)\cdot\mu'\vDash\varphi_v$. Now, using $\vDash\triple{\varphi_v}{x:=v}{\psi_v}$, we know that $\dem{x:=v}{\eta(v)\cdot\mu'}\vDash\psi_v$. Combining these, we get $\dem{x\samp\eta}{\mu'}\vDash \bigoplus_{v\in\mathsf{supp}(\eta)}\psi_v$. This implies that $\dem{x\samp\eta}{\mu'}\vDash\psi$.

\item $c=(\assume{e})$. We know that $\dem{\assume e}\mu\vDash\psi$. Now, let $\varphi = (\psi \land \PP[e] = p)\oplus (\PP[\lnot e] = |\mu|-p)$ where $p = |\dem{\assume e}\mu|$.

Let $\mu_1 = \dem{\assume e}\mu$ and $\mu_2 = \dem{\assume{\lnot e}}\mu$. Clearly $\mu_1+\mu_2 = \mu$, since the two assume statements partition the support of $\mu$ into two parts. It is also the case that $\mu_1\vDash(\psi\land\PP[e]=p)$ since we took $\mu_1\vDash\psi$ as an assumption and $\PP_{\mu_1}[e] = |\dem{\assume e}\mu|$ by construction. Similarly, $\mu_2\vDash(\PP[\lnot e] = |\mu|-p)$ since $\mu_2$ contains all the states where $e$ is false by construction and must have mass equal to $|\mu|-|\mu_1|$. Therefore, $\mu\vDash\varphi$.

We now show that $\vDash\triple\varphi{\assume e}\psi$. Suppose that $\mu'\vDash\varphi$. Therefore, $\mu_1\vDash \psi\land \PP[e]=p$ and $\mu_2\vDash\PP[\lnot e]=(1-p)$ such that $\mu'=\mu_1+\mu_2$. It must be the case that $\dem{\assume e}{\mu'} = \mu_1$, since $\lnot e$ holds for every state in the support of $\mu_2$. We already know that $\mu_1\vDash\psi$, so $\dem{\assume e}{\mu'}\vDash\psi$.

\item $c$ is a pure command. It suffices to show that the property holds for basic assertions $\PP[A]=p$, we can then use \Cref{lem:tepure} to complete the proof.

Suppose that $\dem c\mu\vDash (\PP[A] = p)$. This means that $|\dem c\mu| = p$ and $\forall\sigma\in\mathsf{supp}(\dem c\mu).~\sigma\vDash A$. For pure commands, there are well known weakest precondition predicate transformations that satisfy $\tau\vDash A$ iff $\sigma\vDash\wpre(c, A)$ such that $\unit(\tau) = \de c(\sigma)$. This includes the rules for variable assignment ($\wpre(x:=v, A) = A[v/a]$) as well as the backwards reasoning rules for Separation Logic given by \citet{sl}. So, it must be the case that $\mu\vDash\PP[\wpre(c, A)] = p$: since $c$ is pure, it cannot change the mass of the distribution, so $|\mu| = |\dem c\mu| = p$. In addition, since all the states in the output distribution satisfy $A$, then the states in $\mu$ must all satisfy $\wpre(c,A)$. Finally, we conclude that $\vDash\triple{\PP[\wpre(c,A)] =p}c{\PP[A] = p}$: suppose that $\mu'\vDash\PP[\wpre(c,A)]=p$, then $|\mu'|=p$ and $\forall \sigma\in\mathsf{supp}(\mu').~\sigma\vDash \wpre(c,A)$. By the properties of weakest preconditions, we know that $\tau\vDash A$ if $\unit(\tau) = \de c(\sigma)$, so everything in the support of $\dem c{\mu'}$ must satisfy $A$. Additionally, $c$ is pure and cannot change the mass of the distribution, so $|\dem c{\mu'}| = |\mu'| = p$. This means that $\dem c{\mu'}\vDash \PP[A] =p$.

\end{itemize}
\end{proof}

\begin{lemma}\label{lem:probfls}
The probabilistic instance of OL is falsifiable
\end{lemma}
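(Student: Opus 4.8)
The plan is to verify the four requirements of \Cref{def:fls} for the probabilistic instance of OL, following the same blueprint as the proof of \Cref{lem:ndfalsifiable} for the nondeterministic case. Two of the four are already in hand: requirement~(3), falsifiability of sequences of atomic assertions, is exactly \Cref{lem:probastfls}, and requirement~(4), trace extrapolation for atomic commands, is exactly \Cref{lem:probte}. For requirement~(2) I would argue as follows. Suppose $\mu_1 + \mu_2 \vDash \PP[A] = p$. Because subdistributions are nonnegative there is no cancellation, so $\mathsf{supp}(\mu_1 + \mu_2) = \mathsf{supp}(\mu_1) \cup \mathsf{supp}(\mu_2)$ and $|\mu_1 + \mu_2| = |\mu_1| + |\mu_2|$; writing $p_i = |\mu_i|$ gives $\mu_i \vDash \PP[A] = p_i$ with $p_1 + p_2 = p$, and taking $\varphi_i = (\PP[A] = p_i)$ works, since the same no-cancellation reasoning in reverse yields $(\PP[A] = p_1) \oplus (\PP[A] = p_2) \Rightarrow (\PP[A] = p)$.

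What remains is requirement~(1), the two PCM properties of $\langle \mathcal{D}\Sigma, +, \varnothing\rangle$. Property~(1a) is immediate: if $\mu_1 + \mu_2 = \varnothing$ then $\mu_1(x) + \mu_2(x) = 0$ for every $x$, so nonnegativity forces $\mu_1 = \mu_2 = \varnothing$. Property~(1b)---a Riesz-style decomposition of distributions---is the step I expect to be the main obstacle, and I would handle it by a pointwise transportation argument. Given $\mu_1 + \mu_2 = \nu_1 + \nu_2$, define the four witnesses pointwise: for each $x$, abbreviating $a = \mu_1(x)$, $b = \mu_2(x)$, $c = \nu_1(x)$, $d = \nu_2(x)$ (so $a + b = c + d$), set $s_1(x) = \min(a,c)$, $t_1(x) = a - s_1(x)$, $s_2(x) = c - s_1(x)$, and $t_2(x) = b - s_2(x)$. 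A case split on whether $a \le c$ then shows that all four values are nonnegative and that $t_1(x) + t_2(x) = d$, from which the required equalities $s_1 + s_2 = \nu_1$, $t_1 + t_2 = \nu_2$, $s_1 + t_1 = \mu_1$, and $s_2 + t_2 = \mu_2$ follow; since each witness is dominated pointwise by one of $\mu_1, \mu_2, \nu_1, \nu_2$, it has mass at most $1$ and is a legitimate element of $\mathcal{D}\Sigma$.

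Once all four clauses of \Cref{def:fls} are checked, falsifiability of the probabilistic instance is immediate. As in the nondeterministic setting, this is what makes \Cref{thm:flssyn} applicable here, and hence underlies the syntactic falsification results of \Cref{sec:probfalse}, in particular \Cref{thm:probfalse} and \Cref{thm:lbfalse}.
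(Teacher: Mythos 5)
Your proposal is correct and follows essentially the same route as the paper: requirements (3) and (4) are discharged by \Cref{lem:probastfls} and \Cref{lem:probte}, requirement (2) by splitting the mass $p$ as $|\mu_1|+|\mu_2|$, and requirement (1b) by a pointwise min-based decomposition that coincides with the paper's witnesses ($\alpha_1=\min(\mu_1(x),\eta_1(x))$, etc.) up to how the remaining pieces are derived by subtraction. The additional observation that each witness is pointwise dominated and hence a valid subdistribution is a harmless extra check.
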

\begin{proof}\;

\begin{enumerate}
\item Properties of the PCM $\langle \mathcal D\Sigma, +, \varnothing\rangle$:
\begin{enumerate}
\item If $\mu_1+\mu_2 = \varnothing$, then it must be the case that $\mu_1=\mu_2 =\varnothing$ since the monoid operation $+$ can only add probability mass, not remove it.
\item Suppose that $\mu_1 + \mu_2 = \eta_1+\eta_2$. We now define the following:
\begin{align*}
\alpha_1 &\triangleq \lambda x. \min(\mu_1(x), \eta_1(x)) &
\alpha_2 &\triangleq \lambda x. \eta_1(x) - \alpha_1(x) \\
\beta_2 &\triangleq \lambda x. \min(\mu_2(x), \eta_2(x))  &
\beta_1 &\triangleq \lambda x.\eta_2(x) - \beta_2(x)
\end{align*}
By construction $\alpha_1 + \alpha_2 = \eta_1$ and $\beta_1+\beta_2 = \eta_2$. We now show that for any $x$, $\alpha_1(x) + \beta_1(x) = \mu_1(x)$:
\begin{align*}
\alpha_1(x) + \beta_1(x) &= \min(\mu_1(x), \eta_1(x)) + \eta_2(x) - \beta_2(x) \\
&= \min(\mu_1(x), \eta_1(x)) + \eta_2(x) - \min(\mu_2(x), \eta_2(x)) \\
&= \min(\mu_1(x), \eta_1(x)) + \max(\eta_2(x) - \mu_2(x), 0) \\
&= \min(\mu_1(x), \eta_1(x)) + \max(\mu_1(x) - \eta_1(x), 0) \\
\intertext{So, if $\mu_1(x) \le \eta_1(x)$, then this equals $\mu_1(x) + 0$, otherwise it is $\eta_1(x) + \mu_1(x) - \eta_1(x)$.}
&= \mu_1(x)
\end{align*}
It is also true that $\alpha_2(x) + \beta_2(x) = \mu_2(x)$ by a symmetric argument.
\end{enumerate}
\item Basic assertion splitting: We know that $\mu_1 \monoid \mu_2\vDash \PP[A]=p$, so $|\mu_1| + |\mu_2| = p$ and all the states in both supports satisfy $A$. That means that $\mu_1\vDash (\PP[A] = |\mu_1|)$ and $\mu_2\vDash(\PP[A] = |\mu_2|)$ and $(\PP[A] = |\mu_1|)\oplus (\PP[A] = |\mu_2|) \Rightarrow (\PP[A] = p)$.
\item Assertion falsification: Follows from \Cref{lem:probastfls}.
\item Trace extrapolation: By \Cref{lem:probte}.
\end{enumerate}

\end{proof}

While we have already shown that probabilistic OL is falsifiable, the result in \Cref{lem:probfls} gives us a falsifying postcondition that is exponentially large. If the original specification had $n$ outcomes in the postcondition, then the specification that disproves it will have $2^n$ outcomes. We now show that in the common case where the outcomes are disjoint, the incorrectness specification only needs $n+1$ outcomes.

\probfalse*
\begin{proof}
In general, if all the $B_j$s are disjoint, then $\mu\vDash\bigoplus_{j=1}^m (\PP[B_j] = r_j)$ iff for each $j$, $\PP_\mu[B_j] = r_j$ and $|\mu| = \sum_{j=1}^m r_j$. This is easy to see, since the disjointness condition partitions the support of $\mu$. It will now suffice to prove the following claim, the remainder of the proof then follows from \Cref{thm:falsification}.
Claim: $\mu \not\vDash \bigoplus_{i=1}^n(\PP[A_i] = p_i)$ iff $\exists\vec q.~\mu\vDash\bigoplus_{i=0}^n(\PP[A_i] = q_i)$. Due to disjointness, this is equivalent to saying that there is some $i$ such that $\PP_\mu[A_i] \neq p_i$ or $|\mu| \neq \sum_{i=1}^n p_i$ iff there exist $\vec q$ such that for each $i$, $\PP_\mu[A_i] = q_i$ and $|\mu| = \sum_{i=0}^n q_i$ and either $q_0\neq 0$ or there is some $i$ such that $q_i\neq p_i$.
\iffcases{
Let each $q_i = \PP_\mu[A_i]$, with the addition of $A_0$, the $A_i$s form a tautology, so they account for all the states in $\mu$ and therefore $\sum_{i=0}^n q_i = |\mu|$. By assumption, either $\PP_\mu[A_i] \neq p_i$ or $|\mu| \neq \sum_{i=1}^n p_i$. If $\PP_\mu[A_i] \neq p_i$, then clearly $p_i\neq q_i$. If every $\PP_\mu[A_i] =p_i$, then it must be that $|\mu| \neq \sum_{i=1}^n p_i = \sum_{i=1}^n q_i$, and so it must be that $q_0 \neq 0$.
}{
Suppose that every $\PP[A_i] = q_i$ and $\sum_{i=0}^n q_i = |\mu|$ and either $p_i \neq q_i$ for some $i$ or $q_0\neq 0$. If there is an $i$ such that $p_i\neq q_i$, then clearly $\PP[A_i] \neq p_i$. If each $p_i=q_i$, then it must be that $q_0 \neq 0$, and then $\sum_{i=1}^n p_i = (\sum_{i=0}^n q_i) - q_0 = |\mu|-q_0 \neq |\mu|$.
}
%
%

\end{proof}

Going further, some specifications can be disproven using a single lower bound:

\lbfalse*
\begin{proof}
We first show that $(\PP[\lnot A] \ge q) \Rightarrow \lnot (\PP[A] \ge p)$. Suppose that $\mu\vDash \PP[\lnot A] \ge q$, so by \Cref{lem:problb}, $\PP_\mu[\lnot A] \ge q$ and therefore $\PP_\mu[\lnot A] > 1-p$. This also means that $\PP_\mu[A] < |\mu| - (1-p)$ and since $|\mu| \le 1$, then $\PP_\mu[A] < 1 - (1 - p) = p$. It follows that $\PP_\mu[A] \not\ge p$.

Now, given the implication that we just proved, we can conclude that $\vDash\triple{\varphi'}C{\lnot(\PP[A]\ge p)}$. Therefore, the original claim holds by \Cref{thm:denial}.
\end{proof}

\section{Separation Logic}
\label{app:sl}
In this section we define the semantics of the assertion logic and atomic commands defined in \Cref{sec:sep}.

\subsection{Semantics of the Assertion Logic}
\label{app:sl-assert}
Recall the syntax for separation logic.
$$p\in\textsf{SL} ::= \textbf{emp}\mid \exists x.p\mid p\land q\mid p\lor q\mid p\Rightarrow q \mid p \sep q\mid p \wand q \mid e\mid e_1\mapsto e_2\mid e\mapsto-\mid e\not\mapsto$$
First we define the disjoint union of two heaps $\uplus\colon\mathcal{H}\to\mathcal{H}\rightharpoonup\mathcal{H}$ as follows:
\[ h_1 \uplus h_2 \quad\triangleq\quad \lambda \ell.\left\{
\begin{array}{ll}
h_1(\ell) &\text{if}~\ell\in\textsf{dom}(h_1)\\
h_2(\ell) &\text{if}~\ell\in\textsf{dom}(h_2)
\end{array}\right.
\qquad
\text{if}
\qquad
\textsf{dom}(h_1) \cap\textsf{dom}(h_2)=\emptyset
\]
The satisfaction relation ${\vDash}\subseteq(\mathcal{S}\times\mathcal{H})\times \textsf{SL}$ is defined as follows.
\[
\begin{array}{lll}
(s, h)\vDash \textbf{emp} &\text{iff}& \textsf{dom}(h) = \emptyset \\
(s,h)\vDash \exists x.p &\text{iff}& (s,h)\vDash p[v/x]~\text{for some}~v\\
(s,h)\vDash p\land q &\text{iff}& (s,h)\vDash p ~\text{and}~(s,h)\vDash q\\
(s,h)\vDash p\lor q &\text{iff}& (s,h)\vDash p ~\text{or}~(s,h)\vDash q\\
(s,h)\vDash p\Rightarrow q& \text{iff}& \text{if}~(s,h)\vDash p~\text{then}~(s,h)\vDash q\\
(s, h)\vDash p\sep q &\text{iff}& \exists h_1, h_2 ~\text{such that}~h = h_1\uplus h_2~\text{and}~(s,h_1)\vDash p~\text{and}~(s,h_2)\vDash q\\
(s, h)\vDash p\wand q & \text{iff}& \forall h_1,h_2~\text{such that}~ h_2 = h\uplus h_1~\text{if}~(s,h_1)\vDash p~\text{then}~(s,h_2)\vDash q\\
(s,h)\vDash e &\text{iff} & \de{e}(s) = \tru \\
(s,h)\vDash e_1\mapsto e_2 &\text{iff}& \de{e_1}(s) = \ell~\text{and}~\textsf{dom}(h) = \{\ell\}~\text{and}~ h(\ell) = \de{e_2}(s) \\
(s,h)\vDash e\mapsto- &\text{iff}& \de{e}(s) = \ell~\text{and}~\textsf{dom}(h) = \{\ell\}~\text{and}~ h(\ell) \neq\bot \\
(s,h)\vDash e\not\mapsto &\text{iff}&  \de{e}(s)=\textsf{null}  ~\text{and}~\textsf{dom}(h)=\emptyset~\text{or}\\
&&\de{e}(s) = \ell~\text{and}~\textsf{dom}(h) = \{\ell\}~\text{and}~ h(\ell) =\bot
\end{array}
\]
%
Note that this is a \emph{classical} interpretation of separation logic where the points-to predicate $x\mapsto v$ is satisfied only by a singleton heap. We can add the \emph{intuitionistic} points-to predicate $x\hookrightarrow v$ as syntactic sugar for $x\mapsto v\sep \tru$ which is satisfied by \emph{any} stack--heap pair $(s,h)$ where $h(\de{x}(s)) = \de{v}(s)$. The difference between $x\mapsto v$ and $x\hookrightarrow v$ is very similar to the difference between over- and under-approximate versions of outcomes that we saw in \Cref{sec:logic}, where we defined under-approximation $m\vDash^\downarrow P$ to be $m\vDash P\oplus\top$ .

\subsection{Logical Operations on Errors}\label{sec:erlogic}

Let ${\vDash_A}\subseteq A\times\prop_A$ and ${\vDash_B}\subseteq A\times\prop_B$ be two logical satisfaction relations in which the assertion syntaxes ($\prop_A$ and $\prop_B$) contain the usual logical constructs $\tru$, $\fls$, $\land$, $\lor$, and $\lnot$. In addition, let $\prop = \prop_A\times\prop_B$ and ${\vDash}\subseteq (B+A)\times\prop$ is the satisfaction relation from \Cref{def:erassert}. We now add the following logical operations:
\[
\begin{array}{lcllcl}
\tru &\triangleq& (\tru, \tru) \qquad\qquad\;&
(p,q)\land (p',q') & \triangleq & (p\land p', q\land q') \\
\fls & \triangleq & (\fls,\fls) &
(p,q)\lor (p',q') & \triangleq & (p\lor p', q\lor q')\\
\lnot (p,q) &\triangleq & (\lnot p, \lnot q)
\end{array}
\]
To provide some justification for these definitions, we prove the following sanity checks.

\begin{lemma}[Sanity checks for logical operations]
The following statements hold for all $m$, $p$, $p'$, $q$, and $q'$.
\begin{itemize}
\item \textsc{True:} $m \vDash \tru$
\item \textsc{False:} $m\not\vDash\fls$
\item \textsc{Conjunction:} $m\vDash (p, q)\land (p',q')$ iff $m\vDash(p,q)$ and $m\vDash (p',q')$
\item \textsc{Disjunction:} $m\vDash (p, q)\lor (p',q')$ iff $m\vDash(p,q)$ or $m\vDash (p',q')$
\item \textsc{Negation:} $m\vDash \lnot (p,q)$ iff $m\not\vDash (p,q)$
\item \textsc{Sugar Syntax:} $(\ok:p) \vee (\er:q)$ iff $(p,q)$
\end{itemize}
\end{lemma}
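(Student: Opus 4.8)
The plan is to prove all six statements uniformly by case analysis on the shape of $m \in B + A$: either $m = \inj_L(e)$ for some error state $e \in B$, or $m = \inj_R(a)$ for some ordinary state $a \in A$. In each case the satisfaction relation $\vDash$ of \Cref{def:erassert} collapses a claim $m \vDash (p,q)$ to a claim in exactly one of the component logics --- $e \vDash_B q$ in the left case, $a \vDash_A p$ in the right case --- and since each of the combined connectives $\tru$, $\fls$, $\land$, $\lor$, $\lnot$ was \emph{defined componentwise}, this collapse commutes with the connective. The only external facts I need are that $\vDash_A$ and $\vDash_B$ give the standard meaning to $\tru$, $\fls$, $\land$, $\lor$, $\lnot$ (respectively: always holds, never holds, conjunctive, disjunctive, classical negation), which is part of the standing assumption on the component logics.

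Concretely: for \textsc{True}, if $m = \inj_L(e)$ then $m \vDash (\tru,\tru)$ unfolds to $e \vDash_B \tru$, which holds, and if $m = \inj_R(a)$ it unfolds to $a \vDash_A \tru$, which holds; \textsc{False} is the mirror image, since both $e \vDash_B \fls$ and $a \vDash_A \fls$ fail. For \textsc{Conjunction}, in the left case $m \vDash (p \land p', q \land q')$ unfolds to $e \vDash_B q \land q'$, which by the standard meaning of $\land$ in $\vDash_B$ is equivalent to $(e \vDash_B q \text{ and } e \vDash_B q')$, i.e.\ to $(m \vDash (p,q) \text{ and } m \vDash (p',q'))$; the right case is identical with $a$, $p$, $p'$, $\vDash_A$. \textsc{Disjunction} and \textsc{Negation} go through verbatim, appealing to the standard meaning of $\lor$ and $\lnot$ in the components. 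For \textsc{Sugar Syntax}, I would first apply the \textsc{Disjunction} case to rewrite $m \vDash (\ok:p) \lor (\er:q)$ as $(m \vDash (p,\fls) \text{ or } m \vDash (\fls,q))$, then case on $m$: if $m = \inj_L(e)$ this reads $(e \vDash_B \fls \text{ or } e \vDash_B q)$, which collapses to $e \vDash_B q$ since the first disjunct fails, and $e \vDash_B q$ is precisely $m \vDash (p,q)$; if $m = \inj_R(a)$ it reads $(a \vDash_A p \text{ or } a \vDash_A \fls)$, which collapses to $a \vDash_A p$, again precisely $m \vDash (p,q)$.

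There is no genuinely hard step here: the lemma is a bookkeeping check that the componentwise connectives interact correctly with the ``select a component according to the injection'' semantics of \Cref{def:erassert}. The one place I would be slightly careful is \textsc{Sugar Syntax}: it is tempting to argue purely syntactically that $(p \lor \fls,\, \fls \lor q)$ ``is'' $(p,q)$, but since $(\ok:p)$ and $(\er:q)$ are the pairs $(p,\fls)$ and $(\fls,q)$, what actually makes the equivalence hold is the \emph{semantic} absorption of $\fls$ in the relevant component, so the clean route is to reduce to the already-established \textsc{Disjunction} case and then exploit that $m \not\vDash (\fls,\fls)$ kills the irrelevant disjunct in each branch.
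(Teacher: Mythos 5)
Your proposal is correct and follows essentially the same route as the paper's proof: case analysis on whether $m$ is a left or right injection, unfolding the componentwise definitions against $\vDash_E$/$\vDash_\Sigma$, and for the sugar-syntax case reducing to disjunction and discharging the $\fls$ disjunct semantically (the paper computes $(p\vee\fls,\fls\vee q)$ and identifies it with $(p,q)$, which amounts to the same absorption step you make explicit).
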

\begin{proof} We prove each case assuming that $m=\inj_L(b)$. The cases where $m=\inj_R(a)$ are symmetric.

\begin{itemize}
\item \textsc{True:} $\inj_L(b)\vDash (\tru,\tru)$ since $b\vDash\tru$
\item \textsc{False:} $\inj_L(b)\not\vDash(\fls,\fls)$ since $b\not\vDash\fls$
\item \textsc{Conjunction:} $\inj_L(b)\vDash (p\land p', q\land q')$ iff $b\vDash q\land q'$ iff $b\vDash q$ and $b\vDash q'$ iff $\inj_L(b)\vDash (p,q)$ and $\inj_L(b)\vDash (p',q')$.
\item \textsc{Disjunction:} $\inj_L(b)\vDash (p\lor p', q\lor q')$ iff $b\vDash q\lor q'$ iff $b\vDash q$ or $b\vDash q'$ iff $\inj_L(b)\vDash (p,q)$ or $\inj_L(b)\vDash (p',q')$.
\item \textsc{Negation:} $\inj_L(b)\vDash (\lnot p, \lnot q)$ iff $b\vDash \lnot q$ iff $b\not\vDash q$ iff $\inj_L(b)\not\vDash (p, q)$
\item \textsc{Sugar Syntax:} $(\ok:p)\vee(\er:q) = (p, \fls)\vee (\fls, q) = (p\vee\fls, \fls\vee q)$ iff $(p,q)$.
\end{itemize}
\end{proof}

\subsection{Semantics of Programs}\label{app:sl-prog}
Recall the syntax of the atomic \textsf{mGCL} commands.
\[c\in\textsf{mGCL} ::= \textsf{assume}~e\mid x:=e \mid x := \textsf{alloc}() \mid\textsf{free}(e)\mid x\leftarrow [e] \mid [e_1] \leftarrow e_2 \mid\textsf{error}()\]
The semantics $\de{c}\colon \mathcal{S}\times\mathcal{H}\to M((\mathcal{S}\times\mathcal{H})+(\mathcal{S}\times\mathcal{H}))$ is given below, parameterized by any monad $M$. Note that often the semantics of $\textsf{alloc}()$ is nondeterministic and, in particular, it might reallocate some location $\ell$ such that $h(\ell)=\bot$. We have chosen to make the semantics fully deterministic so as to allow \textsf{mGCL} to be embedded into, for example, a probabilistic evaluation context.
{\small
\begin{align*}
\de{\textsf{assume}~e}(s,h) ~&= \left\{\begin{array}{ll}
\unit_\er((s,h)) & \text{if}~\de{e}(s) \neq 0 \\
\ident & \text{otherwise}
\end{array}\right.\\
{\de{x:=e}}(s, h) ~&= \unit_\er((s[x\mapsto\de{e}(s)], h)) \\
{\de{x := \textsf{alloc}()}}(s, h) ~&= \unit_\er((s[x\mapsto\ell], h[\ell\mapsto\textsf{null}]))~\text{where}~\ell = \max(\textsf{dom}(h))+1 \\
{\de{\textsf{free}(e)}}(s, h) ~&= \left\{\begin{array}{ll}
\unit_\er(s, h[\ell\mapsto\bot]) & \text{if}~\de{e}(\sigma) = \ell~\text{and}~\ell\in\textsf{dom}(h)~\text{and}~h(\ell)\neq \bot\\
\unit_M(\inj_L((s,h))) & \text{if}~\de{e}(\sigma) = \ell~\text{and}~h(\ell)= \bot\\
\ident & \text{if}~\de{e}(\sigma) \notin\mathsf{dom}(h)
\end{array}\right.\\
{\de{[e_1]\leftarrow e_2}}(s, h) ~&= \left\{\begin{array}{ll}
\unit_\er(s, h[\ell\mapsto\de{e_2}(\sigma)]) & \text{if}~\de{e_1}(\sigma) = \ell~\text{and}~\ell\in\textsf{dom}(h)~\text{and}~h(\ell)\neq \bot \\
\unit_M(\inj_L((s,h))) & \text{if}~\de{e_1}(\sigma) = \ell~\text{and}~h(\ell)= \bot \\
\ident & \text{if}~\de{e_1}(\sigma) \notin\mathsf{dom}(h)
\end{array}\right.\\
{\de{x\leftarrow [e]}}(s, h) ~&= \left\{\begin{array}{ll}
\unit_\er(s[x\mapsto h(\ell)], h) & \text{if}~\de{e}(s) = \ell~\text{and}~\ell\in\textsf{dom}(h)~\text{and}~h(\ell)\neq\bot \\
\unit_M(\inj_L((s,h))) & \text{if}~\de{e}(s) = \ell~\text{and}~h(\ell)=\bot \\
\ident & \text{if}~\de{e}(\sigma) \notin\mathsf{dom}(h)
\end{array}\right.\\
{\de{\textsf{error}()}}(s, h) ~&= \unit_M(\inj_L((s,h)))
\end{align*}}
We can define the usual semantics of $\textsf{alloc}()$ if we specialize $M$ to the powerset monad.
\[\small
\de{x := \textsf{alloc}()}(s,h) = \{ \inj_R((s[x\mapsto \ell], h[\ell\mapsto v])) \mid \ell \in \mathbb{N}^+, v\in\textsf{Val}, \ell \notin\textsf{dom}(h)\vee h(\ell)=\bot \}
\]

\subsection{Manifest Errors}

\manifest*
\begin{proof}
First, recall that by definition, $\vDash\inc pC{\er:q}$ is a manifest error iff $\forall\sigma.~\exists\tau\in\de C(\sigma).~ \tau\vDash(\er:q\sep\tru)$.

\iffcases{
Suppose that $S\vDash(\ok:\tru)$. This means that there must be a $\sigma$ such that $\inj_R(\sigma)\in S$. By the definition of manifest errors, we know that $\exists\tau\in\de{C}(\sigma)$ such that $\tau\vDash(\er:q\sep\tru)$. Now, since $\de{C}(\sigma) = \dem C{\{\inj_R(\sigma)\}}$ and $\{\inj_R(\sigma)\} \subseteq S$, then $\tau\in \dem CS$ and so $\dem CS\vDashD (\er:q\sep\tru)$. Therefore, $\vDashD\triple{\ok:\tru}C{\er:q\sep\tru}$.
}{
Let $\sigma$ be any program state. From $\vDashD\triple{\ok:\tru}C{\er:q\sep\tru}$, we know that $\dem C{\{\inj_R(\sigma)\}}\vDashD(\er:q\sep\tru)$. So, by \Cref{lem:uxsat} there must be some $\tau\in\de C(\sigma)$ such that $\tau\vDash (\er:q\sep\tru)$ (since $\dem C{\{\inj_R(\sigma)\}} = \de C(\sigma)$).
}
\end{proof}

\section{Soundness Proofs}
\label{app:soundness}

\begin{lemma}[Soundness of generic rules in Figure~\ref{fig:baserules}]
If $\vdash\triple PCQ$ then $\vDash\triple PCQ$.
\end{lemma}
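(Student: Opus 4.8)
The plan is to argue by structural induction on the derivation of $\vdash\triple{\varphi}{C}{\psi}$, treating each generic rule of \Cref{fig:baserules} in turn and showing that semantic validity of its premises forces semantic validity of its conclusion. Here I unfold $\vDash\triple{\varphi}{C}{\psi}$ via \Cref{def:mht} to: for all $m\in M\Sigma$, if $m\vDash\varphi$ then $\dem{C}{m}\vDash\psi$. The only ingredients needed are the three monad laws, the two execution-model laws ($\bind(m_1\monoid m_2,k)=\bind(m_1,k)\monoid\bind(m_2,k)$ and $\bind(\ident,k)=\ident$), and the clauses for the connectives in \Cref{fig:outcomes} read classically, i.e.\ with $\preccurlyeq$ equal to $=$.

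Most cases are one-liners. For \textsc{One}, $\dem{\bb 1}{m}=\bind(m,\unit)=m$ by the first monad law. For \textsc{Seq}, the third monad law gives $\dem{C_1\fatsemi C_2}{m}=\bind(\bind(m,\de{C_1}),\de{C_2})=\dem{C_2}{\dem{C_1}{m}}$, after which the two induction hypotheses compose. For \textsc{Empty}, $\dem{C}{\ident}=\bind(\ident,\de C)=\ident\vDash\top^\oplus$ by the execution-model unit law; for \textsc{Zero}, $\dem{\zero}{m}=\bind(m,\lambda\sigma.\ident)=\ident\vDash\top^\oplus$. \textsc{True} and \textsc{False} are immediate from the semantics of $\top$ and $\bot$. \textsc{Consequence} follows by reading off semantic BI entailment: $m\vDash\varphi'$ yields $m\vDash\varphi$, the premise yields $\dem{C}{m}\vDash\psi$, and $\psi\Rightarrow\psi'$ yields $\dem{C}{m}\vDash\psi'$.

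Two cases carry the actual content. For \textsc{For}, since $\textsf{for}~N~\textsf{do}~C$ is $C^N$ with $C^0=\bb1$ and $C^{k+1}=C\fatsemi C^k$, I would run a secondary induction on $N$ to establish $\vDash\triple{\varphi_i}{C^N}{\varphi_{i+N}}$ for every $i\in\mathbb N$: the base case is \textsc{One}, and the step combines the premise $\vDash\triple{\varphi_i}{C}{\varphi_{i+1}}$ with the inductive instance $\vDash\triple{\varphi_{i+1}}{C^N}{\varphi_{i+1+N}}$ through the (already-verified) \textsc{Seq} reasoning; specializing to $i=0$ gives the conclusion. For \textsc{Split}, suppose $m\vDash\varphi_1\oplus\varphi_2$; the semantics of $\oplus$ provides $m_1,m_2$ with $m_1\monoid m_2=m$ and $m_j\vDash\varphi_j$, the induction hypotheses give $\dem{C}{m_j}\vDash\psi_j$, and the monoid-preservation law turns $\dem{C}{m}=\dem{C}{m_1\monoid m_2}$ into $\dem{C}{m_1}\monoid\dem{C}{m_2}$, witnessing $\dem{C}{m}\vDash\psi_1\oplus\psi_2$.

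The main obstacle I expect is bookkeeping around partiality rather than any conceptual difficulty: in \textsc{Split} I must make sure $\dem{C}{m_1}\monoid\dem{C}{m_2}$ is actually defined (it is, since the monoid-preservation equation applies whenever its left-hand side $\bind(m_1\monoid m_2,\de C)$ is defined, and here $m_1\monoid m_2=m$ is), and in \textsc{Zero} I must justify $\bind(m,\lambda\sigma.\ident)=\ident$, which I would either derive from the execution-model laws or — failing a fully generic argument — verify directly in the instances of interest (for powerset, $\bind(S,\lambda x.\emptyset)=\bigcup_{x\in S}\emptyset=\emptyset$; for distributions, $\bind(\mu,\lambda x.\varnothing)=\sum_x\mu(x)\cdot\varnothing=\varnothing$), where it is immediate.
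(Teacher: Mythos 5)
Your proof is correct and follows essentially the same route as the paper's own appendix proof: induction on the derivation, with the monad laws handling \textsc{One}/\textsc{Seq}, the execution-model unit law handling \textsc{Empty}, monoid preservation handling \textsc{Split} (with the same definedness caveat the paper defers to its totality appendix), and an inner induction on $N$ for \textsc{For}. Your extra care in the \textsc{Zero} case --- observing that $\bind(m,\lambda\sigma.\ident)=\ident$ is not literally one of the stated execution-model axioms and checking it in the instances --- is if anything slightly more scrupulous than the paper, which asserts $\dem{\zero}{m}=\ident$ without comment.
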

\begin{proof}
By induction on the derivation $\vdash\triple PCQ$.
\begin{itemize}
\item\textsc{Zero}. Suppose that $m\vDash\varphi$. We know that $\de{\zero}^\dagger(m) = \ident$ and $\ident\vDash\top^\oplus$, therefore $\vDash\triple\varphi\zero{\top^\oplus}$
\item\textsc{One}. Suppose that $m\vDash\varphi$. The know that $\de{\bb{1}}^\dagger(m) = m$ and we assumed that $m\vDash\varphi$, so $\vDash\triple\varphi{\bb{1}}\varphi$
\item\textsc{Seq}. Suppose that $m\vDash\varphi$. By induction, we know that $\de{C_1}^\dagger(m)\vDash\psi$. By induction again, we know that $\de{C_2}^\dagger(\de{C_1}^\dagger(m))\vDash\vartheta$. In addition:
\begin{align*}
\dem{C_2}{\dem{C_1}m}
&= \bind(\dem{C_1}m, \de{C_2}) \\
&= \bind(\bind(m, \de{C_1}), \de{C_2}) \\
&= \bind(m, \lambda\sigma.\bind(\de{C_1}(\sigma), \de{C_2})) \\
&= \bind(m, \de{C_1\fatsemi C_2})\\
&= \dem{C_1\fatsemi C_2}m
\end{align*}
So, $\dem{C_1\fatsemi C_2}m\vDash \vartheta$ and therefore $\vDash\triple\varphi{C_1\fatsemi C_2}\vartheta$

\item\textsc{For}. Since $\mathsf{for}~N~do~C$ is syntactic sugar for $C^n$ (or, equivalently, $C\fatsemi\cdots\fatsemi C$), this rule can be derived by induction on $N$ using the \textsc{Seq} use.

\item\textsc{Split}.
Suppose $m\vDash \varphi_1\oplus\varphi_2$, then there exists $m_1$ and $m_2$ such that $m_1\monoid m_2 = m$ and $m_1\vDash\varphi_1$ and $m_2\vDash\varphi_2$. By induction, we know that $\dem{C}{m_1}\vDash\psi_1$ and $\dem{C}{m_2}\vDash \psi_2$. By linearity, we know that $\dem{C}{m_1}\monoid\dem{C}{m_2} = \dem{C}{m_1\monoid m_2} = \dem Cm$. Note that this does not necessarily mean that $\dem{C}{m}$ is defined, but if we limit $C$ to be syntactically valid (as described in \Cref{app:totality}), then it must be defined and so $\de{C}(m)\vDash\psi_1\oplus\psi_2$


\item\textsc{Consequence}. Suppose that $m\vDash\varphi'$. By the assumption that $\varphi'\Rightarrow\varphi$, this means that $m\vDash\varphi$. By induction, we know that $\dem Cm\vDash\psi$ and so $\dem Cm\vDash\psi'$ (since $\psi\Rightarrow\psi'$) and therefore $\vDash\triple{\varphi'}C{\psi'}$
\item\textsc{Empty}. Suppose that $m\vDash\top^\oplus$, then $m=\varnothing$. We also know that $\bind(\varnothing,f) = \varnothing$ for any $f$, so $\dem{C}\varnothing = \varnothing$, therefore $\dem Cm\vDash\top^\oplus$.
\item\textsc{True}. Suppose that $m\vDash \varphi$. It is trivially true that $\dem Cm\vDash \top$.
\item\textsc{False}. The premise that $m\vDash\bot$ is impossible, therefore this case vacuously holds.
\end{itemize}
\end{proof}

\begin{lemma}[Soundness of nondeterministic rules in Figure~\ref{fig:baserules}]
If $\vdash\triple PCQ$ then $\vDash\triple PCQ$.
\end{lemma}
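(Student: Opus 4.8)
The plan is to argue by induction on the derivation $\vdash\triple\varphi C\psi$, reusing the previous lemma's reasoning for every rule except the two nondeterministic ones, \textsc{Plus} and \textsc{Induction}, so that only those two cases require new work (the induction is on the derivation \emph{tree}, so the \textsc{Induction} case is justified even though its premise again mentions $C^\star$). Throughout I would work in the powerset execution model of \Cref{def:ndeval} with the nondeterministic outcome assertions of \Cref{def:ndoc}, so that $\monoid$ is $\cup$ (hence total) and the BI preorder $\preccurlyeq$ is equality.

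For \textsc{Plus}, I would take $m\vDash\varphi$ and apply the two induction hypotheses to obtain $\dem{C_1}{m}\vDash\psi_1$ and $\dem{C_2}{m}\vDash\psi_2$. The only thing to establish is the identity $\dem{C_1+C_2}{m}=\dem{C_1}{m}\cup\dem{C_2}{m}$, which follows by unfolding the semantics of $+$ from \Cref{fig:comlang} together with the Kleisli extension in the powerset monad: $\dem{C_1+C_2}{m}=\bind(m,\lambda\sigma.\,\de{C_1}(\sigma)\cup\de{C_2}(\sigma))$ is the union over $\sigma\in m$ of $\de{C_1}(\sigma)\cup\de{C_2}(\sigma)$, and distributing that union gives the claim. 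Choosing $m_1=\dem{C_1}{m}$ and $m_2=\dem{C_2}{m}$ as witnesses for the $\oplus$-clause of Figure~\ref{fig:outcomes} is then legitimate, since $m_1\monoid m_2$ is exactly $\dem{C_1+C_2}{m}$ (so $m_1\monoid m_2\preccurlyeq\dem{C_1+C_2}{m}$, as $\preccurlyeq$ is equality) and $m_i\vDash\psi_i$; hence $\dem{C_1+C_2}{m}\vDash\psi_1\oplus\psi_2$, i.e. $\vDash\triple\varphi{C_1+C_2}{\psi_1\oplus\psi_2}$.

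For \textsc{Induction}, the enabling observation is that $\de{C^\star}$ is a fixed point of loop unrolling, i.e. $\de{C^\star}=\de{\bb{1}+C\fatsemi C^\star}$ as functions $\Sigma\to M\Sigma$. Writing $F(f)(\sigma)=f^\dagger(\de{C}(\sigma))\monoid\unit(\sigma)$ for the operator whose least fixed point defines $\de{C^\star}$ (this $\textsf{lfp}$ exists by the continuity argument of \Cref{lem:lfp}), I would unfold $\de{\bb{1}+C\fatsemi C^\star}(\sigma)$ using the clauses for $\bb{1}$, $\fatsemi$, and $+$ to get $\unit(\sigma)\monoid\dem{C^\star}{\de{C}(\sigma)}$, then use commutativity of $\monoid$ to rewrite this as $F(\de{C^\star})(\sigma)$, which equals $\de{C^\star}(\sigma)$ because a least fixed point is a fixed point. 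With this denotational equality in hand, the triples $\triple\varphi{\bb{1}+C\fatsemi C^\star}\psi$ and $\triple\varphi{C^\star}\psi$ have literally the same meaning by \Cref{def:mht}, so applying the induction hypothesis to the premise immediately yields $\vDash\triple\varphi{C^\star}\psi$.

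I expect the \textsc{Induction} case to be the main obstacle — not because it is conceptually hard, but because it forces one to spell out carefully how the syntactic unrolling $\bb{1}+C\fatsemi C^\star$ matches the operator $F$ up to commutativity of $\monoid$, and to invoke the existence of $\textsf{lfp}(F)$ from \Cref{lem:lfp}. Everything else reduces to the definition of the Kleisli extension in the powerset monad and the $\oplus$-clause of the BI semantics.
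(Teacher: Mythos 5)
Your proposal is correct and follows the same overall structure as the paper: induction on the derivation with only the \textsc{Plus} and \textsc{Induction} cases requiring new work, and the \textsc{Plus} case is handled exactly as in the paper (distribute the Kleisli union to get $\dem{C_1+C_2}{m}=\dem{C_1}{m}\cup\dem{C_2}{m}$ and apply the $\oplus$ clause). The one place you diverge is the \textsc{Induction} case: the paper establishes the unrolling identity $\dem{\bb{1}+C\fatsemi C^\star}{m}=\dem{C^\star}{m}$ by explicitly unfolding the Kleene characterization $\mathsf{lfp}(F)=\bigcup_{n}F^n(\lambda x.\emptyset)$ and re-indexing the resulting union of iterates at the level of the extended map $\dem{-}{m}$, whereas you prove the identity once at the level of the underlying denotation, $\de{\bb{1}+C\fatsemi C^\star}=\de{C^\star}$, by observing that the least fixed point is in particular a fixed point of $F$ (existence coming from the continuity argument of the paper's fixed-point lemma), and then transport it through $\mathsf{bind}$ for free. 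Both arguments are sound and rest on the same ingredients; yours is slightly more economical in that it needs only the fixed-point equation rather than the full Kleene unfolding, and it makes explicit that the two triples are semantically identical rather than merely equisatisfied, while the paper's computation has the minor virtue of not presupposing anything about $\mathsf{lfp}$ beyond its iterative construction. Your framing of the outer induction (reusing the generic-rules lemma for mixed derivations) is also a touch more careful than the paper, which simply presents the two new cases.
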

\begin{proof}By induction on the derivation $\vdash\triple PCQ$.
\begin{itemize}
\item\textsc{Plus}. Suppose that $m\vDash\varphi$. By induction, we know that $\dem{C_1}m\vDash \psi_1$ and $\dem{C_2}m\vDash\psi_2$. By the definition of $\de{-}$ we also know that $\dem{C_1}m\cup\dem{C_2}m=\dem{C_1+C_2}m$ and therefore $\dem{C_1+C_2}m\vDash \psi_1\oplus\psi_2$.
\item\textsc{Induction}. Suppose that $m\vDash\varphi$. We know by induction that $\dem{\bb{1}+C\fatsemi C^\star}m\vDash\psi$. Let $F = \lambda f.\lambda \sigma. f^\dagger(\de{C}(\sigma))\cup \unit(\sigma)$ and note that:
\begin{align*}
\dem{\bb{1}+C\fatsemi C^\star}{m} ~&= \dem{\bb{1}}{m} \cup \dem{C\fatsemi C^\star}{m} \\
&= m \cup\dem{C^\star}{\dem{C}{m}})\\
&= m\cup \bigcup_{n\in\mathbb{N}}F^n(\lambda x.\varnothing)^\dagger(\dem{C}{m})\\
&= F(\lambda x.\varnothing)^\dagger(m) \cup \bigcup_{n\ge 1}F^n(\lambda x.\varnothing)^\dagger(m)\\
&= \dem{C^\star}{m}
\end{align*}
So, $\dem{C^\star}m\vDash\psi$.
\end{itemize}
\end{proof}

\begin{lemma}[Soundness of expression-based rules in \Cref{fig:baserules}]
If $\vdash\triple PCQ$ then $\vDash\triple PCQ$.
\end{lemma}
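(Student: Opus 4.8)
The plan is to induct on the derivation $\vdash\triple PCQ$, handling the three expression-based rules of \Cref{fig:baserules} in turn, exactly in the style of the preceding two soundness lemmas. Two of the cases are immediate from the abstract requirements imposed on substitution and expression entailment in \Cref{sec:rules}. For \textsc{Assign}, the hypothesis $m\vDash P[e/x]$ gives $\dem{x:=e}{m}\vDash P$ directly by the defining property of substitution, so $\vDash\triple{P[e/x]}{x:=e}{P}$. For \textsc{Assume}, from $m\vDash P_1\oplus P_2$ together with the premises $P_1\vDash e$ and $P_2\vDash\lnot e$, the defining property of entailment yields $\dem{\assume e}{m}\vDash P_1$, hence $\vDash\triple{P_1\oplus P_2}{\assume e}{P_1}$.

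The substantive case is \textsc{If (Multi-Outcome)}. First I would unfold the sugar $\iftf{e}{C_1}{C_2}=(\assume e\fatsemi C_1)+(\assume{\lnot e}\fatsemi C_2)$ and use the semantics of choice and sequencing from \Cref{fig:comlang}, together with the fact that $\bind$ preserves $\monoid$ (the ``linearity'' already used in the soundness proofs of \textsc{Split} and \textsc{Plus}) and the computation for $\fatsemi$ established in the generic soundness proof, to obtain
\[
\dem{\iftf{e}{C_1}{C_2}}{m}=\dem{C_1}{\dem{\assume e}{m}}\;\monoid\;\dem{C_2}{\dem{\assume{\lnot e}}{m}}.
\]
Now suppose $m\vDash P_1\oplus P_2$. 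From $P_1\vDash e$, $P_2\vDash\lnot e$ and the entailment property, $\dem{\assume e}{m}\vDash P_1$, and the induction hypothesis applied to $\triple{P_1}{C_1}{Q_1}$ gives $\dem{C_1}{\dem{\assume e}{m}}\vDash Q_1$. Symmetrically, since $\oplus$ is commutative we also have $m\vDash P_2\oplus P_1$, and since $e$ and $\lnot\lnot e$ denote the same boolean function on states the premise $P_1\vDash e$ may be read as $P_1\vDash\lnot(\lnot e)$; the entailment property then gives $\dem{\assume{\lnot e}}{m}\vDash P_2$, and the induction hypothesis on $\triple{P_2}{C_2}{Q_2}$ gives $\dem{C_2}{\dem{\assume{\lnot e}}{m}}\vDash Q_2$. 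Since the monoid composition above is defined (by syntactic validity of $\iftf{e}{C_1}{C_2}$, see \Aref{app:totality}), taking the two factors as the witnesses in the definition of $\oplus$ yields $\dem{\iftf{e}{C_1}{C_2}}{m}\vDash Q_1\oplus Q_2$, as required.

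A slicker alternative I would mention is that \textsc{If (Multi-Outcome)} is in fact \emph{derivable} from rules already shown sound: combining \textsc{Assume} with \textsc{Seq} yields $\triple{P_1\oplus P_2}{\assume e\fatsemi C_1}{Q_1}$ and (after a \textsc{Consequence} step using commutativity of $\oplus$) $\triple{P_1\oplus P_2}{\assume{\lnot e}\fatsemi C_2}{Q_2}$, and \textsc{Plus} then composes the two branches into the desired triple, so soundness is inherited from the generic and nondeterministic fragments. The main obstacle in either route is the same bookkeeping: checking that the abstract entailment property is strong enough to route the $P_1$-part of $m$ entirely into the true branch and the $P_2$-part entirely into the false branch with no leakage, and that the $\monoid$ arising from the translated choice is defined. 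Both are routine given the totality restrictions of \Aref{app:totality} and the properties stipulated for substitution and entailment in \Cref{sec:rules}, but they are the only points that require care.
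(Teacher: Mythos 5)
Your main argument is correct and is essentially the paper's own proof: induction on the derivation, with \textsc{Assign} and \textsc{Assume} discharged directly by the stipulated properties of substitution and entailment, and \textsc{If (Multi-Outcome)} handled by unfolding the sugar, using linearity of $\bind$ over $\monoid$ to split the semantics into $\dem{C_1}{\dem{\assume e}{m}}\monoid\dem{C_2}{\dem{\assume{\lnot e}}{m}}$, routing $P_1$ and $P_2$ through the two assumes, applying the induction hypotheses, and noting definedness of the composition (you are in fact slightly more careful than the paper in justifying the false-branch application of entailment via commutativity of $\oplus$ and $\lnot\lnot e \equiv e$). The only caveat concerns your ``slicker alternative'': deriving \textsc{If (Multi-Outcome)} from \textsc{Assume}, \textsc{Seq}, \textsc{Consequence}, and \textsc{Plus} would inherit soundness only for instances where \textsc{Plus} is sound (the paper classifies it among the nondeterministic rules), whereas the expression-based rules are also meant to be usable in, e.g., the probabilistic instance, so the direct semantic argument you give first is the one that delivers the lemma in full generality.
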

\begin{proof}
By induction on the derivation $\vdash\triple PCQ$.
\begin{itemize}
\item\textsc{Assume}. Suppose that $m\vDash P_1\oplus P_2$. Since $P_1\vDash e$ and $P_2\vDash\lnot e$, we know by the definition of expression entailment that that $\dem{\assume{e}}{m}\vDash P$.
\item\textsc{Assign}. Suppose that $m\vDash P[e/x]$. By the required properties of substitution, we know that $\dem{x:=e}{m}\vDash P$.
\item\textsc{If}. Suppose that $m\vDash P_1\oplus P_2$. Now observe that:
\begin{align*}
\dem{\iftf e{C_1}{C_2}}{m} ~&= \dem{(\assume{e}\fatsemi C_1) + (\assume{\lnot e}\fatsemi C_2)}{m} \\
&= \dem{\assume{e}\fatsemi C_1}{m} \monoid \dem{\assume{\lnot e}\fatsemi C_2}{m} \\
&= \dem{C_1}{\dem{\assume{e}}{m}} \monoid \dem{C_2}{\dem{\assume{\lnot e}}{m}} \\
\intertext{Now, let $m_1=\dem{\assume{e}}{m}$ and $m_2=\dem{\assume{\lnot e}}{m}$. Since we know that $P_1\vDash e$ and $P_2\vDash \lnot e$ and $m\vDash P_1\oplus P_2$, then $m_1\vDash P_1$ and $m_2\vDash P_2$ (by the required properties of expression entailment).}
&= \dem{C_1}{m_1} \monoid \dem{C_2}{m_2}
\end{align*}
By the induction hypotheses, we also know that $\dem{C_1}{m_1}\vDash Q_1$ and $\dem{C_2}{m_2}\vDash Q_2$, therefore $\dem{C_1}{m_1}\diamond\dem{C_2}{m_2}\vDash Q_1\oplus Q_2$. Note that this composition with $\diamond$ is valid in all the execution models we have presented since $\cup$ is total and we have already shown that $+$ on distributions is defined in the semantics of if statements.
\end{itemize}
\end{proof}

\begin{lemma}[Soundess of Nondeterministic Lifting Rule]
The following inference rule is sound.
\[
\inferrule
{\vdash_{\bb{2}^{(-)}}\triple pCq}
{\triple pCq}
{\textsc{Nondeterministic Lift}}
\]
\end{lemma}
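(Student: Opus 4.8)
The plan is to reduce the rule to a purely semantic implication and then unfold the relevant definitions. First I would appeal to soundness of the liftable proof rules to turn the premise $\vdash_{\bb{2}^{(-)}}\triple pCq$ into the semantic statement $\vDash_{\bb{2}^{(-)}}\triple pCq$ of \Cref{def:pure}: for every $\sigma\in M_2\Sigma$ --- which here is $E+\Sigma$ --- with $\sigma\vDash p$ there is a $\tau\in E+\Sigma$ such that $\dem C{\{\sigma\}}=\{\tau\}$ and $\tau\vDash q$ (using that $\unit_{\bb{2}^{(-)}}(\sigma)=\{\sigma\}$). Informally, running $C$ on any single state that satisfies $p$ is deterministic and lands in a single state satisfying $q$.

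Next I would fix an arbitrary $m\in M\Sigma=\bb{2}^{E+\Sigma}$ with $m\vDash p$. By the satisfaction relation for atomic assertions (\Cref{def:ndoc}, instantiated at the error execution model of \Cref{ex:em}) this unfolds to $m\neq\emptyset$ together with $\sigma\vDash_{E+\Sigma}p$ for every $\sigma\in m$. The key computational step is then to evaluate $\dem Cm$. Unfolding the composed-monad \textsf{bind} from \Cref{ex:em} --- error elements of $m$ are carried over unchanged and each normal element $\inj_R(s)$ contributes $\de C(s)=\dem C{\{\inj_R(s)\}}$ --- gives $\dem Cm=\bigcup_{\sigma\in m}\dem C{\{\sigma\}}$; equivalently this is the distribution of \textsf{bind} over the monoid operation (set union), which holds in any execution model and extends to arbitrary unions since the powerset \textsf{bind} is exactly $\bigcup_{x\in S}k(x)$.

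Finally I would combine the two: for each $\sigma\in m$ the liftable semantics yields $\dem C{\{\sigma\}}=\{\tau_\sigma\}$ with $\tau_\sigma\vDash q$, so $\dem Cm=\{\tau_\sigma\mid\sigma\in m\}$, which is nonempty because $m$ is nonempty and each of whose elements satisfies $q$; hence $\dem Cm\vDash q$ by \Cref{def:ndoc}, i.e.\ $\vDash\triple pCq$. I do not expect deep difficulty here; the main obstacle is bookkeeping --- carefully matching the three satisfaction relations in play (the liftable triple's $\sigma\vDash p$ over $E+\Sigma$, the atomic satisfaction $\vDash_{E+\Sigma}$ of \Cref{def:erassert}, and the lifted $S\vDash P$ of \Cref{def:ndoc}), noting that $m$ may be infinite so that the distribution of \textsf{bind} over an infinite union must be read off the explicit formula of \Cref{ex:em}, and checking that an error state $\inj_L(e)\in m$ is handled uniformly by the same liftable clause since $\dem C{\{\inj_L(e)\}}=\{\inj_L(e)\}$.
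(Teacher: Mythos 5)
Your proposal is correct and follows essentially the same route as the paper's proof: obtain from the liftable premise that $\dem C{\{\sigma\}}=\{\tau\}$ with $\tau\vDash q$ for each single state $\sigma\vDash p$, use $\dem Cm=\bigcup_{\sigma\in m}\dem C{\{\sigma\}}$, and conclude pointwise via \Cref{def:ndoc}. Your extra bookkeeping (explicit nonemptiness of $\dem Cm$ and the uniform treatment of $\inj_L$ states) only spells out details the paper leaves implicit.
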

\begin{proof}By induction on the derivation $\vdash\triple pCq$.
Suppose that $S\vDash p$, so that means that $S\neq\emptyset$ and $\forall\sigma\in S.~\sigma\vDash p$. We know by induction that for any $\sigma\vDash p$, there is some $\tau$ such that $\dem{C}{\{\sigma\}} = \{\tau\}$ and $\tau\vDash q$. We also know that $\dem CS = \bigcup_{\sigma\in S}\dem{C}{\{\sigma\}}$ and since each for each $\sigma$, there is a $\tau$ such that $\dem{C}{\{\sigma\}}=\{\tau\}$, then $\forall\tau\in\dem CS$, $\tau\vDash q$ and so $\dem CS\vDash q$
\end{proof}

\begin{lemma}[Soundness of Error Propagation]
The following inference rule is sound:
\[
\inferrule
{\;}
{\vdash_M\triple{\textsf{er}:p}C{\textsf{er}:p}}
{\textsc{Error Propagation}}
\]
\end{lemma}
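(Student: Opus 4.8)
The plan is to discharge the definition of a liftable triple (\Cref{def:pure}) directly, exploiting the fact that the bind operation of an error-transformed execution model leaves a value untouched once it is a left injection, i.e.\ an error. We are in an OL instance whose execution model is built by composing an outer monad $M_1$ (the lemma's $M$) with the error monad $M_2 = E+-$, so that for every program $C$ we have $\dem{C}{-} = \textsf{bind}_\er(-,\de{C})$; and $\vDash_{M_1}\triple{\er:p}C{\er:p}$ unfolds to the statement that for every $\sigma\in M_2\Sigma = E+\Sigma$ with $\sigma\vDash(\er:p)$ there exists $\tau\in E+\Sigma$ with $\dem{C}{\unit_{M_1}(\sigma)} = \unit_{M_1}(\tau)$ and $\tau\vDash(\er:p)$.

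First I would fix such a $\sigma$. Since $(\er:p)$ is sugar for $(\fls,p)$, the semantics of the error assertion logic (\Cref{def:erassert}) forces $\sigma$ to be a left injection $\inj_L(e)$ with $e\vDash_E p$ --- a right injection $\inj_R(\sigma')$ would need $\sigma'\vDash_\Sigma\fls$, which is impossible. I would then take the witness $\tau$ to be $\sigma$ itself, so $\tau\vDash(\er:p)$ is immediate, and it remains only to show $\dem{C}{\unit_{M_1}(\sigma)} = \unit_{M_1}(\sigma)$. For this I would unfold $\textsf{bind}_\er$:
\[
\dem{C}{\unit_{M_1}(\inj_L(e))} \;=\; \textsf{bind}_\er(\unit_{M_1}(\inj_L(e)),\,\de{C}) \;=\; \unit_{M_1}(\inj_L(e)),
\]
where the second step uses that $\textsf{bind}_\er$ passes left injections through unchanged --- it only applies the continuation $\de{C}$ to right injections --- which follows from the definition of $\textsf{bind}_\er$ together with monad law (2) for $M_1$, namely $\textsf{bind}_{M_1}(\unit_{M_1}(x),f) = f(x)$. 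The key observation is that $C$ plays no role at all: its denotation $\de{C}$ is consulted only in the right-injection branch of $\textsf{bind}_\er$, so the same calculation works uniformly for every command $C$, which is exactly why the rule is premise-free.

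I do not anticipate a real obstacle. The only care needed is bookkeeping: keeping the unit maps straight ($\unit_{M_1}$, $\unit_\er$, and the unit of the composite monad all agree where it matters on a left injection), and recognising that it is the \emph{liftable}-triple semantics of \Cref{def:pure}, not the full outcome-triple semantics of \Cref{def:mht}, that must be verified. With those pinned down, soundness of \textsc{Error Propagation} is a two-line calculation driven entirely by the shape of $\textsf{bind}_\er$.
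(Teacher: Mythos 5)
Your proposal is correct and follows essentially the same route as the paper's proof: note that satisfying $(\er:p)$ forces the state to be a left injection, then compute $\dem{C}{\unit_{M}(\inj_L(\cdot))} = \unit_{M}(\inj_L(\cdot))$ by unfolding $\textsf{bind}_\er$ together with the unit law of the outer monad, and take the same left injection as the witness $\tau$ in the liftable-triple semantics. Your explicit remark that a right injection is ruled out because it would have to satisfy $\fls$ is just a spelled-out version of the paper's opening step, so there is nothing further to add.
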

\begin{proof}
Suppose that $m\vDash (\er:p)$, and so there must be some $\sigma$ such that $m=\inj_L(\sigma)$ and $\inj_L(\sigma)\vDash(\er:p)$. Now, we have:
\begin{align*}
\dem{C}{\unit_M(\inj_L(\sigma))}
&= \bind_M\left(\unit_M(\inj_L(\sigma)), \lambda x.\left\{\begin{array}{ll}
\de{C}(y) & \text{if}~x=\inj_R(y) \\
\unit_M(x) & \text{if}~x=\inj_L(y)
\end{array}\right.\right) \\
&= \left(\lambda x.\left\{\begin{array}{ll}
\de{C}(y) & \text{if}~x=\inj_R(y) \\
\unit_M(x) & \text{if}~x=\inj_L(y)
\end{array}\right.\right)(\inj_L(\sigma)) \\ 
&= \unit_M(\inj_L(\sigma))
\end{align*}
And since we already know that $\inj_L(\sigma)\vDash(\er:p)$, we are done.
\end{proof}

\begin{lemma}[Soundness of Probabilistic Proof System]\label{lem:probrules}
The inference rules at the top of \Cref{fig:probproof} are sound.
\end{lemma}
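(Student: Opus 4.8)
We proceed as with the other soundness lemmas in this appendix: show that each of the two rules \emph{preserves validity}, so that $\vdash\triple\varphi C\psi \Rightarrow \vDash\triple\varphi C\psi$ follows by induction on derivations. The plan is to read the premise $\vdash_{\mathcal D}\triple ACB$ of \textsc{Lifting} (and likewise $\triple A{x:=v}{B_v}$ in \textsc{Sample}) as a \emph{liftable} judgment in the sense of \Cref{def:pure} with $M_1 = \mathcal D$ and $M_2 = \mathsf{Id}$; appealing to soundness of liftable derivations (as established for the small axioms) we may then assume the semantic statement $\vDash_{\mathcal D}\triple ACB$. Unfolding \Cref{def:pure} and using the monad law $\dem C{\delta_\sigma} = \bind_{\mathcal D}(\delta_\sigma,\de C) = \de C(\sigma)$, this says precisely that for every $\sigma\vDash_\Sigma A$ there is a $\tau_\sigma\vDash_\Sigma B$ with $\de C(\sigma) = \delta_{\tau_\sigma}$.

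For \textsc{Lifting}, I would take any $\mu$ with $\mu\vDash\PP[A]=p$, i.e. $|\mu|=p$ and every $\sigma\in\mathsf{supp}(\mu)$ satisfies $A$. Then $\dem C\mu = \bind_{\mathcal D}(\mu,\de C) = \sum_{\sigma\in\mathsf{supp}(\mu)}\mu(\sigma)\cdot\delta_{\tau_\sigma}$, so $|\dem C\mu| = \sum_{\sigma\in\mathsf{supp}(\mu)}\mu(\sigma) = p$ and $\mathsf{supp}(\dem C\mu)\subseteq\{\tau_\sigma\mid\sigma\in\mathsf{supp}(\mu)\}$, every element of which satisfies $B$. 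Hence $\dem C\mu\vDash\PP[B]=p$, which is what \textsc{Lifting} requires. The argument is uniform in $p$, so in fact the rule is valid for every $p\in[0,1]$; I record this since the \textsc{Sample} case will instantiate it at a scaled probability.

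For \textsc{Sample}, I would take $\mu\vDash\PP[A]=p$ and first invoke the monad-law computation already carried out in the proof of \Cref{lem:probte}: $\dem{x\samp\eta}\mu = \sum_{v\in\mathsf{supp}(\eta)}\dem{x:=v}{\eta(v)\cdot\mu}$ (unfold $\bind$, reorder the double sum over $\mathsf{supp}(\mu)\times\mathsf{supp}(\eta)$, and use that scalar weighting commutes with $\bind$). Next, since $\eta(v)>0$ for $v\in\mathsf{supp}(\eta)$, we have $\mathsf{supp}(\eta(v)\cdot\mu) = \mathsf{supp}(\mu)$ and $|\eta(v)\cdot\mu| = \eta(v)\cdot p$, hence $\eta(v)\cdot\mu\vDash\PP[A]=\eta(v)\cdot p$. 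Applying the (now sound) \textsc{Lifting} rule to the premise $\vdash_{\mathcal D}\triple A{x:=v}{B_v}$ at the value $\eta(v)\cdot p$ yields $\dem{x:=v}{\eta(v)\cdot\mu}\vDash\PP[B_v]=\eta(v)\cdot p$. Finally, the family $(\dem{x:=v}{\eta(v)\cdot\mu})_{v\in\mathsf{supp}(\eta)}$ sums to $\dem{x\samp\eta}\mu$, so by the semantics of $\oplus$ (with $\preccurlyeq$ being equality, per our classical BI frame) we conclude $\dem{x\samp\eta}\mu\vDash\bigoplus_{v\in\mathsf{supp}(\eta)}(\PP[B_v]=p\cdot\eta(v))$, as desired.

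The main obstacle, I expect, is not any single calculation but getting three bookkeeping points exactly right: (i) the reading of the premises as liftable ($\mathcal D$-)judgments, which is what actually gives us determinism of $C$ and of $x:=v$ — without it, $\PP[A]$ need not be transported to $\PP[B]$; (ii) definedness of the sum $\sum_{v\in\mathsf{supp}(\eta)}\dem{x:=v}{\eta(v)\cdot\mu}$, which is what makes it a legitimate witness for $\oplus$ and which follows from totality of the probabilistic semantics (\Cref{app:totality}); and (iii) the linearity of $\bind$ (commuting with scalar weighting and with finite sums), which underlies the decomposition in the first step of the \textsc{Sample} case. Each is routine, but they are the places where a loosely stated argument could break.
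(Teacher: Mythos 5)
Your proposal is correct and follows essentially the same route as the paper's proof: unfold the liftable premise to get, for each $\sigma\vDash_\Sigma A$, a $\tau_\sigma\vDash_\Sigma B$ with $\de{C}(\sigma)=\delta_{\tau_\sigma}$, transport mass and support for \textsc{Lifting}, and for \textsc{Sample} use the monad-law reordering $\dem{x\samp\eta}{\mu}=\sum_{v\in\mathsf{supp}(\eta)}\eta(v)\cdot\dem{x:=v}{\mu}$ together with the semantics of $\oplus$. The only (cosmetic) difference is that you push the scalar $\eta(v)$ into the input distribution and apply the lifting argument at mass $\eta(v)\cdot p$, whereas the paper applies it at mass $p$ and scales the resulting assertion afterwards; by linearity of $\bind$ these are the same distribution, so the two arguments coincide.
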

\begin{proof}By induction on the derivation $\vdash\triple PCQ$
\begin{itemize}
\item\textsc{Lifting}. Suppose that $\mu\vDash (\PP[A] = p)$, so for every $\sigma\in\textsf{supp}(\mu)$, $\sigma\vDash A$ and $|\mu| = p$. We know by induction that for any $\sigma$ there is some $\tau_\sigma$ such that $\de{C}(\sigma) = \delta_{\tau_\sigma}$ and $\tau_\sigma\vDash B$. So, $\mu' = \dem{C}\mu = \sum_{\sigma\in\textsf{supp}(\mu)} \mu(\sigma)\cdot\de{C}(\sigma) = \sum_{\sigma\in\textsf{supp}(\mu)} \mu(\sigma)\cdot \delta_{\tau_\sigma}$. Therefore $|\mu'| = |\mu|$ and $\forall\tau\in\textsf{supp}(\mu')$, $\tau\vDash B$, so $\mu'\vDash (\PP[B] = p)$.

\item\textsc{Sample}. First, observe that:
\begin{align*}
\dem{x\samp\eta}{\mu} ~&= \bind(\mu,\lambda\sigma.\bind(\eta, \lambda v.\de{x:=v}(\sigma))) \\
&= \smashoperator{\sum_{\sigma\in\textsf{supp}(\mu)}}\mu(\sigma)\cdot\smashoperator{\sum_{v\in\textsf{supp}(\eta)}} \eta(v)\cdot \de{x:=v}(\sigma) \\
&= \smashoperator{\sum_{v\in\textsf{supp}(\eta)}} \eta(v)\cdot  \smashoperator{\sum_{\sigma\in\textsf{supp}(\mu)}}\mu(\sigma)\cdot \de{x:=v}(\sigma)\\
&= \smashoperator{\sum_{v\in\textsf{supp}(\eta)}} \eta(v)\cdot \dem{x:=v}{\mu}
\end{align*}
Now, by the same argument that we used in the lifting cases, since $\mu\vDash (\PP[A] = p)$ and $\triple{A}{x:=v}{B_v}$, then $\dem{x:=v}{\mu}\vDash (\PP[B_v]=p)$.
Therefore, we we can also weight the distribution to obtain $\eta(v)\cdot\dem{x:=v}{\mu}\vDash (\PP[B_v]=\eta(v)\cdot p)$. Now, the sum over $v\in\textsf{supp}(\eta)$ corresponds exactly to an outcome conjunction, so we have:
\[ \smashoperator{\sum_{v\in\textsf{supp}(\eta)}} \eta(v)\cdot \dem{x:=v}{\mu} \quad \vDash \quad \smashoperator{\bigoplus_{v\in\textsf{supp}(\eta)}} (\PP[B_v]=\eta(v)\cdot p) \]

\end{itemize}

\end{proof}

\begin{lemma}[Correctness of Expression Entailment]
If $m\vDash P\oplus Q$ and $P\vDash e$ and $Q\vDash\lnot e$, then $\dem{\assume{e}}{m}\vDash P$ for both the nondeterministic and probabilistic interpretations of expression entailment
\end{lemma}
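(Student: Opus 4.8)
The plan is to unfold the hypothesis $m \vDash P \oplus Q$ into a concrete monoid decomposition and show that $\assume e$ projects $m$ onto exactly the $P$-component, killing the $Q$-component. First I would invoke the semantics of $\oplus$ (Figure~\ref{fig:outcomes}) to obtain $m_1,m_2$ with $m_1 \monoid m_2 = m$, $m_1 \vDash P$, and $m_2 \vDash Q$. The crux is the observation that, because $P$ entails $e$ and $Q$ entails $\lnot e$, every ``point'' of $m_1$ sees $e$ as true while every point of $m_2$ sees $e$ as false; since $\de{\assume e}(\sigma) = \unit(\sigma)$ when $\de{e}_\textsf{Exp}(\sigma) = \textsf{true}$ and $\de{\assume e}(\sigma) = \ident$ otherwise, this will give $\dem{\assume e}{m_1} = m_1$ and $\dem{\assume e}{m_2} = \ident$. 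Combining these with linearity of $\dem{\assume e}{-}$ over the monoid operation (part of the Execution Model definition, $\bind(m_1\monoid m_2, k) = \bind(m_1,k)\monoid\bind(m_2,k)$) yields $\dem{\assume e}{m} = m_1 \monoid \ident = m_1 \vDash P$, as required.

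Carrying this out amounts to spelling out ``every point sees $e$ appropriately'' in each of the two models. In the nondeterministic model, $m_1 = S_1$ is a set all of whose states satisfy $P$, hence — unfolding the concrete definition $P \vDash e$ iff $P \Rightarrow e$ — all of which satisfy $e$; thus $\dem{\assume e}{S_1} = \bigcup_{\sigma\in S_1}\{\sigma\} = S_1$, and symmetrically $\dem{\assume e}{S_2} = \bigcup_{\sigma\in S_2}\emptyset = \emptyset$ because every state of $S_2$ satisfies $\lnot e$. In the probabilistic model, $m_1 = \mu_1$ has $\mathsf{supp}(\mu_1)$ contained in the states satisfying $A$ (writing $P = (\PP[A] = p)$), hence — using $(\PP[A]=p)\vDash e$ iff $A \Rightarrow e$ — contained in the states where $e$ is true; since $\de{\assume e}(\sigma) = \delta_\sigma$ on those states and $\varnothing$ on the rest, $\dem{\assume e}{\mu_1} = \mu_1$, and likewise $\dem{\assume e}{\mu_2} = \varnothing$. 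In both cases the monoid unit is absorbed and the surviving component already satisfies $P$.

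The proof is essentially bookkeeping, so I do not anticipate a genuine obstacle; the only point worth pausing on is that in the set case $S_1$ and $S_2$ need not be disjoint \emph{a priori}, but a state in $S_1 \cap S_2$ would satisfy both $e$ and $\lnot e$, so the decomposition is in fact disjoint — and in any event the computation $\dem{\assume e}{S_1\cup S_2} = \dem{\assume e}{S_1}\cup\dem{\assume e}{S_2}$ follows directly from the definition of $\bind$ for the powerset monad regardless. Similarly, in the probabilistic case the sum $\mu_1 + \mu_2$ is well-defined precisely because it is the given decomposition of $\mu$, so the distributivity of $\bind$ over $+$ applies with no side condition. The argument is thus uniform across the two interpretations modulo the differing reading of ``the points of $m$'' (elements of a set versus support of a distribution).
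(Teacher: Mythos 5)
Your proposal is correct and follows essentially the same route as the paper's proof: decompose $m$ into $m_1 \monoid m_2$ via the semantics of $\oplus$, use $P\vDash e$ and $Q\vDash\lnot e$ to show $\dem{\assume e}{m_1}=m_1$ and $\dem{\assume e}{m_2}=\ident$, and recombine by linearity of the semantics over the monoid, checked concretely in both the powerset and distribution models. The only cosmetic difference is that you factor out the generic linearity step before instantiating, whereas the paper argues the two cases directly; the content is the same.
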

\begin{proof}\;
\begin{itemize}
\item\textsc{Nondeterminism}. First note that $\dem{\assume{e}}{S} = \{\sigma \mid \sigma\in S, \de{e}_\textsf{Exp}(\sigma) =\tru\}$. In addition, $m\vDash P\oplus Q$ means that there are nonempty sets $S_1$ and $S_2$ such that $S_1\cup S_2= S$ and $S_1\vDash P$ and $S_2\vDash Q$. Depending on which atomic assertions we are using, $P$ is either some assertion $p$ or $(\ok:p)$, in either case, we know from $P\vDash e$ that $p\Rightarrow e$. We know that every state in $S_1$ satisfies $p$ (and therefore also $e$), so $\dem{\assume{e}}{S_1} = S_1$. By a similar argument, $\dem{\assume e}{S_2} = \emptyset$. Therefore $\dem{\assume e}{S} = S_1\cup\emptyset = S_1$ and we already know that $S_1\vDash P$.


\item\textsc{Probabilistic}. The semantics of assume are similar in this case; states are filtered from the support that do not agree with $e$ and the distribution is otherwise left unchanged. Now suppose that $\mu\vDash (\PP[A]=p)\oplus (\PP[B]=q)$ and therefore $\mu_1\vDash (\PP[A]=p)$ and $\mu_2\vDash (\PP[B]=q)$ such that $\mu_1+\mu_2=\mu$. All states in $\textsf{supp}(\mu_1)$ satisfy $A$ (and therefore also $e$ since $A\vDash e$), so $\dem{\assume e}{\mu_1} = \mu_1$. The opposite is true for $\mu_2$, so $\dem{\assume e}{\mu_2}=\varnothing$. Therefore $\dem{\assume e}{\mu} = \mu_1+\varnothing = \mu_1$ and we already know that $\mu_1\vDash (\PP[A]=p)$.

\end{itemize}
\end{proof}

\section{Additional Rules for Conditionals and Loops}
\label{sec:extra-rules}

As mentioned in \Cref{sec:rules}, fully generic looping rules for OL that work with all instances of the logic are not possible because different instances have different constraints when it comes to termination. However, it is possible to create an \emph{under-approximate} rule that unrolls a loop for a bounded number of iterations:
\[\small
\inferrule{
  \forall i.\ P_i\vDash e
  \\
  \forall i.\ Q_i\vDash \lnot e
  \\
  \forall i.\ \triple{P_i}{C}{P_{i+1} \oplus Q_{i+1}}
}
{\triple{P_0 \oplus Q_0}{\whl eC}{(\bigoplus_{i=0}^nQ_i) \oplus\top}}
{\textsc{Bounded Unrolling}}
\]
In this rule, $P_i$ is the outcome of running $C$ $i$ times with the guard remaining true and similarly $Q_i$ is the outcome of running $C$ $i$ times with the guard becoming false. The true components ($P_i$) are passed forward into the next iteration whereas the false components that cause the loop to exit ($Q_i$) are joined to the postcondition.

This rule avoids the termination question entirely by only looking at finite executions, the remainder of outcomes are covered by $\top$. Similar to the conditional rules seen in \Cref{fig:baserules}, it requires you to separate assertions into components that are ``true'' and ``false'' with respect to the loop guard $e$. This may not be possible in nondeterministic settings, as the loop body $C$ may only produce one outcome. It is, however, suitable for probabilistic applications so long as the probability of the loop guard is known (probabilistic assertions can always be split by probability mass).

For nondeterministic proof systems where we may not be able to split assertions into multiple outcomes, we can create specialized loops rules. Such a rule for the separation logic proof system is given below:
\[\small
\inferrule{
\forall i>0.~(p_i\Rightarrow e)\land \epsilon_i=\ok
\\
\forall i\in\mathbb{N}.~\triple{\epsilon_{i+1}:p_{i+1}}C{\epsilon_i:p_i}
\\
(p_0\Rightarrow \lnot e) \vee (\epsilon_0 = \er)
}
{\triple{\epsilon_n: p_n}{\whl eC}{\epsilon_0:p_0}}
{\textsf{While}}
\]
This rule is very similar to the rule for loops from Total Hoare Logic~\cite{variant} with the addition that the postcondition may not imply that $e$ is false if the program has crashed.
Similarly, we can formulate the familiar conditional rule that operates within a single outcome:
\[\small
\inferrule{
  \triple{\ok:p\land e}{C_1}{Q}
  \\
  \triple{\ok:p\land \lnot e}{C_2}{Q}
}
{\triple{\ok:p}{\iftf{e}{C_1}{C_2}}{Q}}
{\textsc{If (Single Outcome)}}
\]
\fi

\end{document}
\endinput